\setlist[enumerate]{itemsep=0mm} %
\newtheorem{theorem}{Theorem}
\newtheorem*{theorem*}{Theorem}
\newtheorem{lemma}{Lemma}
\newtheorem{definition}{Definition}
\newtheorem{corollary}{Corollary}
\newtheorem{remark}{Remark}
\newcommand{\gen}{\mathsf{Gen}}
\newcommand{\eval}{\mathsf{Eval}}
\newcommand{\enc}{\mathsf{Enc}}
\newcommand{\dec}{\mathsf{Dec}}
\newcommand{\sk}{\mathsf{sk}}
\newcommand{\tr}{\mathsf{tr}}
\newcommand{\la}{\lambda}
\newcommand{\encx}{\mathtt{x}}
\newcommand{\enca}{\mathtt{a}}
\newcommand{\ency}{\mathtt{y}}
\newcommand{\encb}{\mathtt{b}}
\newcommand{\cs}{{C}$^*$}
\newcommand{\Tr}{\mathrm{Tr}}
\newcommand{\cG}{\mathcal{G}}
\newcommand{\mk}{\mathsf{k}}
\newcommand{\negl}{\mathsf{negl}}
\renewcommand\Affilfont{\small} 
\newcommand\email[2][]%
   {\newaffiltrue\let\AB@blk@and\AB@pand
      \if\relax#1\relax\def\AB@note{\AB@thenote}\else\def\AB@note{\relax}%
        \setcounter{Maxaffil}{0}\fi
      \begingroup
        \let\protect\@unexpandable@protect
        \def\thanks{\protect\thanks}\def\footnote{\protect\footnote}%
        \@temptokena=\expandafter{\AB@authors}%
        {\def\\{\protect\\\protect\Affilfont}\xdef\AB@temp{#2}}%
         \xdef\AB@authors{\the\@temptokena\AB@las\AB@au@str
         \protect\\[\affilsep]\protect\Affilfont\AB@temp}%
         \gdef\AB@las{}\gdef\AB@au@str{}%
        {\def\\{, \ignorespaces}\xdef\AB@temp{#2}}%
        \@temptokena=\expandafter{\AB@affillist}%
        \xdef\AB@affillist{\the\@temptokena \AB@affilsep
          \AB@affilnote{}\protect\Affilfont\AB@temp}%
      \endgroup
       \let\AB@affilsep\AB@affilsepx
}
\title{Bounding the asymptotic quantum value of all multipartite compiled non-local games}
\date{}
    \author[1]{Matilde Baroni}
    \author[2]{Dominik Leichtle}
    \author[3]{Siniša Janković}
    \author[4]{Ivan Šupić}
    \affil[1]{Sorbonne Université, CNRS, LIP6, 4 place Jussieu, 75005 Paris, France}
    \email[1]{\url{matilde.baroni@lip6.fr}}
    \affil[2]{School of Informatics, University of Edinburgh, 10 Crichton Street, Edinburgh EH8 9AB, United Kingdom}
    \email[2]{\url{dominik.leichtle@ed.ac.uk}}
    \affil[3]{Faculty of Physics, University of Belgrade, Studentski Trg 12-16, 11000 Belgrade, Serbia}
    \affil[4]{Universit\'e Grenoble Alpes, CNRS, Grenoble INP, LIG, 38000 Grenoble, France}
    \email[4]{\url{ivan.supic@univ-grenoble-alpes.fr}}
\begin{document}

\maketitle

\begin{abstract}

    Non-local games are a powerful tool to distinguish between correlations possible in classical and quantum worlds.
    Kalai \emph{et al.} (STOC'23) proposed a compiler that converts multipartite non-local games into interactive protocols with a single prover, relying on cryptographic tools to remove the assumption of physical separation of the players.
    While quantum completeness and classical soundness of the construction have been established for all multipartite games, quantum soundness is known only in the special case of bipartite games.

    In this paper, we prove that the Kalai \emph{et al.}'s compiler indeed achieves quantum soundness for all multipartite compiled non-local games, by showing that any correlations that can be generated in the asymptotic case correspond to quantum commuting strategies.

    Our proof uses techniques from the theory of operator algebras, and relies on a characterisation of sequential operationally no-signalling strategies as quantum commuting operator strategies in the multipartite case, thereby generalising several previous results.
    On the way, we construct universal C*-algebras of sequential PVMs and prove a new chain rule for Radon-Nikodym derivatives of completely positive maps on C*-algebras which may be of independent interest.      \\
    \\
    \textbf{Keywords:} nonlocality, quantum cryptography, operator algebra.
\end{abstract}

\newpage
\tableofcontents

\newpage
\section{Introduction}

Non-local games are an indispensable tool in device-independent quantum certification, enabling one to verify the presence of quantum properties, such as entanglement, without characterization or trustworthiness of the devices involved~\cite{Brunner_2014}.
The standard setup involves a verifier and at least two provers - also called devices; the verifier distributes classical inputs to each prover and checks whether the returned classical outputs satisfy some condition, called the game predicate.
During this interaction, the provers are not allowed to communicate; however they are allowed to share a strategy before the game starts, for example they are allowed to share an entangled resource.
The success probability in such games enables to conclude device-independently that the provers possess genuine quantum capabilities.

The major drawback of traditional device-independent certification via non-local games is the enforcement of the no-communication requirement.
The most standard choice is to impose that the provers are space-like separated, but ensuring it is experimentally demanding and restricts applicability. This motivates the question: \emph{Can we transfer the power of non-local games, in particular their strong quantum-certification guarantees, from a multi-prover setting to a single-prover scenario?}

The work of Kalai, Lombardi, Vaikuntanathan, and Yang~\cite{KLVY22Quantum} proposes a solution to this problem.
They define a compiler that maps every $\mathsf{k}$-players non-local game into a single-prover interactive protocol of $2\mathsf{k}$-rounds, by replacing the physical separation constraint with cryptographic functionalities.
More specifically, they propose to use quantum fully homomorphic encryption (QFHE)~\cite{mahadev2020classical,brakerski2018quantum}
to encrypt the classical communication between the verifier and the single prover; this allows to maintain the prover’s ignorance about the verifier’s questions, by still allowing them to homomorphically compute circuits on the encrypted inputs.

Kalai \emph{et al.} showed that this compilation procedure preserves many desirable properties of the original non-local game.
Using the homomorphic property they prove classical/quantum completeness for all $\mathsf{k}$-partite games, that is, there is an explicit and efficient classical/quantum strategy that achieves the classical/quantum bound of the original nonlocal game.
From the security Kalai \emph{et al.} also derive the classical soundness for the compilation of all $\mathsf{k}$-partite games, showing that a classical single prover cannot achieve a winning probability exceeding that of classical provers in the original non-local game, up to negligible deviations dictated by the cryptographic security parameter.
These two statements together imply that compiled non-local games demonstrate a quantum-classical gap in the single prover setting.
However, in \cite{KLVY22Quantum} they did not provide an upper bound to the optimal score of a quantum single prover.
Since then, a considerable amount of effort has been put into the characterisation of quantum compiled strategies.
Follow-up works by Natarajan and Zhang~\cite{NZ2023Bounding} and Brakerski \emph{et al.}~\cite{brakerski2023simple} prove that the quantum bound is preserved in the specific case of the compiled CHSH game.
Later advancements generalized quantum soundness to broader classes of games, like XOR games, as shown independently by Baroni \emph{et al.}~\cite{baroni2024quantumboundscompiledxor} and Cui \emph{et al.}~\cite{CMMN2024Computational}, and tilted CHSH games as shown by Mehta \emph{et al.}~\cite{mehta2024selftestingcompiledsettingtiltedchsh}.
More recently, a result by Kulpe \emph{et al.}~\cite{KMPSW24bound} demonstrated that, in the limit of the cryptography being perfect, the quantum bound of any bipartite compiled game remains upper-bounded by the commuting-operator quantum bound of the original non-local game.

All of this previous work leaves the behaviour of quantum provers in compiled protocols resulting from \emph{multipartite} non-local games almost entirely unexplored.
It is known however that in the multipartite setting effects arise that differ fundamentally from the nature of bipartite quantum non-locality and entanglement.
In fact, genuinely multipartite phenomena such as post-quantum steering~\cite{postquantum-steering} seemed to indicate that a generalization of Kulpe \emph{et al.}'s result~\cite{KMPSW24bound} beyond two players, \emph{i.e.}, quantum soundness of the KLVY compiler in the multipartite case, might not hold.
A generalization of their bound to scenarios beyond two players, even for any specific game, remained so far unknown, being made more complicated by the fact that all proof techniques up to now rely on the bipartite structure of the provers' strategies.

Beyond being interesting from a purely theoretical perspective, compiling multipartite games is of practical relevance.
Experimentally enforcing the pairwise space-like separation of multiple players might be considerably more difficult than in the bipartite case, or even impossible in certain scenarios.
Hence, the use of cryptography to replace this necessity might be of substantial benefit.
Finally, as remarked by Metger \emph{et al.}~\cite{metger2024succinctargumentsqmastandard}, sound compilation techniques for nonlocal games with larger numbers of players might have consequences as well in the fields of cryptography and complexity theory, more specifically in the construction of \emph{succinct classical arguments for QMA}.

In this work, we explore for the first time the quantum soundness for the compilation of multipartite non-local games. Our main contribution is a multipartite extension of the asymptotic quantum soundness result for all bipartite scenarios.
More specifically, we prove that for all compiled multipartite games, in the limit of the cryptography being perfect, the quantum prover’s achievable winning probability is upper-bounded by the commuting-operator quantum bound of the original multipartite non-local game. 

\subsection{Technical overview of our results}

We begin by recalling the core ideas of the KLVY compiler.
In a general $\mathsf{k}$-partite non-local game, $\mathsf{k}$ space-like separated provers receive from a verifier a classical input and return a classical output, possibly following a pre-shared strategy that can make use of pre-shared resources.
The compilation procedure maps any such game to a single-prover interactive game, with $\mathsf{k}$ rounds of interaction. 
In every round, the verifier sends an encrypted input and waits for the corresponding encrypted output; intuitively, every round corresponds to a non-local prover.
A graphical representation of this can be found in Figure~\ref{fig:scheme}, more precisely the vertical arrow on the left.

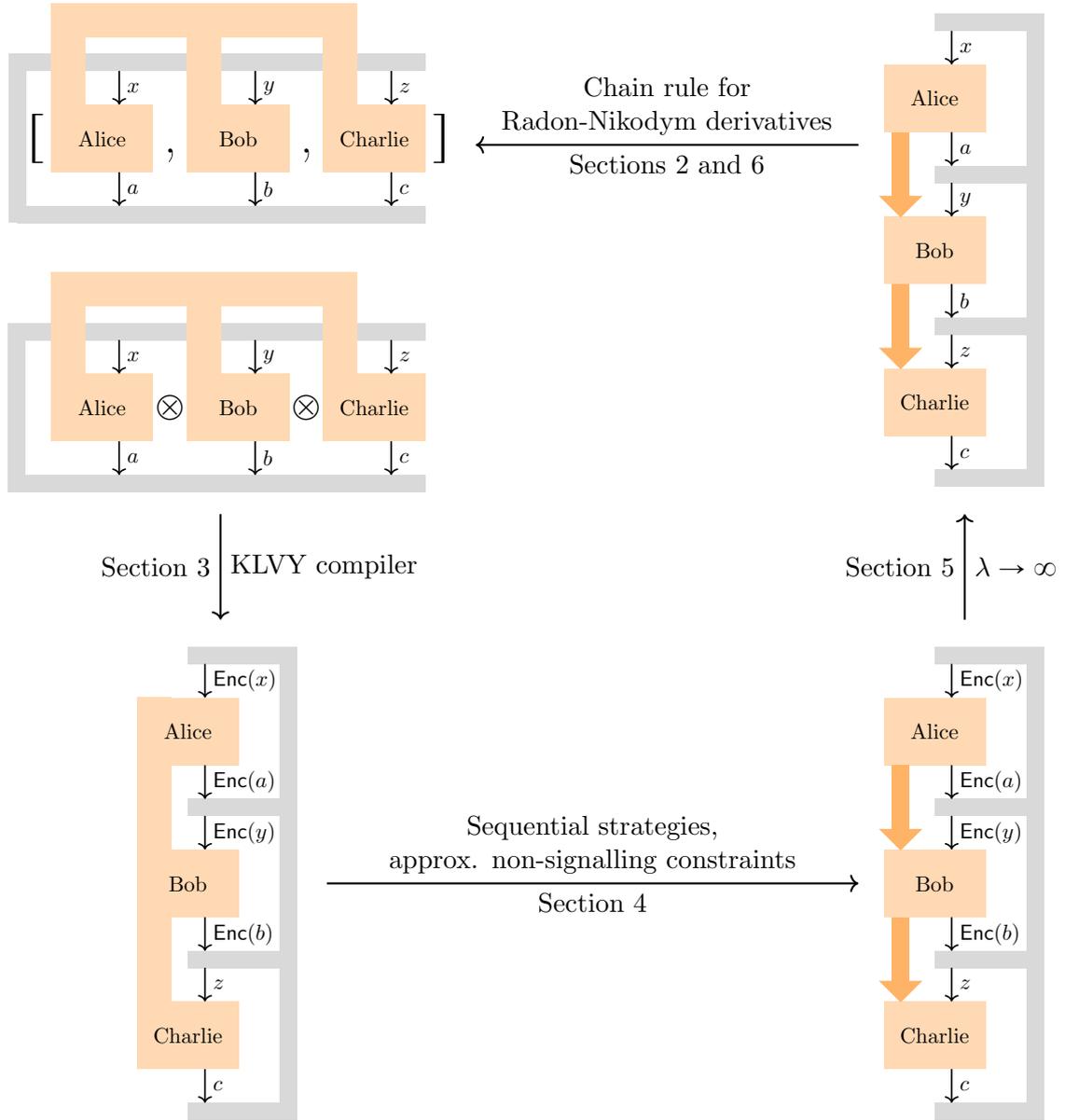
\begin{figure}[htp]
    \centering
\begin{tikzpicture}[scale=1.4, every node/.style={anchor=center}]

  \coordinate (A) at (-1, 0);   %
  \coordinate (B) at (6, 0);    %
  \coordinate (C) at (6, 6);    %
  \coordinate (D) at (-1, 4.75);   %
  \coordinate (E) at (-0.85, 7.3);   %

  \node at (A) {\scalebox{0.8}{%

\begin{tikzpicture}[scale=0.3]
    \fill[gray!30] (8.4,0) -- (9.4,0) --(9.4,28) -- (8.4,28) -- cycle;
    \fill[gray!30] (3,0) -- (8.5,0) --(8.5,1) -- (3,1) -- cycle;
    \fill[gray!30] (3,9) -- (8.5,9) --(8.5,10) -- (3,10) -- cycle;
    \fill[gray!30] (3,18) -- (8.5,18) --(8.5,19) -- (3,19) -- cycle;
    \fill[gray!30] (3,27) -- (8.5,27) --(8.5,28) -- (3,28) -- cycle;

    \fill[orange!30] (2,3) rectangle (6,7);
    \fill[orange!30] (2,12) rectangle (6,16);
    \fill[orange!30] (2,21) rectangle (6,25);

    \fill[orange!30] (0,3) -- (2,3) -- (2,25) -- (0,25) -- cycle;

    \node at (3,5){Charlie};
    \node at (3,14) {Bob};
    \node at (3,23) {Alice};

    \draw[->, thick] (4,27) -- (4,25) node[right,midway] {$\mathsf{Enc}(x)$};
    \draw[->, thick] (4,18) -- (4,16)node[right,midway] {$\mathsf{Enc}(y)$};
    \draw[->, thick] (4,9) -- (4,7) node[right,midway] {$z$};
    
    \draw[->, thick] (4,21) -- (4,19)node[right,midway] {$\mathsf{Enc}(a)$};
    \draw[->, thick] (4,12) -- (4,10)node[right,midway] {$\mathsf{Enc}(b)$};
    \draw[->, thick] (4,3) -- (4,1)node[right,midway] {$c$};

\end{tikzpicture}
}};
  \node at (B) {\scalebox{0.8}{%

\begin{tikzpicture}[scale=0.3]
    \fill[gray!30] (8.4,0) -- (9.4,0) --(9.4,28) -- (8.4,28) -- cycle;
    \fill[gray!30] (3,0) -- (8.5,0) --(8.5,1) -- (3,1) -- cycle;
    \fill[gray!30] (3,9) -- (8.5,9) --(8.5,10) -- (3,10) -- cycle;
    \fill[gray!30] (3,18) -- (8.5,18) --(8.5,19) -- (3,19) -- cycle;
    \fill[gray!30] (3,27) -- (8.5,27) --(8.5,28) -- (3,28) -- cycle;

    \fill[orange!30] (0,3) rectangle (6,7);
    \fill[orange!30] (0,12) rectangle (6,16);
    \fill[orange!30] (0,21) rectangle (6,25);

    \fill[orange!60] (0.5,12) rectangle (1.5,8);
    \node[isosceles triangle,
	isosceles triangle apex angle=90,
	fill=orange!60,
        rotate=-90,
	minimum size =5] (T90)at (1,7.8){};

        \fill[orange!60] (0.5,21) rectangle (1.5,17);
    \node[isosceles triangle,
	isosceles triangle apex angle=90,
	fill=orange!60,
        rotate=-90,
	minimum size =5] (T90)at (1,16.8){};

    \node at (3,5){Charlie};
    \node at (3,14) {Bob};
    \node at (3,23) {Alice};

    \draw[->, thick] (4,27) -- (4,25) node[right,midway] {$\mathsf{Enc}(x)$};
    \draw[->, thick] (4,18) -- (4,16)node[right,midway] {$\mathsf{Enc}(y)$};
    \draw[->, thick] (4,9) -- (4,7) node[right,midway] {$z$};
    
    \draw[->, thick] (4,21) -- (4,19)node[right,midway] {$\mathsf{Enc}(a)$};
    \draw[->, thick] (4,12) -- (4,10)node[right,midway] {$\mathsf{Enc}(b)$};
    \draw[->, thick] (4,3) -- (4,1)node[right,midway] {$c$};

	isosceles triangle apex angle=90,
	fill=orange!30,
        rotate=-90,
	minimum size =5] (T90)at (6,0){};
    
\end{tikzpicture}

}};
  \node at (C) {\scalebox{0.8}{%

\begin{tikzpicture}[scale=0.3]
    \fill[gray!30] (8.4,0) -- (9.4,0) --(9.4,28) -- (8.4,28) -- cycle;
    \fill[gray!30] (3,0) -- (8.5,0) --(8.5,1) -- (3,1) -- cycle;
    \fill[gray!30] (3,9) -- (8.5,9) --(8.5,10) -- (3,10) -- cycle;
    \fill[gray!30] (3,18) -- (8.5,18) --(8.5,19) -- (3,19) -- cycle;
    \fill[gray!30] (3,27) -- (8.5,27) --(8.5,28) -- (3,28) -- cycle;

    \fill[orange!30] (0,3) rectangle (6,7);
    \fill[orange!30] (0,12) rectangle (6,16);
    \fill[orange!30] (0,21) rectangle (6,25);

    \fill[orange!60] (0.5,12) rectangle (1.5,8);
    \node[isosceles triangle,
	isosceles triangle apex angle=90,
	fill=orange!60,
        rotate=-90,
	minimum size =5] (T90)at (1,7.8){};

        \fill[orange!60] (0.5,21) rectangle (1.5,17);
    \node[isosceles triangle,
	isosceles triangle apex angle=90,
	fill=orange!60,
        rotate=-90,
	minimum size =5] (T90)at (1,16.8){};

    \node at (3,5){Charlie};
    \node at (3,14) {Bob};
    \node at (3,23) {Alice};

    \draw[->, thick] (4,27) -- (4,25) node[right,midway] {$x$};
    \draw[->, thick] (4,18) -- (4,16)node[right,midway] {$y$};
    \draw[->, thick] (4,9) -- (4,7) node[right,midway] {$z$};
    
    \draw[->, thick] (4,21) -- (4,19)node[right,midway] {$a$};
    \draw[->, thick] (4,12) -- (4,10)node[right,midway] {$b$};
    \draw[->, thick] (4,3) -- (4,1)node[right,midway] {$c$};

	isosceles triangle apex angle=90,
	fill=orange!30,
        rotate=-90,
	minimum size =5] (T90)at (6,0){};
    
\end{tikzpicture}

}};
  \node at (D) {\scalebox{0.8}{%
\begin{tikzpicture}[scale=0.3]
    \fill[gray!30] (0,0) -- (24,0) -- (24,1) -- (0,1) -- cycle;
    \fill[gray!30] (-0.5,0) -- (0.5,0) -- (0.5,10) -- (-0.5,10) -- cycle;
    \fill[gray!30] (0,9) -- (24,9) -- (24,10) -- (0,10) -- cycle;

    \fill[orange!30] (4,3) rectangle (8,7);
    \fill[orange!30] (12,3) rectangle (16,7);
    \fill[orange!30] (20,3) rectangle (24,7);

    \draw[->, thick] (6,9) -- (6,7) node[right,midway] {$x$};
    \draw[->, thick] (14,9) -- (14,7) node[right,midway] {$y$};
    \draw[->, thick] (22,9) -- (22,7) node[right,midway] {$z$};
    
    \draw[->, thick] (6,3) -- (6,1) node[right,midway] {$a$};
    \draw[->, thick] (14,3) -- (14,1) node[right,midway] {$b$};
    \draw[->, thick] (22,3) -- (22,1) node[right,midway] {$c$};

    \fill[orange!30] (2,11) -- (20,11) -- (20,13) -- (2,13) -- cycle;
    \fill[orange!30] (2,3) -- (4,3) -- (4,11) -- (2,11) -- cycle;
    \fill[orange!30] (10,3) -- (12,3) -- (12,11) -- (10,11) -- cycle;
    \fill[orange!30] (18,3) -- (20,3) -- (20,11) -- (18,11) -- cycle;

    \node at (5,5) {Alice};
    \node at (13,5) {Bob};
    \node at (21,5) {Charlie};

    \node at (9,5) {\LARGE $\otimes$};
    \node at (17,5) {\LARGE $\otimes$};

\end{tikzpicture} %
}};
  \node at (E) {\scalebox{0.8}{%
\begin{tikzpicture}[scale=0.3]
    \fill[gray!30] (0,0) -- (24,0) -- (24,1) -- (0,1) -- cycle;
    \fill[gray!30] (-0.5,0) -- (0.5,0) -- (0.5,10) -- (-0.5,10) -- cycle;
    \fill[gray!30] (0,9) -- (24,9) -- (24,10) -- (0,10) -- cycle;

    \fill[orange!30] (4,3) rectangle (8,7);
    \fill[orange!30] (12,3) rectangle (16,7);
    \fill[orange!30] (20,3) rectangle (24,7);

    \draw[->, thick] (6,9) -- (6,7) node[right,midway] {$x$};
    \draw[->, thick] (14,9) -- (14,7) node[right,midway] {$y$};
    \draw[->, thick] (22,9) -- (22,7) node[right,midway] {$z$};
    
    \draw[->, thick] (6,3) -- (6,1) node[right,midway] {$a$};
    \draw[->, thick] (14,3) -- (14,1) node[right,midway] {$b$};
    \draw[->, thick] (22,3) -- (22,1) node[right,midway] {$c$};

    \fill[orange!30] (2,11) -- (20,11) -- (20,13) -- (2,13) -- cycle;
    \fill[orange!30] (2,3) -- (4,3) -- (4,11) -- (2,11) -- cycle;
    \fill[orange!30] (10,3) -- (12,3) -- (12,11) -- (10,11) -- cycle;
    \fill[orange!30] (18,3) -- (20,3) -- (20,11) -- (18,11) -- cycle;

    \node at (5,5) {Alice};
    \node at (13,5) {Bob};
    \node at (21,5) {Charlie};

    \node[left] at (2,5) {\Huge $[$};
    \node at (9,4) {\Huge ,};
    \node at (17,4) {\Huge ,};
    \node[right] at (24,5) {\Huge $]$};%

\end{tikzpicture} %
}};

  \draw[->, thick] (-1,3.5) -- node[right] {KLVY compiler} node[left] {Section~\ref{sec:compiled-non-local-games}} (-1,2.5);
  \draw[->, thick] (0,0) -- node[above, text width=7cm, align=center] {Sequential strategies, \\ approx. non-signalling constraints} node[below] {Section~\ref{sec:quantum-compiler}} (5,0);
  \draw[->, thick] (6,2.5) -- node[right] {$\lambda \to \infty$} node[left] {Section~\ref{sec:algebraic-strategies}} (6,3.5);
  \draw[->, thick] (5,7) -- node[above, text width=5cm, align=center] {Chain rule for \\ Radon-Nikodym derivatives} node[below] {Sections~\ref{sec:chain-rule-rn}~and~\ref{sec:asymptotic-bound}} (1.4,7);

\end{tikzpicture}     \caption{Graphical overview of our results and contributions.
    In all scenarios, the verifier is depicted in grey and the prover(s) are depicted in orange.
    The horizontal axis represents the spatial dimension, whereas time is flowing vertically, from top to bottom.
    Although all scenarios are depicted here involving three players, all results in our manuscript directly generalise and are proven in the more general scenario of any finite number of players.
    \textit{Starting at the top-left, counter-clockwise:}
    non-local game;
    KLVY-compiled game without space-like separations \textit{(bottom-left)};
    Sequential game with approximate no-signalling \textit{(bottom-right)};
    Algebraic strategies with operational no-signalling \textit{(top-right)};
    non-local game with commuting-operator strategy \textit{(top-left, again)}.
    }
    \label{fig:scheme}
\end{figure}

In this work we focus on the quantum soundness of the KLVY compiler.
While classical soundness was proven already in the original protocol, the proof techniques couldn't be directly translated to the quantum setting.
Kulpe \emph{et al.}~\cite{KMPSW24bound} showed the first general result about quantum soundness for compiled games; more precisely, they proved that the asymptotic quantum value of any compiled two-player game is upper bounded by the game’s commuting-operator value.
Their approach combines cryptographic and operator-algebraic techniques; we will now recall their core ideas, as they set the stage for our generalisations, which extend their result to all $\mathsf{k}$-players games.

\begin{theorem*}[Informal version of Theorem~\ref{th:asymptotic-quantum-soundness-k-players}] For every $\mk$-player non-local game $\cG$, the asymptotic quantum value of the compiled game $\cG_\la$ is upper bounded by the quantum commuting operator value of $\cG$, \emph{i.e.}, $\limsup_{\la\to\infty}\omega_q(\cG_\la) \leq \omega_{qc}(\cG)$.
\end{theorem*}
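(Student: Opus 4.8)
The plan is to follow the chain of reductions sketched in Figure~\ref{fig:scheme}, turning a near-optimal quantum strategy for the compiled game into a commuting-operator strategy for the original game while controlling the loss at each step. Fix $\varepsilon>0$ and, for every security parameter $\la$, take an (efficient) quantum strategy for $\cG_\la$ whose winning probability is within $\varepsilon$ of $\omega_q(\cG_\la)$. Such a strategy consists of an initial state together with, for each of the $\mk$ rounds, a channel that reads the verifier's encrypted question, homomorphically produces the encrypted answer, and updates a persistent quantum register. The first step is to recast this as a \emph{sequential} measurement strategy: by Naimark's dilation theorem the measurement producing the answer in round $i$ can be taken to be a PVM on the running Hilbert space, applied after those of rounds $1,\dots,i-1$. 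Because the questions are encrypted under a QFHE scheme with IND-CPA security, the statistics of the answers produced in any subset of rounds cannot depend---up to an additive $\negl(\la)$ term---on the plaintext questions of the remaining rounds; otherwise the prover, together with the verifier's encryptions, would constitute an efficient distinguisher breaking semantic security. This yields a family of sequential PVM strategies satisfying approximate operational no-signalling constraints.

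The second step is to pass to the limit $\la\to\infty$ and obtain an exact algebraic object. I would introduce the universal \cst-algebra generated by $\mk$ families of sequential PVMs (one family per round, one PVM per question), so that every finite-$\la$ strategy is encoded by a state on this fixed algebra via its measurement statistics. By weak-$*$ compactness of the state space (Banach--Alaoglu), the net of states indexed by $\la$ has a limit point, and along the corresponding subnet the winning probability converges to a value at least $\limsup_{\la\to\infty}\omega_q(\cG_\la)-\varepsilon$. Crucially, the $\negl(\la)$ violations of no-signalling vanish in the limit, so the limiting state satisfies \emph{exact} sequential operational no-signalling. At this point the problem is purely operator-algebraic: I must show that any such state arises from a genuine commuting-operator strategy for $\cG$.

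The heart of the argument, and the step I expect to be the main obstacle, is this last implication in the $\mk$-partite setting. In the bipartite case one can read off the second player's conditional measurement from the post-measurement state of the first, but with $\mk$ rounds the conditioning must be iterated, and a naive reconstruction destroys the mutual commutation required of a commuting-operator strategy---the very feature that genuinely multipartite phenomena such as post-quantum steering suggested might fail. The key tool is the chain rule for Radon--Nikodym derivatives of completely positive maps: the sequential strategy defines, for each round, a completely positive map implementing the conditioning on the answers already observed, and the Radon--Nikodym derivative of this map with respect to the (no-signalling) marginal functional yields the measurement operator of that round as an operator acting on the GNS space. The chain rule ensures that these per-round derivatives compose coherently across all $\mk$ rounds, so that the operators extracted for the different players share a common representation and, by operational no-signalling, commute pairwise. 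This produces a single commuting-operator model reproducing the limiting correlation.

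Finally, since the reconstructed strategy is a commuting-operator strategy for $\cG$ generating exactly the limiting correlation, its winning probability is at most $\omega_{qc}(\cG)$. Combining this with the two preceding steps gives $\limsup_{\la\to\infty}\omega_q(\cG_\la)-\varepsilon\le\omega_{qc}(\cG)$, and letting $\varepsilon\to 0$ yields the claimed bound.
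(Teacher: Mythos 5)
Your overall architecture matches the paper's proof exactly: sequential modelling of the prover, approximate no-signalling from IND-CPA, passage to a limiting state on a universal \cst-algebra of sequential PVMs via Banach--Alaoglu, and extraction of a commuting-operator model via the chain rule for Radon--Nikodym derivatives of CP maps. However, there is a genuine gap in your second step. The reduction you invoke --- ``otherwise the prover would constitute an efficient distinguisher'' --- only yields indistinguishability for quantities the prover can \emph{physically and efficiently} produce, i.e.\ for the actual measurement statistics. But the limiting argument requires the approximate no-signalling to hold for the expectation values of \emph{arbitrary non-commutative polynomials} in the later-round measurement operators, and moreover for these polynomials sandwiched between arbitrary algebra elements $\mathfrak{l}^\ast(\cdot)\mathfrak{r}$ and pushed through the intermediate quantum instruments: these are the quantities on which the weak-$*$ limit state must be shown to be input-independent, since polynomials in the generators are what is dense in the universal algebra. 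Such polynomials are in general neither positive, nor contractive, nor physically implementable, so no direct IND-CPA reduction applies. Closing this gap is a substantial part of the paper's work: one shows that polynomials of efficient measurements admit QPT block-encodings, that block-encodable operators of bounded norm can be \emph{efficiently estimated} (an extension of the energy-estimation technique to non-Hermitian operators and non-state arguments), and, new to the multipartite case, that even the conjugation $\mathcal{R}\,\rho\,\mathcal{L}^{\ast}$ by block-encodable operators followed by a quantum instrument can be efficiently simulated (via a Wittstock-type decomposition into CP maps and a unitary dilation of diagonal contractions). Only then does IND-CPA security transfer to the algebraic statement you need.

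A second, smaller issue: the condition you actually obtain in the limit is not operational no-signalling of the transformations as abstract CP maps, $\sum_b T_{b|y}=\sum_{b'}T_{b'|y'}$, but only its validity inside the GNS representation of the limiting state, \emph{i.e.}\ $\pi\circ T_{y}=\pi\circ T_{y'}$ after using cyclicity. The plain chain rule therefore does not apply as stated; one needs the generalised version in which the domination hypothesis on the $i$-th map is imposed only after composing with the representation $\pi_{i-1}$ arising from the previous round's Stinespring dilation. This is a repairable refinement rather than a conceptual error, but as written your final step asserts a hypothesis (exact operational no-signalling of the instruments) that the limit does not deliver.
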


One of the recurring themes when generalising the previous results of ~\cite{KMPSW24bound} to the multipartite case is the need to deal with quantum instruments, \emph{i.e.}, maps of classical-quantum information.
In the bipartite case, it is sufficient to handle quantum states and measurements as the two natural objects that arise in the study of quantum sequential strategies in compiled games.
When considering more than two players, any player in a sequential strategy that is neither the first nor the last, accepts classical input and a quantum state, and implements a physical map that outputs classical information alongside quantum information that is passed on to the next player.
These players are most generally described by aforementioned \emph{quantum instruments}.
Besides being a natural generalisation of both states and measurements, quantum instruments additionally solve the issue of composition.
Algebraically, this will later analogously require us to generalise from states to the study of completely positive maps.

The first key idea of~\cite{KMPSW24bound} is a sequential characterisation of bipartite quantum commuting operator correlations, motivated by the conditions imposed on strategies in the compiled game by idealised cryptography with perfect security. They recognise that this problem can be solved as a corollary of a famous algebraic theorem, the Radon-Nikodym Theorem for positive functionals.
Our first contribution is a generalisation of this result to arbitrary numbers of participating players: a characterisation of multipartite quantum commuting operator correlations as exactly those correlations arising from sequential strategies with the new notion of \emph{operational no-signalling}.
This new notion in the context of non-local games is reminiscent of the notions of preparation and transformation equivalences that are considered in contextuality, establishing an interesting link between the two settings.
The new characterisation is obtained as a consequence of two core observations.
First, Belavkin's generalisation of the Radon–Nikodym Theorem \cite{BS86radon} from functionals to completely positive maps is sufficient to treat the case of quantum instruments.
Second, we require a composition theorem to preserve commuting relations through the composition of quantum instruments, \emph{i.e.}, to deal with iterative applications of Belavkin's Radon-Nikodym theorem.
To this end, we prove a new chain rule for Radon-Nikodym derivatives of CP maps.
In Figure~\ref{fig:scheme}, this step is depicted by the top arrow.

The second technical contribution of~\cite{KMPSW24bound} is to link computational security with information-theoretic security in the limit of the security parameter approaching infinity.
To this end, appropriate algebraic structure is introduced which is fit to capture and explain the mathematical objects arising in this limit.
As our second core contribution, we construct the algebraic structures necessary to generalise these arguments to deal with quantum instruments and compositions of quantum instruments, that are naturally arising in the study of the multipartite case.
Then, by appropriately characterising the constraints arising from the security of the cryptographic scheme for finite levels of security $\lambda$, we can take the limit of $\lambda$ approaching infinity and obtain an idealised version of perfect security for the limiting objects.
This argumentation corresponds to the arrows in the lower right corner of Figure~\ref{fig:scheme}.
Relying on the previously developed characterisation of the thus arising correlations, we finally end up with a characterisation in terms of a fully commuting operator strategy.

Thanks to the sequentiality of the compiled protocol, a single prover in the compiled game can be described by an interactive quantum algorithm with $\mk$ rounds of interaction.
In this way, the first round can be modelled as a state preparation, the last one as a measurement, and the ones in between as instruments.
The causal structure imposed by sequentiality immediately implies one-way no-signalling, since the future cannot signal to the past. The security of the employed cryptography enforces approximate no-signalling in the other direction: the prover’s responses can not directly depend on the plaintext inputs of other rounds, even those in the past.
The authors of~\cite{NZ2023Bounding,KMPSW24bound} realised that in the case of state preparation, an even stronger condition is true.
If $\sigma_{a|x}^\lambda$ describes the assemblage of states prepared by the prover during the first round of the compiled game, they observed that
\begin{equation*}
    \sum_a \sigma_{a|x}^\lambda \approx_\lambda \sum_{a'} \sigma_{a'|x'}^\lambda, \qquad \forall x,x'.
\end{equation*}
This is to be understood as an algebraic condition and hence goes beyond mere conditions on resulting correlations.
Employing block-encoding techniques, we show that operators implementable by quantum-polynomial-time (QPT) circuits in fact satisfy stronger algebraic constraints, an approximate form of operational no-signalling for quantum instruments.
Denoting by $I_{b|y}^\lambda$ the quantum instrument implemented by the prover in any of the other rounds, we prove that the following is true:
\begin{equation*}
    \sum_b I_{b|y}^\lambda \approx_\lambda \sum_{b'} I_{b'|y'}^\lambda, \qquad \forall y,y'.
\end{equation*}

However, because the dimension of the Hilbert space underlying each strategy may vary with $\la$, it becomes difficult to extract a limiting object directly. Therefore, we transition to an algebraic model that captures these strategies uniformly. We define a sequence of universal $C^*$-algebras generated by the sequential measurements associated with each round. Any efficient quantum prover’s strategy induces a *-homomorphism from these universal $C^*$-algebras to bounded operators on some Hilbert space. This representation facilitates the application of compactness arguments, allowing us to extract a limiting strategy.

Finally, by combining the algebraic representation, compactness, and the chain rule, we extract a limiting strategy from the sequence of algebraic strategies indexed by $\la$, and show that this limiting strategy corresponds to a $\mk$-partite commuting-operator strategy. This establishes the desired asymptotic upper-bound on the quantum value of compiled games.

\subsection{Technical contributions of independent interest}

In this subsection we highlight several auxiliary technical contributions that allowed achieving the above result. These may be of independent interest beyond the scope of compiled games, as they generalize or refine tools in operator algebras, quantum information, and quantum cryptography.

\paragraph{Chain rule for Radon–Nikodym derivatives of CP maps on C*-algebras.}
The Radon-Nikodym Theorem allows the representation of a CP map between C*-algebras relative to the Stinespring dilation of a dominant CP map on a concrete Hilbert space.
In this representation, the dominated CP map appears in form of a bounded operator on this Hilbert space, the so-called Radon-Nikodym derivative, that commutes with the image of the representation of the dominant CP map.
The main obstacle for the generalization of this result to a chain of CP maps is that repeated applications of the Radon-Nikodym Theorem and thereby repeated dilations lead to a hierarchy of Hilbert spaces connected by isometries and derivative operators scattered all over the hierarchy.
Straightforward lifting of the derivatives through the isometries into the same Hilbert space breaks commutation relations and the unitality necessary to identify the derivative operators with POVM elements.
We circumvent these caveats through the use of alternative liftings that rely on the explicit structure of the involved Stinespring dilations and the Radon-Nikodym commutation relations, and establish a novel chain rule that allows one to compose Radon–Nikodym derivatives for a sequence of completely positive (CP) maps on C*-algebras.
The consequence is a representation of chains of CP maps on the Hilbert space given by the Stinespring dilation of a dominant chain of CP maps.
All commutation relations of the Radon-Nikodym derivatives obtained in this manner are preserved.
\begin{theorem*}[Informal version of theorem~\ref{th:chained-rn-k}]
    Let $S^{(i)}, R^{(i)}$ for $i=1,\dots,k$ be chains of completely positive maps on C*-algebras such that $S^{(i)} \leq R^{(i)}$. Then, there exists a representation on a Hilbert space $\mathcal{K}$ such that
    \begin{align*}
        S^{(1)} \circ \dots \circ S^{(k)} (\mathfrak{a}) = V^\ast F^{(1)} \dots F^{(k)} \pi(\mathfrak{a}) V, \qquad \forall \mathfrak{a} \in \mathscr{A},
    \end{align*}
    where all Radon-Nikodym derivatives, the bounded operators $F^{(i)} \in \mathsf{B}(\mathcal{K})$, are pairwise commuting, $V : \mathcal{H} \to \mathcal{K}$ is a bounded operator, and $\pi : \mathscr{A} \to \mathsf{B}(\mathcal{K})$ is a *-representation. 
\end{theorem*}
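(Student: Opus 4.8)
The plan is to proceed by induction on the length $k$ of the chain, building the Hilbert space $\mathcal{K}$ as the top of an iterated Stinespring tower for the \emph{dominant} chain $R^{(1)}\circ\dots\circ R^{(k)}$, and to create the Radon--Nikodym derivatives one level at a time by Belavkin's theorem \cite{BS86radon}. Writing $S^{(i)},R^{(i)}\colon\mathscr{A}_i\to\mathscr{A}_{i-1}$ with $\mathscr{A}=\mathscr{A}_k$ and $\mathscr{A}_0=\mathsf{B}(\mathcal{H})$, the base case $k=1$ is exactly Belavkin's Radon--Nikodym theorem: it yields a $*$-representation $\rho_1$ on $\mathcal{K}_1$, a bounded $W_1\colon\mathcal{H}\to\mathcal{K}_1$, and a positive contraction $F^{(1)}\in\rho_1(\mathscr{A}_1)'$ with $S^{(1)}(\mathfrak{b})=W_1^\ast F^{(1)}\rho_1(\mathfrak{b})W_1$. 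The inductive hypothesis I would carry is stronger than the statement and is what makes the argument close: at level $m$ I assume an operator $V_m=W_m\cdots W_1$, a representation $\rho_m$ on $\mathcal{K}_m$, and positive contractions $F^{(1)},\dots,F^{(m)}$ that (a) all lie in $\rho_m(\mathscr{A}_m)'$, (b) pairwise commute, and (c) satisfy $S^{(1)}\circ\dots\circ S^{(m)}(\mathfrak{b})=V_m^\ast F^{(1)}\cdots F^{(m)}\rho_m(\mathfrak{b})V_m$ for all $\mathfrak{b}\in\mathscr{A}_m$.

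For the inductive step I would peel off $S^{(k)}$ from the inside, using $S^{(1)}\circ\dots\circ S^{(k)}(\mathfrak{a})=S^{(1)}\circ\dots\circ S^{(k-1)}\big(S^{(k)}(\mathfrak{a})\big)$ and the hypothesis at level $k-1$. I then dilate the CP map $\rho_{k-1}\circ R^{(k)}\colon\mathscr{A}_k\to\mathsf{B}(\mathcal{K}_{k-1})$, realising $\mathcal{K}_k$ concretely as the completion of $\mathscr{A}_k\otimes\mathcal{K}_{k-1}$ under $\langle \mathfrak{a}\otimes\eta,\mathfrak{a}'\otimes\eta'\rangle=\langle\eta,\rho_{k-1}(R^{(k)}(\mathfrak{a}^\ast\mathfrak{a}'))\eta'\rangle$, with $\rho_k(\mathfrak{b})(\mathfrak{a}\otimes\eta)=(\mathfrak{b}\mathfrak{a})\otimes\eta$ and $W_k\eta=1\otimes\eta$. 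Since $\rho_{k-1}\circ S^{(k)}\leq\rho_{k-1}\circ R^{(k)}$ (composition with a representation preserves the CP order), Belavkin's theorem births a fresh derivative $\tilde F^{(k)}\in\rho_k(\mathscr{A}_k)'$ on $\mathcal{K}_k$ with $\rho_{k-1}(S^{(k)}(\mathfrak{a}))=W_k^\ast\tilde F^{(k)}\rho_k(\mathfrak{a})W_k$, and I set $F^{(k)}:=\tilde F^{(k)}$. The decisive choice is the \emph{lifting}: for $X\in\rho_{k-1}(\mathscr{A}_{k-1})'$ I define $\hat X$ on $\mathcal{K}_k$ by $\hat X(\mathfrak{a}\otimes\eta)=\mathfrak{a}\otimes X\eta$, i.e. acting on the \emph{inner} Stinespring factor rather than transporting through $W_k$. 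A matrix-element check shows $\hat X$ is well defined and bounded precisely because $X$ commutes with $\rho_{k-1}(\mathscr{A}_{k-1})$, that $X\mapsto\hat X$ is a unital $*$-homomorphism into $\rho_k(\mathscr{A}_k)'$ (hence contractive, positivity- and commutativity-preserving), and that it intertwines the dilation via $\hat X W_k=W_k X$. I then lift the old derivatives, $\hat F^{(j)}$ for $j<k$, to $\mathcal{K}_k$.

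The crux — and the step I expect to be the main obstacle — is showing that the newly created $\tilde F^{(k)}$ commutes with every lifted $\hat F^{(j)}$, since a naive transport of the $F^{(j)}$ through the isometries would scatter them across the tower and destroy exactly these relations. I would prove this by comparing matrix elements of $\hat F^{(j)}\tilde F^{(k)}$ and $\tilde F^{(k)}\hat F^{(j)}$ on vectors $\mathfrak{a}\otimes\eta$: both reduce, using self-adjointness of the lift and the Belavkin matrix-element formula $\langle\mathfrak{a}\otimes\eta,\tilde F^{(k)}(\mathfrak{a}'\otimes\eta')\rangle=\langle\eta,\rho_{k-1}(S^{(k)}(\mathfrak{a}^\ast\mathfrak{a}'))\eta'\rangle$, to the question of whether $F^{(j)}$ commutes with $\rho_{k-1}(S^{(k)}(\mathfrak{a}^\ast\mathfrak{a}'))$. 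This holds because $S^{(k)}(\mathfrak{a}^\ast\mathfrak{a}')\in\mathscr{A}_{k-1}$ and $F^{(j)}\in\rho_{k-1}(\mathscr{A}_{k-1})'$ by the inductive hypothesis — which is exactly why carrying commutant-membership through the induction is indispensable. Mutual commutativity of the $\hat F^{(j)}$ among themselves is then free from multiplicativity of the lift, and all of them land in $\rho_k(\mathscr{A}_k)'$.

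Finally I would collapse the expression. Combining the two displays gives $S^{(1)}\circ\dots\circ S^{(k)}(\mathfrak{a})=V_{k-1}^\ast F^{(1)}\cdots F^{(k-1)}W_k^\ast\tilde F^{(k)}\rho_k(\mathfrak{a})W_kV_{k-1}$, and repeated use of the intertwining identity $F^{(j)}W_k^\ast=W_k^\ast\hat F^{(j)}$ sweeps $W_k^\ast$ to the left, yielding $V_k^\ast\hat F^{(1)}\cdots\hat F^{(k-1)}F^{(k)}\rho_k(\mathfrak{a})V_k$ with $V_k=W_kV_{k-1}$. Relabelling $\hat F^{(j)}\mapsto F^{(j)}$, $\rho_k\mapsto\pi$, $\mathcal{K}_k\mapsto\mathcal{K}$ closes the induction with all three invariants intact, and since the lift preserves $0\le\cdot\le 1$, the resulting derivatives remain positive contractions (useful for the later POVM identification, though only pairwise commutativity is asserted in the statement). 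The only non-routine ingredients are the lifting lemma and the cross-commutation computation above; the rest is Stinespring/Belavkin bookkeeping, carried out under the standing assumption that the algebras are unital so that $W_k\eta=1\otimes\eta$ is literally available.
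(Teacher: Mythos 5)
Your proposal is correct and follows essentially the same route as the paper: induction on the chain length, an iterated Stinespring tower built over the dominant chain, and the same lifting map acting on the inner tensor factor of each dilation (this is exactly the paper's Lifting Lemma, Lemma~\ref{lem:extended-derivative}), with the same intertwining identities $F^{(j)}W_k^\ast = W_k^\ast \hat F^{(j)}$ used to sweep the isometries outward and collapse the expression. The only divergence is in the cross-commutation step: you verify $[\hat F^{(j)},\tilde F^{(k)}]=0$ by a direct matrix-element computation on elementary tensors via the identity $\langle \mathfrak{a}\otimes\eta,\tilde F^{(k)}(\mathfrak{a}'\otimes\eta')\rangle=\langle\eta,\rho_{k-1}(S^{(k)}(\mathfrak{a}^\ast\mathfrak{a}'))\eta'\rangle$ together with the inductive invariant $F^{(j)}\in\rho_{k-1}(\mathscr{A}_{k-1})'$, whereas the paper derives the same relation from the permutations of orderings permitted by the two Radon--Nikodym commutation relations combined with minimality of the dilations --- both arguments are valid and rest on the same underlying facts.
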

In practical terms, if we have multiple measurement processes in sequence (each giving rise to a CP map relative to a prior state or algebra), our chain rule provides a systematic way to derive a single cumulative Radon–Nikodym derivative that captures all stages at once. This result is crucial for inductively extending two-player operator-algebraic techniques to $k$ players, as it ensures consistency when relating the algebraic strategies across sequential rounds.

\paragraph{Characterization of sequential operationally non-signalling correlations as commuting-operator correlations.}
The Radon-Nikodym Theorem can be seen as an algebraic and infinite-dimensional generalization of the finite-dimensional S-G-HJW theorem.
Building upon this powerful tool, we give a complete characterization of the set of correlations arising from $k$-partite sequential \emph{operationally non-signalling} strategies (i.e. no information is transmitted between different rounds of the interaction) by showing that they exactly coincide with the set of $k$-partite commuting-operator correlations. Conversely, any commuting-operator strategy for the original $k$-player game can be realized by a $k$-player sequential strategy that satisfies the operational non-signalling constraints.
This bridges the gap between the cryptographic limits of computational single-prover models and the usual entangled-multiprover model, generalizing earlier two-player results to the fully general sequential scenario.

\begin{theorem*}[Informal version of Corollary~\ref{cor:k-seq-players}]
    For any finite number of players, the correlations generated by sequential, operationally no-signalling strategies are exactly those generated by commuting operator strategies.
\end{theorem*}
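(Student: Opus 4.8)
The statement is an equivalence, so the plan is to prove the two inclusions separately. For the easier direction---that every commuting-operator correlation is realisable by a sequential operationally no-signalling strategy---I would start from a commuting-operator model consisting of a Hilbert space $\mathcal{H}$, a unit vector $\ket{\psi}$, and POVMs $\{E^{(i)}_{a_i|x_i}\}$ for each player $i$ with $[E^{(i)}_{a_i|x_i}, E^{(j)}_{a_j|x_j}] = 0$ whenever $i \neq j$. I would then define the sequential strategy that prepares in round $1$, applies in each intermediate round $i$ the symmetric Lüders instrument $I^{(i)}_{a_i|x_i}(\rho) = (E^{(i)}_{a_i|x_i})^{1/2}\, \rho\, (E^{(i)}_{a_i|x_i})^{1/2}$, and ends in round $k$ with the measurement $\{E^{(k)}_{a_k|x_k}\}$; the resulting correlation is $\bra{\psi} \prod_i E^{(i)}_{a_i|x_i} \ket{\psi}$, as desired. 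The point requiring care is operational no-signalling: $\sum_{a_i} I^{(i)}_{a_i|x_i}$ need not be $x_i$-independent as a channel on all of $\mathcal{H}$. The resolution is that only the observable content relevant to future players is passed forward, namely the algebra generated by $\{E^{(j)}\}_{j>i}$; on that algebra cross-player commutation gives $\sum_{a_i}(E^{(i)}_{a_i|x_i})^{1/2} B (E^{(i)}_{a_i|x_i})^{1/2} = (\sum_{a_i} E^{(i)}_{a_i|x_i})\,B = B$ for every future observable $B$, so the marginal map is $x_i$-independent. I would therefore phrase the sequential model on this relative commutant, where operational no-signalling becomes a genuine algebraic identity.

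The substantive direction is the converse. Given a sequential operationally no-signalling strategy---an initial assemblage $\sigma_{a_1|x_1}$, intermediate instruments $I^{(i)}_{a_i|x_i}$ for $2 \le i \le k-1$, and a final POVM $\{M_{a_k|x_k}\}$---I would treat the state preparation and the final measurement uniformly as instruments (with trivial quantum input, respectively trivial quantum output), so that the correlation becomes a single chain $M_{a_k|x_k} \circ I^{(k-1)}_{a_{k-1}|x_{k-1}} \circ \cdots \circ \sigma_{a_1|x_1}$. Operational no-signalling says that for each round the accumulated map $\sum_{a_i} I^{(i)}_{a_i|x_i} =: \Phi_i$ is $x_i$-independent, whence $I^{(i)}_{a_i|x_i} \le \Phi_i$ as CP maps. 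Belavkin's Radon--Nikodym theorem then produces, for each round, a derivative $F^{(i)}_{a_i|x_i}$ living in the commutant of the Stinespring representation of $\Phi_i$, with $\sum_{a_i} F^{(i)}_{a_i|x_i} = I$, i.e. a POVM indexed by the question.

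The core difficulty, and where I would invoke Theorem~\ref{th:chained-rn-k}, is that these derivatives are a priori scattered across a hierarchy of Stinespring spaces linked by isometries; naively transporting them into a common space destroys both their pairwise commutation and the unitality that lets one read them as POVM elements. The chain rule is precisely the tool that represents the whole composition on a single Hilbert space $\mathcal{K}$ with the derivatives $F^{(i)}$ pairwise commuting and the dilation data $V,\pi$ consolidated. Applying it to the chain above yields a unit vector $\ket{\Psi} = V\ket{\psi_0} \in \mathcal{K}$ and commuting POVMs $\{F^{(i)}_{a_i|x_i}\}$ such that $p(a_1,\dots,a_k \,|\, x_1,\dots,x_k) = \bra{\Psi} F^{(1)}_{a_1|x_1} \cdots F^{(k)}_{a_k|x_k} \ket{\Psi}$, which is a $k$-partite commuting-operator correlation (Naimark-dilating to projections if one insists on projective measurements).

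I expect the main obstacle to be the bookkeeping in the inductive application of the chain rule: verifying that operational no-signalling at each round furnishes the domination $I^{(i)}_{a_i|x_i} \le \Phi_i$ in exactly the form the chain rule consumes, that the consolidated derivatives remain sub-unital and sum to the identity after transport into $\mathcal{K}$, and that the assemblage (round $1$) and final POVM (round $k$) slot into the chain without special-casing. A secondary subtlety is matching the sequential composition order with the commuting product form: because the $F^{(i)}$ commute the ordering is immaterial in the final expression, but one must check that the chain rule's output genuinely realises the same ordered composition that defines the correlation.
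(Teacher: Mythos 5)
Your proposal is correct and follows essentially the same route as the paper: the substantive direction is obtained exactly as in Corollary~\ref{cor:k-seq-players} by reading operational no-signalling as the domination $S^{(i)}_{x_i} \leq R^{(i)}$ and invoking the chain rule of Theorem~\ref{th:chained-rn-k} (with Lemma~\ref{lemma:chain-2-rn-sums} and Remark~\ref{remark:rn-2-chain-functional} supplying the POVM normalisation and the treatment of the first round as a functional), so that all derivatives land commuting in a single dilation space. Your explicit L\"uders-instrument construction for the converse inclusion, restricted to the algebra of future observables so that the marginal channel becomes genuinely input-independent, correctly fills in the direction the paper only asserts in its introduction.
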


\paragraph{Universal C*-algebras of sequential PVMs and POVMs.}
We construct a \emph{universal C*-algebra} that is generated by a collection of sequential measurement operators (PVM or POVM elements) corresponding to multiple rounds of interaction. This construction generalizes the standard universal C*-algebra of a (static) PVM (POVM) to the setting of a sequential PVM (POVM) (measurements performed in a fixed temporal order). The universal property of this algebra means that any concrete quantum strategy implementing those sequential measurements is represented by a *-homomorphism from the universal algebra into the strategy’s operator algebra (see Figure~\ref{fig:intro-univ-algebras-3-players}). This tool allows us to reason about arbitrary sequential strategies in an abstract algebraic way and is instrumental in our proofs of the above results, especially in unifying the treatments of different rounds within a single mathematical framework.

\begin{figure}[ht]
    \begin{subfigure}[t]{0.45\textwidth}
        \centering
\begin{tikzpicture}
    \node (ABC) at (2.5,2) {\( \mathscr{A}_B \)};
    \node (C) at (5.5,0) {\( \mathbb{C} \)};
    \node (BH2) at (2.5,0) {\( \mathsf{B}(\mathcal{H}^\lambda) \)};

    \draw[->] (ABC) -- (C) node[midway, above,  yshift=5pt] {\( \phi_{a|x}^\lambda \)};
    \draw[->] (ABC) -- (BH2) node[midway, left] {\( \theta^\lambda \)};
    \draw[->] (BH2) -- (C) node[midway,below] {\( \mathsf{Tr}(\sigma_{a|x}^\lambda \cdot) \)};

\end{tikzpicture}
     \end{subfigure}
    \begin{subfigure}[t]{0.45\textwidth}
        \centering
\begin{tikzpicture}
    \node (AC) at (0,2) {\( \mathscr{A}_C \)};
    \node (ABC) at (2.5,2) {\( \mathscr{A}_{B \to C} \)};
    \node (C) at (5.5,0) {\( \mathbb{C} \)};
    \node (BH1) at (0,0) {\( \mathsf{B}(\mathcal{H}^\lambda) \)};
    \node (BH2) at (2.5,0) {\( \mathsf{B}(\mathcal{H}^\lambda) \)};

    \draw[->] (AC) -- (ABC) node[midway, above] {\( T_{b|y} \)};
    \draw[->] (ABC) -- (C) node[midway, above,  yshift=5pt] {\( \phi_{a|x}^\lambda \)};
    \draw[->] (AC) -- (BH1) node[midway, left] {\( \theta^\lambda_{C} \)};
    \draw[->] (ABC) -- (BH2) node[midway, left] {\( \theta^\lambda_{BC} \)};
    \draw[->] (BH1) -- (BH2) node[midway,below] {\( B^{\lambda,*}_{b|y} \)};
    \draw[->] (BH2) -- (C) node[midway,below] {\( \mathsf{Tr}(\sigma_{a|x}^\lambda \cdot) \)};

\end{tikzpicture}
     \end{subfigure}
    \caption{Commutative diagrams of universal C*-algebras capturing PVMs (POVMs) in the two-player case \textit{(left)} and sequential PVMs (POVMs) in the three-player case \textit{(right)}.}
    \label{fig:intro-univ-algebras-3-players}
\end{figure}
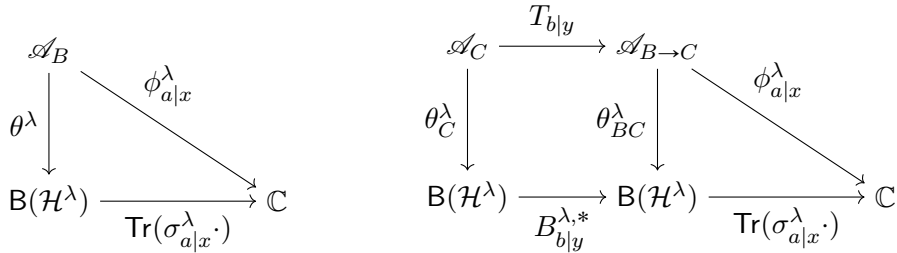

\paragraph{Algebraic consequences of IND-CPA security via block encodings.}
Let $\sigma_{\enc(x)}$ be an efficiently prepared quantum state from the ciphertext of $x$ under some encryption scheme.
\cite{NZ2023Bounding} noticed that the IND-CPA security of encryption schemes imposes approximate no-signalling conditions on $\sigma_{\enc(x)}$ with respect to $x$ not only when this state is measured using an efficiently implementable POVM $\{M_b\}$, but also when given access to ``expectation values'' of polynomials in the POVM elements, $\Tr [ \sigma_{\enc(x)} P\{M_b\}]$.
An extremely useful tool in the treatment of these, in general unphysical, quantities are block encodings; in particular they are used to describe the efficient implementability and estimatability of expectation values of $P\{M_b\}$.
Relying on algorithmic estimators for efficiently block-encodable observables, \cite{NZ2023Bounding} managed to extend approximate no-signalling of states of the form $\sigma_{\enc(x)}$ all the way to algebras generated by elements of efficiently implementable POVMs, a crucial step in the further algebraic treatment of compiled strategies.
Generalizing to the multipartite case, we require a stronger condition that does not only hold for (efficiently preparable) quantum states carrying information about the ciphertext of $x$, but more generally for (efficiently implementable) transformations of quantum information, in the shape of quantum instruments and channels.
\begin{theorem*}[Informal version of theorem~\ref{th:3pl-ind-cpa}]
    Let $\{B_y^\lambda\}$ be a family of QPT-implementable quantum channels which receive as classical input an encryption of $y$. Let further $\mathcal{L}^\lambda, \mathcal{R}^\lambda, \mathcal{M}^\lambda$ be families of uniformly bounded QPT-block-encodable operators, and $\rho^\lambda$ be a family of efficiently preparable states. It then holds that
    \begin{align*}
        \left| \Tr\left[ \mathcal{M}^\lambda B_{y}^\lambda \left( \mathcal{R}^\lambda \rho^\lambda \mathcal{L}^{\lambda,*} \right) \right] - \Tr\left[ \mathcal{M}^\lambda B_{y'}^\lambda \left( \mathcal{R}^\lambda \rho^\lambda \mathcal{L}^{\lambda,*} \right) \right] \right| \leq \textsf{negl}(\lambda),
    \end{align*}
    for all inputs $y,y'$.
\end{theorem*}
This result is a generalization of its analogue for states presented in \cite{NZ2023Bounding,KMPSW24bound}, as states can be seen as special cases of quantum channels without (quantum) input.

\subsection{Discussion and open questions}

This work settles an important open question about the compiled nonlocal games introduced in~\cite{KLVY22Quantum}, by proving that their compiler is asymptotically sound for efficient quantum provers in multipartite games.
In particular, the maximum score that can be reached by efficient quantum provers in the compiled game is asymptotically upper-bounded by the maximum score reachable by commuting-operator quantum strategies in the original game, pre-compilation.
This holds for all multipartite games.
\paragraph{Tighter asymptotic bounds.}
The upper-bound obtained in this paper is given by the commuting-operator quantum value of the nonlocal game.
The compiler's quantum completeness guarantees that the spatial quantum value of the nonlocal game, in which permissible strategies involve non-communicating provers operating in distinct Hilbert spaces, can indeed be achieved by efficient quantum strategies in the compiled game.
Since it is known that there exist games for which these two values do not coincide \cite{ji2022mipre}, this leaves open a gap, and the question whether the asymptotic bounds obtained in this work and previously are indeed tight.

\paragraph{Non-asymptotic quantum soundness.}
One remaining question is whether one can show not just asymptotic but also quantitative quantum soundness. In other words, our result shows that the quantum bound of any compiled game approaches the quantum commuting operator bound of the original non-local game as the security parameter goes to infinity.
This is a necessary prerequisite for solving a more practical question: what can we say about the quantum bound of compiled games for finite values of the security parameter? In~\cite{Xiangling}, the authors answer to this question for all bipartite games.
We believe that their methods involving non-commutative polynomial optimizations and our quantum and algebraic characterization of prover strategies can be used to obtain quantitative quantum soundness for all compiled games.

\paragraph{Self-testing.}
Following the precise characterization of quantum and classical bounds of compiled nonlocal games, it is instructive to explore the self-testing properties of these games. While we already have some results related to self-testing from compiled bipartite games~\cite{NZ2023Bounding,CMMN2024Computational,mehta2024selftestingcompiledsettingtiltedchsh}, these results usually allow to make statements about the provers behavior in the last round, when it operates on plaintexts. It remains to be seen whether our quantum instrument-based approach could allow us to reach more complete characterizations of prover's strategies in compiled games solely from the achieved score in the game.

\paragraph{Compilation from different cryptographic assumptions.}
The compiler proposed in \cite{KLVY22Quantum} relies on the employment of Quantum Fully Homomorphic Encryption (QFHE) as a cryptographic building block.
Another compiler~\cite{Bacho_compiled_trapdoor} has been proposed, relying on Trapdoor Claw-Free Functions.
It is an interesting problem whether a similar compilation of nonlocal games, trading space-like separation for computational assumptions, can be constructed from different, or weaker cryptographic primitives.
As they go beyond the scope of the present paper, we leave these questions for exploration by future work.

\subsection{Organization of the paper}

As this work lies in the intersection of different fields, and contains contributions of independent interest therein, we chose to structure the sections of this paper to be as independent and self-contained as possible, to favour accessibility.
As such, the necessary preliminaries are usually presented in the beginning of each technical section, followed by our technical contributions and applications to the main goal of this work, the quantum soundness of compiled games.
As the bipartite case has been studied in previous work, we usually follow the order of reiterating known results and proof strategies in the bipartite case, before introducing the conceptual novelties used in the generalization to the tripartite case.
The multipartite case then usually follows straightforwardly by iterative application of the techniques developed to treat the tripartite scenario.

The contributions in this work are grouped into the following areas.
We start with Section~\ref{sec:chain-rule-rn}, which is structured to be completely independent from the rest of the paper, and presents the technical proof of the chain rule for the Radon-Nikodym theorem.
Sections~\ref{sec:compiled-non-local-games} through \ref{sec:asymptotic-bound} are sorted to follow the main line of our overall argumentation.
More technically, the contributions are: modelling efficient quantum strategies and block-encodings (Section~\ref{sec:quantum-compiler}), universal C*-algebras of sequential PVMs and limiting algebraic strategies (Section~\ref{sec:algebraic-strategies}), and characterizations of sequential algebraic strategies, and the relation with Radon-Nikodym derivatives of CP-maps and the quantum soundness of compiled multipartite nonlocal games (Section~\ref{sec:asymptotic-bound}).
A graphical overview of this paper's structure is depicted in Figure~\ref{fig:scheme}.

In more detail, Section~\ref{sec:chain-rule-rn} focuses solely on the formal proof of the chain rule for Radon-Nikodym theorem.
\begin{itemize}
    \item Section~\ref{subsec:prelims-c-star-algebras} provides preliminaries on C*-algebras, and representation and dilation theorems.
    \item Section~\ref{subsec:chain-rule-rn-2-maps} formulates and proves the chain rule for Radon-Nikodym derivatives for the composition of two CP maps.
    \item Finally, Section~\ref{subsec:chain-rule-rn-general} then states and proves the most general form of the chain rule for compositions of arbitrary finite numbers of CP maps.
\end{itemize}

Section~\ref{sec:compiled-non-local-games} introduces the technical definitions of nonlocal and compiled games:
\begin{itemize}
    \item Section~\ref{subsec:multi-nl} defines nonlocal games and different classes of correlations, with particular attention to the often overlooked multipartite case.
    \item Section~\ref{subsec:KLVYcompiler} describes the KLVY compiler and the QFHE cryptographic primitive. This includes a discussion of prior completeness and soundness results.
\end{itemize}

In Section~\ref{sec:quantum-compiler}, we aim to model the strategies that can arise from efficient quantum provers in the compiled game. Up to this point, our results are of a non-asymptotic nature.
\begin{itemize}
    \item Section~\ref{subsec:eff-quantum-str} mathematically models the operations of quantum provers in the compiled game, and discusses a way to purify the prover's operations to unitaries and projective measurements.
    \item In Section~\ref{subsec:constraints-quantum-IND-CPA}, we use the sequential structure of the protocol to derive conditions that the cryptography imposes on the correlations that the now time-like separated players can feasibly generate. This is formulated in the language of block-encodings and includes a generalization of previous block-encoding techniques to quantum instruments.
\end{itemize}

Section~\ref{sec:algebraic-strategies} introduces useful mathematical structures, specific universal C*-algebras, that allow us to capture and discuss the asymptotic behaviour of QPT provers for the limit of the security parameter tending to infinity. It turns out that the conditions that we found in Section~\ref{sec:quantum-compiler} exhibit sufficient continuity in the security parameter to give rise to algebraic strategies of the players with an asymptotically idealised notion of no-signalling communication.
\begin{itemize}
    \item Section~\ref{subsec:algebras} provides the necessary preliminaries on universal C*-algebras studied in the bipartite case.
    \item In Section~\ref{subsec:universal-algebra-multipartite}, the theory to describe multipartite strategies in a C*-algebraic framework is developed. This includes the construction of universal C*-algebras of sequential measurements, and of the maps connecting them, eventually creating a hierarchy of universal C*-algebras for a growing number of players.
    \item Section~\ref{subsec:asy-eff-strat} studies limits and accumulation points of sequences of algebraic strategies. Crucially, it establishes the (perfect) no-signalling properties that these accumulation points inherit from the approximate no-signalling of the strategies studied in Section~\ref{sec:quantum-compiler}.
\end{itemize}

Section~\ref{sec:asymptotic-bound} identifies the missing piece to map the obtained asymptotic correlations back to the nonlocal picture, and closes this gap by applying the generalised chain rule for Radon-Nikodym derivatives of completely positive maps on C*-algebras.
\begin{itemize}
    \item Section~\ref{subsec:preliminaries-sequential-algebraic} provides the necessary preliminaries on sequential algebraic correlations.
    \item Section~\ref{subsec:operationally-no-sig-commuting-operator} shows that the correlations generated by sequential operationally-no-signalling algebraic strategies are exactly those generated by commuting-operator strategies, by applying the chain rule developed in Section \ref{sec:chain-rule-rn}.
    \item Finally, Section~\ref{subsec:connection-compiled} puts all of these pieces together and (almost) closes the loop by showing that the correlations asymptotically generated by QPT provers in KLVY-compiled nonlocal games correspond to those achievable by commuting-operator strategies in the original game.
\end{itemize}

Appendix~\ref{app:homomorphism} discusses the subtleties in purifying the prover's operations in the compiled game.

In Appendix~\ref{app:alternative-proof}, we present an alternative proof to the one presented in the main text of this paper.
The alternative proof differs from the proof in the main manuscript by avoiding the need to purify the prover's operations.
Instead of dealing with PVMs, it uses POVMs, and instead of $*$-homomorphisms describing purified quantum channels, it uses CP-maps which are the natural maps taking POVMs to POVMs.

\newpage
\section{A chain rule for the Radon-Nikodym Theorem}%
\label{sec:chain-rule-rn}

C$^*$-algebras sit at the crossroads of algebra, analysis, and geometry, providing the natural mathematical framework to describe quantum mechanics. With a single mathematical structure they encode observables as self‑adjoint elements, states as positive linear functionals, dynamics as $*$-automorphisms, and composition via tensor products, capturing exactly the operational features of quantum statistics.
Moreover, abstract C$^*$-algebraic tools such as the GNS construction and Stinespring dilation turn questions about quantum correlations into questions about representations and extensions, giving powerful criteria for characterizing quantum correlations.

\subsection{Preliminaries on C*-algebras}%
\label{subsec:prelims-c-star-algebras}

\begin{definition}[C$^*$-algebra]
    A C$^*$-algebra $\mathscr{A}$ is a Banach algebra over $\mathbb{C}$ with involution (denoted as $*$) that satisfies the $\text{C}^*$-identity
    \begin{align*}
        \| \mathfrak{a} \mathfrak{a}^* \| = \| \mathfrak{a} \| \| \mathfrak{a}^* \|, \quad \forall \mathfrak{a} \in \mathscr{A}.
    \end{align*}
\end{definition}

A \cs-algebra is separable if it contains a countable dense subset. A (two-sided) ideal $I$ in $\mathscr{A}$ is a linear subspace $I \subseteq \mathscr{A}$ such that $\mathfrak{b}\mathfrak{a}, \mathfrak{a}\mathfrak{b} \in I$ for all $\mathfrak{b} \in I$ and $\mathfrak{a} \in \mathscr{A}$.  We declare two elements $\mathfrak{a}, \mathfrak{b}\in \mathscr{A}$ to be equivalent when their difference lies in $I$. The set of equivalence classes, written $\mathscr{A}/I={[\mathfrak{a}]: \mathfrak{a}\in \mathscr{A}}$, inherits natural algebraic operations. When $I$ is closed and two‑sided, the norm $\left\|[\mathfrak{a}]\right\| = \inf\{\left\|\mathfrak{a} + \mathfrak{b}\right\|: \mathfrak{b} \in I\}$ turns $\mathscr{A}/I$ into a \cs-algebra, called the quotient of $\mathscr{A}$ by $I$. The canonical quotient map is the surjective $*$‑homomorphism $\pi_I:\mathscr{A}\to \mathscr{A}/I$ given by $\pi(\mathfrak{a}):=\mathfrak{a}+I=[\mathfrak{a}]$.

Every C$^*$‑algebra admits a concrete realisation as bounded operators on a Hilbert space via a representation.
\begin{definition}[Representation $(\pi, \mathcal{H})$ of $\mathscr{A}$]
    A representation of a $C^*$-algebra $\mathscr{A}$ is a pair $(\pi, \mathcal{H})$ of a Hilbert space $\mathcal{H}$ and a $*$-homomorphism $\pi: \mathscr{A}\to B(\mathcal{H})$.
    If $\mathscr{A}$ is unital, then it is required that $\pi(\mathfrak{1})= \mathds{1}_{\mathcal{H}}$.
    A representation is cyclic if there is a vector $\ket{e} \in \mathcal{H}$ such that $\{ \pi(\mathfrak{a}) \ket{e} |\forall  \mathfrak{a} \in \mathscr{A} \}$ is dense in $\mathcal{H}$.
\end{definition}

A state on a \cs-algebra provides the algebraic counterpart of a quantum system's preparation. It is a linear functional $\phi: \mathscr{A} \to \mathbb{C}$ that is positive, meaning $\phi(a^*a) \geq 0$ for every $a \in \mathscr{A}$, and normalized, i.e. $\phi(1_\mathscr{A}) = 1$.

The Gelfand-Naimark-Segal (GNS) construction gives an explicit recipe to build a cyclic representation of every $C^*$-algebra starting from a state.
\begin{theorem}[Gelfand-Naimark-Segal (GNS) construction]\label{th:GNS}
    Let $\mathscr{A}$ be a unital $C^*$-algebra, and $\phi$ a state on $\mathscr{A}$. Then, there is a cyclic representation $(\pi_\phi, \mathcal{H}_\phi)$ with unit cyclic vector $\ket{\Omega_\phi}$ such that for all $\mathfrak{a}\in \mathscr{A}$
    \begin{equation*}
        \phi(\mathfrak{a}) = \bra{\Omega_\phi} \pi_\phi(\mathfrak{a}) \ket{\Omega_\phi}.
    \end{equation*}
\end{theorem}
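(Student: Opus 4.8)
The plan is to manufacture the Hilbert space directly out of $\mathscr{A}$, using the state $\phi$ as a candidate inner product, and to let $\mathscr{A}$ act on itself by left multiplication. The starting point is the sesquilinear form $\langle \mathfrak{a}, \mathfrak{b}\rangle := \phi(\mathfrak{a}^*\mathfrak{b})$ on $\mathscr{A}$, which is conjugate-linear in the first slot and linear in the second. Positivity of $\phi$ makes this form positive semi-definite, so it satisfies Cauchy--Schwarz, $|\phi(\mathfrak{a}^*\mathfrak{b})|^2 \le \phi(\mathfrak{a}^*\mathfrak{a})\,\phi(\mathfrak{b}^*\mathfrak{b})$. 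The only obstruction to a genuine inner product is the null space $N_\phi := \{\mathfrak{a} \in \mathscr{A} : \phi(\mathfrak{a}^*\mathfrak{a}) = 0\}$, and I would first show that $N_\phi$ is a (closed) left ideal. For this I use the C*-order relation $\mathfrak{c}^*\mathfrak{c} \le \|\mathfrak{c}\|^2\,\mathfrak{1}$: for $\mathfrak{a}\in N_\phi$ one has $0 \le \phi(\mathfrak{a}^*\mathfrak{c}^*\mathfrak{c}\,\mathfrak{a}) \le \|\mathfrak{c}\|^2\,\phi(\mathfrak{a}^*\mathfrak{a}) = 0$, whence $\mathfrak{c}\mathfrak{a}\in N_\phi$.

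Next I would pass to the quotient $\mathscr{A}/N_\phi$, which now carries a genuine inner product $\langle[\mathfrak{a}],[\mathfrak{b}]\rangle := \phi(\mathfrak{a}^*\mathfrak{b})$, well-defined precisely because $N_\phi$ is the null space of the form, and define $\mathcal{H}_\phi$ to be the completion of $\mathscr{A}/N_\phi$ in the induced norm. The representation is then defined on the dense subspace by left multiplication, $\pi_\phi(\mathfrak{a})[\mathfrak{b}] := [\mathfrak{a}\mathfrak{b}]$, which is well-defined exactly because $N_\phi$ is a left ideal. The step I expect to demand the most care is the boundedness of these operators, which is what allows them to extend continuously to $\mathcal{H}_\phi$. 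Here the C*-structure enters decisively: since $\mathfrak{b}^*(\|\mathfrak{a}\|^2\mathfrak{1} - \mathfrak{a}^*\mathfrak{a})\mathfrak{b} \ge 0$ and $\phi$ is positive, $\|\pi_\phi(\mathfrak{a})[\mathfrak{b}]\|^2 = \phi(\mathfrak{b}^*\mathfrak{a}^*\mathfrak{a}\mathfrak{b}) \le \|\mathfrak{a}\|^2\,\phi(\mathfrak{b}^*\mathfrak{b}) = \|\mathfrak{a}\|^2\,\|[\mathfrak{b}]\|^2$, so $\|\pi_\phi(\mathfrak{a})\| \le \|\mathfrak{a}\|$ and $\pi_\phi(\mathfrak{a})$ extends uniquely to a bounded operator.

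It remains to verify the algebraic properties. Linearity and multiplicativity, $\pi_\phi(\mathfrak{a}\mathfrak{a}')[\mathfrak{b}] = [\mathfrak{a}\mathfrak{a}'\mathfrak{b}] = \pi_\phi(\mathfrak{a})\pi_\phi(\mathfrak{a}')[\mathfrak{b}]$, are immediate, as is unitality $\pi_\phi(\mathfrak{1}) = \mathds{1}_{\mathcal{H}_\phi}$; and the $*$-property follows from $\langle [\mathfrak{c}], \pi_\phi(\mathfrak{a})[\mathfrak{b}]\rangle = \phi(\mathfrak{c}^*\mathfrak{a}\mathfrak{b}) = \phi((\mathfrak{a}^*\mathfrak{c})^*\mathfrak{b}) = \langle \pi_\phi(\mathfrak{a}^*)[\mathfrak{c}], [\mathfrak{b}]\rangle$, so that $\pi_\phi(\mathfrak{a})^* = \pi_\phi(\mathfrak{a}^*)$. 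Thus $(\pi_\phi, \mathcal{H}_\phi)$ is a representation.

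Finally I would exhibit the cyclic vector by setting $\ket{\Omega_\phi} := [\mathfrak{1}]$. Then $\langle \Omega_\phi|\Omega_\phi\rangle = \phi(\mathfrak{1}^*\mathfrak{1}) = \phi(\mathfrak{1}) = 1$, so it is a unit vector; the orbit $\pi_\phi(\mathfrak{a})\ket{\Omega_\phi} = [\mathfrak{a}]$ ranges over the dense subspace $\mathscr{A}/N_\phi$, giving cyclicity; and the reconstruction identity holds since $\bra{\Omega_\phi}\pi_\phi(\mathfrak{a})\ket{\Omega_\phi} = \langle[\mathfrak{1}],[\mathfrak{a}\mathfrak{1}]\rangle = \phi(\mathfrak{1}^*\mathfrak{a}\mathfrak{1}) = \phi(\mathfrak{a})$. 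The conceptual heart of the proof is the boundedness estimate, which is the only place where the full C*-identity (through the order relation $\mathfrak{a}^*\mathfrak{a}\le\|\mathfrak{a}\|^2\mathfrak{1}$) is genuinely used; the remainder is formal algebra together with a routine Hilbert-space completion.
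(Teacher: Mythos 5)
Your proof is correct and is the standard GNS construction; the paper states this theorem as a cited preliminary without giving a proof, but the route you take (sesquilinear form $\phi(\mathfrak{a}^*\mathfrak{b})$, quotient by the null left ideal, completion, left-multiplication representation) is exactly the technique the paper sketches in its proof idea for the Stinespring dilation, which it explicitly notes was ``originally developed for the GNS construction.'' All the key steps --- the left-ideal property of $N_\phi$ via $\mathfrak{c}^*\mathfrak{c}\le\|\mathfrak{c}\|^2\mathfrak{1}$, the boundedness estimate $\|\pi_\phi(\mathfrak{a})\|\le\|\mathfrak{a}\|$, the $*$-property, and the cyclicity and normalisation of $[\mathfrak{1}]$ --- are correctly handled.
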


This very foundational result underlines a deep connection between C*-algebras and Hilbert spaces, given that C*-algebras can always be concretely realised as subalgebras of the space of bounded operators on some Hilbert space.

It is a natural follow-up question to investigate how the representation of different positive functionals are related.
The Radon–Nikodym Theorem has this spirit, in essence, it states that the Hilbert space from a representation of a dominant functional is always sufficiently large to serve for the representation of a dominated functional.
The two representations are related through the introduction of a bounded self-adjoint operator, which, in reminiscence of measure theory, is usually called the \emph{Radon-Nikodym derivative}.

\begin{theorem}[Radon-Nikodym Theorem for positive linear functionals \cite{conway}]\label{th:rn}
    Let $\psi$ and $\phi$ be positive linear functionals on $\mathscr{A}$ such that $\psi \leq \phi$.
    Let $(\mathcal{H}_\phi, \pi_\phi, \Omega_\phi)$ be the GNS triplet associated to $\phi$.
    Then there exists a unique self-adjoint operator $D \in \pi_\phi(\mathscr{A})' \subseteq B(\mathcal{H}_\phi)$ with $0 \leq D \leq \mathds{1}$, such that
    \begin{equation*}
        \psi(\mathfrak{a}) = \bra{\Omega_\phi} D \pi_\phi(\mathfrak{a}) \ket{\Omega_\phi},
        \qquad \forall \mathfrak{a} \in \mathscr{A}.
    \end{equation*}
\end{theorem}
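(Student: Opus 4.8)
The plan is to realise $\psi$ as a ``weighted'' version of $\phi$ inside the single Hilbert space $\mathcal{H}_\phi$ furnished by the GNS construction (Theorem~\ref{th:GNS}) for the dominant functional $\phi$. Recall that $\mathcal{H}_\phi$ is the completion of the quotient $\mathscr{A}/N_\phi$, where $N_\phi = \{\mathfrak{a} \in \mathscr{A} : \phi(\mathfrak{a}^*\mathfrak{a}) = 0\}$, equipped with the inner product $\braket{[\mathfrak{a}] | [\mathfrak{b}]} = \phi(\mathfrak{a}^*\mathfrak{b})$; the representation acts by $\pi_\phi(\mathfrak{c})[\mathfrak{b}] = [\mathfrak{c}\mathfrak{b}]$ and $\ket{\Omega_\phi} = [\mathfrak{1}]$, so that $\pi_\phi(\mathfrak{a})\ket{\Omega_\phi} = [\mathfrak{a}]$. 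On the dense subspace spanned by the classes $[\mathfrak{a}]$, I would define the sesquilinear form
\[
    B([\mathfrak{a}], [\mathfrak{b}]) := \psi(\mathfrak{a}^*\mathfrak{b}).
\]

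First I would check that $B$ descends to the quotient and is bounded, which is precisely where the hypothesis $\psi \leq \phi$ is used. Domination forces $N_\phi \subseteq N_\psi$, so the Cauchy--Schwarz inequality for the positive functional $\psi$ shows that $\psi(\mathfrak{a}^*\mathfrak{b})$ depends only on the classes $[\mathfrak{a}],[\mathfrak{b}]$, making $B$ well defined. The same domination gives boundedness: combining Cauchy--Schwarz for $\psi$ with $\psi \leq \phi$,
\[
    |B([\mathfrak{a}],[\mathfrak{b}])| \leq \psi(\mathfrak{a}^*\mathfrak{a})^{1/2}\psi(\mathfrak{b}^*\mathfrak{b})^{1/2} \leq \phi(\mathfrak{a}^*\mathfrak{a})^{1/2}\phi(\mathfrak{b}^*\mathfrak{b})^{1/2} = \|[\mathfrak{a}]\|\,\|[\mathfrak{b}]\|,
\]
so $B$ extends to a bounded sesquilinear form of norm at most one on all of $\mathcal{H}_\phi$. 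The Riesz representation theorem for bounded forms then yields a unique $D \in B(\mathcal{H}_\phi)$ with $\|D\| \leq 1$ satisfying $\braket{[\mathfrak{a}] | D[\mathfrak{b}]} = \psi(\mathfrak{a}^*\mathfrak{b})$.

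It then remains to verify the asserted properties of $D$, each of which reduces to a short computation on the dense subspace. Self-adjointness follows from the Hermiticity of $B$, since positive functionals are self-adjoint and hence $\psi(\mathfrak{b}^*\mathfrak{a}) = \overline{\psi(\mathfrak{a}^*\mathfrak{b})}$; the bounds $0 \leq D \leq \mathds{1}$ follow by evaluating $\braket{[\mathfrak{a}] | D[\mathfrak{a}]} = \psi(\mathfrak{a}^*\mathfrak{a})$ against the chain $0 \leq \psi(\mathfrak{a}^*\mathfrak{a}) \leq \phi(\mathfrak{a}^*\mathfrak{a}) = \|[\mathfrak{a}]\|^2$. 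For the commutation relation $D \in \pi_\phi(\mathscr{A})'$, I would compute $\braket{[\mathfrak{a}] | D\pi_\phi(\mathfrak{c})[\mathfrak{b}]}$ and $\braket{[\mathfrak{a}] | \pi_\phi(\mathfrak{c})D[\mathfrak{b}]}$ separately, using $\pi_\phi(\mathfrak{c})^* = \pi_\phi(\mathfrak{c}^*)$, and observe that both equal $\psi(\mathfrak{a}^*\mathfrak{c}\mathfrak{b})$; density then forces $D\pi_\phi(\mathfrak{c}) = \pi_\phi(\mathfrak{c})D$. The representation formula drops out by specialising to $\mathfrak{a} = \mathfrak{1}$, giving $\psi(\mathfrak{b}) = B([\mathfrak{1}],[\mathfrak{b}]) = \bra{\Omega_\phi} D\pi_\phi(\mathfrak{b})\ket{\Omega_\phi}$. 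Uniqueness follows because any operator obeying the conclusion and commuting with $\pi_\phi$ must reproduce $\braket{[\mathfrak{c}] | D[\mathfrak{b}]} = \bra{\Omega_\phi} D\pi_\phi(\mathfrak{c}^*\mathfrak{b})\ket{\Omega_\phi} = \psi(\mathfrak{c}^*\mathfrak{b})$ on the dense family of vectors $[\mathfrak{b}]$, which pins $D$ down completely.

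I do not expect a deep obstacle here: the single load-bearing step, and the only place the hypothesis $\psi \leq \phi$ is genuinely required, is establishing that $B$ is well defined and bounded on the quotient $\mathscr{A}/N_\phi$. Once the operator $D$ is produced by Riesz, all remaining claims are routine bookkeeping with the GNS data and Cauchy--Schwarz. The one point to handle with mild care is that the unitality assumption on $\mathscr{A}$ (so that $\mathfrak{1}$ exists and $[\mathfrak{1}] = \Omega_\phi$) is what makes the final specialisation $\mathfrak{a} = \mathfrak{1}$ legitimate; in the non-unital setting one would instead pass through an approximate identity, but that case is not needed here.
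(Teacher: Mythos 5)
Your argument is correct and is the standard textbook proof of this result; the paper itself states Theorem~\ref{th:rn} without proof, citing \cite{conway}, and the argument given there proceeds exactly as you do (define the dominated sesquilinear form on the GNS space, use $\psi\leq\phi$ for well-definedness and boundedness, invoke Riesz, then check self-adjointness, $0\leq D\leq\mathds{1}$, the commutant property, and uniqueness on the cyclic dense subspace). Nothing is missing; your remark that uniqueness is asserted only within $\pi_\phi(\mathscr{A})'$ and that unitality legitimises the specialisation $\mathfrak{a}=\mathfrak{1}$ are exactly the right points of care.
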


A special case of Theorem~\ref{th:rn} is given by positive decompositions of functionals.
Lemma~\ref{lem:RN-sums} gives a joint Hilbert space representation of all functionals in such a decomposition.
We will later see the physical meaning of this decomposition as a joint purification of different assemblages realizing the same state.

\begin{lemma}[Theorem 3.3 of \cite{Rag03radon}]\label{lem:RN-sums}
    Consider a positive linear functional $\phi \in \mathsf{P}(\mathscr{A}) $ with the GNS triplet $(\mathcal{H}_\phi, \pi_\phi, \Omega_\phi)$. For any finite decomposition $\phi = \sum_i \phi_i$ with $\phi_i \in \mathsf{P}(\mathscr{A})$ there exist unique positive operators $D_i \in \pi_\phi(\mathscr{A})' \subseteq B(\mathcal{H}_\phi)$ that satisfy $\sum_i D_i = \mathds{1}_{\mathcal{H}_\phi}$, such that $\phi_i(\mathfrak{a})=  \bra{\Omega_\phi} D_i \pi_\phi(\mathfrak{a}) \ket{\Omega_\phi}$.
\end{lemma}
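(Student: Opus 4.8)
The plan is to derive the lemma directly from the Radon--Nikodym Theorem (Theorem~\ref{th:rn}) applied to each summand individually, and then to pin down the normalisation $\sum_i D_i = \mathds{1}_{\mathcal{H}_\phi}$ by exploiting the cyclicity of $\Omega_\phi$. The positivity and uniqueness assertions will be inherited for free from Theorem~\ref{th:rn}, so essentially all the work lies in establishing that the derivatives sum to the identity.

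First I would observe that each summand is dominated by $\phi$: since $\phi - \phi_i = \sum_{j \neq i} \phi_j$ is a finite sum of positive functionals, it is itself positive, whence $\phi_i \leq \phi$ for every $i$. Theorem~\ref{th:rn} then applies to each pair $\phi_i \leq \phi$ and yields, relative to the same minimal GNS triplet $(\mathcal{H}_\phi, \pi_\phi, \Omega_\phi)$ of the dominant functional, a unique self-adjoint operator $D_i \in \pi_\phi(\mathscr{A})'$ with $0 \leq D_i \leq \mathds{1}$ and $\phi_i(\mathfrak{a}) = \bra{\Omega_\phi} D_i \pi_\phi(\mathfrak{a}) \ket{\Omega_\phi}$ for all $\mathfrak{a}$. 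In particular each $D_i$ is positive, and the collection $\{D_i\}$ is unique because each $D_i$ is individually the unique Radon--Nikodym derivative of $\phi_i$ with respect to $\phi$.

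The key step is the normalisation. Setting $S := \sum_i D_i$, which again lies in the commutant $\pi_\phi(\mathscr{A})'$ as a finite sum of commutant elements and is bounded, I would sum the representations over $i$ and use $\sum_i \phi_i = \phi$ together with the GNS identity $\phi(\mathfrak{a}) = \bra{\Omega_\phi}\pi_\phi(\mathfrak{a})\ket{\Omega_\phi}$ to obtain $\bra{\Omega_\phi}(S - \mathds{1})\pi_\phi(\mathfrak{a})\ket{\Omega_\phi} = 0$ for all $\mathfrak{a} \in \mathscr{A}$. To upgrade this to $S = \mathds{1}$, note that $S - \mathds{1}$ commutes with $\pi_\phi(\mathscr{A})$ and that $\pi_\phi(\mathfrak{b})^\ast = \pi_\phi(\mathfrak{b}^\ast)$, so for arbitrary $\mathfrak{a}, \mathfrak{b}$ I can rewrite $\bra{\Omega_\phi}\pi_\phi(\mathfrak{b})^\ast (S - \mathds{1})\pi_\phi(\mathfrak{a})\ket{\Omega_\phi} = \bra{\Omega_\phi}(S - \mathds{1})\pi_\phi(\mathfrak{b}^\ast \mathfrak{a})\ket{\Omega_\phi} = 0$. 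Since $\Omega_\phi$ is cyclic, the vectors $\pi_\phi(\mathfrak{a})\ket{\Omega_\phi}$ are dense in $\mathcal{H}_\phi$, so this vanishing on a dense family of vector pairs extends by continuity (using boundedness of $S - \mathds{1}$) to all of $\mathcal{H}_\phi$, forcing $S - \mathds{1} = 0$.

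I do not expect a serious obstacle, as the statement is a repackaging of Theorem~\ref{th:rn}; the only point requiring care is the normalisation manoeuvre, where one must move $S - \mathds{1}$ past the representation via the commutation relation $S - \mathds{1} \in \pi_\phi(\mathscr{A})'$ rather than naively, and then invoke cyclicity to pass from a dense subspace to the full Hilbert space. The physical reading, to be flagged for the later sections, is that the $\{D_i\}$ form a POVM on $\mathcal{H}_\phi$ living in the commutant, realising the decomposition $\phi = \sum_i \phi_i$ as a single joint purification.
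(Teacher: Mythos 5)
Your proposal is correct. The paper itself does not prove Lemma~\ref{lem:RN-sums} --- it is stated as an imported result (Theorem 3.3 of the cited reference) --- so there is no in-paper argument to compare against; your derivation is the standard one and it goes through. The three ingredients are all sound: domination $\phi_i \leq \phi$ because $\phi - \phi_i = \sum_{j\neq i}\phi_j$ is positive; existence, positivity, and uniqueness of each $D_i \in \pi_\phi(\mathscr{A})'$ directly from Theorem~\ref{th:rn} applied with the \emph{same} minimal GNS triplet of the dominant $\phi$; and the normalisation, where you correctly commute $S - \mathds{1}$ past $\pi_\phi(\mathfrak{b}^\ast)$ before invoking cyclicity so that the vanishing of $\bra{\Omega_\phi}\pi_\phi(\mathfrak{b}^\ast)(S-\mathds{1})\pi_\phi(\mathfrak{a})\ket{\Omega_\phi}$ for all $\mathfrak{a},\mathfrak{b}$ gives $S=\mathds{1}$ on a dense set of matrix elements and hence everywhere by boundedness. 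The only cosmetic caveat is that Theorem~\ref{th:GNS} is stated for normalised states while the lemma allows a general positive functional $\phi$; since the lemma hands you the minimal GNS triplet as a hypothesis this changes nothing in the argument.
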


The equivalent of the GNS construction for completely positive maps is the Stinespring dilation theorem.
Before getting started with further generalisations, it is insightful to see its explicit proof, as it will be useful for understanding our main result. Here, we only provide the statement of the theorem and the main proof ideas; more details can be found in \cite{conway} and \cite{Rag03radon}.

\begin{theorem}[Stinespring dilation theorem]\label{thm:stinespring}
    Let $\mathscr{A}$ be a unital C$^*$-algebra, $\mathcal{H}$ a Hilbert space, and let $T$ be a completely positive map $T : \mathscr{A} \to B(\mathcal{H})$. Then there exists
\begin{enumerate}
    \item a Hilbert space $\mathcal{K}$, %
    \item a unital $^*$-homomorphism $\pi : \mathscr{A} \to B(\mathcal{K})$,
    \item a bounded operator $V : \mathcal{H} \to \mathcal{K}$, with $\|V\|^2 = \|T(\mathfrak{1})\|$,
\end{enumerate}
such that
\begin{equation*}
    T(\mathfrak{a}) = V^* \pi(\mathfrak{a}) V, \qquad \forall \mathfrak{a} \in\mathscr{A}.
\end{equation*}
If $T$ is unital, $V$ can be chosen to be an isometry, \emph{i.e.}, $V^*V = \mathds{1}_\mathcal{H}$.
We will refer to the tuple $(\mathcal{K},\pi,V)$ as the \emph{Stinespring dilation} of $T$.
The Stinespring dilation can always be chosen to be \emph{minimal}, \emph{i.e.}, such that $\mathcal{K} = \overline{\pi(\mathscr{A})V\mathcal{H}}$. Minimal Stinespring dilations are unique up to unitary equivalence.
\end{theorem}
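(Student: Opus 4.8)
The plan is to run a GNS-type construction (parallel to Theorem~\ref{th:GNS}) but carried out over the algebraic tensor product $\mathscr{A} \odot \mathcal{H}$ instead of over $\mathscr{A}$ alone, so that complete positivity of $T$ plays the role that positivity of a state plays in the GNS case. First I would define a sesquilinear form on $\mathscr{A} \odot \mathcal{H}$ by setting, on elementary tensors,
\[
    \langle \mathfrak{a} \otimes \xi,\, \mathfrak{b} \otimes \eta \rangle := \bra{\xi} T(\mathfrak{a}^* \mathfrak{b}) \ket{\eta},
\]
and extending sesquilinearly to all of $\mathscr{A} \odot \mathcal{H}$.

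The crucial step is to show this form is positive semidefinite, and this is the only place where the \emph{complete} positivity of $T$ (rather than mere positivity) is needed. For $u = \sum_{i=1}^n \mathfrak{a}_i \otimes \xi_i$ one computes $\langle u, u \rangle = \sum_{i,j} \bra{\xi_i} T(\mathfrak{a}_i^* \mathfrak{a}_j) \ket{\xi_j}$, which is nonnegative exactly because the operator matrix $[\,T(\mathfrak{a}_i^* \mathfrak{a}_j)\,]_{i,j}$ is positive in $M_n(B(\mathcal{H}))$. This in turn follows by applying the $n$-th amplification $T^{(n)} = T \otimes \mathrm{id}_{M_n}$, which is positive by complete positivity, to the manifestly positive matrix $[\,\mathfrak{a}_i^* \mathfrak{a}_j\,]_{i,j} = A^* A \in M_n(\mathscr{A})$, where $A$ is the row $(\mathfrak{a}_1, \dots, \mathfrak{a}_n)$. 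I expect this positivity argument to be the main conceptual obstacle; everything afterwards is essentially bookkeeping.

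With positivity in hand, I would quotient out the null space $N = \{u : \langle u, u\rangle = 0\}$ (a subspace by Cauchy--Schwarz) and complete the resulting pre-Hilbert space to obtain $\mathcal{K}$. The representation $\pi$ is defined by left multiplication on the first leg, $\pi(\mathfrak{a})[\mathfrak{b}\otimes\xi] = [\mathfrak{a}\mathfrak{b}\otimes\xi]$; well-definedness (preservation of $N$) and boundedness both follow from the operator inequality $[\mathfrak{b}_i^*\mathfrak{a}^*\mathfrak{a}\mathfrak{b}_j] \le \|\mathfrak{a}\|^2 [\mathfrak{b}_i^*\mathfrak{b}_j]$ in $M_n(\mathscr{A})$ (a consequence of $\mathfrak{a}^*\mathfrak{a}\le\|\mathfrak{a}\|^2\mathfrak{1}$) pushed through the positive map $T^{(n)}$, giving $\|\pi(\mathfrak{a})\|\le\|\mathfrak{a}\|$; that $\pi$ is a unital $*$-homomorphism is then routine. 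Setting $V\xi := [\mathfrak{1}\otimes\xi]$ yields $\|V\xi\|^2 = \bra{\xi}T(\mathfrak{1})\ket{\xi}$, hence $\|V\|^2 = \|T(\mathfrak{1})\|$, and $V$ is an isometry whenever $T$ is unital. The factorisation is verified directly: for $\xi,\eta\in\mathcal{H}$,
\[
    \bra{\eta} V^* \pi(\mathfrak{a}) V \ket{\xi} = \langle \mathfrak{1}\otimes\eta,\, \mathfrak{a}\otimes\xi\rangle = \bra{\eta} T(\mathfrak{a}) \ket{\xi}.
\]

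Finally, minimality comes for free, since the vectors $\pi(\mathfrak{a})V\xi = [\mathfrak{a}\otimes\xi]$ span a dense subspace of $\mathcal{K}$, so that $\mathcal{K} = \overline{\pi(\mathscr{A})V\mathcal{H}}$ by construction. For uniqueness up to unitary equivalence, given a second minimal dilation $(\mathcal{K}',\pi',V')$ I would define $U : \pi(\mathfrak{a})V\xi \mapsto \pi'(\mathfrak{a})V'\xi$ on the dense spanning sets; because both inner products evaluate to $\bra{\xi}T(\mathfrak{a}^*\mathfrak{b})\ket{\eta}$, the map $U$ is a well-defined isometry with dense range, hence extends to a unitary, and by construction it intertwines $\pi$ with $\pi'$ and sends $V$ to $V'$.
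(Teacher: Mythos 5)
Your proposal is correct and follows essentially the same route as the paper's own (sketched) proof: the GNS-style construction on $\mathscr{A} \odot \mathcal{H}$ with the sesquilinear form induced by $T$, quotienting by the null space, left multiplication for $\pi$, and $V\xi = [\mathfrak{1}\otimes\xi]$. You fill in several details the paper leaves implicit (the complete-positivity argument for positive semidefiniteness, boundedness of $\pi$, and uniqueness of the minimal dilation), all of which are standard and correctly executed.
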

\begin{proof}[Proof idea]
    In order to construct the Hilbert space $\mathcal{K}$, consider the algebraic tensor product $\mathscr{A} \odot \mathcal{H}$, consisting of all formal sums $\sum_j \mathfrak{a}_j \odot h_j$, with $\mathfrak{a}_j \in \mathscr{A}$ and $h_j \in \mathcal{H}$.
    Define the following sesquilinear form on $\mathscr{A} \odot \mathcal{H}$ by
    \begin{equation*}
        \left\langle \sum_j \mathfrak{a}_j \odot h_j, \sum_i \mathfrak{b}_i \odot g_i \right\rangle
        = \sum_{i,j} \left\langle T(\mathfrak{b}_i^* \mathfrak{a}_j)h_j, g_i \right\rangle
    \end{equation*}
    which is positive semi-definite because the map $T$ is completely positive.
    However it is not an inner product because it is not positive definite, meaning that it might take zero value also for some non-null element.
    
    Let $\mathcal{L}= \{ v \in \mathscr{A}\odot\mathcal{H} | \langle v |v\rangle = 0\}$ be a linear subspace in $\mathscr{A}\odot\mathcal{H}$.
    Set then $\mathcal{K}_0 = (\mathscr{A}\odot\mathcal{H})/\mathcal{L}$; on this space we now have a well-defined inner product given by
    \begin{equation*}
        \langle u + \mathcal{L},v + \mathcal{L}\rangle = \langle u,v\rangle .
    \end{equation*}
    Let $\mathcal{K}$ be the completion of $\mathcal{K}_0$ with respect to the norm induced by this inner product.

    Now, define the map $\pi:\mathscr{A} \to \mathsf{B}(\mathcal{K})$ by
    \begin{equation*}
        \pi(\mathfrak{a}) \left[ \sum_j \mathfrak{a}_j \odot h_j + \mathcal{L}\right] = \sum_j \mathfrak{a}\cdot \mathfrak{a}_j \odot h_j + \mathcal{L}
    \end{equation*}
    which is a well defined bounded operator on $\mathcal{K}_0$, and extend it to a bounded operator on $\mathcal{K}$.
    It is not difficult to see that $\pi$ is a valid $*$-homomorphism.

    Finally, define the operator $V:\mathcal{H} \to \mathcal{K}$ as
    \begin{equation*}
        V h = \mathfrak{1}\odot h + \mathcal{L}.
    \end{equation*}
    Note that $\| V h \|^2 = \langle \mathfrak{1}\odot h, \mathfrak{1}\odot h\rangle = \langle T(\mathfrak{1})h, h\rangle \leq \|T(\mathfrak{1})\| \cdot \| h \|^2$, thus $V$ is a bounded operator. Furthermore $V$ is an isometry if and only if $T$ is unital, in which case equality holds.
    It is then straightforward to verify that for all $\mathfrak{a}\in\mathscr{A}$
    \begin{align*}
        T(\mathfrak{a}) = V^* \pi(\mathfrak{a}) V,
    \end{align*}
    and that the thus constructed dilation is indeed minimal.
    These techniques were originally developed for the GNS construction, where the same problem occurs, and then extended to completely positive maps.
\end{proof}

What was shown before for positive linear functionals can now be extended to completely positive maps.
The following Theorem~\ref{thm:rn-for-cp-maps} can be understood as a very general way of jointly purifying quantum channels and instruments.

\begin{theorem}[Radon-Nikodym Theorem for completely positive maps~\cite{BS86radon,Rag03radon}]\label{thm:rn-for-cp-maps} %
    Consider two completely positive maps $S,R \in \mathsf{CP}(\mathscr{A},\mathcal{H})$ and let $(\mathcal{H}_R, \pi_R, V_R)$ be a minimal Stinespring dilation of $R$. Then $S \leq R$ if and only if there exists a unique self-adjoint operator $D \in \pi_R(\mathscr{A})' \subseteq \mathsf{B}(\mathcal{H}_R)$, with $0 \leq D \leq \mathds{1}$, such that
    \begin{equation*}
        S(\mathfrak{a}) = V_R^* \pi_R(\mathfrak{a}) D V_R = V_R^* D \pi_R(\mathfrak{a}) V_R, \qquad \forall \mathfrak{a} \in \mathscr{A}.
    \end{equation*}
\end{theorem}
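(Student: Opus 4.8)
The statement is an equivalence, so the plan is to treat the two implications separately, with the forward direction ($\exists\,D \Rightarrow S\leq R$) being routine and the reverse direction ($S\leq R \Rightarrow \exists!\,D$) carrying the real content. For the easy direction I would assume a self-adjoint $D\in\pi_R(\mathscr{A})'$ with $0\leq D\leq\mathds{1}$ and $S(\mathfrak{a})=V_R^*\pi_R(\mathfrak{a})DV_R$. Since $D$ lies in the commutant, the two displayed expressions coincide, and $\mathds{1}-D$ is again a positive element of the commutant admitting a square root $\sqrt{\mathds{1}-D}\in\pi_R(\mathscr{A})'$ by continuous functional calculus. Then
\[
(R-S)(\mathfrak{a}) = V_R^*\pi_R(\mathfrak{a})(\mathds{1}-D)V_R = \bigl(\sqrt{\mathds{1}-D}\,V_R\bigr)^*\pi_R(\mathfrak{a})\bigl(\sqrt{\mathds{1}-D}\,V_R\bigr),
\]
which is manifestly completely positive as a compression of the $*$-homomorphism $\pi_R$; hence $S\leq R$.

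For the hard direction I would work directly inside the minimal Stinespring dilation, exploiting that $\mathcal{H}_R=\overline{\pi_R(\mathscr{A})V_R\mathcal{H}}$ and that $\pi_R(\mathfrak{a})V_R h$ is the class of $\mathfrak{a}\odot h$ from the construction recalled in Theorem~\ref{thm:stinespring}. On the dense span of such vectors I would define the sesquilinear form
\[
B\Bigl(\textstyle\sum_j \pi_R(\mathfrak{a}_j)V_R h_j,\ \sum_i \pi_R(\mathfrak{b}_i)V_R g_i\Bigr) = \sum_{i,j}\langle S(\mathfrak{b}_i^*\mathfrak{a}_j)h_j, g_i\rangle.
\]
The key estimate, and the heart of the argument, is that $S\leq R$ — complete positivity of $R-S$ — yields $0\leq B(\xi,\xi)\leq\langle\xi,\xi\rangle_R$ through the standard characterization of complete positivity $\sum_{i,j}\langle\Phi(\mathfrak{a}_i^*\mathfrak{a}_j)h_j,h_i\rangle\geq 0$. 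This simultaneously shows that $B$ is well defined on the quotient (vectors of zero $R$-norm are annihilated, by Cauchy--Schwarz for the positive form $B$) and that it is bounded, so it extends continuously to $\mathcal{H}_R$ and is represented by a unique operator $D$ with $\langle D\xi,\eta\rangle_R=B(\xi,\eta)$, self-adjoint (by Hermitian symmetry of $B$, using $S(\mathfrak{x})^*=S(\mathfrak{x}^*)$) and $0\leq D\leq\mathds{1}$.

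It then remains to verify the two algebraic properties. To show $D\in\pi_R(\mathscr{A})'$ I would compute $\langle D\pi_R(\mathfrak{c})\xi,\eta\rangle_R$ and $\langle\pi_R(\mathfrak{c})D\xi,\eta\rangle_R$ on dense vectors $\xi=\pi_R(\mathfrak{a})V_R h$, $\eta=\pi_R(\mathfrak{b})V_R g$, and observe that both collapse, via $\pi_R(\mathfrak{c})\pi_R(\mathfrak{a})=\pi_R(\mathfrak{c}\mathfrak{a})$ and the definition of $B$, to the same quantity $\langle S(\mathfrak{b}^*\mathfrak{c}\mathfrak{a})h,g\rangle$. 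To recover $S$ I would evaluate $V_R^*\pi_R(\mathfrak{a})DV_R$ against $h,g\in\mathcal{H}$, using that $V_R h$ is the class of $\mathfrak{1}\odot h$, which reduces to $B(\mathfrak{1}\odot h,\mathfrak{a}^*\odot g)=\langle S(\mathfrak{a})h,g\rangle$; commutation of $D$ with $\pi_R(\mathfrak{a})$ then gives the second form. Uniqueness follows by setting $E=D-D'$ for two candidates: $E\in\pi_R(\mathscr{A})'$ and $V_R^*\pi_R(\mathfrak{c})EV_R=0$ for all $\mathfrak{c}$, whence $\langle E\pi_R(\mathfrak{a})V_R h,\pi_R(\mathfrak{b})V_R g\rangle_R=\langle V_R^*\pi_R(\mathfrak{b}^*\mathfrak{a})EV_R h,g\rangle=0$ on a dense set, forcing $E=0$.

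The main obstacle I anticipate lies entirely in the reverse direction: arranging $B$ so that it descends to a genuinely well-defined bounded form on $\mathcal{H}_R$ rather than merely on the algebraic tensor product, which is precisely where the domination $B\leq\langle\cdot,\cdot\rangle_R$ coming from $S\leq R$ does the essential work. Once $D$ has been produced, positivity, the commutation relation, and uniqueness are bookkeeping within the Stinespring inner product.
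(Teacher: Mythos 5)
Your proof is correct. Note that the paper does not prove Theorem~\ref{thm:rn-for-cp-maps} at all --- it is cited as a known result from the references [BS86radon, Rag03radon] --- so there is no in-paper argument to compare against; your proposal is the standard proof from that literature, building the dominated sesquilinear form $B(\xi,\eta)=\sum_{i,j}\langle S(\mathfrak{b}_i^*\mathfrak{a}_j)h_j,g_i\rangle$ on the dense subspace $\pi_R(\mathscr{A})V_R\mathcal{H}$, using complete positivity of $S$ and of $R-S$ to get $0\leq B\leq\langle\cdot,\cdot\rangle_R$, and extracting $D$ by Riesz representation. All the supporting steps (descent to the quotient via Cauchy--Schwarz, the commutant computation, recovery of $S$ via $V_Rh=\pi_R(\mathfrak{1})V_Rh$, and uniqueness from minimality) are handled correctly, and the construction dovetails with the explicit Stinespring quotient construction the paper recalls in the proof idea of Theorem~\ref{thm:stinespring}.
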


We are interested in the specific case of finite decompositions of CP maps, which physically represent quantum transformations; once again, every addend is going to be maximised by the sum, and can be represented by the same Stinespring dilation that does not depend on any label, at the cost of introducing a Radon-Nikodym derivative which does.

\begin{lemma}[From Theorem 3.3 of \cite{Rag03radon}]\label{lem:RN2-sums}
    Consider a completely positive map $R \in \mathsf{CP}(\mathscr{A},\mathcal{H}) $ with the minimal Stinespring dilation $(\mathcal{H}_R, \pi_R, V_R)$. For any finite decomposition $R = \sum_i R_i$ with $R_i \in \mathsf{CP}(\mathscr{A},\mathcal{H})$ there exist unique positive operators $D_i \in \pi_R(\mathscr{A})' \subseteq B(\mathcal{H}_R)$ that satisfy $\sum_i D_i = \mathds{1}_{\mathcal{H}_R}$, such that $R_i(\mathfrak{a})=  V_R^* D_i \pi_R(\mathfrak{a}) V_R$.
\end{lemma}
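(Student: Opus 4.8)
The plan is to reduce everything to the Radon–Nikodym theorem for CP maps (Theorem~\ref{thm:rn-for-cp-maps}) applied termwise, and then to pin down the normalisation $\sum_i D_i = \mathds{1}$ using minimality of the Stinespring dilation. This is the exact CP-map analogue of Lemma~\ref{lem:RN-sums}, so the overall argument parallels the functional case.

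First I would observe that each summand is dominated by the total map. Since every $R_j$ is completely positive and CP maps form a cone, $R - R_i = \sum_{j \neq i} R_j$ is again completely positive, hence $R_i \leq R$ for every $i$. Theorem~\ref{thm:rn-for-cp-maps} then furnishes, for each $i$, a unique self-adjoint operator $D_i \in \pi_R(\mathscr{A})'$ with $0 \leq D_i \leq \mathds{1}$ and $R_i(\mathfrak{a}) = V_R^* D_i \pi_R(\mathfrak{a}) V_R$ for all $\mathfrak{a}$. In particular each $D_i$ is positive and lies in the commutant, so the only remaining claims are the normalisation $\sum_i D_i = \mathds{1}$ and joint uniqueness.

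The main step is the normalisation. Set $D := \sum_i D_i$, which again lies in $\pi_R(\mathscr{A})'$ and is positive. Summing the representations gives, for every $\mathfrak{a}$, the identity $V_R^* D \pi_R(\mathfrak{a}) V_R = \sum_i R_i(\mathfrak{a}) = R(\mathfrak{a}) = V_R^* \pi_R(\mathfrak{a}) V_R$, so that $V_R^* (D - \mathds{1}) \pi_R(\mathfrak{a}) V_R = 0$ for all $\mathfrak{a}$. To upgrade this to $D = \mathds{1}$ I would exploit minimality, $\mathcal{H}_R = \overline{\pi_R(\mathscr{A}) V_R \mathcal{H}}$. For arbitrary $\mathfrak{a}, \mathfrak{b} \in \mathscr{A}$ and $h, g \in \mathcal{H}$, consider the matrix element $\langle (D - \mathds{1}) \pi_R(\mathfrak{a}) V_R h, \pi_R(\mathfrak{b}) V_R g \rangle$; moving $\pi_R(\mathfrak{b})^* = \pi_R(\mathfrak{b}^*)$ across and using that $D - \mathds{1}$ commutes with $\pi_R(\mathfrak{b}^*)$ collapses this to $\langle V_R^* (D - \mathds{1}) \pi_R(\mathfrak{b}^* \mathfrak{a}) V_R h, g \rangle$, which vanishes by the previous identity applied to $\mathfrak{b}^*\mathfrak{a}$. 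Since such vectors are dense in $\mathcal{H}_R$, the bounded operator $D - \mathds{1}$ is zero on a dense subspace and hence identically, giving $\sum_i D_i = \mathds{1}_{\mathcal{H}_R}$.

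The point I expect to require the most care is precisely this normalisation: the termwise application of Theorem~\ref{thm:rn-for-cp-maps} is routine, but establishing $\sum_i D_i = \mathds{1}$ genuinely relies on minimality, since on a non-minimal dilation the derivatives would sum only to the projection onto $\overline{\pi_R(\mathscr{A}) V_R \mathcal{H}}$ rather than to the full identity. Uniqueness is then inherited directly from the uniqueness clause of Theorem~\ref{thm:rn-for-cp-maps}: any family $(D_i')$ with the stated properties satisfies the defining relation $R_i(\mathfrak{a}) = V_R^* D_i' \pi_R(\mathfrak{a}) V_R$, and hence coincides with $D_i$ term by term. I would also double-check that the commutation $D_i \pi_R(\mathfrak{a}) = \pi_R(\mathfrak{a}) D_i$ used to interchange the placement of the derivative is exactly the commutant condition guaranteed by Theorem~\ref{thm:rn-for-cp-maps}, which it is.
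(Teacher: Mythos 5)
Your proposal is correct. Note, however, that the paper does not prove Lemma~\ref{lem:RN2-sums} at all: it imports the statement directly from Theorem~3.3 of \cite{Rag03radon}, so there is no in-paper argument to compare against. Your derivation is a sound self-contained reconstruction from the ingredients the paper does state: the domination $R_i \leq R$ via the cone structure of CP maps, the termwise application of Theorem~\ref{thm:rn-for-cp-maps}, and the upgrade of $V_R^*(D-\mathds{1})\pi_R(\mathfrak{a})V_R = 0$ to $D = \mathds{1}_{\mathcal{H}_R}$ by combining the commutant property with minimality $\mathcal{H}_R = \overline{\pi_R(\mathscr{A})V_R\mathcal{H}}$ — you are right that this is the one step where minimality is genuinely load-bearing, and your matrix-element computation handles it correctly. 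One small point worth making explicit in the uniqueness step: the theorem's uniqueness clause requires $0 \leq D_i' \leq \mathds{1}$, which for a competing family follows from positivity together with $\sum_j D_j' = \mathds{1}$; with that observation the termwise uniqueness argument closes cleanly.
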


\subsection{The chain rule for the composition of two CP maps}%
\label{subsec:chain-rule-rn-2-maps}

A natural follow-up question, relevant especially in the field of quantum information, is the behaviour of Radon-Nikodym derivatives under composition of CP maps.
Knowing the Radon-Nikodym derivatives of two CP maps with respect to two dominant maps, can one directly relate a composition of the derivatives to the derivative of the composition of the two dominated maps with respect to the composition of the two dominant maps?

As a first attempt, let us sequentially apply Theorem \ref{th:rn} for the states and Theorem \ref{thm:rn-for-cp-maps} for the transformations 
\begin{equation*}
    \phi_{a|x}(T_{b|y}(\mathfrak{c}_{c|z})) = \bra{\Omega_\phi} A_{a|x} V_T^* B_{b|y} \pi_T(\mathfrak{c}_{c|z})V_T\ket{\Omega_\phi}.
\end{equation*}
The Radon-Nikodym derivatives $A_{a|x} \in \mathsf{B}(\mathcal{H}_\phi)$ and $B_{b|y} \in \mathsf{B}(\mathcal{H}_T)$ can be interpreted as POVMs, together with $\pi_T(\mathfrak{c}_{c|z})\in \mathsf{B}(\mathcal{H}_T)$.
We also have the following commutation relations between them 
\begin{align*}
    [B_{b|y}, \pi_T(\mathfrak{c}_{c|z})] =0, \qquad [A_{a|x}, V_T^* B_{b|y} \pi_T(\mathfrak{c}_{c|z})V_T]=0.
\end{align*}
However, to prove that they all pairwise commute is not trivial. What makes this hard is that the POVMs live in different Hilbert spaces, connected by the isometry $V_T$.
Lemma~\ref{lem:extended-derivative} shows a way to lift operators from $\mathcal{H}_\phi$ to $\mathcal{H}_T$, while keeping desirable properties; this is possible by exploiting the very specific structure of the Stinespring isometry $V_T$.

Before proving the main result of this section, the new chain rule for Radon-Nikodym derivatives of CP maps, we require a technical result that allows us in certain situations to lift bounded operators through Stinespring dilations of CP maps.
This will later be used to lift Radon-Nikodym derivatives of previous CP maps through the Stinespring dilations of later CP maps in the chain.

\begin{lemma}[Lifting Lemma]\label{lem:extended-derivative}
    Let $(\mathcal{K}, \pi, V)$ be a Stinespring dilation of a completely positive map $T: \mathscr{A} \to \mathsf{B}(\mathcal{H})$.
    Then, there exists a *-homomorphism
    \begin{align*}
        \mathsf{B}(\mathcal{H}) \supseteq T(\mathscr{A})' \cap (T(\mathscr{A})^\ast)' &\to \pi(\mathscr{A})' \subseteq \mathsf{B}(\mathcal{K}), \\
        M &\mapsto \overline{M},
    \end{align*}
    lifting operators $M$ on $\mathcal{H}$ to operators $\overline{M}$ on $\mathcal{K}$ with the following properties.
    \begin{enumerate}
        \item The lift is consistent with the dilation: $V M = \overline{M} V$, and $M V^\ast = V^\ast \overline{M}$.
        \item The lift is isometric: $\| \overline{M} \| = \| M \|$.
        \item The lift is unital: $\overline{\mathds{1}_\mathcal{H}} = \mathds{1}_\mathcal{K}$.
        \item The lift is completely positive. Also, $\overline{M} \geq 0$ if and only if $M \geq 0$.
    \end{enumerate}
\end{lemma}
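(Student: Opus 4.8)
The plan is to build $\overline{M}$ by hand on the concrete minimal Stinespring model of $\mathcal{K}$ coming from the proof of Theorem~\ref{thm:stinespring}, where $\mathcal{K}$ is the completion of $(\mathscr{A} \odot \mathcal{H})/\mathcal{L}$, the representation acts by $\pi(\mathfrak{b})[\mathfrak{a} \odot h] = [\mathfrak{b}\mathfrak{a} \odot h]$, and $Vh = [\mathfrak{1} \odot h]$, so that elementary vectors are $[\mathfrak{a} \odot h] = \pi(\mathfrak{a})Vh$. For $M$ in the domain $T(\mathscr{A})' \cap (T(\mathscr{A})^\ast)'$ — which is the commutant of the self-adjoint set $T(\mathscr{A}) \cup T(\mathscr{A})^\ast$ and hence a von Neumann algebra, so that talking about a $*$-homomorphism even makes sense — I would define the lift to act only on the $\mathcal{H}$-leg, $\overline{M}[\mathfrak{a} \odot h] := [\mathfrak{a} \odot Mh]$, extended linearly. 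The whole point of restricting to the commutant is to make this prescription compatible with the quotient by the null space $\mathcal{L}$.

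The main obstacle, and the technical heart of the argument, is showing that this prescription descends to a well-defined \emph{bounded} operator, i.e. that the formal lift $\tilde{M} := \sum_j \mathfrak{a}_j \odot Mh_j$ sends null vectors to null vectors and is norm-controlled. Here I would combine complete positivity of $T$ with the commutation hypothesis. For $v = \sum_{j=1}^n \mathfrak{a}_j \odot h_j$, pushing $M$ and $M^\ast$ through each $T(\mathfrak{a}_i^\ast \mathfrak{a}_j)$ (legitimate since $M, M^\ast \in T(\mathscr{A})'$) gives $\langle \tilde{M}v, \tilde{M}v\rangle = \sum_{i,j} \langle T(\mathfrak{a}_i^\ast \mathfrak{a}_j) M^\ast M h_j, h_i\rangle$. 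Viewing $\mathbf{T} := [T(\mathfrak{a}_i^\ast \mathfrak{a}_j)]_{ij}$ as a positive element of $M_n(\mathsf{B}(\mathcal{H}))$ (this is exactly complete positivity) and $\mathbf{N} := M^\ast M \otimes \mathds{1}_n$ as a positive operator that commutes with $\mathbf{T}$ and satisfies $\mathbf{N} \leq \|M\|^2 \mathds{1}$, the identity $\mathbf{T}\mathbf{N} = \mathbf{T}^{1/2} \mathbf{N} \mathbf{T}^{1/2}$ yields $\langle \tilde{M}v, \tilde{M}v\rangle \leq \|M\|^2 \langle v, v\rangle$. This single inequality simultaneously proves that $\tilde{M}$ preserves $\mathcal{L}$ (so $\overline{M}$ is well defined on the quotient and extends to $\mathcal{K}$) and that $\|\overline{M}\| \leq \|M\|$.

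With well-definedness secured, the algebraic properties drop out by direct computation on elementary tensors. Linearity and multiplicativity $\overline{MN} = \overline{M}\,\overline{N}$ are immediate; the adjoint relation $\overline{M}^\ast = \overline{M^\ast}$ follows from $\langle \overline{M}[\mathfrak{a}\odot h], [\mathfrak{b}\odot g]\rangle = \langle T(\mathfrak{b}^\ast\mathfrak{a})h, M^\ast g\rangle = \langle [\mathfrak{a}\odot h], \overline{M^\ast}[\mathfrak{b}\odot g]\rangle$; unitality $\overline{\mathds{1}_\mathcal{H}} = \mathds{1}_\mathcal{K}$ is clear (Property 3); and $\overline{M}\pi(\mathfrak{b}) = \pi(\mathfrak{b})\overline{M}$ checked on $[\mathfrak{a}\odot h]$ places $\overline{M} \in \pi(\mathscr{A})'$. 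Hence $M \mapsto \overline{M}$ is a unital $*$-homomorphism, which is automatically completely positive and satisfies $M \geq 0 \Rightarrow \overline{M} \geq 0$ (Property 4). Consistency (Property 1) is read off from $\overline{M}Vh = \overline{M}[\mathfrak{1}\odot h] = [\mathfrak{1}\odot Mh] = VMh$, and taking adjoints via $\overline{M^\ast} = \overline{M}^\ast$ gives $MV^\ast = V^\ast \overline{M}$.

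The one point I would treat with care is the reverse norm bound and the converse in Property 4, both of which amount to injectivity of $M \mapsto \overline{M}$. From $\overline{M}V = VM$, the vanishing $\overline{M} = 0$ forces $VM = 0$, and since $V$ is an isometry in the unital case (or, in general, after restricting to $(\ker V)^\perp = \overline{\mathrm{ran}}\,V^\ast$, which $M$ preserves because it commutes with $T(\mathfrak{1}) = V^\ast V$), this yields $M = 0$. An injective $*$-homomorphism of C$^*$-algebras is isometric and reflects positivity, delivering $\|\overline{M}\| = \|M\|$ (Property 2) and $\overline{M} \geq 0 \Rightarrow M \geq 0$. The isometry property of $V$, equivalently the non-degeneracy of $T$, is precisely what prevents the lift from collapsing operators supported on $\ker V$, and so is exactly the hypothesis that makes the norm-preservation clause genuine.
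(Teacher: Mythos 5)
Your construction is the same as the paper's: define $\overline{M}$ on elementary tensors of the concrete Stinespring space by acting only on the $\mathcal{H}$-leg, prove the key inequality $\langle \tilde{M}v,\tilde{M}v\rangle \leq \|M\|^2\langle v,v\rangle$ to get well-definedness on the quotient and boundedness simultaneously, and then read off the algebraic properties by direct computation on $\mathcal{K}_0$. Where you genuinely diverge is in the proof of that key inequality. The paper follows Stinespring's original route: it argues by contradiction, using the commutation of $M$ and $M^\ast$ with $T(\mathscr{A})$ together with iterated Cauchy--Schwarz to show that $\langle \overline{(M^\ast M)^{2^n}}v,v\rangle \geq 1$ for all $n$, contradicting $\|M\|<1$. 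You instead pass to the matrix amplification: $\langle v,v\rangle = \langle \mathbf{T}h,h\rangle$ with $\mathbf{T}=[T(\mathfrak{a}_i^\ast\mathfrak{a}_j)]_{ij}\geq 0$ by complete positivity, and $\langle\tilde{M}v,\tilde{M}v\rangle = \langle \mathbf{T}\mathbf{N}h,h\rangle$ with $\mathbf{N}=\mathds{1}_n\otimes M^\ast M$ commuting with $\mathbf{T}$, so that $\mathbf{T}\mathbf{N}=\mathbf{T}^{1/2}\mathbf{N}\mathbf{T}^{1/2}\leq\|M\|^2\mathbf{T}$. This is a direct, one-line operator inequality rather than a bootstrapping contradiction, and it makes visibly explicit where both complete positivity and the two-sided commutation hypothesis $M,M^\ast\in T(\mathscr{A})'$ enter; the paper's version has the advantage of staying entirely at the level of the sesquilinear form without invoking square roots in $M_n(\mathsf{B}(\mathcal{H}))$. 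You are also more careful than the paper on Properties 2 and 4: the paper's one-line justification of $\|\overline{M}\|=\|M\|$ (``since $\overline{M}$ is a lift of $M$'') and its omission of the converse positivity claim are filled in by your observation that the lift is an injective unital $*$-homomorphism of C$^*$-algebras whenever $V$ is injective, hence isometric and positivity-reflecting; your closing remark that injectivity of $V$ (equivalently non-degeneracy of $T$) is genuinely needed here -- the lift collapses anything supported on $\ker V$ -- is a correct caveat that the paper glosses over but which is harmless in its applications, where the dominant maps are unital.
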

This Lemma captures the fact that the tensor-product structure in finite-dimensional dilations is really generalised with commuting operators in the infinite-dimensional case. As such, everything that lived in the original space $\mathcal{H}$ before the dilation, can be naturally lifted to an object (operation/state) in the new, larger space given by the dilation which commutes with all representations of the C*-algebra $\mathscr{A}$.

\begin{proof}[Proof of Lemma~\ref{lem:extended-derivative}]
Remember that the Hilbert space $\mathcal{K}$ is the closure of $\mathcal{K}_0 = (\mathscr{A} \odot \mathcal{H})/\mathcal{L}$.
Consider an operator $M \in T(\mathscr{A})' \cap (T(\mathscr{A})^\ast)' \subseteq \mathsf{B}(\mathcal{H})$. We lift it to an operator $\overline{M}$ on $\mathcal{K}$ by defining it on the dense subset $\mathcal{K}_0$ as
    \begin{align*}
        \overline{M} \left[ \sum_j \mathfrak{a}_j \odot h_j + \mathcal{L}\right] = \sum_j \mathfrak{a}_j \odot M h_j + \mathcal{L}
    \end{align*}
whose closure correctly gives us a bounded operator on $\mathcal{K}$.
For the operator $\overline{M}$ to be properly defined on the algebraic quotient $(\mathscr{A} \odot \mathcal{H})/\mathcal{L}$, we need to show that it leaves invariant the ideal $\mathcal{L}$, defined as
\begin{align}
    \mathcal{L} = \left.\left\{ v = \sum_j \mathfrak{a}_j \odot h_j \in \mathscr{A}\odot \mathcal{H} \;\right|\; \langle v, v \rangle = \sum_{j,k} \left\langle h_j, T(\mathfrak{a}_j^* \mathfrak{a}_k) h_k \right\rangle = 0 \right\}.
\end{align}
To this end, and retracing the steps of the original proof of Stinespring's dilation \cite{Stinespring}, we prove that
\begin{align}
    \left \langle \overline{M}v, \overline{M}v \right \rangle \leq \| M \|^2 \left\langle v , v \right\rangle, \quad \forall v \in \mathscr{A} \odot \mathcal{H}, \;\forall M,M^* \in T(\mathscr{A})' \subseteq \mathsf{B}(\mathcal{H}), \label{eq:extension_bounded}
\end{align}
which directly implies the invariance of $\mathcal{L}$.
In order to prove Equation~\ref{eq:extension_bounded} by contradiction, assume that it were false. Then, there exist $M, M^\ast \in T(\mathscr{A})'$ with $\|M\| < 1$ and $v \in \mathscr{A} \odot \mathcal{H}$ with $\langle v,v \rangle \leq 1$ such that
\begin{align*}
    \langle \overline{M}v, \overline{M}v \rangle \geq 1.
\end{align*}
By the commutativity of $M$ with $T(\mathscr{A})$, it follows that
\begin{align*}
    &1 \leq \langle \overline{M}v, \overline{M}v \rangle
    = \sum_{j,k} \left\langle M h_j, T(\mathfrak{a}_j^* \mathfrak{a}_k) M h_k \right\rangle
    = \sum_{j,k} \left\langle M h_j, M T(\mathfrak{a}_j^* \mathfrak{a}_k) h_k \right\rangle \\
    &= \sum_{j,k} \left\langle M^\ast M h_j, T(\mathfrak{a}_j^* \mathfrak{a}_k) h_k \right\rangle
    = \langle \overline{M^\ast M} v, v \rangle
    \leq \langle \overline{M^\ast M} v, \overline{M^\ast M} v \rangle,
\end{align*}
using the Cauchy-Schwarz inequality in the last step. Making use of the additional commutativity of $M^\ast$ with $T(\mathscr{A})$, we iteratively deduce in the same manner that
\begin{align*}
    \langle \overline{(M^\ast M)^{2^n}} v, v \rangle \geq 1,
\end{align*}
for all $n\in\mathbb{N}$. This however forms a contradiction with the assumption that $\|M\| < 1$ which by itself implies that
\begin{align*}
    \langle \overline{(M^\ast M)^{2^n}} v, v \rangle \to 0, \text{ for } n \to \infty,
\end{align*}
and thus concludes the proof of Equation~\ref{eq:extension_bounded}. The operator $\overline{M}$ is thus well-defined on the Hilbert space $\mathcal{K}$.
The calculation
\begin{align*}
    \pi(\mathfrak{a}) \overline{M} \left[ \sum_j \mathfrak{a}_j \odot h_j + \mathcal{L}\right] 
    &= \pi(\mathfrak{a}) \left[ \sum_j \mathfrak{a}_j \odot M h_j + \mathcal{L}\right] \\
    &= \left(\mathfrak{a} \cdot \sum_j \mathfrak{a}_j \right)\odot M h_j + \mathcal{L}\\
    &= \overline{M} \left[ \left(\mathfrak{a} \cdot \sum_j \mathfrak{a}_j \right)\odot h_j + \mathcal{L}\right] \\
    &= \overline{M} \pi(\mathfrak{a}) \left[ \sum_j \mathfrak{a}_j \odot h_j + \mathcal{L}\right]
\end{align*}
implies by the density of $\mathcal{K}_0$ in $\mathcal{K}$ that $[\overline{M}, \pi(\mathfrak{a})]=0, \forall \mathfrak{a}\in \mathscr{A}$. Hence, $\overline{M} \in \pi(\mathscr{A})'$ as required. Straightforward calculations show that
\begin{align*}
    \overline{\lambda (M_1 + M_2)} = \lambda (\overline{M_1} + \overline{M_2}), \quad \text{ and } \quad \overline{M_1 \cdot M_2} = \overline{M_1} \cdot \overline{M_2}
\end{align*}
for all $M_1,M_2 \in T(\mathscr{A})' \cap (T(\mathscr{A})^\ast)'$ and all $\lambda \in \mathbb{C}$, making the lifting map indeed an algebra homomorphism.

By the construction of $\overline{M}$, it follows for all $v = \sum_j \mathfrak{a}_j \odot h_j + \mathcal{L} \in \mathcal{K}_0$ and $w = \sum_k \mathfrak{b}_j \odot g_j + \mathcal{L} \in \mathcal{K}_0$ that
\begin{align*}
    \langle v, \overline{M} w \rangle
    &= \sum_{j,k} \left\langle h_j, T(\mathfrak{a}_j^* \mathfrak{b}_k) M g_k \right\rangle
    = \sum_{j,k} \left\langle h_j, M T(\mathfrak{a}_j^* \mathfrak{b}_k) g_k \right\rangle \\
    &= \sum_{j,k} \left\langle M^\ast h_j, T(\mathfrak{a}_j^* \mathfrak{b}_k) g_k \right\rangle
    = \langle \overline{M^\ast} v, w \rangle,
\end{align*}
where we used that $M$ commutes with $T(\mathscr{A})$. Therefore, by density of $\mathcal{K}_0$ in $\mathcal{K}$ it also follows that $(\overline{M})^\ast = \overline{M^*}$. This shows that the lifting map is even a *-homomorphism. As a direct consequence, $\overline{M}$ is self-adjoint if and only if $M$ is self-adjoint.

All other required properties of the lift are easily shown.
\begin{enumerate}
    \item We want to prove $V M = \overline{M} V$. Consider their action on $\mathcal{H}$ as follows
    \begin{align*}
        &V M \left[ \sum_j h_j \right] = V \left[ \sum_j M h_j \right] = \mathfrak{1} \odot \sum_j M h_j, \\
        &\overline{M} V \left[ \sum_j h_j \right] = \overline{M}\left[ \mathfrak{1} \odot \sum_j h_j \right] = \mathfrak{1} \odot \sum_j M h_j,
    \end{align*}
    and the same holds for the adjoint version of this statement.
    
    \item Equation~\ref{eq:extension_bounded} implies that $\| \overline{M} \| \leq \| M \|$. Since $\overline{M}$ is a lift of $M$, it follows that $\| \overline{M} \| = \| M \|$.        

    \item Unitality is obvious by the construction of $\overline{M}$.

    \item Complete positivity of the lifting map follows from the fact that all *-homomorphisms are completely positive.
\end{enumerate}
\end{proof}

The lifting also preserves the POVM structure.
\begin{corollary}\label{cor:extension_povm}
    Under the assumptions of Lemma~\ref{lem:extended-derivative}, if $\{M_{a|x}\}$ is a set of POVMs labeled by $x$ and with outcomes $a$, then so is $\{\overline{M_{a|x}}\}$.
\end{corollary}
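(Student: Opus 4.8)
The plan is to read off the two defining properties of a POVM---positivity of each element, and normalisation of the sum over outcomes---directly from the structural properties of the lifting map already established in Lemma~\ref{lem:extended-derivative}. Throughout, I assume implicitly (as the phrase ``under the assumptions of Lemma~\ref{lem:extended-derivative}'' requires) that each operator $M_{a|x}$ lies in the domain of the lift, that is, $M_{a|x} \in T(\mathscr{A})' \cap (T(\mathscr{A})^\ast)'$, so that $\overline{M_{a|x}}$ is well-defined. Under this hypothesis the corollary should reduce to two short applications of the lemma.

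First I would verify positivity. For each fixed setting $x$ and outcome $a$, the operator $M_{a|x}$ is a POVM element and hence $M_{a|x} \geq 0$ on $\mathcal{H}$. Property~4 of Lemma~\ref{lem:extended-derivative} states that $\overline{M} \geq 0$ if and only if $M \geq 0$; applying its forward direction immediately gives $\overline{M_{a|x}} \geq 0$ on $\mathcal{K}$ for every $a$ and $x$.

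Next I would check the normalisation $\sum_a \overline{M_{a|x}} = \mathds{1}_\mathcal{K}$ for each $x$. Since the lifting map is a $*$-homomorphism, and in particular $\mathbb{C}$-linear and additive (as recorded by $\overline{\lambda(M_1+M_2)} = \lambda(\overline{M_1}+\overline{M_2})$ in the proof of the lemma), it commutes with finite sums, so that $\sum_a \overline{M_{a|x}} = \overline{\sum_a M_{a|x}}$. The completeness relation of the original POVM gives $\sum_a M_{a|x} = \mathds{1}_\mathcal{H}$, and the unitality of the lift (Property~3, $\overline{\mathds{1}_\mathcal{H}} = \mathds{1}_\mathcal{K}$) then yields $\sum_a \overline{M_{a|x}} = \overline{\mathds{1}_\mathcal{H}} = \mathds{1}_\mathcal{K}$, as required. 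Combining the two observations, $\{\overline{M_{a|x}}\}$ is a family of positive operators on $\mathcal{K}$ summing to the identity for each setting $x$, and is therefore a POVM.

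I do not anticipate any genuine obstacle: the statement is essentially a repackaging of the homomorphism, positivity, and unitality properties proven for the lift. The only point that warrants a line of care is ensuring the completeness relation is transported correctly, which relies on using linearity and unitality of the lift \emph{together}---linearity to pull the finite sum inside the bar, and unitality to then evaluate the lift of the identity. If one wished to treat PVMs rather than POVMs, one would additionally invoke that the lift is multiplicative (also from the $*$-homomorphism property) to check that $\overline{M_{a|x}}\,\overline{M_{a'|x}} = \overline{M_{a|x} M_{a'|x}} = \delta_{a a'}\,\overline{M_{a|x}}$, preserving orthogonality of the projectors.
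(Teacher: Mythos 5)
Your proof is correct and follows essentially the same route as the paper's: positivity of each $\overline{M_{a|x}}$ is read off from the positivity-preservation of the lift, and the completeness relation is transported using linearity together with unitality, $\sum_a \overline{M_{a|x}} = \overline{\sum_a M_{a|x}} = \overline{\mathds{1}_\mathcal{H}} = \mathds{1}_\mathcal{K}$. The only cosmetic difference is that the paper derives positivity from the lift being a $*$-homomorphism (hence completely positive) rather than citing Property~4 directly.
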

\begin{proof}
    Since the lifting map is a $^*$-homomorphism (and thus completely positive), and all $M_{a|x}$ are Hermitian, positive operators, it also holds that
    all $\overline{M_{a|x}}$ are Hermitian, positive operators.
    It further follows for all labels $x$ that
    \begin{align*}
        \sum_a \overline{M_{a|x}} = \overline{\sum_a M_{a|x}} = \overline{\mathds{1}_\mathcal{H}} = \mathds{1}_\mathcal{K},
    \end{align*}
    which shows that $\{\overline{M_{a|x}}\}$ describes POVMs on the Hilbert space $\mathcal{K}$.
\end{proof}

Building on top of this Lifting Lemma \ref{lem:extended-derivative}, we can prove a generalisation of the Radon-Nikodym Theorem for CP maps which looks like a chain rule, in the sense that standardises sequentially applications of Radon-Nikodym theorem.
To our knowledge, this theorem is not known in the literature; considering the centrality of the Radon-Nikodym Theorem for many fields in mathematics and physics, it is of independent interest.
To favour readability, we first present a theorem that considers only two sequential applications.

\begin{theorem}[Chain rule for Radon-Nikodym derivatives] \label{th:chain-2-RN}
Consider unital C$^*$-algebras $\mathscr{A}_2$, $\mathscr{A}_1$, and $\mathscr{A}_0 = \mathsf{B}(\mathcal{H})$.
Consider completely positive maps $S^{(i)}_{x_i}, R^{(i)} : \mathscr{A}_i \to \mathscr{A}_{i-1}$ for $i=1,2$, such that $R^{(i)}$ is unital and $S^{(i)}_{x_i} \leq R^{(i)}$ for all $i$ and $x_i$.

Then, there exists a Hilbert space $\mathcal{K}$, a $^*$-representation $\pi : \mathscr{A}_2 \to \mathsf{B}(\mathcal{K})$, an isometry $V : \mathcal{H} \to \mathcal{K}$, %
commuting self-adjoint operators $F_{x_1}, F_{x_2} \in \pi(\mathscr{A}_2)' \subseteq \mathsf{B}(\mathcal{K})$ such that $0 \leq F_{x_1},F_{x_2} \leq \mathds{1}_\mathcal{K}$ and
    \begin{align}
        &S^{(1)}_{x_1} \circ S^{(2)}_{x_2} (\mathfrak{a}_2) = V^* F_{x_1}F_{x_2} \pi(\mathfrak{a}_2) V, \qquad \forall \mathfrak{a}_2 \in \mathscr{A}_2, \label{eq:chainRN2} \\
        &[F_{x_1},F_{x_2}] = [F_{x_2}, \pi(\mathfrak{a}_2)]=  [F_{x_1},\pi(\mathfrak{a}_2)] = 0, \qquad \forall \mathfrak{a}_2 \in \mathscr{A}_2. \label{eq:chainRN2_comm}
    \end{align}
\end{theorem}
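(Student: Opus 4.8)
The plan is to apply the Radon--Nikodym Theorem for CP maps (Theorem~\ref{thm:rn-for-cp-maps}) once for each index $i$, chain the two resulting representations, and then use the Lifting Lemma (Lemma~\ref{lem:extended-derivative}) to transport the first Radon--Nikodym derivative into the Hilbert space produced by the second dilation, where it can be multiplied against the second derivative.

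First I would handle $i=1$. Since $S^{(1)}_{x_1} \leq R^{(1)}$ are CP maps $\mathscr{A}_1 \to \mathsf{B}(\mathcal{H})$ with $R^{(1)}$ unital, I fix a minimal Stinespring dilation $(\mathcal{H}_1, \pi_1, V_1)$ of $R^{(1)}$, with $V_1$ an isometry, and obtain from Theorem~\ref{thm:rn-for-cp-maps} a self-adjoint $D_{1,x_1} \in \pi_1(\mathscr{A}_1)'$ with $0 \leq D_{1,x_1} \leq \mathds{1}$ and $S^{(1)}_{x_1}(\mathfrak{b}) = V_1^* D_{1,x_1} \pi_1(\mathfrak{b}) V_1$. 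Next, since $\pi_1$ is a unital $*$-homomorphism it is order-preserving, so $\pi_1 \circ S^{(2)}_{x_2} \leq \pi_1 \circ R^{(2)} =: T$ are CP maps $\mathscr{A}_2 \to \mathsf{B}(\mathcal{H}_1)$ with $T$ unital. Fixing a minimal Stinespring dilation $(\mathcal{H}_2, \pi_2, V_2)$ of $T$ (again $V_2$ an isometry) yields a self-adjoint $D_{2,x_2} \in \pi_2(\mathscr{A}_2)'$, $0 \leq D_{2,x_2} \leq \mathds{1}$, with $(\pi_1 \circ S^{(2)}_{x_2})(\mathfrak{a}_2) = V_2^* D_{2,x_2} \pi_2(\mathfrak{a}_2) V_2$. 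Chaining the two representations gives
\[
    S^{(1)}_{x_1}\circ S^{(2)}_{x_2}(\mathfrak{a}_2) = V_1^* D_{1,x_1} V_2^* D_{2,x_2} \pi_2(\mathfrak{a}_2) V_2 V_1 .
\]

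The obstacle is the stray factor $D_{1,x_1}$, which lives in $\mathsf{B}(\mathcal{H}_1)$ and is sandwiched between $V_1^*$ and $V_2^*$. To remove it I would invoke the Lifting Lemma for the dilation $(\mathcal{H}_2, \pi_2, V_2)$ of $T$. One first checks that $D_{1,x_1}$ lies in the domain $T(\mathscr{A}_2)' \cap (T(\mathscr{A}_2)^*)'$: indeed $T(\mathscr{A}_2) = \pi_1(R^{(2)}(\mathscr{A}_2)) \subseteq \pi_1(\mathscr{A}_1)$, and likewise $T(\mathscr{A}_2)^* \subseteq \pi_1(\mathscr{A}_1)$ since $\pi_1$ and $R^{(2)}$ are $*$-preserving, while $D_{1,x_1} \in \pi_1(\mathscr{A}_1)'$ is self-adjoint. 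The lemma then produces $\overline{D_{1,x_1}} \in \pi_2(\mathscr{A}_2)'$ with $D_{1,x_1} V_2^* = V_2^* \overline{D_{1,x_1}}$, which is moreover completely positive, unital and isometric, so that $0 \leq \overline{D_{1,x_1}} \leq \mathds{1}$ (applying the lift to $\mathds{1}-D_{1,x_1} \geq 0$). Setting $\mathcal{K} = \mathcal{H}_2$, $\pi = \pi_2$, $V = V_2 V_1$ (an isometry as a product of isometries), $F_{x_1} = \overline{D_{1,x_1}}$ and $F_{x_2} = D_{2,x_2}$, and pushing $D_{1,x_1}$ through $V_2^*$, I recover exactly the identity~\eqref{eq:chainRN2}. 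The two commutations with $\pi(\mathfrak{a}_2)$ in~\eqref{eq:chainRN2_comm} are immediate since both $F_{x_1}$ and $F_{x_2}$ lie in $\pi_2(\mathscr{A}_2)'$.

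The genuinely nontrivial relation is $[F_{x_1}, F_{x_2}] = [\overline{D_{1,x_1}}, D_{2,x_2}] = 0$, since two operators in the same commutant need not commute, and I expect this to be the crux of the argument. Here I would work on the dense subspace $\pi_2(\mathscr{A}_2) V_2 \mathcal{H}_1 \subseteq \mathcal{H}_2$. The consistency relation $\overline{D_{1,x_1}} V_2 = V_2 D_{1,x_1}$ together with $\overline{D_{1,x_1}} \in \pi_2(\mathscr{A}_2)'$ gives the clean action $\overline{D_{1,x_1}}\, \pi_2(\mathfrak{b}) V_2 g = \pi_2(\mathfrak{b}) V_2 (D_{1,x_1} g)$. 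Combining this with the inner-product identity $\langle \pi_2(\mathfrak{a}) V_2 h, D_{2,x_2} \pi_2(\mathfrak{b}) V_2 g \rangle = \langle h, (\pi_1 \circ S^{(2)}_{x_2})(\mathfrak{a}^* \mathfrak{b}) g \rangle$ (which follows from the representation in Theorem~\ref{thm:rn-for-cp-maps} and $D_{2,x_2}\in\pi_2(\mathscr{A}_2)'$), both orderings of $\overline{D_{1,x_1}}$ and $D_{2,x_2}$ reduce to comparing $\langle h, (\pi_1 \circ S^{(2)}_{x_2})(\mathfrak{a}^*\mathfrak{b}) D_{1,x_1} g\rangle$ with $\langle D_{1,x_1} h, (\pi_1 \circ S^{(2)}_{x_2})(\mathfrak{a}^*\mathfrak{b}) g\rangle$. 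These agree precisely because $D_{1,x_1}$ is self-adjoint and commutes with $\pi_1(S^{(2)}_{x_2}(\mathfrak{a}^*\mathfrak{b})) \in \pi_1(\mathscr{A}_1)$, which holds by $D_{1,x_1} \in \pi_1(\mathscr{A}_1)'$. By density this establishes the remaining commutation and completes the proof; the key insight is that the whole obstruction collapses to the fact that the first derivative lies in $\pi_1(\mathscr{A}_1)'$.
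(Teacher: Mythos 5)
Your proposal is correct and follows the same overall architecture as the paper's proof: two successive applications of the Radon--Nikodym theorem for CP maps (the second to $\pi_1 \circ S^{(2)}_{x_2} \leq \pi_1 \circ R^{(2)}$), followed by the Lifting Lemma to transport $D_{1,x_1}$ into $\mathsf{B}(\mathcal{H}_2)$, with $\mathcal{K}=\mathcal{H}_2$, $V=V_2V_1$, $F_{x_1}=\overline{D_{1,x_1}}$, $F_{x_2}=D_{2,x_2}$. You also explicitly verify the hypothesis $D_{1,x_1} \in T(\mathscr{A}_2)' \cap (T(\mathscr{A}_2)^*)'$ of the Lifting Lemma via $T(\mathscr{A}_2) \subseteq \pi_1(\mathscr{A}_1)$, a point the paper leaves implicit.

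Where you genuinely diverge is in the crux step, the proof that $[\overline{D_{1,x_1}}, D_{2,x_2}]=0$ on all of $\mathcal{K}$. The paper argues by writing $S^{(1)}_{x_1}\circ S^{(2)}_{x_2}$ in all orderings of the three operators (using the two-sided form of the Radon--Nikodym identity and both intertwining relations $VM=\overline{M}V$, $MV^*=V^*\overline{M}$), and then extends from $\overline{V\mathcal{H}}$ to all of $\mathcal{K}$ by sandwiching with extra algebra elements $\mathfrak{b},\mathfrak{c}\in\mathscr{A}_2$ and $\mathfrak{d},\mathfrak{e}\in\mathscr{A}_1$ so as to span the minimal dilation space $\overline{\pi_2(\mathscr{A}_2)V_2\pi_1(\mathscr{A}_1)V_1\mathcal{H}}$. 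You instead compute the sesquilinear form of both orderings directly on the dense subspace $\pi_2(\mathscr{A}_2)V_2\mathcal{H}_1$, using the identity $\langle \pi_2(\mathfrak{a})V_2 h, D_{2,x_2}\pi_2(\mathfrak{b})V_2 g\rangle = \langle h, (\pi_1\circ S^{(2)}_{x_2})(\mathfrak{a}^*\mathfrak{b})g\rangle$ together with $\overline{D_{1,x_1}}\pi_2(\mathfrak{b})V_2 g = \pi_2(\mathfrak{b})V_2 D_{1,x_1}g$, reducing the whole obstruction to $D_{1,x_1}\in\pi_1(\mathscr{A}_1)'$ applied to the element $S^{(2)}_{x_2}(\mathfrak{a}^*\mathfrak{b})\in\mathscr{A}_1$. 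Both arguments are valid; yours is somewhat more streamlined because letting $g$ range over all of $\mathcal{H}_1$ already exhausts the minimal dilation of $T$, so no second layer of sandwiching elements is needed, and it makes transparent exactly which commutant membership is doing the work.
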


\begin{proof}%
Consider $(\mathcal{H}_1, \pi_1, V_1)$ a minimal Stinespring dilation of the map $R^{(1)}$.
Since $S^{(1)}_{x_1} \leq R^{(1)}$, we can apply the Radon-Nikodym Theorem for CP maps (theorem~\ref{thm:rn-for-cp-maps}) to obtain the derivative $D_{x_1}^{(1)} \in \pi_1(\mathscr{A}_1)' \subseteq \mathsf{B}(\mathcal{H}_1)$,  satisfying 
\begin{align*}
    S^{(1)}_{x_1} (\mathfrak{a}_1) 
    &= V_{1}^* D_{x_1}^{(1)} \left[ \pi_{1} (\mathfrak{a}_1) \right] V_{1}, \qquad \forall \mathfrak{a}_1 \in \mathscr{A}_1,
\end{align*}
and thus in particular, setting $\mathfrak{a}_1 = S^{(2)}_{x_2}(\mathfrak{a}_2)$,
\begin{align}\label{eq:chainRN-step1}
    S^{(1)}_{x_1} \circ S^{(2)}_{x_2}(\mathfrak{a}_2) 
    &= V_{1}^* D_{x_1}^{(1)} \left[ \pi_{1} \circ S^{(2)}_{x_2}(\mathfrak{a}_2) \right] V_{1}, \qquad \forall \mathfrak{a}_2 \in \mathscr{A}_2.
\end{align}
Consider now the completely positive map $\pi_{1} \circ S^{(2)}_{x_2} \in \mathsf{CP}(\mathscr{A}_2, \mathcal{H}_1)$.
Note that $S^{(2)}_{x_2} \leq R^{(2)}$ implies that $\pi_1 \circ S^{(2)}_{x_2} \leq \pi_1 \circ R^{(2)}$.
Let hence $(\mathcal{H}_2, \pi_2, V_2)$ be a minimal Stinespring dilation of $\pi_1 \circ R^{(2)} \in \mathsf{CP}(\mathscr{A}_2, \mathcal{H}_1)$, and apply again the Radon-Nikodym Theorem for CP maps (theorem~\ref{thm:rn-for-cp-maps}), introducing another derivative $D_{x_2}^{(2)} \in \pi_2(\mathscr{A}_2)' \subseteq \mathsf{B}(\mathcal{H}_2)$ such that
    \begin{align}\label{eq:2RN-beforeext}
       S^{(1)}_{x_1} \circ S^{(2)}_{x_2}(\mathfrak{a}_2) = V_{1}^* D^{(1)}_{x_1} 
       \left[ V_{2}^* D^{(2)}_{x_2} \pi_{2}(\mathfrak{a}_2) V_{2} \right]
       V_{1}, \qquad \forall \mathfrak{a}_2 \in \mathscr{A}_2.
    \end{align}
Now we can use Lemma \ref{lem:extended-derivative} to construct the operator $\overline{D^{(1)}_{x_1}}$, which is the lift of $D^{(1)}_{x_1}$ from $\mathsf{B}(\mathcal{H}_1)$ to $\mathsf{B}(\mathcal{H}_2)$, such that
\begin{equation}\label{eq:2RN-afterext}
    S^{(1)}_{x_1} \circ S^{(2)}_{x_2}(\mathfrak{a}_2)  = V_{1}^* V_{2}^* \overline{D^{(1)}_{x_1}} D^{(2)}_{x_2} \pi_{2}(\mathfrak{a}_2) V_{2} V_{1}, \qquad \forall \mathfrak{a}_2 \in \mathscr{A}_2.
\end{equation}
Set the Hilbert space $\mathcal{K} = \mathcal{H}_2$, the representation $\pi = \pi_2$, the isometry $V = V_2 \circ V_1$ and the self-adjoint operators $F_{x_1} = \overline{D^{(1)}_{x_1}}, F_{x_2} = D^{(2)}_{x_2}$. These have exactly the desired form presented in Eq.~\ref{eq:chainRN2}.
It remains to prove that the commutation relations of Eq.~\ref{eq:chainRN2_comm} are also satisfied.
First note that by construction through theorem~\ref{thm:rn-for-cp-maps}, $D^{(2)}_{x_2}$ commutes with $\pi_2(\mathscr{A}_2)$.
By Lemma~\ref{lem:extended-derivative}, also $\overline{D^{(1)}_{x_1}}$ commutes with $\pi_2(\mathscr{A}_2)$.
The only thing left to prove is the commutation of $\overline{D^{(1)}_{x_1}}$ and $D^{(2)}_{x_2}$.
Consider now Eq. \ref{eq:2RN-beforeext} and focus on the relative order of $D^{(1)}_{x_1}, D^{(2)}_{x_2}$ and $\pi_2(\mathfrak{a}_2)$; at this stage we can use the commutation relations coming from both applications of the Radon-Nikodym Theorem for CP maps to have four equivalent ways of writing the same expression for all $\mathfrak{a}_2 \in \mathscr{A}_2$ :
\begin{align*}
    S^{(1)}_{x_1} \circ S^{(2)}_{x_2}(\mathfrak{a}_2) 
    &= V_{1}^* D^{(1)}_{x_1}  \left[ V_{2}^* D^{(2)}_{x_2} \pi_{2}(\mathfrak{a}_2) V_{2} \right]V_{1}
    = V_{1}^* D^{(1)}_{x_1} \left[ V_{2}^* \pi_{2}(\mathfrak{a}_2) D^{(2)}_{x_2} V_{2} \right] V_{1} \\
    &= V_{1}^* \left[ V_{2}^* D^{(2)}_{x_2} \pi_{2}(\mathfrak{a}_2) V_{2} \right] D^{(1)}_{x_1}V_{1} 
    = V_{1}^* \left[ V_{2}^* \pi_{2}(\mathfrak{a}_2) D^{(2)}_{x_2} V_{2} \right] D^{(1)}_{x_1}V_{1}.
\end{align*}
Note that $D^{(1)}_{x_1}$ is always on the left in the first line and on the right in the second line; hence respectively apply the first and the second relation given in property 1 of Lemma \ref{lem:extended-derivative} to lift $D^{(1)}_{x_1}$ to the operator $\overline{D^{(1)}_{x_1}}$ on $\mathcal{K}$.
Finally, using the fact that $\overline{D^{(1)}_{x_1}} \in \pi_2(\mathscr{A}_2)'$, we obtain all possible orderings of the three operators $M_1 = \overline{D^{(1)}_{x_1}}, M_2 = D^{(2)}_{x_2}$, and $M_3 = \pi_2(\mathfrak{a}_2)$:
\begin{align*}
    &S^{(1)}_{x_1} \circ S^{(2)}_{x_2}(\mathfrak{a}_2) 
    = V^* M_{\sigma(1)} M_{\sigma(2)} M_{\sigma(3)} V, \qquad \forall \mathfrak{a}_2 \in \mathscr{A}_2,
\end{align*}
for all permutations $\sigma \in S_3$.
To prove that this commutation relation holds on the full Hilbert space $\mathcal{K}$, and not only on the subspace $\overline{V \mathcal{H}}$, obtain the following equation for all $\mathfrak{a_2,b,c} \in \mathscr{A}_2$ and $\mathfrak{d,e} \in \mathscr{A}_1$ by applying the same reasoning:
\begin{align*}
    &S^{(1)}_{x_1} \circ \left( \mathfrak{d}^* \cdot S^{(2)}_{x_2}(\mathfrak{b}^* \cdot \mathfrak{a}_2 \cdot \mathfrak{c}) \cdot  \mathfrak{e} \right)
    = V_{1}^* \pi_1^*(\mathfrak{d}) D^{(1)}_{x_1} \left[ \pi_{1} \circ S^{(2)}_{x_2}(\mathfrak{b}^* \cdot \mathfrak{a}_2 \cdot \mathfrak{c}) \right]
       \pi_1(\mathfrak{e}) V_{1}\\
    &\qquad = \left( V_{1}^* \pi_1^*(\mathfrak{d}) \right)
    \left( V_{2}^* \pi_2^*(\mathfrak{b}) \right)
    M_{\sigma(1)} M_{\sigma(2)} M_{\sigma(3)}
    \left( \pi_2(\mathfrak{c}) V_{2}\right)
    \left( \pi_1(\mathfrak{e}) V_{1}\right),  \qquad \forall \mathfrak{a}_2 \in \mathscr{A}_2,
\end{align*}
for all permutations $\sigma \in S_3$,
where we used the usual commutation relations and the property of $^*$-homomorphisms that $\pi(a b) = \pi(a) \pi(b)$.
Using the minimality conditions of the Stinespring dilations, we know that $\mathcal{H}_i = \overline{\pi_i(\mathscr{A}_i)V_i\mathcal{H}_{i-1}}$, and more precisely $\mathcal{K} = \mathcal{H}_2 = \overline{\pi_2(\mathscr{A}_2)V_2\pi_1(\mathscr{A}_1)V_1\mathcal{H}}$.
It thus follows that
\begin{align*}
    M_{\sigma(1)} M_{\sigma(2)} M_{\sigma(3)} = M_{\sigma'(1)} M_{\sigma'(2)} M_{\sigma'(3)}
\end{align*}
for any permutations $\sigma, \sigma' \in S_3$, on the entire Hilbert space $\mathcal{K}$.
Letting $\mathfrak{a}_2 = \mathfrak{1}$ yields the final missing commutation relation between $\overline{D^{(1)}_{x_1}}$ and $D^{(2)}_{x_2}$, which concludes the proof.
\end{proof}

\begin{remark}
    For simplicity, Theorem~\ref{th:chain-2-RN} is stated for unital CP maps $R^{(i)}$. Its statement is still true in the case that the CP maps are not unital, since the Stinespring dilation (Theorem~\ref{thm:stinespring}) and the Radon-Nikodym derivative (Theorem~\ref{thm:rn-for-cp-maps}) do not require the CP maps to be unital. In this case, the only modification to the statement of Theorem~\ref{th:chain-2-RN} is that $V$ is a bounded operator with $\| V \|^2 \leq \| R^{(1)}(\mathfrak{1}_{\mathscr{A}_1})\| \cdot \| R^{(2)}(\mathfrak{1}_{\mathscr{A}_2})\|$, but not necessarily isometric.
\end{remark}

Once again,
we are interested in the case of different finite decompositions of the same CP maps.

\begin{lemma}\label{lemma:chain-2-rn-sums}
Consider unital C$^*$-algebras $\mathscr{A}_2$, $\mathscr{A}_1$, and $\mathscr{A}_0 = \mathsf{B}(\mathcal{H})$.
Consider completely positive unital maps $R^{(i)} : \mathscr{A}_i \to \mathscr{A}_{i-1}$ for $i=1,2$, and finite decompositions $R^{(i)} = \sum_{k_i} R^{(i)}_{k_i}$ such that $R^{(i)}_{k_i} \in \mathsf{CP}(\mathscr{A}_i,\mathscr{A}_{i-1})$.

Then, there exists a Hilbert space $\mathcal{K}$, a representation $\pi : \mathscr{A}_2 \to \mathsf{B}(\mathcal{K})$, an isometry $V : \mathcal{H} \to \mathcal{K}$,
and self-adjoint, positive operators $F^{(i)}_{k_i} \in \pi(\mathscr{A}_2)' \subseteq \mathsf{B}(\mathcal{K})$ such that $\sum_{k_i} F^{(i)}_{k_i} = \mathds{1}_\mathcal{K}$ and
    \begin{align*}
        &R^{(1)}_{k_1} \circ R^{(2)}_{k_2} (\mathfrak{a}_2) = V^* F^{(1)}_{k_1} F^{(2)}_{k_2} \pi(\mathfrak{a}_2) V, \qquad \forall k_1, k_2, \forall \mathfrak{a}_2 \in \mathscr{A}_2, \\
        &[F^{(1)}_{k_1},F^{(2)}_{k_2}] = [F^{(2)}_{k_2}, \pi(\mathfrak{a}_2)]=  [F^{(1)}_{k_1},\pi(\mathfrak{a}_2)] = 0, \qquad \forall k_1, k_2, \forall \mathfrak{a}_2 \in \mathscr{A}_2.
    \end{align*}
\end{lemma}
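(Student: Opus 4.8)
The plan is to obtain this lemma as a uniform, normalized version of the chain rule of Theorem~\ref{th:chain-2-RN}, in exactly the way Lemma~\ref{lem:RN2-sums} refines Theorem~\ref{thm:rn-for-cp-maps}. The starting observation is that each summand is dominated by the full map: since every $R^{(i)}_{\ell}$ with $\ell \neq k_i$ is completely positive and hence positive, the difference $R^{(i)} - R^{(i)}_{k_i} = \sum_{\ell \neq k_i} R^{(i)}_{\ell}$ is itself completely positive, so $R^{(i)}_{k_i} \leq R^{(i)}$ for every label $k_i$. This is precisely the domination hypothesis needed to feed the decomposition into Theorem~\ref{th:chain-2-RN}, with the summands $R^{(i)}_{k_i}$ playing the role of the dominated maps $S^{(i)}_{x_i}$ for $x_i = k_i$.

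Next I would run the construction from the proof of Theorem~\ref{th:chain-2-RN}, emphasizing that the Hilbert space $\mathcal{K} = \mathcal{H}_2$, the representation $\pi = \pi_2$, and the isometry $V = V_2 V_1$ are built solely from minimal Stinespring dilations of the dominant maps $R^{(1)}$ and $\pi_1 \circ R^{(2)}$, and therefore carry no dependence on the labels. Consequently a single triple $(\mathcal{K}, \pi, V)$ simultaneously represents every composition $R^{(1)}_{k_1} \circ R^{(2)}_{k_2}$. Applying Theorem~\ref{thm:rn-for-cp-maps} to $R^{(1)}_{k_1} \leq R^{(1)}$ yields derivatives $D^{(1)}_{k_1} \in \pi_1(\mathscr{A}_1)'$, and applying it to $\pi_1 \circ R^{(2)}_{k_2} \leq \pi_1 \circ R^{(2)}$ yields $D^{(2)}_{k_2} \in \pi_2(\mathscr{A}_2)'$. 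Setting $F^{(1)}_{k_1} = \overline{D^{(1)}_{k_1}}$, the lift of Lemma~\ref{lem:extended-derivative} through the dilation of $\pi_1 \circ R^{(2)}$, and $F^{(2)}_{k_2} = D^{(2)}_{k_2}$, the factorization and all three commutation relations are delivered verbatim by Theorem~\ref{th:chain-2-RN}, now holding for every choice of $k_1, k_2$. Self-adjointness and positivity of the $F^{(i)}_{k_i}$ follow as well, since $D^{(2)}_{k_2}$ is positive by Lemma~\ref{lem:RN2-sums} and the lift preserves positivity by Lemma~\ref{lem:extended-derivative}(4).

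The only genuinely new content is the normalization $\sum_{k_i} F^{(i)}_{k_i} = \mathds{1}_\mathcal{K}$, for which I would invoke Lemma~\ref{lem:RN2-sums}. For the second level this is immediate: applied to $\pi_1 \circ R^{(2)} = \sum_{k_2} \pi_1 \circ R^{(2)}_{k_2}$ with its minimal dilation $(\mathcal{H}_2, \pi_2, V_2)$, the lemma gives $\sum_{k_2} D^{(2)}_{k_2} = \mathds{1}_{\mathcal{H}_2} = \mathds{1}_\mathcal{K}$, i.e. $\sum_{k_2} F^{(2)}_{k_2} = \mathds{1}_\mathcal{K}$. For the first level, Lemma~\ref{lem:RN2-sums} applied to $R^{(1)} = \sum_{k_1} R^{(1)}_{k_1}$ on $(\mathcal{H}_1, \pi_1, V_1)$ gives $\sum_{k_1} D^{(1)}_{k_1} = \mathds{1}_{\mathcal{H}_1}$, and I then push this through the lift: because the lifting map of Lemma~\ref{lem:extended-derivative} is a unital $*$-homomorphism, hence linear and unit-preserving, $\sum_{k_1} F^{(1)}_{k_1} = \sum_{k_1} \overline{D^{(1)}_{k_1}} = \overline{\sum_{k_1} D^{(1)}_{k_1}} = \overline{\mathds{1}_{\mathcal{H}_1}} = \mathds{1}_\mathcal{K}$.

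I do not expect a serious obstacle, since the heavy step of Theorem~\ref{th:chain-2-RN} — the commutation of the lifted derivative with the unlifted one, established on the full Hilbert space via minimality of the dilations — is already discharged there and reused here unchanged. The one point meriting care is confirming that the whole construction is label-independent, so that a single $(\mathcal{K}, \pi, V)$ serves all summands at once; this is exactly why the dilations are taken of the dominant maps rather than of individual summands. Checking that $D^{(1)}_{k_1}$ lies in the domain $T(\mathscr{A})' \cap (T(\mathscr{A})^\ast)'$ of the lift, with $T = \pi_1 \circ R^{(2)}$, reduces as in Theorem~\ref{th:chain-2-RN} to the inclusion $(\pi_1 \circ R^{(2)})(\mathscr{A}_2) \subseteq \pi_1(\mathscr{A}_1)$ together with $D^{(1)}_{k_1} \in \pi_1(\mathscr{A}_1)'$.
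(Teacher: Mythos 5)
Your proposal is correct and follows essentially the same route as the paper: apply Theorem~\ref{th:chain-2-RN} with $S^{(i)}_{x_i} = R^{(i)}_{k_i}$ (noting $R^{(i)}_{k_i} \leq R^{(i)}$ since the complementary sum is CP), and obtain the normalization $\sum_{k_i} F^{(i)}_{k_i} = \mathds{1}_\mathcal{K}$ from Lemma~\ref{lem:RN2-sums} together with the unitality of the lift, which is exactly the content of Corollary~\ref{cor:extension_povm} that the paper invokes. Your additional checks (label-independence of $(\mathcal{K},\pi,V)$ and membership of $D^{(1)}_{k_1}$ in the domain of the lifting map) are points the paper leaves implicit, and they are verified correctly.
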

\begin{proof}
    Apply Theorem~\ref{th:chain-2-RN} to obtain the operators $F^{(i)}_{k_i}$. The only statement which is not immediately implied by theorem~\ref{th:chain-2-RN} is that $\sum_{k_i} F^{(i)}_{k_i} = \mathds{1}_\mathcal{K}$ for $i=1,2$.
    This is however a direct consequence of the fact that $F^{(i)}_{k_i}$ can be constructed through Lemma~\ref{lem:RN2-sums} and Lemma~\ref{lem:extended-derivative} together with the observation from Corollary~\ref{cor:extension_povm}.
\end{proof}

\begin{remark}\label{remark:rn-2-chain-functional}
    Theorem~\ref{th:chain-2-RN} and Lemma~\ref{lemma:chain-2-rn-sums} can be considered for the special case that $R^{(1)}$ is a functional, \emph{i.e.}, $\mathcal{H} = \mathbb{C}$ and thus $\mathscr{A}_0 = \mathbb{C}$. In this case, the isometry $V$ simply becomes a vector $\ket{\Omega} \in \mathcal{K}$ with $\|\ket{\Omega}\| = 1$.
\end{remark}

We will conclude this section with a natural generalisation of the chain rule for Radon-Nikodym derivatives.
This differs from Theorem~\ref{th:chain-2-RN} only by a weaker constraint on the maps: instead of requiring that the maps $S_{x_2}^{(2)}$ are all bounded by the same map, we require it for their representations under $\pi_1$.

\begin{theorem}[Generalised chain rule for Radon-Nikodym derivatives] \label{th:gen-chainRN-2}
    Consider unital C$^*$-algebras $\mathscr{A}_2$, $\mathscr{A}_1$, and $\mathscr{A}_0 = \mathsf{B}(\mathcal{H})$.
    For $i=1,2$, consider CP maps $S^{(i)}_{x_i} : \mathscr{A}_i \to \mathscr{A}_{i-1}$, and unital CP maps $R^{(i)}: \mathscr{A}_i \to \mathsf{B}(\mathcal{H}_{i-1})$, with $\mathcal{H}_0 = \mathcal{H}$ and $(\mathcal{H}_1, \pi_1, V_1)$ the minimal Stinespring dilation of $R^{(1)}$.

Consider the following conditions:
\begin{align}
    & S^{(1)}_{x_1} \leq R^{(1)} &\forall x_1,\\
    & \pi_1 \circ S^{(2)}_{x_2} \leq R^{(2)} &\forall x_2. \label{eq:gen-chainRN-2}
\end{align}

Then, there exists a Hilbert space $\mathcal{K}$, a $^*$-representation  $\pi : \mathscr{A}_2 \to \mathsf{B}(\mathcal{K})$, an isometry $V : \mathcal{H} \to \mathcal{K}$, %
commuting self-adjoint operators $F_{x_1}, F_{x_2} \in \pi(\mathscr{A}_2)' \subseteq \mathsf{B}(\mathcal{K})$ such that $0 \leq F_{x_1},F_{x_2} \leq \mathds{1}_\mathcal{K}$ and
    \begin{align*}
        &S^{(1)}_{x_1} \circ S^{(2)}_{x_2} (\mathfrak{a}_2) = V^* F_{x_1}F_{x_2} \pi(\mathfrak{a}_2) V, \qquad \forall \mathfrak{a}_2 \in \mathscr{A}_2, \\
        &[F_{x_1},F_{x_2}] = [F_{x_2}, \pi(\mathfrak{a}_2)]=  [F_{x_1},\pi(\mathfrak{a}_2)] = 0, \qquad \forall \mathfrak{a}_2 \in \mathscr{A}_2.
    \end{align*}
\end{theorem}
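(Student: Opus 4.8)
The plan is to follow the proof of Theorem~\ref{th:chain-2-RN} step by step, changing only the point at which the second dilation is performed, so that the weaker hypothesis~\eqref{eq:gen-chainRN-2} is used in place of $S^{(2)}_{x_2} \leq R^{(2)}$. I would begin with the given minimal Stinespring dilation $(\mathcal{H}_1,\pi_1,V_1)$ of $R^{(1)}$. Since $S^{(1)}_{x_1}\leq R^{(1)}$, the Radon-Nikodym theorem for CP maps (Theorem~\ref{thm:rn-for-cp-maps}) yields a derivative $D^{(1)}_{x_1}\in\pi_1(\mathscr{A}_1)'$ with $0\leq D^{(1)}_{x_1}\leq\mathds{1}$ and $S^{(1)}_{x_1}(\mathfrak{a}_1)=V_1^* D^{(1)}_{x_1}\pi_1(\mathfrak{a}_1)V_1$. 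Setting $\mathfrak{a}_1=S^{(2)}_{x_2}(\mathfrak{a}_2)$ gives
\begin{align*}
    S^{(1)}_{x_1}\circ S^{(2)}_{x_2}(\mathfrak{a}_2)=V_1^* D^{(1)}_{x_1}\bigl[\pi_1\circ S^{(2)}_{x_2}(\mathfrak{a}_2)\bigr]V_1.
\end{align*}

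Now comes the only substantive change. In Theorem~\ref{th:chain-2-RN} one dilated $\pi_1\circ R^{(2)}$, using $S^{(2)}_{x_2}\leq R^{(2)}$ to get $\pi_1\circ S^{(2)}_{x_2}\leq\pi_1\circ R^{(2)}$. Here I would instead use the hypothesis~\eqref{eq:gen-chainRN-2} directly: take a minimal Stinespring dilation $(\mathcal{H}_2,\pi_2,V_2)$ of the given unital map $R^{(2)}:\mathscr{A}_2\to\mathsf{B}(\mathcal{H}_1)$, and apply Theorem~\ref{thm:rn-for-cp-maps} to $\pi_1\circ S^{(2)}_{x_2}\leq R^{(2)}$ to obtain $D^{(2)}_{x_2}\in\pi_2(\mathscr{A}_2)'$ with $\pi_1\circ S^{(2)}_{x_2}(\mathfrak{a}_2)=V_2^* D^{(2)}_{x_2}\pi_2(\mathfrak{a}_2)V_2$. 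Substituting yields
\begin{align*}
    S^{(1)}_{x_1}\circ S^{(2)}_{x_2}(\mathfrak{a}_2)=V_1^* D^{(1)}_{x_1} V_2^* D^{(2)}_{x_2}\pi_2(\mathfrak{a}_2)V_2 V_1.
\end{align*}
I would then set $\mathcal{K}=\mathcal{H}_2$, $\pi=\pi_2$, $V=V_2 V_1$ (an isometry, as a composition of isometries since both $R^{(1)},R^{(2)}$ are unital) and $F_{x_2}=D^{(2)}_{x_2}$, leaving only the construction of $F_{x_1}$ by lifting $D^{(1)}_{x_1}$ from $\mathsf{B}(\mathcal{H}_1)$ to $\mathsf{B}(\mathcal{H}_2)$.

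The main obstacle is exactly this lifting. The Lifting Lemma (Lemma~\ref{lem:extended-derivative}) applied to the dilation of $R^{(2)}$ lifts only operators in $R^{(2)}(\mathscr{A}_2)'\cap(R^{(2)}(\mathscr{A}_2)^*)'$, whereas $D^{(1)}_{x_1}$ is a priori known to commute only with $\pi_1(\mathscr{A}_1)$. In Theorem~\ref{th:chain-2-RN} this was automatic, because there the dilated map $\pi_1\circ R^{(2)}$ had image inside $\pi_1(\mathscr{A}_1)$; for a general $R^{(2)}$ this can fail. Since the conclusion constrains only the composition $S^{(1)}_{x_1}\circ S^{(2)}_{x_2}$ and never $R^{(2)}$ itself, I would resolve this by performing the second dilation with respect to a unital dominant map valued in the von Neumann algebra $\pi_1(\mathscr{A}_1)''$ rather than in all of $\mathsf{B}(\mathcal{H}_1)$: each $\pi_1\circ S^{(2)}_{x_2}$ already takes values in $\pi_1(\mathscr{A}_1)\subseteq\pi_1(\mathscr{A}_1)''$, so such a dominant map can be built from the $\pi_1\circ S^{(2)}_{x_2}$ themselves, with the unital $R^{(2)}$ furnishing the uniform bound that guarantees one exists. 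By von Neumann's bicommutant theorem $D^{(1)}_{x_1}\in\pi_1(\mathscr{A}_1)'=\pi_1(\mathscr{A}_1)'''$ commutes with all of $\pi_1(\mathscr{A}_1)''$, so the hypotheses of Lemma~\ref{lem:extended-derivative} are met and one obtains $\overline{D^{(1)}_{x_1}}\in\pi_2(\mathscr{A}_2)'$ with $V_2^*\overline{D^{(1)}_{x_1}}=D^{(1)}_{x_1}V_2^*$ and $\overline{D^{(1)}_{x_1}}V_2=V_2 D^{(1)}_{x_1}$. Reconciling this choice of dominant map with the unitality needed to keep $V$ an isometry is, I expect, the genuinely delicate point that separates this generalisation from a verbatim copy of Theorem~\ref{th:chain-2-RN}.

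With $F_{x_1}=\overline{D^{(1)}_{x_1}}$ in hand, property~1 of Lemma~\ref{lem:extended-derivative} moves $D^{(1)}_{x_1}$ through $V_2^*$ to give $S^{(1)}_{x_1}\circ S^{(2)}_{x_2}(\mathfrak{a}_2)=V^* F_{x_1}F_{x_2}\pi(\mathfrak{a}_2)V$. The relations $[F_{x_2},\pi(\mathfrak{a}_2)]=0$ and $[F_{x_1},\pi(\mathfrak{a}_2)]=0$ follow immediately, the former from Theorem~\ref{thm:rn-for-cp-maps} and the latter from $\overline{D^{(1)}_{x_1}}\in\pi_2(\mathscr{A}_2)'$. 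For the remaining relation $[F_{x_1},F_{x_2}]=0$ I would reproduce the reordering argument of Theorem~\ref{th:chain-2-RN} unchanged: writing $S^{(1)}_{x_1}\circ\bigl(\mathfrak{d}^*\cdot S^{(2)}_{x_2}(\mathfrak{b}^*\cdot\mathfrak{a}_2\cdot\mathfrak{c})\cdot\mathfrak{e}\bigr)$ in all six orderings of $\overline{D^{(1)}_{x_1}}$, $D^{(2)}_{x_2}$ and $\pi_2(\mathfrak{a}_2)$ and then using the minimality identity $\mathcal{K}=\overline{\pi_2(\mathscr{A}_2)V_2\pi_1(\mathscr{A}_1)V_1\mathcal{H}}$ to upgrade the equality from the cyclic subspace to all of $\mathcal{K}$, so that $\mathfrak{a}_2=\mathfrak{1}$ delivers the commutator and completes the proof.
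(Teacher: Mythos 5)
Your overall route is the same as the paper's: the paper's proof of Theorem~\ref{th:gen-chainRN-2} is literally ``run the proof of Theorem~\ref{th:chain-2-RN}, but take $(\mathcal{H}_2,\pi_2,V_2)$ to be the minimal Stinespring dilation of $R^{(2)}$ itself and apply Theorem~\ref{thm:rn-for-cp-maps} to $\pi_1\circ S^{(2)}_{x_2}\leq R^{(2)}$,'' which is exactly your second paragraph. Where you diverge is in flagging the applicability of the Lifting Lemma, and your worry is legitimate: Lemma~\ref{lem:extended-derivative} lifts only operators in $T(\mathscr{A})'\cap(T(\mathscr{A})^*)'$ for the dilated map $T$, and while in Theorem~\ref{th:chain-2-RN} the dilated map $\pi_1\circ R^{(2)}$ has range inside $\pi_1(\mathscr{A}_1)$ (so $D^{(1)}_{x_1}\in\pi_1(\mathscr{A}_1)'$ automatically qualifies), a general unital CP map $R^{(2)}:\mathscr{A}_2\to\mathsf{B}(\mathcal{H}_1)$ need not have range in $\pi_1(\mathscr{A}_1)''$, so the commutant hypothesis is not verified. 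The paper's one-line proof silently skips this point; in its actual application (Theorem~\ref{th:as-soundness-3}) the issue is vacuous because there $R^{(2)}=\pi_\phi\circ T_y$ does take values in $\pi_1(\mathscr{A}_1)$, but for the theorem as stated your objection stands.

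Your proposed repair, however, is not carried through, and the difficulty you yourself flag at the end is a real one rather than a detail. To rerun the argument you need a \emph{unital} CP map into $\pi_1(\mathscr{A}_1)''$ that dominates every $\pi_1\circ S^{(2)}_{x_2}$ in the CP order. The unnormalised sum $\sum_{x_2}\pi_1\circ S^{(2)}_{x_2}$ dominates each summand and has range in $\pi_1(\mathscr{A}_1)$, but it is generally super-unital; dividing by the number of inputs restores sub-unitality but destroys domination (the difference acquires a negative coefficient), and dominating only up to a constant factor would break the normalisation $0\leq F_{x_2}\leq\mathds{1}$ and the isometry of $V$. So as written your proof has the same hole you diagnosed, merely relocated. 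The clean ways out are either (i) to strengthen the hypothesis to require $R^{(2)}(\mathscr{A}_2)\subseteq\pi_1(\mathscr{A}_1)''$ — which is what the paper implicitly uses and what holds in every application — or (ii) to keep the non-unital dominant map $\sum_{x_2}\pi_1\circ S^{(2)}_{x_2}$ and invoke the Remark following Theorem~\ref{th:chain-2-RN}, accepting that $V$ is then only a bounded operator and that the bounds $F_{x_2}\leq\mathds{1}$ must be recovered separately. Either option should be stated explicitly rather than deferred.
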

\begin{proof}
    Consider the proof of Theorem \ref{th:chain-2-RN} until Eq. \ref{eq:chainRN-step1}.
    While before we used the fact that $S^{(2)}_{x_2} \leq R^{(2)}$ implies that $\pi_1 \circ S^{(2)}_{x_2} \leq \pi_1 \circ R^{(2)}$, here we can directly use the constraint of Eq. \ref{eq:gen-chainRN-2}.
    Then the proof follows exactly as in Theorem \ref{th:chain-2-RN}, except that $(\mathcal{H}_2, \pi_2, V_2)$ is a minimal Stinespring dilation of $R^{(2)}$ directly.
\end{proof}

\subsection{The general chain rule}%
\label{subsec:chain-rule-rn-general}

All the proofs presented before can be extended from the composition of 2 CP maps to the composition of an arbitrary finite number of CP maps by induction.

\begin{theorem}[Chain rule of Radon-Nikodym for completely positive maps, $\mathsf{k}$ players]\label{th:chained-rn-k}
Let $\mathsf{k} \in \mathbb{N}$.
Consider unital C$^*$-algebras $\mathscr{A}_i$ for $i \in [\mathsf{k}-1]$, and $\mathscr{A}_0 = \mathsf{B}(\mathcal{H})$.
Consider completely positive maps $S^{(i)}_{x_i}, R^{(i)} : \mathscr{A}_i \to \mathscr{A}_{i-1}$, such that $R^{(i)}$ is unital and $S^{(i)}_{x_i} \leq R^{(i)}$ for all $i$ and $x_i$.

Then, there exists a Hilbert space $\mathcal{K}$, a $^*$-representation $\pi : \mathscr{A}_{\mathsf{k}-1} \to \mathsf{B}(\mathcal{K})$, an isometry $V : \mathcal{H} \to \mathcal{K}$, %
commuting self-adjoint operators $F_{x_i} \in \pi(\mathscr{A}_{\mathsf{k}-1})' \subseteq \mathsf{B}(\mathcal{K})$ such that $0 \leq F_{x_i} \leq \mathds{1}_\mathcal{K}$ and $\forall \mathfrak{a}_{\mathsf{k}-1} \in \mathscr{A}_{\mathsf{k}-1}$
    \begin{align}
        &S^{(1)}_{x_1} \circ \dots \circ S^{(\mathsf{k}-1)}_{x_{\mathsf{k}-1}} (\mathfrak{a}_{\mathsf{k}-1}) = V^* F_{x_1}\dots F_{x_{\mathsf{k}-1}} \pi(\mathfrak{a}_{\mathsf{k}-1}) V, & \\
        &[F_{x_i},F_{x_j}] = 0 &\forall i, j \in [\mathsf{k}-1], i \neq j, \\
        &[F_{x_i}, \pi(\mathfrak{a}_{\mathsf{k}-1})]= 0 &\forall i \in [\mathsf{k}-1].
    \end{align}
    
\end{theorem}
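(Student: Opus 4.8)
The plan is to prove the statement by induction on the length of the chain $n := \mathsf{k}-1$, taking the single-map Radon-Nikodym Theorem for CP maps (Theorem~\ref{thm:rn-for-cp-maps}, together with Stinespring's dilation, Theorem~\ref{thm:stinespring}) as the base case $n=1$, so that Theorem~\ref{th:chain-2-RN} becomes the first instance in which a nontrivial commutation must be checked. I build the representation one layer at a time from the outside in. Suppose the claim holds for chains of length $n$, producing for $S^{(1)}_{x_1}\circ\dots\circ S^{(n)}_{x_n}$ a Hilbert space $\mathcal{H}_n$, a $*$-representation $\pi_n:\mathscr{A}_n\to\mathsf{B}(\mathcal{H}_n)$, an isometry $V^{(n)}:\mathcal{H}\to\mathcal{H}_n$, and pairwise commuting self-adjoint operators $F^{(n)}_{x_1},\dots,F^{(n)}_{x_n}\in\pi_n(\mathscr{A}_n)'$ with $0\leq F^{(n)}_{x_i}\leq\mathds{1}$ realising the chain rule. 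To add the innermost map $S^{(n+1)}_{x_{n+1}}:\mathscr{A}_{n+1}\to\mathscr{A}_n$, I would form the unital CP map $\pi_n\circ R^{(n+1)}:\mathscr{A}_{n+1}\to\mathsf{B}(\mathcal{H}_n)$, take its minimal Stinespring dilation $(\mathcal{H}_{n+1},\pi_{n+1},V_{n+1})$, and apply the Radon-Nikodym Theorem to $\pi_n\circ S^{(n+1)}_{x_{n+1}}\leq\pi_n\circ R^{(n+1)}$ to obtain a derivative $D^{(n+1)}_{x_{n+1}}\in\pi_{n+1}(\mathscr{A}_{n+1})'$ with $\pi_n\circ S^{(n+1)}_{x_{n+1}}(\mathfrak{b})=V_{n+1}^*D^{(n+1)}_{x_{n+1}}\pi_{n+1}(\mathfrak{b})V_{n+1}$.

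Composing the inductive representation with this dilation, substituting the argument $S^{(n+1)}_{x_{n+1}}(\mathfrak{a})\in\mathscr{A}_n$, yields for all $\mathfrak{a}\in\mathscr{A}_{n+1}$
\begin{align*}
    S^{(1)}_{x_1}\circ\dots\circ S^{(n+1)}_{x_{n+1}}(\mathfrak{a}) = V^{(n)*} F^{(n)}_{x_1}\cdots F^{(n)}_{x_n}\, V_{n+1}^* D^{(n+1)}_{x_{n+1}}\pi_{n+1}(\mathfrak{a}) V_{n+1} V^{(n)}.
\end{align*}
Each $F^{(n)}_{x_i}$ commutes with $\pi_n(\mathscr{A}_n)\supseteq(\pi_n\circ R^{(n+1)})(\mathscr{A}_{n+1})$, so the Lifting Lemma (Lemma~\ref{lem:extended-derivative}) applies: it lifts each $F^{(n)}_{x_i}$ through $V_{n+1}$ to an operator $F^{(n+1)}_{x_i}:=\overline{F^{(n)}_{x_i}}\in\pi_{n+1}(\mathscr{A}_{n+1})'$, preserving self-adjointness and the bounds $0\leq\cdot\leq\mathds{1}$, with $F^{(n)}_{x_i}V_{n+1}^*=V_{n+1}^*F^{(n+1)}_{x_i}$ (property~1). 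Using that the lift is a $*$-homomorphism to pass the whole product $F^{(n)}_{x_1}\cdots F^{(n)}_{x_n}$ through $V_{n+1}^*$, and setting $\pi:=\pi_{n+1}$, $V^{(n+1)}:=V_{n+1}V^{(n)}$ (an isometry as a composition of isometries), and $F^{(n+1)}_{x_{n+1}}:=D^{(n+1)}_{x_{n+1}}$, the expression collapses to the desired form $V^{(n+1)*}F^{(n+1)}_{x_1}\cdots F^{(n+1)}_{x_{n+1}}\pi(\mathfrak{a})V^{(n+1)}$.

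The easy commutation relations follow immediately. The relations $[F^{(n+1)}_{x_i},\pi(\mathfrak{a})]=0$ hold since every $F^{(n+1)}_{x_i}$ lies in $\pi_{n+1}(\mathscr{A}_{n+1})'$, both the lifts (Lemma~\ref{lem:extended-derivative}) and $D^{(n+1)}_{x_{n+1}}$ (Theorem~\ref{thm:rn-for-cp-maps}). For $i,j\leq n$ the commutation $[F^{(n+1)}_{x_i},F^{(n+1)}_{x_j}]=\overline{[F^{(n)}_{x_i},F^{(n)}_{x_j}]}=0$ follows from the inductive hypothesis together with the homomorphism property of the lift.

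The main obstacle, exactly as in the two-map case, is the remaining commutation $[F^{(n+1)}_{x_i},D^{(n+1)}_{x_{n+1}}]=0$ for $i\leq n$, which must be established on all of $\mathcal{H}_{n+1}$ and not merely on the cyclic subspace $\overline{V^{(n+1)}\mathcal{H}}$. Here I would run the reordering-and-density argument from the proof of Theorem~\ref{th:chain-2-RN}. By minimality of the last dilation, the vectors $\pi_{n+1}(\mathfrak{b})V_{n+1}u'$ with $\mathfrak{b}\in\mathscr{A}_{n+1}$ and $u'\in\mathcal{H}_n$ span a dense subspace of $\mathcal{H}_{n+1}$. Writing $M:=F^{(n)}_{x_i}$ and evaluating the matrix element of $\overline{M}\,D^{(n+1)}_{x_{n+1}}$ against $\pi_{n+1}(\mathfrak{b})V_{n+1}u'$ and $\pi_{n+1}(\mathfrak{c})V_{n+1}w'$, one absorbs the $\pi_{n+1}$-factors (both operators lie in the commutant), then passes through $V_{n+1}$ using property~1 of the Lifting Lemma and the Radon-Nikodym relation, reducing it to $\langle u', M\,[\pi_n\circ S^{(n+1)}_{x_{n+1}}(\mathfrak{b}^*\mathfrak{c})]\,w'\rangle$. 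The analogous computation for the opposite ordering $D^{(n+1)}_{x_{n+1}}\overline{M}$ gives $\langle u',[\pi_n\circ S^{(n+1)}_{x_{n+1}}(\mathfrak{b}^*\mathfrak{c})]\,M\,w'\rangle$, which coincides with the former because $\pi_n\circ S^{(n+1)}_{x_{n+1}}(\mathfrak{b}^*\mathfrak{c})\in\pi_n(\mathscr{A}_n)$ and $M=F^{(n)}_{x_i}\in\pi_n(\mathscr{A}_n)'$ genuinely commute on all of $\mathcal{H}_n$. The two orderings therefore agree on the dense set, so $[\overline{M},D^{(n+1)}_{x_{n+1}}]=0$. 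The point that makes the induction close is that the only property of the earlier derivatives used here is their membership in $\pi_n(\mathscr{A}_n)'$, precisely the invariant preserved by the Lifting Lemma at every layer. This completes the inductive step and, with the base case, the proof; the decomposition version with $\sum_{x_i}F_{x_i}=\mathds{1}_\mathcal{K}$ would follow analogously by additionally invoking Lemma~\ref{lem:RN2-sums} and Corollary~\ref{cor:extension_povm}, as in Lemma~\ref{lemma:chain-2-rn-sums}.
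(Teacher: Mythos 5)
Your proposal is correct and follows essentially the same route as the paper: induction on the chain length, dilating $\pi_n\circ R^{(n+1)}$, applying the Radon--Nikodym theorem to $\pi_n\circ S^{(n+1)}_{x_{n+1}}\leq \pi_n\circ R^{(n+1)}$, lifting the earlier derivatives through the new Stinespring isometry via Lemma~\ref{lem:extended-derivative}, and settling the one nontrivial commutation by a density argument on the minimal dilation. Your matrix-element verification of $[\overline{M},D^{(n+1)}_{x_{n+1}}]=0$ using only density of $\pi_{n+1}(\mathscr{A}_{n+1})V_{n+1}\mathcal{H}_n$ is a slightly more economical packaging of the paper's permutation argument, but it rests on the identical observation that the lifted operator's commutant membership in $\pi_n(\mathscr{A}_n)'$ is the invariant preserved at each layer.
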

\begin{proof}
    We already proved the theorem for $\mathsf{k}=3$ (Theorem \ref{th:chain-2-RN}). 
    Now the induction step: suppose that the theorem is true for $\mathsf{k}$, we need to show that it is also true for $\mathsf{k}+1$.
    The statement for $\mathsf{k}$ players is the following decomposition :
    \begin{equation*}
        S^{(1)}_{x_1} \circ \dots \circ S^{(\mathsf{k}-1)}_{x_{\mathsf{k}-1}} (\mathfrak{a}_{\mathsf{k}-1}) = V^* F_{x_1}\dots F_{x_{\mathsf{k}-1}} \pi(\mathfrak{a}_{\mathsf{k}-1}) V
    \end{equation*}
    where all operators pairwise commute.
    To consider an extra player, we need to consider an additional algebra $\mathscr{A}_\mathsf{k}$, and maps $S^{(\mathsf{k})}_{x_{\mathsf{k}}}, R^{(\mathsf{k})} : \mathscr{A}_\mathsf{k} \to\mathscr{A}_{\mathsf{k}-1}$ such that $S^{(\mathsf{k})}_{x_{\mathsf{k}}} \leq R^{(\mathsf{k})}$.
    Consider now the $\mathsf{k}+1$-partite strategy, defined for all $\mathfrak{a}_{\mathsf{k}} \in \mathscr{A}_{\mathsf{k}}$
        \begin{equation*}
        S^{(1)}_{x_1} \circ \dots \circ S^{(\mathsf{k}-1)}_{x_{\mathsf{k}-1}}\circ S^{(\mathsf{k})}_{x_{\mathsf{k}}} (\mathfrak{a}_{\mathsf{k}}).
    \end{equation*}
We know how to decompose the action of the first $\mathsf{k}-1$ maps, obtaining pairwise commuting operators
    \begin{align*}
        S^{(1)}_{x_1} \circ \dots \circ S^{(\mathsf{k}-1)}_{x_{\mathsf{k}-1}} \left( S^{(\mathsf{k})}_{x_{\mathsf{k}}} (\mathfrak{a}_{\mathsf{k}}) \right)
        &= V^* F_{x_1}\dots F_{x_{\mathsf{k}-1}} \pi \left( S^{(\mathsf{k})}_{x_{\mathsf{k}}} (\mathfrak{a}_{\mathsf{k}}) \right) V\\
        &= V^* F_{x_1}\dots F_{x_{\mathsf{k}-1}} 
        \left[V_\mathsf{k}^* D^{(\mathsf{k})}_{x_{\mathsf{k}}}
        \pi_\mathsf{k} (\mathfrak{a}_{\mathsf{k}})
        V_\mathsf{k} \right]V,
    \end{align*}
    and in the second line we used the Radon-Nikodym Theorem for $\pi\circ S^{(\mathsf{k})}_{x_{\mathsf{k}}} \leq \pi \circ R^{(\mathsf{k})}$, where $(\mathcal{H}_\mathsf{k}, \pi_\mathsf{k},V_\mathsf{k})$ is the Stinespring dilation of $\pi \circ R^{(\mathsf{k})}$.

    The lifting lemma (Lemma~\ref{lem:extended-derivative}) allows to lift all of the operators $F_{x_i}$ to the new dilated space $\mathcal{H}_\mathsf{k}$:
    \begin{align*}
        S^{(1)}_{x_1} \circ \dots \circ S^{(\mathsf{k}-1)}_{x_{\mathsf{k}-1}}\circ S^{(\mathsf{k})}_{x_{\mathsf{k}}} (\mathfrak{a}_{\mathsf{k}}) 
        &= (V_\mathsf{k} V)^*\overline{F_{x_1}}\dots \overline{F_{x_{\mathsf{k}-1}}} D^{(\mathsf{k})}_{x_{\mathsf{k}}}
        \pi_\mathsf{k} (\mathfrak{a}_{\mathsf{k}})(V_\mathsf{k}V)\\
        &= W^* M_{x_1} \dots M_{x_\mathsf{k}} \pi'(\mathfrak{a}_\mathsf{k}) W,
    \end{align*}
    where in the second line we regrouped the isometries and renamed the operators, to have the standard form of the statement we want to show.
    Using exactly the same permutation argument in the end of Theorem \ref{th:chain-2-RN}, we can show all pairwise commutation relations:
    \begin{align*}
        &[M_{x_i},M_{x_j}] = 0 &\forall i \neq j \in [\mathsf{k}],\\
        &[M_{x_i}, \pi'(\mathfrak{a}_{\mathsf{k}})]= 0 &\forall i \in [\mathsf{k}].
    \end{align*}
    This concludes the proof by induction.
\end{proof}

\begin{theorem}[Generalised chain rule for Radon-Nikodym derivatives, $\mathsf{k}$ players]\label{th:gen-chained-rn-k}
Let $\mathsf{k} \in \mathbb{N}$.
For $i \in [\mathsf{k}-1]$, consider $\mathsf{k}-1$ unital C$^*$-algebras $\mathscr{A}_i$, and $\mathscr{A}_0 = \mathsf{B}(\mathcal{H})$.
Consider CP maps $S^{(i)}_{x_i} : \mathscr{A}_i \to \mathscr{A}_{i-1}$ and unital CP maps $ R^{(i)} : \mathscr{A}_i \to \mathsf{B}(\mathcal{H}_{i-1})$, with $\mathcal{H}_0 = \mathcal{H}$ and $(\mathcal{H}_i, \pi_i, V_i)$ the minimal Stinespring dilation of $R^{(i)}$.

Consider the following conditions:
\begin{align}
    & S^{(1)}_{x_1} \leq R^{(1)} &\forall x_1,\\
    & \pi_{i-1} \circ S^{(i)}_{x_i} \leq R^{(i)} &\forall i \in [2,\mathsf{k}-1], \forall x_i.
\end{align}

Then, there exists a Hilbert space $\mathcal{K}$, a $^*$-representation $\pi : \mathscr{A}_{\mathsf{k}-1} \to \mathsf{B}(\mathcal{K})$, an isometry $V : \mathcal{H} \to \mathcal{K}$, %
commuting self-adjoint operators $F_{x_i} \in \pi(\mathscr{A}_{\mathsf{k}-1})' \subseteq \mathsf{B}(\mathcal{K})$ such that $0 \leq F_{x_i} \leq \mathds{1}_\mathcal{K}$ and $\forall \mathfrak{a}_{\mathsf{k}-1} \in \mathscr{A}_{\mathsf{k}-1}$
    \begin{align}
        &S^{(1)}_{x_1} \circ \dots \circ S^{(\mathsf{k}-1)}_{x_{\mathsf{k}-1}} (\mathfrak{a}_{\mathsf{k}-1}) = V^* F_{x_1}\dots F_{x_{\mathsf{k}-1}} \pi(\mathfrak{a}_{\mathsf{k}-1}) V, & \\
        &[F_{x_i},F_{x_j}] = 0 &\forall i \neq j \in [\mathsf{k}-1],\\
        &[F_{x_i}, \pi(\mathfrak{a}_{\mathsf{k}-1})]= 0 &\forall i \in [\mathsf{k}-1].
    \end{align}
\end{theorem}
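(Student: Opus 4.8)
The plan is to proceed by induction on the number of players $\mathsf{k}$, mirroring the proof of Theorem~\ref{th:chained-rn-k} but replacing every propagation of a boundedness relation through a representation with the direct use of the hypotheses $\pi_{i-1} \circ S^{(i)}_{x_i} \leq R^{(i)}$. The base case $\mathsf{k}=3$ (two maps) is precisely Theorem~\ref{th:gen-chainRN-2}: there the only difference from Theorem~\ref{th:chain-2-RN} is that one dilates $R^{(2)}$ directly rather than $\pi_1 \circ R^{(2)}$, which is exactly the modification that the generalised hypothesis permits. For the inductive step I assume the claim for $\mathsf{k}$ players and consider $\mathsf{k}+1$ players, i.e., the composition $S^{(1)}_{x_1} \circ \dots \circ S^{(\mathsf{k})}_{x_{\mathsf{k}}}$, adding the algebra $\mathscr{A}_{\mathsf{k}}$ together with the new hypothesis $\pi_{\mathsf{k}-1} \circ S^{(\mathsf{k})}_{x_{\mathsf{k}}} \leq R^{(\mathsf{k})}$.

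First I would apply the induction hypothesis to the first $\mathsf{k}-1$ maps, whose generalised boundedness conditions are exactly those assumed for $i \in [2,\mathsf{k}-1]$. This yields pairwise commuting derivatives $F_{x_1}, \dots, F_{x_{\mathsf{k}-1}}$ together with the representation $\pi_{\mathsf{k}-1}$ on $\mathcal{K} = \mathcal{H}_{\mathsf{k}-1}$ and the isometry $V = V_{\mathsf{k}-1} \cdots V_1$, giving
\begin{equation*}
    S^{(1)}_{x_1} \circ \dots \circ S^{(\mathsf{k}-1)}_{x_{\mathsf{k}-1}}(\mathfrak{b}) = V^* F_{x_1} \cdots F_{x_{\mathsf{k}-1}} \pi_{\mathsf{k}-1}(\mathfrak{b}) V, \qquad \forall \mathfrak{b} \in \mathscr{A}_{\mathsf{k}-1}.
\end{equation*}
Setting $\mathfrak{b} = S^{(\mathsf{k})}_{x_{\mathsf{k}}}(\mathfrak{a}_{\mathsf{k}})$ exposes the composite map $\pi_{\mathsf{k}-1} \circ S^{(\mathsf{k})}_{x_{\mathsf{k}}}$, to which the remaining hypothesis $\pi_{\mathsf{k}-1} \circ S^{(\mathsf{k})}_{x_{\mathsf{k}}} \leq R^{(\mathsf{k})}$ applies verbatim. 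I would then take the minimal Stinespring dilation $(\mathcal{H}_{\mathsf{k}}, \pi_{\mathsf{k}}, V_{\mathsf{k}})$ of $R^{(\mathsf{k})}$ \emph{itself} --- not of $\pi_{\mathsf{k}-1} \circ R^{(\mathsf{k})}$ --- and invoke the Radon-Nikodym Theorem for CP maps (Theorem~\ref{thm:rn-for-cp-maps}) to produce a derivative $D^{(\mathsf{k})}_{x_{\mathsf{k}}} \in \pi_{\mathsf{k}}(\mathscr{A}_{\mathsf{k}})'$.

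Next, using the Lifting Lemma (Lemma~\ref{lem:extended-derivative}) for the dilation of $R^{(\mathsf{k})}$, I would lift each $F_{x_i}$ to an operator $\overline{F_{x_i}} \in \pi_{\mathsf{k}}(\mathscr{A}_{\mathsf{k}})'$ on $\mathcal{H}_{\mathsf{k}}$, commuting $V_{\mathsf{k}}$ through the $F_{x_i}$ via property~1 of the Lemma, to reach
\begin{equation*}
    S^{(1)}_{x_1} \circ \dots \circ S^{(\mathsf{k})}_{x_{\mathsf{k}}}(\mathfrak{a}_{\mathsf{k}}) = (V_{\mathsf{k}} V)^* \, \overline{F_{x_1}} \cdots \overline{F_{x_{\mathsf{k}-1}}} \, D^{(\mathsf{k})}_{x_{\mathsf{k}}} \, \pi_{\mathsf{k}}(\mathfrak{a}_{\mathsf{k}}) \, (V_{\mathsf{k}} V).
\end{equation*}
Renaming $W = V_{\mathsf{k}} V$, $\pi' = \pi_{\mathsf{k}}$, and $M_{x_1} = \overline{F_{x_1}}, \dots, M_{x_{\mathsf{k}-1}} = \overline{F_{x_{\mathsf{k}-1}}}, M_{x_{\mathsf{k}}} = D^{(\mathsf{k})}_{x_{\mathsf{k}}}$ puts the expression in the desired form. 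The pairwise commutation of all $M_{x_i}$ and their commutation with $\pi'(\mathscr{A}_{\mathsf{k}})$ would then be established by the same permutation argument as at the end of Theorem~\ref{th:chain-2-RN}: every reordering gives a valid expression on $\overline{W\mathcal{H}}$, and sandwiching both sides between arbitrary $\pi_i^*(\cdot)$ and $\pi_i(\cdot)$, together with the minimality $\mathcal{H}_{\mathsf{k}} = \overline{\pi_{\mathsf{k}}(\mathscr{A}_{\mathsf{k}}) V_{\mathsf{k}} \cdots \pi_1(\mathscr{A}_1) V_1 \mathcal{H}}$, extends the commutation to all of $\mathcal{K} = \mathcal{H}_{\mathsf{k}}$.

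The main obstacle I anticipate is in the lifting step: for Lemma~\ref{lem:extended-derivative} to apply to the dilation of $R^{(\mathsf{k})}$, each previous derivative $F_{x_i}$ must lie in $R^{(\mathsf{k})}(\mathscr{A}_{\mathsf{k}})' \cap (R^{(\mathsf{k})}(\mathscr{A}_{\mathsf{k}})^*)'$. In the plain chain rule this is automatic, since there $R^{(\mathsf{k})}$ maps into the abstract algebra $\mathscr{A}_{\mathsf{k}-1}$ and the $F_{x_i}$ commute with all of $\pi_{\mathsf{k}-1}(\mathscr{A}_{\mathsf{k}-1})$; in the generalised setting $R^{(\mathsf{k})}$ maps directly into $\mathsf{B}(\mathcal{H}_{\mathsf{k}-1})$, so this commutation must be secured exactly as it is in the base case Theorem~\ref{th:gen-chainRN-2}. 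The decisive bookkeeping is therefore to state the induction hypothesis with $\pi = \pi_{\mathsf{k}-1}$ and $\mathcal{K} = \mathcal{H}_{\mathsf{k}-1}$, so that the composite $\pi_{\mathsf{k}-1} \circ S^{(\mathsf{k})}_{x_{\mathsf{k}}}$ matches the hypothesis on the nose, and to track precisely which representation $\pi_i$ each lifted derivative commutes with as the dilations accumulate; once this is arranged, the induction closes.
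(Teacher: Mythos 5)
Your proposal follows essentially the same route as the paper, whose proof of this theorem is literally the one-sentence instruction to combine the induction of Theorem~\ref{th:chained-rn-k} with the modification introduced in Theorem~\ref{th:gen-chainRN-2} (taking the minimal Stinespring dilation of $R^{(i)}$ directly instead of $\pi_{i-1} \circ R^{(i)}$); your writeup simply makes that combination explicit, with the induction hypothesis anchored at $\pi = \pi_{\mathsf{k}-1}$ and $\mathcal{K} = \mathcal{H}_{\mathsf{k}-1}$ exactly as the paper intends. The commutant-membership caveat you raise for applying the Lifting Lemma to the dilation of $R^{(\mathsf{k})}$ is a real subtlety, but it is inherited verbatim from the paper's own base case Theorem~\ref{th:gen-chainRN-2} rather than introduced by your induction, so your argument is correct at the same level of rigor as the paper's.
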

\begin{proof}
    It suffices to combine the proof of Theorem \ref{th:chained-rn-k} and the reasoning of Theorem \ref{th:gen-chainRN-2}.
\end{proof}
 
\newpage
\section{Compiled nonlocal games}\label{sec:compiled-non-local-games}

\subsection{Preliminaries on multipartite nonlocal games}\label{subsec:multi-nl}

Non-local games are a prevalent tool to study the intrinsic difference between classical and quantum information \cite{Brunner_2014}.
Bipartite games are the simplest and best understood setting, and it is often not trivial to extend proofs and techniques from two to more players.
In this paper, we contribute to deepen the understanding of multipartite quantum correlations.
Let us start by introducing the appropriate notation to characterise games with $\mathsf{k}$ players, labelled by $i \in [\mathsf{k}]$.
For every player $P_i$, we will consider a fixed finite set of input labels $\mathcal{I}_i$ and output labels $\mathcal{O}_i$. Often we will identify $P_i=(\mathcal{I}_i, \mathcal{O}_i)$ as the collection of input and output labels of the party $i$.

\begin{definition}[$\mathsf{k}$-players game, $\mathcal{G}$]\label{def:game}
A $\mathsf{k}$-players game is defined by the collection of all the input-output labels of each player $\{P_i\}_\mathsf{k}$, the sampling distribution of the inputs $q: \bigtimes_i \mathcal{I}_i \to \mathbb{R}_{+}$ and the game predicate (or winning condition) $V: \bigtimes_i(\mathcal{I}_i\times\mathcal{O}_i) \to \{0,1\}$.
In a compact way, the game is identified by the tuple $\mathcal{G}= (\{P_i\}_\mathsf{k}, q,V)$. We call a strategy or correlation of the game the joint probability of the outputs given the inputs
\begin{equation*}
    p(\Vec{a}|\Vec{x}) = p(a_1, \dots a_\mathsf{k}| x_1, \dots x_\mathsf{k}),
\end{equation*}
for which the score of the game can be computed
\begin{equation*}
        \omega(\mathcal{G},p) = \sum_i \sum_{x_i \in \mathcal{I}_i} \sum_{a_i \in \mathcal{O}_i} q(\Vec{x}) V(\Vec{a}| \Vec{x}) p(\Vec{a}|\Vec{x}).
\end{equation*} 
\end{definition}

We are generally interested in fixing a game $\mathcal{G}$, and studying the optimal score with strategies constrained to be in some set $p\in \mathcal{S} $
\begin{equation*}
    \omega_{\mathcal{S}}(\mathcal{G}) = \sup_{p\in\mathcal{S}} \omega(\mathcal{G},p).
\end{equation*}
Different sets of strategies can be characterised in many ways; some examples are the physical theory that we use to model the behaviour of the players, the way in which the inputs are distributed, and the computational power available to the players.
In the following we will characterise some of the sets which are relevant for this work.

A game is insightful if, considering different constraints, there is a gap between the optimal scores. Then, the score of the game can be used as a certification of these sets of correlations.
Non-local games are the first class of games developed to detect the difference between classical $\mathcal{C}$ and quantum strategies $\mathcal{Q}$. 
Early on researchers understood that a distinguishing feature of quantum mechanics is non-local effects.
The simplest way to detect these is to distribute a simple task between non-communicating players: quantum players can then use non-local effects (like entanglement) to have an advantage over their classical adversaries.
To enforce the no-communication, a non-local game is structured as follows.
At the beginning the players are allowed to communicate and agree on a shared strategy, but only until the verifier samples the questions $\Vec{x} =\{x_i\}_{[\mathsf{k}]}$.
Then the following interaction starts:
    \begin{enumerate}
        \item the verifier sends separately to each prover a question $x_i$;
        \item every prover sends back the answer $a_i$.
    \end{enumerate}

In theories respecting special relativity, a natural way to impose that the players are not communicating during the interaction with the verifier, is to model them as space-like separated agents. In this framework, no player can signal their input to someone else; tracing out their output, the information about their input must be completely lost.
This identifies the set of non-signalling correlations.
\begin{definition}[Non-signalling strategy, $p \in \mathcal{C}_{ns}^\mathsf{k}$]
   A strategy $p(\Vec{a}|\Vec{x})$ is said to be non-signalling if
    \begin{equation*}
        \sum_{a_i \in \mathcal{O}_i} 
        p(\Vec{a}_{[a_i]}|\Vec{x}_{[x_i]}) - \sum_{a_i' \in \mathcal{O}_i} p(\Vec{a}_{[a_i']}|\Vec{x}_{[x_i']}) = 0 \qquad \forall i\in [k],
    \end{equation*}
    where $\Vec{v}_{[v_i]}$ and $\Vec{v}_{[v_i']}$ are equal vectors, up to the $i$-th element that is respectively $v_i$ and $v_i'$.
    The optimal score of a game $\mathcal{G}$ with respect to the set of non-signalling correlations is $\omega_{ns}(\mathcal{G})$.
\end{definition}

If we add more structure and fix quantum theory as the underlying physical model, the no-communication constraint mathematically translates to modelling every agent in a different Hilbert space.

\begin{definition}[Quantum strategy, $p \in \mathcal{C}_{q}^\mathsf{k}$]%
    \label{def:nl-2}
    A quantum strategy is characterised by modelling the $\mathsf{k}$ players in different Hilbert spaces. It is specified by
    \begin{enumerate}
        \item a Hilbert space for each player $\left\{\mathcal{H}_n\right\}_{n\in [\mathsf{k}]}$, such that the global space is $\mathcal{H} = \otimes_{n=1}^\mathsf{k} \mathcal{H}_n$;
        \item POVMs for every player $\left\{ M^{(n)}_{a_n|x_n}\right\}_{n\in [\mathsf{k}]}$, respectively defined on $\mathcal{H}_n$;
        \item a shared state $\ket{\psi} \in \mathcal{H}$.
    \end{enumerate}
    These elements produce what we call a quantum correlation:
    \begin{equation*}
        p(\Vec{a}|\Vec{x}) = \bra{\psi} \bigotimes_{n=1}^\mathsf{k} M^{(n)}_{a_n|x_n}  \ket{\psi},
    \end{equation*}
    and the optimal score of a game $\mathcal{G}$ with respect to this set of correlations is $\omega_q(\mathcal{G})$.
\end{definition}

Another option to model space-like separated agents, more common in the field of quantum field theory, is to consider commuting operations on a single Hilbert space.
\begin{definition}[Quantum commuting operator strategy, $p \in \mathcal{C}_{qc}^\mathsf{k}$]\label{def:co-2}\label{def:co-k}
A quantum commuting operator strategy models the action of different players as commuting operations. More precisely, it is characterised by
        \begin{enumerate}
        \item a single Hilbert space $\mathcal{H}$;
        \item POVMs on $\mathcal{H}$ for every player $\left\{ M^{(n)}_{a_n|x_n}\right\}_{n=1}^\mathsf{k}$, s.t. $\left[ M^{(i)}_{a_i|x_i}, M^{(j)}_{a_j|x_j} \right] = 0$ $\forall i \neq j \in [\mathsf{k]}$;
        \item a shared state $\ket{\psi} \in \mathcal{H}$.
    \end{enumerate}
    These elements produce what we call a quantum commuting operator correlation :
    \begin{equation*}
        p(\Vec{a}|\Vec{x}) 
        = \bra{\psi} \prod_{n=1}^\mathsf{k} M^{(n)}_{a_n|x_n} \ket{\psi},
    \end{equation*}
    and the optimal score of a game $\mathcal{G}$ with respect to this set of correlations is $\omega_{qc}(\mathcal{G})$.
\end{definition}

It is trivial to show that both $\mathcal{C}_{q}^\mathsf{k}$ and $ \mathcal{C}_{qc}^\mathsf{k}$ are strictly included in $\mathcal{C}_{ns}^\mathsf{k}$, and there are games whose optimal scores show a separation.
The relation between $\mathcal{C}_{q}^\mathsf{k}$ and $ \mathcal{C}_{qc}^\mathsf{k}$ is more subtle. In finite dimensions the two sets coincide, but for many years it was an open question whether they were the same in infinite dimensions. This was finally settled in \cite{ji2022mipre}, where the authors showed that quantum correlations are a strict subset of commuting operator strategies.

\subsection{The KLVY compiler}%
\label{subsec:KLVYcompiler}

The space-like separation of players constitutes a big challenge for experiments.
Furthermore, the characterization of single devices and proofs of quantumness in the single-player setting remain important goals from a theoretical perspective.
The authors in \cite{KLVY22Quantum} address the issue by introducing a framework that relaxes the no-communication constraint.
The central idea is to simulate space-like separation with an encryption scheme, at the price of having to introduce computational assumptions.
They propose a compiler which is taking any $\mathsf{k}$-player non-local game, and compiles it into a single-prover interactive game.

In the compiled protocol, the single prover receives the inputs one-by-one, but cannot extract useful information about them, given that all but the last input are sent in an encrypted fashion.
Nevertheless, using homomorphic encryption, the single prover can still act on the encrypted data without having to decrypt them; in this way, the completeness of the compilation protocol is provided, which means that provers in the compiled game can perform at least as good as in the original non-local game.
On the other hand, the hope is that the security of the cryptography provides sufficient privacy of the encrypted data to prevent the prover in the compiled protocol to perform substantially better than non-communicating provers in the original game.

The introduction of computationally secure cryptography makes it necessary to impose bounds on the complexity of the prover, or \emph{adversary}, in the compiled game, as computationally unbounded provers would be able to break the security of the employed cryptography, thereby being able to extract the plaintext inputs and to circumvent the non-signalling restrictions that the cryptography was meant to impose.
In this way, unbounded adversaries would be able to create correlations across the several rounds of interaction that violate no-signalling and which are therefore impossible in the multi-player space-like separated setting.
For this reason, provers in the compiled protocol are assumed to be implementing only efficient operations in the following.
Roughly speaking, an operation is said to be efficient if it can be implemented with time and resources polynomial in the security parameter $\mathsf{poly}(\lambda)$.

\begin{definition}[Efficient algorithms]\label{def:efficient-algorithms}
    A deterministic polynomial time (PT) algorithm is a uniform family of classical circuits, of polynomial size with respect to the input size and the security parameter $\lambda$.
    A probabilistic polynomial time (PPT) algorithm is allowed to additionally access a polynomial amount of random coins.
    
    A quantum polynomial time (QPT) algorithm is a uniform family of quantum circuits of size polynomial in the input size and the security parameter $\lambda$.
\end{definition}

In the following we will summarise the framework of \cite{KLVY22Quantum} and outline the definitions that are relevant to this work. For a more complete description of the protocol and the QFHE scheme used, we refer to the original paper.

\begin{definition}[Quantum fully homomorphic encryption scheme, \cite{KLVY22Quantum} Def.~2.3]
    A quantum fully homomorphic encryption (QFHE) scheme with security parameter $\lambda \in \mathds{N}$ is a tuple of algorithms $(\gen, \enc, \eval, \dec)$ defined as follows:
    \begin{itemize}
        \item ${\gen}$ is a PPT algorithm that takes as input $1^\lambda$ and outputs a classical secret key ${\sk}$ of size $\mathrm{poly}(\lambda)$ bits;
        \item ${\enc}$ is a PPT algorithm that takes as input a secret key ${\sk}$ and a classical input $x$, and outputs a ciphertext ${\encx}$;
        \item ${\eval}$ is a QPT algorithm that homomorphically applies a quantum circuit on a quantum state and an encrypted input ${\encx}$, and returns an encrypted output $\enca$;
        \item ${\dec}$ is a PT algorithm that takes as input a secret key $\sk$ and a classical ciphertext ${\enca}$, and outputs a classical output $a$. 
    \end{itemize}
    Additionally, we also require the two following properties
    \begin{enumerate}
        \item correctness with auxiliary input, which means that $\eval$ behaves correctly with entanglement;
        \item IND-CPA security against a QPT adversary, which means that efficient adversaries without the secret key cannot decrypt messages.
    \end{enumerate}
\end{definition}
In the following, we will always omit that the algorithm $\enc$ and $\dec$, also have the secret key $\sk$ as an input.

Let us report the definition of the security of the scheme against QPT adversaries, which is particularly relevant for our work.
\begin{definition}
    [IND-CPA security against QPT adversary]\label{def:IND-CPA}
    Consider a QFHE scheme with security parameter $\lambda$, and a quantum efficient (QPT) adversary with black-box access to an encryption oracle $\mathcal{A}^{\enc(\cdot)}$.
    The scheme is said to have IND-CPA security if, for all two messages $m_0$ and $m_1$ chosen by the adversary, he
    cannot distinguish between $\enc(m_0)$ and $\enc(m_1)$ with more than a negligible probability in $\lambda$.
    More formally,
    \begin{equation*}
    \left|\Pr\left[\mathcal{A}^{\enc(\cdot)}(\mathsf{m}_0) = 1 \ \middle\vert
        \mathsf{m}_0 = \enc(m_0)
        \right]
    -\Pr\left[\mathcal{A}^{\enc(\cdot)}(\mathsf{m}_1) = 1 \ \middle\vert
        \mathsf{m}_1 = \enc(m_1)
        \right]\right|
    \le \textsf{negl}(\lambda).
    \end{equation*}
\end{definition}

We can now properly define the \cite{KLVY22Quantum} protocol, that compiles every $\mathsf{k}$-players non-local game into a computationally bounded single-prover game.

The key realisation of \cite{KLVY22Quantum} is that to preserve the classical soundness of the original non-local game, the time-order in which the inputs are distributed matters.
In other words, if all of the encrypted inputs are received at the same time, even an efficient classical prover can reach the non-signalling optimal score of the original non-local game \cite{KRR14}.
The \cite{KLVY22Quantum} compiler fixes a very specific time structure of the input distributions: an encrypted input is given if and only if an encrypted answer to the previous question is already committed. Fig. \ref{fig:compiler} shows a pictorial representation of the non-local game and its compiled version.
A full description of the compiler is given by Definition~\ref{def:compilation-protocol}.

\begin{figure}[ht]
  \centering
  \begin{minipage}{0.65\textwidth}
    \centering
\begin{tikzpicture}[scale=0.3]
    \fill[gray!30] (0,0) -- (24,0) -- (24,1) -- (0,1) -- cycle;
    \fill[gray!30] (0,0) -- (1,0) -- (1,10) -- (0,10) -- cycle;
    \fill[gray!30] (0,9) -- (24,9) -- (24,10) -- (0,10) -- cycle;

    \fill[orange!30] (4,3) rectangle (8,7);
    \fill[orange!30] (12,3) rectangle (16,7);
    \fill[orange!30] (20,3) rectangle (24,7);

    \draw[->, thick] (6,9) -- (6,7) node[right,midway] {$x$};
    \draw[->, thick] (14,9) -- (14,7)node[right,midway] {$y$};
    \draw[->, thick] (22,9) -- (22,7) node[right,midway] {$z$};
    
    \draw[->, thick] (6,3) -- (6,1)node[right,midway] {$a$};
    \draw[->, thick] (14,3) -- (14,1)node[right,midway] {$b$};
    \draw[->, thick] (22,3) -- (22,1)node[right,midway] {$c$};

    \fill[orange!30] (2,11) -- (20,11) -- (20,13) -- (2,13) -- cycle;
    \fill[orange!30] (2,3) -- (4,3) -- (4,11) -- (2,11) -- cycle;
    \fill[orange!30] (10,3) -- (12,3) -- (12,11) -- (10,11) -- cycle;
    \fill[orange!30] (18,3) -- (20,3) -- (20,11) -- (18,11) -- cycle;

    \node at (5,5){Alice};
    \node at (13,5) {Bob};
    \node at (21,5) {Charlie};

\end{tikzpicture}   \end{minipage}
  \hfill
  \begin{minipage}{0.3\textwidth}
    \centering

\begin{tikzpicture}[scale=0.3]
    \fill[gray!30] (8.5,0) -- (9.5,0) --(9.5,28) -- (8.5,28) -- cycle;
    \fill[gray!30] (3,0) -- (8.5,0) --(8.5,1) -- (3,1) -- cycle;
    \fill[gray!30] (3,9) -- (8.5,9) --(8.5,10) -- (3,10) -- cycle;
    \fill[gray!30] (3,18) -- (8.5,18) --(8.5,19) -- (3,19) -- cycle;
    \fill[gray!30] (3,27) -- (8.5,27) --(8.5,28) -- (3,28) -- cycle;

    \fill[orange!30] (2,3) rectangle (6,7);
    \fill[orange!30] (2,12) rectangle (6,16);
    \fill[orange!30] (2,21) rectangle (6,25);

    \fill[orange!30] (0,3) -- (2,3) -- (2,25) -- (0,25) -- cycle;

    \node at (3,5){Charlie};
    \node at (3,14) {Bob};
    \node at (3,23) {Alice};

    \draw[->, thick] (4,27) -- (4,25) node[right,midway] {$\mathsf{Enc}(x)$};
    \draw[->, thick] (4,18) -- (4,16)node[right,midway] {$\mathsf{Enc}(y)$};
    \draw[->, thick] (4,9) -- (4,7) node[right,midway] {$z$};
    
    \draw[->, thick] (4,21) -- (4,19)node[right,midway] {$\mathsf{Enc}(a)$};
    \draw[->, thick] (4,12) -- (4,10)node[right,midway] {$\mathsf{Enc}(b)$};
    \draw[->, thick] (4,3) -- (4,1)node[right,midway] {$c$};

    \draw[->, very thin] (12,28) -- (12,0)node[right,midway] {$t$};
    
\end{tikzpicture}
   \end{minipage}
    \caption{On the left, a graphical representation of a $3$-players non-local game. On the right, the game obtained through \cite{KLVY22Quantum} compilation. Time is flowing downwards. The players are represented in orange, and the verifier is in gray.}
    \label{fig:compiler}
\end{figure}
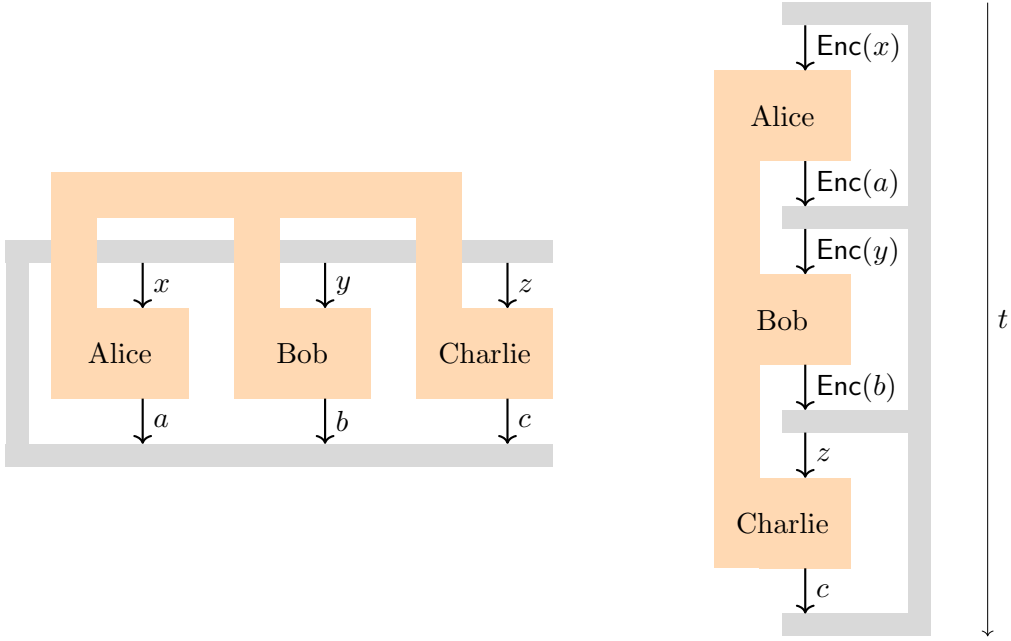

\begin{definition}[Compilation protocol of \cite{KLVY22Quantum}]\label{def:compilation-protocol}
    Consider a $\mathsf{k}$-player non-local game $\mathcal{G}$ and a QFHE scheme with security parameter $\lambda\in\mathbb{N}$.
    The compiled game $\mathcal{G}_\lambda$ is defined in the following way: first the verifier samples the questions $\{x_i\}_{[\mathsf{k}]}$; then, verifier and prover execute the following interactive protocol:
    \begin{enumerate}
        \item the verifier sends $\encx_1= \enc(x_1)$ to the prover;
        \item the prover sends back to the verifier an encrypted answer $\enca_1=\enc(a_1)$;
        \item repeat point 1. and 2. for $i\in[2,\mathsf{k}-1]$;
        \item the verifier sends $x_\mathsf{k}$ in the clear;
        \item the prover sends back $a_\mathsf{k}$ in the clear.
    \end{enumerate}
    The verifier then decrypts the (encrypted) answers of the prover, and evaluates the predicate of the non-local game $\mathcal{G}$ on the decrypted labels.
\end{definition}

\begin{definition}[Quantum strategy in the compiled game]\label{def:QPT-str-k-player}
Consider a compiled $\mathsf{k}$-player game $\mathcal{G}^\lambda$ according to Definition~\ref{def:compilation-protocol}.
An efficient quantum (QPT) strategy is then given by a a family $\{W_{i,\lambda}\}_{i\in[\mathsf{k}],\lambda\in\mathbb{N}}$ of QPT algorithms, that behave as follows:
    \begin{enumerate}
        \item on input of a ciphertext $\encx_1$, the prover applies $W_{1,\lambda}$ and obtains a classical output $\enca_1$ along with an internal (quantum) state. The prover sends $\enca_1$ to the verifier;
        \item for $i\in [2 ,\mathsf{k}-1]$, on input of a ciphertext $\encx_i$, the prover analogously applies the algorithm $W_{i,\lambda}$ to update its internal state and produce the classical output $\enca_i$, returning the latter to the verifier;
        \item on input of the plaintext $x_\mathsf{k}$, the prover applies the algorithm $W_{\mathsf{k},\lambda}$ to produce the classical output $a_\mathsf{k}$, that she sends back to the verifier. 
    \end{enumerate}
\end{definition}

\begin{definition}[Compiled correlations and score]\label{def:compiled-correlations-score}
    Given a $\mathsf{k}$-players compiled game $\mathcal{G}_\lambda$ and a prover strategy $\{W_{i,\lambda}\}_{i\in[\mathsf{k}],\lambda\in\mathbb{N}}$, if the correlations produced by the compiled game are given by
    \begin{equation*}
        p_\lambda(\enca_1,\dots, \enca_{\mathsf{k}-1},a_{\mathsf{k}}| \encx_1,\dots, \encx_{\mathsf{k}-1},x_{\mathsf{k}}),
    \end{equation*}
    then the associated decrypted correlations are
    \begin{align*}
        p_\lambda(\Vec{a}|\Vec{x}) =  
        &\mathop{\mathds{E}}_{\encx_1:\dec(\encx_1) = x_1} \sum_{\enca_1:\dec(\enca_1)=a_1} \cdots \\
        &\mathop{\mathds{E}}_{\encx_{\mathsf{k}-1}:\dec(\encx_{\mathsf{k}-1}) = x_{\mathsf{k}-1}} \sum_{\enca_{\mathsf{k}-1}:\dec(\enca_{\mathsf{k}-1})=a_{\mathsf{k}-1}}
        p_\lambda(\enca_1,\dots, \enca_{\mathsf{k}-1},a_{\mathsf{k}}| \encx_1,\dots, \encx_{\mathsf{k}-1},x_{\mathsf{k}}).
    \end{align*}
    The score of the game is then given by
    \begin{equation*}
        \omega(\mathcal{G}_\lambda, p_\lambda) = 
        \sum_i \sum_{x_i \in \mathcal{I}_i} \sum_{a_i \in \mathcal{O}_i} q(\Vec{x}) V(\Vec{a}| \Vec{x}) p_\lambda(\Vec{a}|\Vec{x}).
    \end{equation*}
\end{definition}

Similarly to before, we will fix a game and consider different set of correlations, to compare the optimal scores with respect to different constraints
\begin{equation*}
    \omega_\mathcal{S} = \sup_{p_\lambda \in \mathcal{S}} \omega(\mathcal{G}_\lambda, p_\lambda).
\end{equation*}

When considering compiled games, instead of considering correlations generated by arbitrary (unbounded) provers, it is more meaningful to consider the set of efficient correlations, \emph{i.e.}, correlations that can be generated by an efficient prover.
While for every fixed security parameter $\lambda$ one can define the corresponding compiled game, efficiency is an asymptotic property of prover's strategies in the security parameter $\lambda\in\mathbb{N}$.
Let us denote $\mathcal{G}_{comp}=\{\mathcal{G}_\lambda\}_\lambda$ for the sequence of compiled games for different security parameters $\lambda\in\mathbb{N}$ and $p_{comp} = \{p_\lambda\}_\lambda$ for efficient strategies in the compiled game for all $\lambda \in \mathbb{N}$.
To capture the asymptotic notion of efficiency, it is natural to consider the sequence of scores $\omega(\mathcal{G}_{comp}, p_{comp}) = \{ \omega(\mathcal{G}_\lambda,p_\lambda) \}_\lambda$.
One can then define the asymptotic score achieved by a certain strategy $p_{comp} = \{p_\lambda\}_\lambda$ as
\begin{equation*}
    \limsup_{\lambda \to \infty} \omega(\mathcal{G}_\lambda, p_\lambda),
\end{equation*}
and the optimal asymptotic core achievable by efficient strategies as
\begin{equation*}
    \omega_\mathcal{E}(\mathcal{G}_{comp}) = \sup_{\{p_\lambda\}_\lambda \in \mathcal{E}} \limsup_{\lambda \to \infty} \omega(\mathcal{G}_\lambda, p_\lambda),
\end{equation*}
where $\mathcal{E}$ denotes the set of all efficient strategies.

\subsubsection{Known completeness and soundness results}
This compiler preserves many desirable properties of the original non-local game.
Let us denote by $\omega_c(\mathcal{G}_{comp})$ ($\omega_q(\mathcal{G}_{comp})$) the optimal score that an efficient classical (quantum) single prover can achieve, following the compilation protocol of Def.~\ref{def:compilation-protocol}.
\begin{enumerate}
    \item \textit{Classical soundness for all games} \cite{KLVY22Quantum}.\\
    An efficient, classical prover can achieve \textbf{at most} a score in the compiled game which is negligibly close to the optimal classical value of the original non-local game, \emph{i.e.}, for every efficient classical strategy $\{p_\lambda\}_\lambda$ it holds that
    \begin{equation*}
        \omega(\mathcal{G}_{\lambda},p_\lambda) \leq \omega_c(\mathcal{G}) + \negl(\lambda).
    \end{equation*}
    Asymptotically, this implies that $\omega_c(\mathcal{G}_{comp}) \leq \omega_c(\mathcal{G})$.
    \item \textit{Quantum completeness for all games} \cite{KLVY22Quantum}.\\
    An efficient quantum prover can achieve \textbf{at least} a score in the compiled game which is negligibly close to the optimal quantum value of the original non-local game, \emph{i.e.}, there exists an efficient quantum strategy $\{p_\lambda\}_\lambda$ such that
    \begin{equation*}
        \omega(\mathcal{G}_{\lambda},p_\lambda) \geq \omega_q(\mathcal{G}) - \negl(\lambda).
    \end{equation*}
    Asymptotically, this implies that $\omega_q(\mathcal{G}_{comp}) \geq \omega_q(\mathcal{G})$.
    Intuitively, honest entangled provers for $\mathcal{G}$ can be simulated by a single quantum prover applying homomorphic operations, so no correlations are lost in the compilation.
    \item \textit{Quantum soundness for some bipartite games} \cite{NZ2023Bounding, CMMN2024Computational, baroni2024quantumboundscompiledxor, mehta2024selftestingcompiledsettingtiltedchsh}.\\
    For certain games, even a quantum prover cannot exceed the original quantum bound, as long as it remains computationally bounded. A celebrated example is the CHSH game: Natarajan and Zhang show that the compiled CHSH game inherits the Tsirelson bound~\cite{NZ2023Bounding}. More generally, Cui \emph{et al.}~\cite{CMMN2024Computational} and Baroni \emph{et al.}~\cite{baroni2024quantumboundscompiledxor} proved that any two-player XOR game has its Tsirelson bound preserved under compilation. Mehta, Paddock, and Wooltorton show an analogous result for tilted CHSH inequalities~\cite{mehta2024selftestingcompiledsettingtiltedchsh}.
    Asymptotically, for these specific games $\mathcal{G}$, we already know that $\omega_q(\mathcal{G}_{comp}) \leq \omega_q(\mathcal{G})$.
    \item \textit{Asymptotic quantum soundness for bipartite games} \cite{KMPSW24bound}.\\
    For all bipartite games, the optimal score achievable by computationally bounded quantum provers in the compiled game is asymptotically upper-bounded by the commuting-operator quantum value of the original nonlocal game in the limit of the security parameter tending towards infinity:
        \begin{equation*}
        \omega_q(\mathcal{G}_{comp}) =
        \sup_{\{p_\lambda\}_\lambda \in \mathcal{E}_q} \limsup_{\lambda \to \infty} \omega(\mathcal{G}_\lambda, p_\lambda)
        \leq \omega_{qc}(\mathcal{G}).
    \end{equation*}
\end{enumerate}

As opposed to the stronger concrete bounds that are known for the general classical soundness of the compiler, and for the quantum soundness when compiling games from specific classes, the asymptotic quantum soundness result which holds for all bipartite games does not allow for security statements in the finite-$\lambda$ regime.
Nonetheless, the very general result of \cite{KMPSW24bound} represents an extremely important step towards understanding the soundness of compiled games, as it confirms the intuition that the \cite{KLVY22Quantum} compiler asymptotically preserves quantum soundness for all bipartite games.
However, it only holds when restricted to at most two players.
In this work, we show that quantum soundness is indeed asymptotically preserved for all multipartite games.

\newpage
\section{Efficient quantum strategies for compiled games}\label{sec:quantum-compiler}

The correlations produced by a quantum computationally bounded single-prover are always almost-no-signalling
\begin{equation*}
    \left| \sum_{a_i \in \mathcal{O}_i} p_\lambda(\Vec{a}_{[a_i]}|\Vec{x}_{[x_i]}) - \sum_{a_i'\in \mathcal{O}_i} p_\lambda(\Vec{a}_{[a_i']}|\Vec{x}_{[x_i']})
    \right| \leq \textsf{negl}(\lambda),
\end{equation*}
and are perfectly non-signalling in the limit of $\lambda \to \infty$, because signalling in the single-prover framework means being able to decrypt messages.
We will show that imposing quantum physics as the underlying physical theory further restricts the set of admissible correlations.
More precisely, in this section we identify the additional constraints that naturally arise from the security properties of the encryption scheme on a QPT prover.

For pedagogical reasons, we often start considering a bipartite scenario involving Alice and Bob, and then extend it to a tripartite setting by introducing Charlie. The extension to more than three parties is then straightforward.

\subsection{Modelling efficient quantum strategies}\label{subsec:eff-quantum-str}

In this section we want to characterise the correlations that a quantum-polynomial-time (QPT) prover can produce in the compiled game, by making concrete the constraints following from the security of the encryption.
They will reflect the sequential structure of the compilation protocol, and be expressed in terms of efficient states, measurement and transformations.

Recall the definition of QPT algorithm of Def. \ref{def:efficient-algorithms}. More pragmatically, there are only three basic actions that an efficient quantum prover can do: 
\begin{itemize}
    \item creating $\ket{0}$ and appending them as ancillas;
    \item performing efficiently implementable unitary gates $U$;
    \item measuring in the computational basis $\Pi$.
\end{itemize}
Erasing information is also an efficient action; however this irreversible operation obviously cannot help the quantum prover, hence these operations can be neglected without loss of generality.

Thus, states $\rho$ that can be created by a QPT algorithm can be modelled as the preparation of (at most polynomially) many $\ket{0}$ states, followed by an efficient unitary gate $U$. This is always a pure state.

An efficient measurement can always be thought of as a projective measurement in an efficient basis, which means an efficiently implementable unitary $U$ followed by a measurement in the computational basis $\Pi_a$. This is always a projective measurement $M_a= U^* \Pi_a U$, indeed
\begin{equation*}
    M_a M_{a'}= U^* \Pi_a U U^* \Pi_{a'} U= U^* \Pi_a \Pi_{a'} U = \delta_{a,a'} M_a.
\end{equation*}
This fact was already pointed out in \cite{NZ2023Bounding}.

In a sequential setting, it is important to consider not only the classical outcome of the measurement, but also the post-measurement quantum state.
This is properly characterised by quantum instruments \cite{taranto2025higherorderquantumoperations}, a collection of completely positive (CP) maps $\Tilde{I}_a \in \mathsf{CP}(\mathcal{H}_i,\mathcal{H}_o)$, such that the sum over the classical output $a$ is also trace preserving (TP), $\sum_a \Tilde{I}_a = \Tilde{I} \in \mathsf{CPTP}(\mathcal{H}_i,\mathcal{H}_o)$.
Start with a state $\rho$ and apply the quantum instrument, the non-normalised post-measurement state is $\rho_a=\Tilde{I}_a(\rho)$, and the outcome $a$ occurs with probability $p(a|\rho,\Tilde{I}_a)= \tr(\rho_a)$.
An efficient quantum instrument can be modelled without loss of generality as follows: the prover receives a state $\rho$ and can append ancillas, applies an efficient unitary $U$, then decides to measure some of the registers in the computational basis $\Pi_a$, and finally applies a unitary $V_a$ on the remaining registers, conditioned to the outcome of the measurement. This defines a set of maps $\{\Tilde{I}_a\}_a$
\begin{align}\label{eq:instrument}
    \Tilde{I}_a(\rho) = \Tr_1\left[(\Pi_a \otimes V_a) U (\rho \otimes \ket{0}\bra{0}) U^*  (\Pi_a \otimes V_a^*)\right]
\end{align}
which are completely positive by construction.
The adjoint of this map, taking measurements to measurements, has the following structure:
\begin{align}\label{eq:adj-insrument}
    \Tilde{I}_a^*(M) = U (\Pi_a \otimes V_a M V_a^*) U^*.
\end{align}
Interestingly, these maps are not only CP, but also preserve multiplicative structure, \emph{i.e.}, they are algebraic homomorphisms. In some sense, we can see this as a purification of the adjoint of the map given by the quantum instrument.

\begin{lemma}\label{lem:homomorphism}
    Consider a quantum instrument $\Tilde{I}_a$ of the form of Eq.~\ref{eq:instrument}. Its adjoint is a *-homomorphism, \emph{i.e.}, it preserves multiplicative structure:
    \begin{equation*}
        \Tilde{I}_a^*(M_1^* M_2) = \left(\Tilde{I}_a^*(M_1)\right)^* \Tilde{I}_a^*(M_2).
    \end{equation*}
\end{lemma}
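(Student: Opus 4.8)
The plan is to work directly from the explicit form of the adjoint recorded in Eq.~\ref{eq:adj-insrument}, namely $\Tilde{I}_a^*(M) = U(\Pi_a \otimes V_a M V_a^*)U^*$, and to verify the homomorphism identity by a short algebraic manipulation that exploits the three structural features of this expression: $U$ is unitary, $\Pi_a$ is a self-adjoint projector, and $V_a$ is unitary. Concretely, I would first expand the product $\bigl(\Tilde{I}_a^*(M_1)\bigr)^* \Tilde{I}_a^*(M_2)$, taking the adjoint of the first factor (and using $\Pi_a^* = \Pi_a$) so that it reads $U(\Pi_a \otimes V_a M_1^* V_a^*)U^* \cdot U(\Pi_a \otimes V_a M_2 V_a^*)U^*$.

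The computation then proceeds by three cancellations carried out in order. First, the inner $U^* U = \mathds{1}$ collapses, leaving a single conjugation by $U$ enclosing the product of the two tensor-factor operators. Second, since these factors have the form $\Pi_a \otimes (\,\cdot\,)$, their product multiplies legwise, $(\Pi_a \otimes V_a M_1^* V_a^*)(\Pi_a \otimes V_a M_2 V_a^*) = \Pi_a^2 \otimes (V_a M_1^* V_a^*)(V_a M_2 V_a^*)$. Third, I invoke idempotency $\Pi_a^2 = \Pi_a$ on the first leg and unitarity $V_a^* V_a = \mathds{1}$ in the middle of the second leg, reducing the whole expression to $U(\Pi_a \otimes V_a M_1^* M_2 V_a^*)U^*$, which is exactly $\Tilde{I}_a^*(M_1^* M_2)$ by definition. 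Assembling these lines into one display chain yields the claim.

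An equivalent and more conceptual route is to present $\Tilde{I}_a^*$ as the composition of three elementary maps — the conjugation $M \mapsto V_a M V_a^*$, the embedding $N \mapsto \Pi_a \otimes N$, and the conjugation $X \mapsto U X U^*$ — and to observe that each is individually multiplicative and $*$-preserving, so that their composite is a $*$-homomorphism. The only nontrivial ingredient in this factorization is the middle map, whose multiplicativity rests precisely on $\Pi_a^2 = \Pi_a$. This is also where the single genuine subtlety lies: the identity relies on the efficient instrument having been brought into the purified projective form of Eq.~\ref{eq:instrument}, with a \emph{bona fide} projector $\Pi_a$ and unitaries $U, V_a$, rather than being an arbitrary CP instrument; for a generic (non-purified) instrument the analogous adjoint would fail to be multiplicative. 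I would therefore flag that the lemma is contingent on this purification, whose justification and attendant subtleties are deferred to Appendix~\ref{app:homomorphism}.
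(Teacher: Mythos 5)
Your proposal is correct and matches the paper's own proof essentially verbatim: the paper verifies the identity by the same three-step cancellation ($U^*U=\mathds{1}$, legwise multiplication of the tensor factors, then $\Pi_a^2=\Pi_a$ and $V_a^*V_a=\mathds{1}$). Your additional remarks on the factorization into elementary multiplicative maps and the reliance on the purified form are accurate but not needed beyond what the paper states.
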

\begin{proof}
A simple calculation proves the claim:
\begin{align*}
    \left(\Tilde{I}_a^*(M_1)\right)^* I_a^*(M_2)
    &= U (\Pi_a \otimes V_a M_1^* V_a^*) U^* U (\Pi_a \otimes V_a M_2 V_a^*) U^*\\
    &= U (\Pi_a\Pi_a \otimes V_a M_1^* V_a^*V_a M_2 V_a^*) U^*\\
    &=  U (\Pi_a\otimes V_a M_1^* M_2 V_a^*) U^* = \Tilde{I}_a^*(M_1^* M_2).
\end{align*}
Note that we are dealing with two different types of adjoints in the previous equations. While $\Tilde{I}_a^*$ denotes the adjoint of the map $\Tilde{I}_a$ as a CP map between the spaces $\mathsf{B}(\mathcal{H}_i)$ and $\mathsf{B}(\mathcal{H}_o)$, $M^*$ denotes the adjoint of $M$ as an operator in $\mathsf{B}(\mathcal{H}_o)$.
\end{proof}

\begin{figure}[ht]
    \centering

    \begin{minipage}{0.2\textwidth}
        \centering
\begin{quantikz}
\lstick{$\ket{0}$} &\gate{U}  &\rstick{$\ket{\psi}$}
\end{quantikz}     \end{minipage}
    \hspace{0.05\textwidth}
    \begin{minipage}{0.2\textwidth}
        \centering
\begin{quantikz}
\lstick{$\rho$}  &\gate{U}  &\meter{a} &\setwiretype{n}
\end{quantikz}     \end{minipage}
    \hspace{0.05\textwidth}
    \begin{minipage}{0.4\textwidth}
        \centering
\begin{quantikz}
\lstick{$\rho$} &&& \gate[2]{U} & \meter{a} &\setwiretype{n} \\
\setwiretype{n}            && \lstick{$\ket{0}$} & \setwiretype{q} & \gate{V_a} &\rstick{$\sigma$}
\end{quantikz}     \end{minipage}

    \caption{Efficient state preparation, efficient measurement, and efficient quantum instrument. Every wire represents a register consisting of a polynomial number of qubits.}
    \label{fig:efficient-operations}
\end{figure}
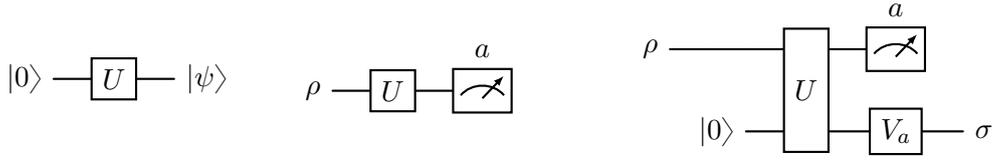

To sum up, for an operation to be efficient we need to know how to physically implement it (in polynomial time), meaning that we have access to one of its purifications.
In Fig. \ref{fig:efficient-operations} there is a graphical representation of the efficient operations we described above.

Let us consider the concrete case of a $2$-players compiled game $\mathcal{G}_\lambda$; the most general efficient compiled quantum strategy is modelled in a Hilbert space $\mathcal{H}^\lambda$ of dimension $d$ as follows:
\begin{enumerate}
    \item the prover prepares an efficiently preparable state $U^\lambda\ket{0}^{\otimes d} = \ket{\psi^\lambda}$, such that $ \ket{\psi^\lambda}\bra{\psi^\lambda}= \rho^\lambda \in \mathsf{S}(\mathcal{H}^\lambda)$;
    \item when she receives the encrypted input $\encx =\enc(x)$, she applies an efficient quantum instrument $\Tilde{A}_{\enca|\encx}^\lambda$, obtaining a classical output $\enca=\enc(a)$, that she sends back to the verifier, and a sub-normalised post-measurement state $\Tilde{A}_{\enca|\encx}^\lambda(\rho^\lambda)$;
    \item finally, she receives $y$ and applies an efficient measurement $B_{b|y}^\lambda$, obtaining a classical output $b$ that she sends back to the prover.
\end{enumerate}

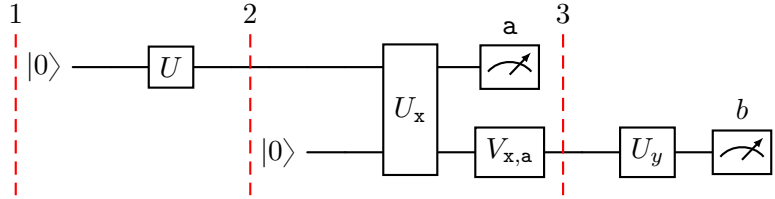
\begin{figure}[htbp]
\centering
\begin{quantikz}
\setwiretype{n}  &\slice{1}  &&\lstick{$\ket{0}$}& \setwiretype{q} & \gate{U} & \slice{2}  &&&& \gate[2]{U_\encx} & \meter{\enca} &\setwiretype{n} 
&&&\\
\setwiretype{n}  &&&&   &&      && \lstick{$\ket{0}$}& \setwiretype{q} & &  \gate{V_{\encx,\enca}}\slice{3} &
& \gate{U_y}& \meter{b} &\setwiretype{n}
\end{quantikz}
     \caption{Circuit representation of a general efficient compiled quantum correlation for bipartite games. (1) is the efficient state preparation of $\rho^\lambda$, (2) is the efficient quantum instrument $\Tilde{A}_{\enca|\encx}^\lambda$ and (3) is the efficient measurement $B_{b|y}^\lambda$.}
    \label{fig:efficient-2-circuit}
\end{figure}

The correlations written with respect to the decrypted variables are :
\begin{align*}
    p_\lambda(a,b|x,y) 
    &= \mathop{\mathds{E}}_{\encx:\dec(\encx) = x} \sum_{\enca:\dec(\enca)=a} \Tr \left[B_{b|y}^\lambda \Tilde{A}_{\enca|\encx}^\lambda(\rho^\lambda)\right]\\
    &= \Tr \left[ B_{b|y}^\lambda \Tilde{A}_{a|x}^\lambda(\rho^\lambda) \right] 
    = \Tr \left[ B_{b|y}^\lambda \rho^\lambda_{a|x} \right].
\end{align*}

In the first line the correlations are expressed with the efficient quantum instrument $\Tilde{A}^{\lambda}_{\enca|\encx}$ applied by the prover; the maps $\Tilde{A}^{\lambda,*}_{\enca|\encx}$ are the adjoint of efficient quantum instruments, thus invoking Lemma~\ref{lem:homomorphism} they are homomorphisms.

Note that in the second line, we move the average and the sum on the encrypted labels inside the definition of $\Tilde{A}_{a|x}^\lambda(\cdot)$, allowing us to work solely with the decrypted labels. These maps are not physically implemented by the prover, because she does not have access to the decryption gate; rather they are a mathematical tool to characterise the decrypted correlations.
The maps $\Tilde{A}^{\lambda,*}_{a|x}$ can be purified to homomorphisms, more details can be found in Appendix~\ref{app:homomorphism}.

The action of the efficient instrument on the initial state can be seen as an efficient preparation of the sub-normalised states $\rho_{a|x}^\lambda$; hence a compiled $2$-players game can always be modelled as a prepare and measure experiment.
More formally, a compiled quantum efficient (QPT) strategy is therefore characterised by:
\begin{enumerate}
    \item a Hilbert space $\mathcal{H}^\lambda$;
    \item QPT-measurable PVMs $B_{b|y}^\lambda \in \mathsf{B}(\mathcal{H}^\lambda)$;
    \item QPT-preparable sub-normalised states $\rho_{a|x}^\lambda \in \mathsf{B}(\mathcal{H}^\lambda)$, such that $\sum_a\rho_{a|x}^\lambda = \rho_{x}^\lambda\in \mathsf{S}(\mathcal{H}^\lambda)$ are states.
\end{enumerate}

What changes if we consider the compilation of a $3$-players game?
We simply have to consider an additional efficient quantum instrument $\Tilde{B}^\lambda_{\encb|\ency}$ that acts on the sub-normalised state produced by the first part of the strategy:
\begin{align*}
    p_\lambda(a,b,c|x,y,z) 
    &= \mathop{\mathds{E}}_{\encx:\dec(\encx) = x} \sum_{\enca:\dec(\enca)=a} \mathop{\mathds{E}}_{\ency:\dec(\ency) = y} \sum_{\encb:\dec(\encb)=b} \Tr \left[ C_{c|z}^\lambda \Tilde{B}_{\encb|\ency}^\lambda \circ \Tilde{A}_{\enca|\encx}^\lambda \left(\rho^\lambda\right) \right]\\
    &= \Tr \left[ C_{c|z}^\lambda \Tilde{B}_{b|y}^\lambda \circ \Tilde{A}_{x|a}^\lambda(\rho^\lambda) \right]\\
    &    = \Tr \left[ C_{c|z}^\lambda \Tilde{B}_{b|y}^\lambda(\rho^\lambda_{a|x}) \right]
    = \Tr \left[  \Tilde{B}_{b|y}^{\lambda,*}(C_{c|z}^\lambda) \rho^\lambda_{a|x} \right].
\end{align*}
Formally, a quantum efficient strategy for a compiled tripartite game $\mathcal{G}^\lambda$ is therefore characterised by:
\begin{enumerate}
    \item a Hilbert space $\mathcal{H}^\lambda$;
    \item QPT-measurable PVMs $C_{c|z}^\lambda \in \mathsf{B}(\mathcal{H}^\lambda)$;
    \item QPT-implementable quantum instruments $\Tilde{B}_{b|y}^\lambda \in \mathsf{CP}(\mathcal{H}^\lambda)$, such that $\Tilde{B}_{y}^\lambda = \sum_b \Tilde{B}_{b|y}^\lambda \in \mathsf{CPTP}(\mathcal{H}^\lambda)$ and $\Tilde{B}_{b|y}^{\lambda,*}$ are *-homomorphisms;
    \item QPT-preparable sub-normalised states $\rho_{a|x}^\lambda \in \mathsf{B}(\mathcal{H}^\lambda)$, such that $\rho_x^\lambda = \sum_a \rho_{a|x}^\lambda \in \mathsf{S}(\mathcal{H}^\lambda)$ are states.
\end{enumerate}
Fig.~\ref{fig:efficient-3-circuit} shows a circuit representation of this strategy.
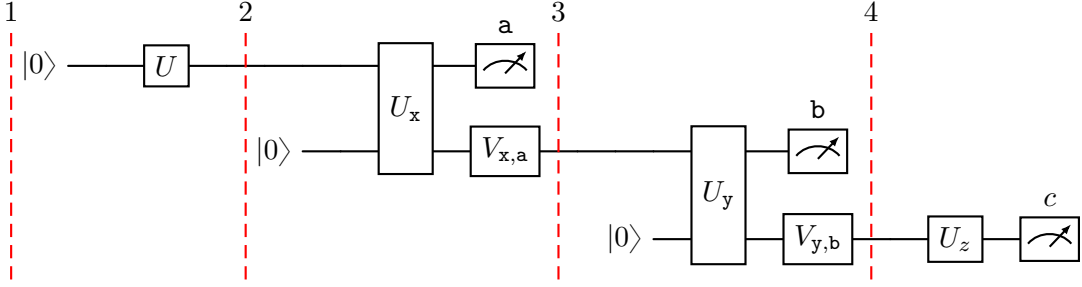
\begin{figure}[htbp]
    \centering
\begin{quantikz}
\setwiretype{n}  &\slice{1} &&\lstick{$\ket{0}$}& \setwiretype{q} & \gate{U} & \slice{2} &&&& \gate[2]{U_\encx} & \meter{\enca} &\setwiretype{n} 
&&&&&
& &&\\
\setwiretype{n}       &&&&&&     && \lstick{$\ket{0}$}& \setwiretype{q} & &  \gate{V_{\encx,\enca}}\slice{3} &
&&& \gate[2]{U_\ency}& \meter{\encb} &\setwiretype{n}
& &&\\
\setwiretype{n} &&&&&& &&&&&&
&& \lstick{$\ket{0}$}& \setwiretype{q} & \gate{V_{\ency,\encb}}\slice{4} &
&\gate{U_z} & \meter{c} &\setwiretype{n} 
\end{quantikz}     \caption{Circuit representation of a general efficient compiled quantum correlation for tripartite games. (1) is the efficient state preparation of $\rho^\lambda$, (2) is the efficient quantum instrument $\Tilde{A}_{\enca|\encx}^\lambda$, (3) is the efficient quantum instrument $\Tilde{B}_{\encb|\ency}^\lambda$  and (4) is the efficient measurement $C_{c|z}^\lambda$.}
    \label{fig:efficient-3-circuit}
\end{figure}

From an operational perspective, the three fundamental primitives underlying these correlations are preparation, transformation, and measurements.
To generalise it to $\mathsf{k}>3$ parties, no new types of operations are required. Indeed, we can model the additional players as a concatenation of transformations, obtaining the following correlations
\begin{equation*}
    p_\lambda(\Vec{a}|\Vec{x}) = \Tr \left[ M^{(\mathsf{k}),\lambda}_{a_\mathsf{k}|x_\mathsf{k}}\cdot I^{(\mathsf{k}-1),\lambda}_{a_{\mathsf{k}-1}|x_{\mathsf{k}-1}}\circ \dots \circ I^{(2),\lambda}_{a_2|x_2} \left(\rho_{a_1|x_1}^\lambda\right) \right]
\end{equation*}
which can be interpreted as a state preparation, a measurement and $\mathsf{k}-2$ instruments in between.

Hence, for a compiled $\mathsf{k}$-partite game $\mathcal{G}^\lambda$, a QPT strategy is characterised by
\begin{enumerate}
    \item a Hilbert space $\mathcal{H}^\lambda$;
    \item QPT-measurable PVMs $M^{(\mathsf{k}),\lambda}_{a_\mathsf{k}|x_\mathsf{k}} \in \mathsf{B}(\mathcal{H}^\lambda)$;
    \item $\mathsf{k}-2$ QPT-implementable quantum instruments $I^{(\mathsf{i}),\lambda}_{a_{i}|x_{i}} \in \mathsf{CP}(\mathcal{H}^\lambda)$, for $i\in[2,\mathsf{k}-1]$, such that $I^{(\mathsf{i}),\lambda}_{x_{i}}= \sum_{a_i}I^{(\mathsf{i}),\lambda}_{a_{i}|x_{i}} \in \mathsf{CPTP}(\mathcal{H}^\lambda)$
    and $I^{(\mathsf{i}),\lambda,*}_{a_{i}|x_{i}}$ is a *-homomorphism;
    \item QPT-preparable sub-normalised states $\rho_{a_1|x_1}^\lambda \in \mathsf{B}(\mathcal{H}^\lambda)$, such that $\rho_{x_1}^\lambda = \sum_{a_1} \rho_{a_1|x_1}^\lambda \in \mathsf{S}(\mathcal{H}^\lambda)$ are states.
\end{enumerate}

\subsection{Constraints from IND-CPA security}\label{subsec:constraints-quantum-IND-CPA}

It is well established that the security of the encryption scheme imposes that all the correlations produced by the compiled prover must be almost-no-signalling.
In this section we will characterise even more the set of compiled quantum QPT correlations.
We will first recap the case of $2$ players introduced in \cite{KMPSW24bound}; we will then extend this to $3$ players. The extension requires new techniques for block-encoding estimations, which might be of independent interest. The extension from $3$ to more players follows quite directly.

\subsubsection{Two players}
Consider $2$-player QPT quantum strategies as considered in Section~\ref{subsec:eff-quantum-str}; they are almost-no-signalling, meaning that no efficient measurement $B_{b|y}^\lambda$ can distinguish $\rho_x^\lambda$ and$\rho_{x'}^\lambda$ with more than negligible probability:
\begin{align*}
    \left| \Tr\left[ B_{b|y}^\lambda \rho_x^\lambda \right] - \Tr\left[ B_{b|y}^\lambda \rho_{x'}^\lambda \right] \right| \leq \textsf{negl}(\lambda).
\end{align*}

In \cite{KMPSW24bound}, the authors show that the IND-CPA security is also imposing constraints on non-physical correlations.
\begin{theorem}[Proposition 4.6 of \cite{KMPSW24bound}]\label{th:poly-bipartite}
    Consider any QPT strategy for a $2$-player compiled game, as in Section~\ref{subsec:eff-quantum-str}. For all polynomials $P(\{B_{b|y}\})$ in the non-commuting variables $\{B_{b|y}\}_{b,y}$, the following statement is true for all $x,x' \in \mathcal{I}_1$
\begin{align}\label{eq:poly-bipartite}
    \left| \Tr\left[  P(\{B_{b|y}^\lambda \}) \rho_x^\lambda \right] - \Tr\left[ P(\{B_{b|y}^\lambda\})  \rho_{x'}^\lambda \right] \right| \leq \textsf{negl}(\lambda).
\end{align}
\end{theorem}

Polynomials of observables in general have no physical meaning; not being unitary, they cannot be seen as physical evolution.
To make sense of these objects we need to introduce block-encodings, whose basic idea is to give a physical meaning to all operators, by interpreting them as a unitary evolution followed by a post-selection. For a pedagogical introduction to these techniques, we refer to \cite{Martyn_2021}.

\begin{definition}[Block-encoding]
    Consider an operator $\mathcal{O} \in \mathsf{B}(\mathcal{H})$.
    The unitary $U$ is said to be the block-encoding of $\mathcal{O}$ with scale factor $t\geq1$ and $p$ ancillas, if we can write
    \begin{align*}
        t \mathcal{O} = \left(\bra{0}^{\otimes p} \otimes \mathds{1}_\mathcal{H} \right) U \left(\ket{0}^{\otimes p} \otimes \mathds{1}_\mathcal{H} \right).
    \end{align*}
    We say that a block-encoding is QPT implementable if the unitary $U$ is QPT implementable.
\end{definition}

Block-encodable operators behave nicely under linear combinations.
\begin{lemma}[Polynomials of block-encodings, \cite{Gily_n_2019} (Lemma 52 and 53)] \label{lem:poly-QPT-be}
    Sums and products of operators with block-encodings also have a block-encoding. If the block-encodings of the factors are QPT, then the block-encoding of the polynomial is also QPT.
\end{lemma}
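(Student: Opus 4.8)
The plan is to prove the two closure properties---under products and under sums---by explicit constructions of the composite block-encoding from those of the factors, and then to obtain an arbitrary non-commutative polynomial $P(\{M\})$ by iterating these two building blocks. Throughout I would use the convention above, in which a unitary $U$ on a $p$-qubit ancilla register together with the system register encodes $\mathcal{O}$ at scale $t$ whenever $t\mathcal{O} = (\bra{0}^{\otimes p}\otimes\mathds{1})\,U\,(\ket{0}^{\otimes p}\otimes\mathds{1})$. For products, suppose $U_A$ and $U_B$ are block-encodings of $A$ and $B$ with scale factors $t_A,t_B$ on disjoint ancilla registers $\mathtt{a},\mathtt{b}$ but acting on a shared system register. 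I would form $U_{AB}=(U_A\otimes\mathds{1}_\mathtt{b})(\mathds{1}_\mathtt{a}\otimes U_B)$ (each factor understood to act on the shared system together with its own ancilla) and check that projecting both ancilla registers onto $\ket{0}$ extracts exactly the product of the two blocks. The key computation is to insert a resolution of the identity on $\mathtt{a}\otimes\mathtt{b}$ between the two factors: since the left factor acts trivially on $\mathtt{b}$ and the right factor trivially on $\mathtt{a}$, only the all-zero intermediate term survives, yielding $(t_A A)(t_B B)=t_A t_B\,AB$. Hence $U_{AB}$ encodes $AB$ at scale $t_A t_B\ge 1$ on $p_A+p_B$ ancillas.

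For sums I would invoke the linear-combination-of-unitaries (LCU) idea. To encode $\sum_i c_i A_i$ from block-encodings $U_i$ of $A_i$, introduce an index register with a state-preparation unitary $\mathrm{PREP}:\ket{0}\mapsto \sum_i\sqrt{|c_i|/\|c\|_1}\,\ket{i}$, together with the controlled ``select'' unitary $\mathrm{SELECT}=\sum_i \ket{i}\bra{i}\otimes U_i$. Then $(\mathrm{PREP}^\ast\otimes\mathds{1})\,\mathrm{SELECT}\,(\mathrm{PREP}\otimes\mathds{1})$, with the index register and all inner ancillas projected onto $\ket{0}$, has top-left block equal to $\frac{1}{\|c\|_1}\sum_i c_i\,(t_i A_i)$. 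This produces a block-encoding of the linear combination whose scale factor is governed by the $t_i$ and by the $\ell_1$-norm $\|c\|_1$, at the cost of an index register of size logarithmic in the number of terms.

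An arbitrary polynomial $P(\{M\})$ is a sum of monomials, each a product of the generators $M_{b|y}$, so composing the product and sum constructions a polynomial number of times yields a block-encoding of $P(\{M\})$. For the QPT-preservation claim I would observe that each composite unitary is built from the factor unitaries $U_i$ (tensored with identities, or used as the controlled targets of $\mathrm{SELECT}$) together with fixed $\mathrm{PREP}$ and $\mathrm{SELECT}$ circuits on a polynomially-sized index register; if all the $U_i$ are QPT and the polynomial has polynomially many terms of polynomial degree, the resulting circuit has polynomial size and a uniform description, and is therefore QPT.

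The step I expect to be most delicate is the bookkeeping of scale factors and ancilla counts rather than any single identity. One must verify that the $\ell_1$-normalisation incurred at each sum, multiplied across the nested products, keeps the overall scale factor admissible and polynomially bounded, and that the encoded block genuinely remains a sub-block of a unitary (norm at most one) at every intermediate stage. Since the operators of interest here---the $M_{b|y}$ and the uniformly bounded block-encodable observables---have norm at most one, the product step is harmless and keeps scale factors $\ge 1$; the only genuine growth comes from the LCU normalisations in the sums, and it is precisely this quantity that must be controlled to guarantee that expectation values of the form $\Tr[P(\{M\})\rho]$ remain efficiently estimable, which is the use to which this lemma is put in the sequel.
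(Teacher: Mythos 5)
Your proposal is correct and follows essentially the same route as the source the paper cites: the paper states this lemma without proof, importing it from Lemmas 52 and 53 of \cite{Gily_n_2019}, and your product construction (tensoring ancilla registers and multiplying the unitaries) and LCU construction for sums are precisely the arguments of those two lemmas. The one point worth keeping an eye on, which you already flag, is that when the summands carry different scale factors $t_i$ the PREP amplitudes must be weighted by $|c_i|t_i$ rather than $|c_i|$ so that the extracted block is $\sum_i c_i A_i$ itself and not a reweighted combination.
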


By definition efficient measurements $B_{b|y}^\lambda$ have a trivial QPT block-encoding, i.e. we know how to efficiently implement them with a circuit.
The polynomials $P(\{B_{b|y}^\lambda\})$ will also have a QPT block-encoding, and can be physically implemented with a circuit by the compiled prover.
Hence IND-CPA security should imply that these polynomials cannot distinguish between $\rho_x^\lambda$ and $\rho_{x'}^\lambda$.
Not so directly! We need to make sure that these operations are efficient, and in general the post-selection will not be efficient at all. In other words, a QPT block encoding does not implies a QPT implementation.

Luckily though, it implies efficient estimation. Indeed there is an efficient algorithm that can estimate the trace of an operator with a QPT block-encoding on every state, up to a $\epsilon = \mathsf{poly}^{-1}(\lambda)$ error. These techniques were originally developed for energy estimation in \cite{Rall_2021}, and adapted to compiled games in \cite{NZ2023Bounding}.

\begin{lemma}[Efficient energy estimator, Lemma 14 of \cite{NZ2023Bounding}]\label{lemma:energy_estimator_only_states}
        For every Hermitian operator $\mathcal{M}^\lambda \in \mathsf{B}(\mathcal{H})$ with $\sup_\lambda\| \mathcal{M}^\lambda \|_\infty = O(1)$ with a QPT block-encoding and for any $\mathsf{poly}(\lambda)$, there is an efficient measurement $\{M_\beta^\lambda\}_\beta$ such that
    \begin{equation*}
        \left|  \Tr(\mathcal{M}^\lambda \rho) - \sum_\beta \beta \Tr(M_\beta^\lambda \rho)\right| \leq \frac{1}{\mathsf{poly}(\lambda)} \qquad \forall \rho \in \mathsf{S}(\mathcal{H}).
    \end{equation*}
\end{lemma}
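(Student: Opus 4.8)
The plan is to realise the measurement outcomes $\beta$ as estimates of the eigenvalues of $\mathcal{M}^\lambda$, produced by a coherent energy-estimation routine built from the QPT block-encoding, and then to verify that the resulting expectation reproduces $\Tr(\mathcal{M}^\lambda \rho)$ up to an inverse-polynomial bias that is \emph{uniform} in $\rho$. Since $\sup_\lambda \|\mathcal{M}^\lambda\|_\infty = O(1)$, I first rescale the block-encoding so that, without loss of generality, the spectrum of $\mathcal{M}^\lambda$ lies strictly inside the range where eigenphases are unambiguous, say $[-\tfrac12+\eta, \tfrac12-\eta]$ for a fixed margin $\eta>0$. This avoids the wrap-around of phases in the estimation step and only costs a constant rescaling of the final outcomes.

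First I would turn the block-encoding into Hamiltonian simulation: using the quantum singular/eigenvalue transformation of \cite{Gily_n_2019}, one can, with $\mathsf{poly}(\lambda)$ queries to the QPT block-encoding of $\mathcal{M}^\lambda$, implement a unitary $U_k^\lambda$ that approximates $e^{2\pi i 2^k \mathcal{M}^\lambda}$ in operator norm up to error $2^{-\mathsf{poly}(\lambda)}$, for each $k$ up to $\mathsf{poly}(\lambda)$. Because the block-encoding is QPT and the transformation preserves efficiency (in the same spirit as Lemma~\ref{lem:poly-QPT-be}), the resulting circuits are QPT implementable. Next I would apply a phase-estimation circuit to the family $\{U_k^\lambda\}$, using a suitably windowed input register rather than the naive Fourier state. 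Measuring the phase register yields a random outcome $\beta$, and tracing out the ancillas defines the POVM elements $\{M_\beta^\lambda\}$ on $\mathcal{H}$, which is a genuine efficient measurement. For the idealised unitaries $e^{2\pi i 2^k \mathcal{M}^\lambda}$ this POVM is diagonal in the eigenbasis of $\mathcal{M}^\lambda$, so for an eigenstate $\ket{\psi_j}$ with eigenvalue $\mu_j$ the outcome distribution is concentrated near $\mu_j$; decomposing $\rho$ spectrally then gives $\sum_\beta \beta \Tr(M_\beta^\lambda \rho) = \sum_j \bra{\psi_j}\rho\ket{\psi_j}\,\mathbb{E}[\beta \mid \mu_j]$, and it remains to bound $|\mathbb{E}[\beta \mid \mu_j] - \mu_j|$ uniformly in $j$.

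The main obstacle is precisely this last bias bound. Standard phase estimation has heavy tails, with the probability of an outcome decaying only like the inverse square of its distance to $\mu_j$, so the naive mean is dominated by outcomes far from $\mu_j$ and need not be inverse-polynomially close. The fix is to choose the window so that the probability of landing at distance $>\epsilon$ from $\mu_j$ is super-polynomially small; this is exactly the point of the coherent energy-estimation construction of \cite{Rall_2021} as adapted to this setting in \cite{NZ2023Bounding}. Combined with the fixed spectral margin $\eta$, which removes wrap-around, and the boundedness $|\beta| \le 1$ of the truncated outcomes, this forces $|\mathbb{E}[\beta \mid \mu_j] - \mu_j| \le \tfrac{1}{2\,\mathsf{poly}(\lambda)}$ for every $j$, and hence for every $\rho$ after summing against the nonnegative weights $\bra{\psi_j}\rho\ket{\psi_j}$.

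Finally, I would account for the approximation incurred by simulation: the operator-norm error $2^{-\mathsf{poly}(\lambda)}$ between the implemented unitaries $U_k^\lambda$ and their ideal counterparts perturbs each expectation $\Tr(M_\beta^\lambda \rho)$ by at most a negligible amount, uniformly over $\rho$, so the POVM actually implemented differs from the idealised diagonal one only negligibly in the relevant expectations. Adding the two error contributions — the estimation bias and the simulation error — and choosing the number of phase bits, repetitions, and simulation precision to be sufficiently large polynomials in $\lambda$, yields the claimed uniform bound $\bigl|\Tr(\mathcal{M}^\lambda \rho) - \sum_\beta \beta \Tr(M_\beta^\lambda \rho)\bigr| \le \tfrac{1}{\mathsf{poly}(\lambda)}$ for all $\rho \in \mathsf{S}(\mathcal{H})$.
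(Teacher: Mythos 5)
The paper does not prove this lemma itself but imports it verbatim as Lemma~14 of \cite{NZ2023Bounding}, whose proof (building on the coherent energy estimation of \cite{Rall_2021}) is precisely your argument: Hamiltonian simulation from the block-encoding, windowed phase estimation, and the crucial tail-suppression step that turns a high-confidence estimator into a uniformly small bias of the \emph{mean}. Your reconstruction is correct and matches that source, including the correct identification of the heavy-tail issue as the only non-routine point.
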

In the following, we will omit the superscript $\lambda$.
This Lemma is referred to as efficient energy estimator because it was originally formulated in \cite{Rall_2021} for Hamiltonians with a QPT block-encoding.
The result can be generalised to all operators $\mathcal{M}$ with bounded operator norm and a QPT block-encoding and all operators $\rho$ with bounded trace norm; this will be relevant for the extension to three and more parties.

\begin{lemma}[Efficient estimate of a block-encodable operator]\label{lemma:eff-be-op-bip}
    For every operator $\mathcal{M} \in \mathsf{B}(\mathcal{H})$ with $\| \mathcal{M} \|_\infty = O(1)$ with a QPT block-encoding and for any $\mathsf{poly}(\lambda)$, there is an efficient measurement $\{M_\beta\}_\beta$ such that
    \begin{equation*}
        \left|  \Tr(\mathcal{M} \rho) - \sum_\beta \beta \Tr(M_\beta \rho)\right| \leq \frac{1}{\mathsf{poly}(\lambda)} \qquad \forall \rho \in \mathsf{B}(\mathcal{H}) \text{ s.t. } \| \rho \|_1 = O(1).
    \end{equation*}
\end{lemma}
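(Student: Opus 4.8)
The plan is to reduce the statement to the already-established special case, Lemma~\ref{lemma:energy_estimator_only_states}, which handles Hermitian $\mathcal{M}$ and genuine states $\rho \in \mathsf{S}(\mathcal{H})$, by exploiting the \emph{linearity} of both sides of the estimate separately in $\mathcal{M}$ and in $\rho$. For a fixed candidate measurement $\{M_\beta\}_\beta$ I define the error functional
\[
f(\mathcal{M}, \rho) := \Tr(\mathcal{M}\rho) - \sum_\beta \beta\, \Tr(M_\beta \rho),
\]
which is linear in each argument. The idea is to decompose an arbitrary bounded $\mathcal{M}$ and an arbitrary trace-bounded $\rho$ into a fixed, bounded number of ``nice'' pieces (Hermitian operators carrying QPT block-encodings, resp. states), control $f$ on each piece via the known lemma, and recombine.

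First I would handle a general, not necessarily Hermitian, $\mathcal{M}$. Writing $\mathcal{M} = \mathcal{M}_R + i\mathcal{M}_I$ with $\mathcal{M}_R = \tfrac{1}{2}(\mathcal{M} + \mathcal{M}^*)$ and $\mathcal{M}_I = \tfrac{1}{2i}(\mathcal{M} - \mathcal{M}^*)$ both Hermitian, I note that $\mathcal{M}^*$ inherits a QPT block-encoding from $\mathcal{M}$ (its block-encoding is the adjoint circuit $U^*$, still QPT-implementable), so by Lemma~\ref{lem:poly-QPT-be} both $\mathcal{M}_R$ and $\mathcal{M}_I$ carry QPT block-encodings, with $\|\mathcal{M}_R\|_\infty, \|\mathcal{M}_I\|_\infty \leq \|\mathcal{M}\|_\infty = O(1)$. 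Applying the Hermitian case to each yields efficient measurements $\{M_\beta^R\}_\beta$ and $\{M_\gamma^I\}_\gamma$ estimating $\Tr(\mathcal{M}_R\rho)$ and $\Tr(\mathcal{M}_I\rho)$. These merge into the single efficient POVM $\{\tfrac12 M_\beta^R\}_\beta \cup \{\tfrac12 M_\gamma^I\}_\gamma$ (implemented by a fair coin flip deciding which sub-measurement to run), equipped with complex outcome values $2\beta$ and $2i\gamma$ respectively; its weighted expectation then estimates $\Tr(\mathcal{M}_R\rho) + i\Tr(\mathcal{M}_I\rho) = \Tr(\mathcal{M}\rho)$ at the cost of only a constant factor in precision.

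Next I would upgrade the allowed $\rho$ from states to all operators with $\|\rho\|_1 = O(1)$, keeping the measurement $\{M_\beta\}$ (which depends only on $\mathcal{M}$ and the target precision) fixed. Using the spectral decomposition I write any such $\rho$ as $\rho = c_1\sigma_1 - c_2\sigma_2 + i c_3\sigma_3 - i c_4\sigma_4$, where the $\sigma_j \in \mathsf{S}(\mathcal{H})$ arise from the positive and negative parts of the Hermitian and anti-Hermitian components of $\rho$, the coefficients satisfy $c_j \geq 0$, and $\sum_j c_j = \big\| \tfrac12(\rho+\rho^*) \big\|_1 + \big\| \tfrac{1}{2i}(\rho-\rho^*) \big\|_1 \leq 2\|\rho\|_1 = O(1)$. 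By linearity of $f(\mathcal{M},\cdot)$ and the per-state bound $|f(\mathcal{M},\sigma_j)| \leq \epsilon$ supplied by Lemma~\ref{lemma:energy_estimator_only_states}, I obtain
\[
|f(\mathcal{M}, \rho)| \leq \sum_{j=1}^4 c_j\, |f(\mathcal{M}, \sigma_j)| \leq \Big(\sum_j c_j\Big)\epsilon \leq 2\|\rho\|_1\,\epsilon.
\]
Choosing the internal precision of Lemma~\ref{lemma:energy_estimator_only_states} to be $\epsilon = \tfrac{1}{2C\,\mathsf{poly}(\lambda)}$, where $C = O(1)$ bounds $\|\rho\|_1$, makes the right-hand side at most $\tfrac{1}{\mathsf{poly}(\lambda)}$, and this $\epsilon$ is still inverse-polynomial, so the measurement remains efficient.

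The main obstacle to watch is the \emph{uniformity} requirement: the statement demands a single measurement $\{M_\beta\}$, depending only on $\mathcal{M}$ and the desired polynomial, that simultaneously works for \emph{all} $\rho$ with bounded trace norm, whereas the decomposition into states $\sigma_j$ and coefficients $c_j$ genuinely depends on $\rho$. This is precisely what the linearity of $f(\mathcal{M},\cdot)$ resolves: because the decomposition has total $\ell^1$-weight bounded by $2\|\rho\|_1$, controlling $f$ on the measurement-independent extreme points, the states, automatically controls it on the entire trace-norm ball, losing only the constant factor $2C$ in precision. The remaining verifications, that the merged POVM is a legitimate efficient measurement and that splitting $\mathcal{M}$ preserves block-encodability, are routine given Lemmas~\ref{lemma:energy_estimator_only_states} and~\ref{lem:poly-QPT-be}.
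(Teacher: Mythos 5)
Your proposal is correct and follows essentially the same route as the paper's proof: a Toeplitz split of $\mathcal{M}$ into Hermitian parts (using that $\mathcal{M}^*$ inherits a QPT block-encoding via the adjoint circuit), a decomposition of $\rho$ into a bounded linear combination of four genuine states via Hermitian/skew-Hermitian and positive/negative parts, an application of Lemma~\ref{lemma:energy_estimator_only_states} to each Hermitian piece, and recombination by linearity at the cost of a constant factor in precision. The only cosmetic difference is that you merge the two estimators into a single POVM with complex outcome labels, whereas the paper keeps them separate and forms the complex linear combination of the two expectation values; both yield the same constant-factor loss and satisfy the statement.
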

\begin{proof}
    We will prove this statement by reducing it to Lemma~\ref{lemma:energy_estimator_only_states}. Note that the difference between the statements of the two lemmas is that Lemma~\ref{lemma:energy_estimator_only_states} considers only states $\rho$, \emph{i.e.}, positive Hermitian trace-one operators, while Lemma~\ref{lemma:eff-be-op-bip} generalises the statement to all operators $\rho$ with uniformly bounded trace norm, and that $\mathcal{M}$ is no longer required to be Hermitian.

    Let henceforth $\rho \in \mathsf{B}(\mathcal{H})$ with $\| \rho \|_1 = O(1)$, and $\mathcal{M}\in \mathsf{B}(\mathcal{H})$ with $\|\mathcal{M}\|_\infty = O(1)$ admitting a QPT block-encoding. Using the Toeplitz decomposition, both $\rho$ and $\mathcal{M}$ can be decomposed into their Hermitian and skew-Hermitian parts as
    \begin{align*}
        &\rho = \rho_H + i\rho_S, &&\rho_H = \frac{\rho + \rho^*}{2}, &&\rho_S = \frac{\rho - \rho^*}{2i}, \\
        &\mathcal{M} = \mathcal{M}_H + i\mathcal{M}_S, &&\mathcal{M}_H = \frac{\mathcal{M} + \mathcal{M}^*}{2}, &&\mathcal{M}_S = \frac{\mathcal{M} - \mathcal{M}^*}{2i},
    \end{align*}
    where $\rho_H,\rho_S,\mathcal{M}_H,\mathcal{M}_S$ are Hermitian matrices and
    \begin{align*}
        \| \rho_H \|_1, \| \rho_S \|_1 \leq \| \rho \|_1, \quad \| \mathcal{M}_H \|_\infty, \| \mathcal{M}_S \|_\infty \leq \| \mathcal{M} \|_\infty.
    \end{align*}
    Note that by Lemma~\ref{lem:poly-QPT-be}, $\mathcal{M}_H$ and $\mathcal{M}_S$ have QPT block-encodings.
    Furthermore, $\rho_H$ and $\rho_S$ can be written as the difference of positive operators defined as follows
    \begin{align*}
        &\rho_H = \rho_H^+ - \rho_H^-, &&\rho_H^+ = \frac{|\rho_H| + \rho_H}{2}, &&\rho_H^- = \frac{|\rho_H| - \rho_H}{2}, \\
        &\rho_S = \rho_S^+ - \rho_S^-, &&\rho_S^+ = \frac{|\rho_S| + \rho_S}{2}, &&\rho_S^- = \frac{|\rho_S| - \rho_S}{2},
    \end{align*}
    where we define the absolute value $|\sigma| = \sqrt{\sigma^2}$ for Hermitian operators $\sigma$. The operators $\rho_H^+, \rho_H^-, \rho_S^+, \rho_S^-$ are Hermitian and positive, and
    \begin{align*}
        \Tr( \rho_H^+ ), \Tr( \rho_H^- ), \Tr( \rho_S^+ ), \Tr( \rho_S^- ) \leq \| \rho \|_1.
    \end{align*}
    Assume without loss of generality that all of these four traces are non-zero, and define the following states
    \begin{align*}
        \tilde{\rho}_H^+ = \frac{\rho_H^+}{\Tr( \rho_H^+ )}, \quad
        \tilde{\rho}_H^- = \frac{\rho_H^-}{\Tr( \rho_H^- )}, \quad
        \tilde{\rho}_S^+ = \frac{\rho_S^+}{\Tr( \rho_S^+ )}, \quad
        \tilde{\rho}_S^- = \frac{\rho_S^-}{\Tr( \rho_S^- )}.
    \end{align*}
    We can then apply Lemma~\ref{lemma:energy_estimator_only_states} to both operators $\mathcal{M}_H, \mathcal{M}_S$ to obtain efficient measurements $\{M^k_{\beta_k}\}_{\beta_k}$ for $k \in \{H,S\}$ such that
    \begin{align*}
        \left|  \Tr(\mathcal{M}_k \sigma) - \sum_{\beta_k} \beta_k \Tr(M^k_{\beta_k} \sigma)\right| \leq \frac{1}{\mathsf{poly}(\lambda)}, \qquad \forall \sigma \in \mathsf{S}(\mathcal{H}).
    \end{align*}
    Decomposing $\rho$ into a linear combination of the four states $\tilde{\rho}_H^+, \tilde{\rho}_H^-, \tilde{\rho}_S^+, \tilde{\rho}_S^-$ as shown above, and setting $i^H = 1$, $i^S=i$, we finally arrive at
    \begin{align*}
        &\left|  \Tr(\mathcal{M} \rho) - \sum_{k\in\{H,S\}} i^{k} \sum_{\beta_k} \beta_k \Tr(M^k_{\beta_k} \rho) \right|
        \leq  \sum_{k\in\{H,S\}} \left| \Tr(\mathcal{M}_k \rho) - \sum_{\beta_k} \beta_k \Tr(M^k_{\beta_k} \rho) \right| \\
        &\leq  \sum_{k\in\{H,S\}} \sum_{l\in\{H,S\}} \sum_{m\in\{+,-\}} \| \rho \|_1 \left| \Tr(\mathcal{M}_k \rho_l^m) - \sum_{\beta_k} \beta_k \Tr(M^k_{\beta_k} \rho_l^m) \right|
        \leq \frac{8\| \rho \|_1}{\mathsf{poly}(\lambda)} \leq \frac{O(1)}{\mathsf{poly}(\lambda)},
    \end{align*}
    which concludes the proof of the claim.
\end{proof}

With these important results connecting polynomials of observables, QPT block-encodings and their efficient estimations, we have the necessary tools to prove Theorem \ref{th:poly-bipartite}.

\begin{proof}[Proof of Theorem \ref{th:poly-bipartite}]
The measurements $B_{b|y}$ have a trivial QPT block-encoding; Lemma~\ref{lem:poly-QPT-be} implies that $P(\{B_{b|y}^\lambda \})$ also have a QPT block-encoding, but it is not hermitian in general.
From Lemma~\ref{lemma:eff-be-op-bip} we then know that there is an efficient measurement $\{ M_\beta^{b,y} \}_\beta$ that can efficiently estimate the trace of $P(\{B_{b|y}^\lambda \})$ on every state :
\begin{equation*}
    \left| \Tr\left[ P(\{B_{b|y}^\lambda \}) \rho \right] - 
    \sum_\beta \beta \Tr\left[ M_\beta^{b,y} \rho \right]
    \right| \leq \frac{1}{\mathsf{poly}(\lambda)} \qquad \forall \rho\in\mathsf{S}(\mathcal{H}).
\end{equation*}

By rearranging the terms and invoking the triangular inequality, we obtain the following bound
\begin{align*}
 \Bigg| \Tr\left[ \rho_x^\lambda P(\{B_{b|y}^\lambda \}) \right]- &\Tr\left[ \rho_{x'}^\lambda P(\{B_{b|y}^\lambda\}) \right] \Bigg|\\
    = \Bigg| &\Tr\left[\rho_x^\lambda P(\{B_{b|y}^\lambda \}) \right] - \sum_\beta \beta \Tr\left[ M_\beta^{b,y} \rho_x^\lambda \right] + \sum_\beta \beta \Tr\left[ M_\beta^{b,y} \rho_x^\lambda \right]\\ 
    - &\Tr\left[ \rho_{x'}^\lambda P(\{B_{b|y}^\lambda \})\right]  + \sum_\beta \beta \Tr\left[ M_\beta^{b,y} \rho_{x'}^\lambda \right] - \sum_\beta \beta \Tr\left[ M_\beta^{b,y} \rho_{x'}^\lambda \right] \Bigg| \\
    &\leq \frac{2}{\mathsf{poly}(\lambda)} + \left|
        \sum_\beta \beta \Tr\left[ M_\beta^{b,y} \rho_{x}^\lambda \right] -
        \sum_\beta \beta \Tr\left[ M_\beta^{b,y} \rho_{x'}^\lambda \right]
        \right|\\
    & \leq \frac{2}{\mathsf{poly}(\lambda)} + \textsf{negl}(\lambda)
\end{align*}
where in the second-to-last line we have QPT implementable states and measurements, hence we can use the IND-CPA security statement (Def.~\ref{def:IND-CPA}). 
Since this has to be true for any $\mathsf{poly}(\lambda)$, we can choose a polynomial such that the final quantity is negligible.
\end{proof}

\subsubsection{Three players}
In this section we want to characterise the constraints deriving from IND-CPA for $3$-players compiled games, following the same approach as in Theorem \ref{th:poly-bipartite}.
We will start by showing a technical Lemma about diagonal non-unitary operators, which might be of independent interest.

\begin{lemma}[Efficient implementation of a diagonal non-unitary operator]\label{lem:qpt-impl-diagonals}
    Consider a diagonal matrix $D$, with efficiently computable and uniformly bounded entries: $|\gamma_i| \leq C \;\forall i$.
    Consider $\sigma \in \mathsf{S}(\mathcal{H}^\lambda)$ an efficiently preparable quantum state, $\mathcal{U}\in \mathsf{B}(\mathcal{H}^\lambda)$ an efficiently implementable unitary and $M\in\mathsf{B}(\mathcal{H}^\lambda)$ an efficiently implementable measurement.
    Then the following trace
    \begin{equation*}
        \Tr\left[
        (\mathcal{U} D \mathcal{U}^*) \sigma (\mathcal{U} D^* \mathcal{U}^*) M
        \right]
        \qquad \forall M\in \mathsf{B}(\mathcal{H}).
    \end{equation*}
    can be efficiently computed.

    Furthermore, if $M$ is not an efficient implementable operator, but just a bounded operator $\| M \|_\infty \leq O(1)$ with a QPT block encoding, the quantity above can be efficiently estimated.
\end{lemma}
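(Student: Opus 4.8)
The plan is to reduce the statement to the block-encoding machinery already established in Lemmas~\ref{lem:poly-QPT-be} and~\ref{lemma:eff-be-op-bip}, so that both parts of the lemma follow uniformly; the only genuinely new ingredient is a QPT block-encoding of the diagonal non-unitary operator $D$ itself. I interpret ``efficiently computed/estimated'' in the usual sense of this section: approximable to within $1/\mathsf{poly}(\lambda)$ by a QPT procedure. Note that the naive route --- implementing $\sigma \mapsto A \sigma A^*$ with $A := \mathcal{U}D\mathcal{U}^*$ as a postselected channel and then measuring $M$ --- is hopeless, since $A$ is non-unitary and the postselection typically succeeds only with exponentially small probability. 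The estimation-via-block-encoding route circumvents this, and the key algebraic observation is that, by cyclicity of the trace,
\begin{equation*}
    \Tr\!\left[ (\mathcal{U} D \mathcal{U}^*)\, \sigma\, (\mathcal{U} D^* \mathcal{U}^*)\, M \right] = \Tr\!\left[ \sigma\, A^* M A \right],
\end{equation*}
so it suffices to estimate the expectation of the single bounded observable $A^* M A$ in the efficiently preparable state $\sigma$.

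The main obstacle, and the first step I would carry out, is to produce a QPT block-encoding of $D$. Writing $\mathcal{H}^\lambda = \mathbb{C}^d$ and $D = \sum_i \gamma_i \ket{i}\bra{i}$ with $|\gamma_i| \le C$, I normalise to $d_i := \gamma_i/C$ (so $|d_i| \le 1$) and put $d_i = r_i e^{\mathrm{i}\phi_i}$ with $r_i \in [0,1]$. Adjoining one ancilla qubit, define the controlled single-qubit unitary $W = \sum_i \ket{i}\bra{i} \otimes W_i$ with
\begin{equation*}
    W_i = \begin{pmatrix} r_i e^{\mathrm{i}\phi_i} & -\sqrt{1 - r_i^2} \\ \sqrt{1 - r_i^2} & r_i e^{-\mathrm{i}\phi_i} \end{pmatrix},
\end{equation*}
which is unitary and satisfies $\bra{0} W_i \ket{0} = d_i$. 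Since the entries $\gamma_i$ are efficiently computable, the angles $r_i,\phi_i$ can be coherently computed into an ancillary register, the rotation applied, and the register uncomputed, all by a QPT circuit (to sufficient precision). Projecting the ancilla onto $\ket{0}$ yields $(\bra{0}\otimes \mathds{1}) W (\ket{0}\otimes \mathds{1}) = \sum_i d_i \ket{i}\bra{i} = D/C$, so $W$ is a QPT block-encoding of $D$ up to the scale factor $C = O(1)$. Applying the same construction to $\overline{\gamma_i}$ block-encodes $D^*$.

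With this in hand, I would assemble the full observable. As $\mathcal{U}$ and $\mathcal{U}^*$ are efficiently implementable unitaries, they carry trivial QPT block-encodings; combining these with the block-encodings of $D$, $D^*$, and of $M$ (trivial in the first part, assumed in the second part) through Lemma~\ref{lem:poly-QPT-be} shows that the product
\begin{equation*}
    A^* M A = \mathcal{U} D^* \mathcal{U}^*\, M\, \mathcal{U} D \mathcal{U}^*
\end{equation*}
admits a QPT block-encoding, with an $O(1)$ scale factor absorbing the normalisations. Moreover $\|A^* M A\|_\infty \le C^2 \|M\|_\infty = O(1)$.

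Finally, I would invoke the efficient estimator. Since $\sigma$ is an efficiently preparable state (hence of unit trace norm) and $A^* M A$ is a bounded-norm operator with a QPT block-encoding, Lemma~\ref{lemma:eff-be-op-bip} (after the harmless $O(1)$ rescaling) furnishes an efficient measurement whose weighted expectation approximates $\Tr[\sigma\, A^* M A] = \Tr[(\mathcal{U}D\mathcal{U}^*)\sigma(\mathcal{U}D^*\mathcal{U}^*)M]$ to within $1/\mathsf{poly}(\lambda)$. This settles both parts at once: the only distinction between them is whether the block-encoding of $M$ is the trivial one (efficiently implementable $M$) or the one supplied by hypothesis (block-encodable $M$), and the final estimate is QPT in either case.
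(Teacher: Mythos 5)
Your proof is correct, and it takes a genuinely different (though related) route from the paper's. The paper pushes the non-unitary conjugation into the \emph{state}: it observes that $\rho = \mathcal{U}D\mathcal{U}^*\,\sigma\,\mathcal{U}D^*\mathcal{U}^*$ is positive with $\Tr(\rho)\le C^2$, constructs a unitary $1$-dilation of the contraction $D/C$ out of controlled $2\times 2$ unitaries indexed by the computational basis, and argues that $\rho/C^2$ is an efficiently preparable sub-normalised state (the post-selection on the ancilla needs no renormalisation, so it stays efficient even if the success probability is tiny); the first part then follows by directly measuring $M$ on this state, and the second by applying the energy estimator of Lemma~\ref{lemma:energy_estimator_only_states} to it. You instead push everything into the \emph{observable}: by cyclicity of the trace you reduce to $\Tr[\sigma\, A^*MA]$ with $A=\mathcal{U}D\mathcal{U}^*$, block-encode $D$ via essentially the same controlled single-qubit dilation, compose block-encodings via Lemma~\ref{lem:poly-QPT-be}, and invoke Lemma~\ref{lemma:eff-be-op-bip} with the honest state $\sigma$ and the bounded block-encodable operator $A^*MA$. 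The core technical ingredient --- the QPT dilation of the diagonal contraction, built from efficiently computable entries $\gamma_i/C$ --- is the same in both arguments (and, as a side remark, your $W_i$ with the $-\sqrt{1-r_i^2}$ entry is correctly unitary, whereas the sign-free block matrix displayed in the paper is only unitary after inserting such a sign). What your route buys is a uniform treatment of both parts of the lemma in a single application of the estimator, and it sidesteps the paper's slightly delicate discussion of post-selected state preparation with arbitrarily small success probability; what the paper's route buys is the operationally stronger intermediate statement that $A\sigma A^*$ is itself an efficiently preparable sub-normalised state. Your reading of ``efficiently computed'' as ``efficiently estimable to inverse-polynomial precision by a QPT procedure'' matches exactly how the lemma is consumed in the proof of Theorem~\ref{th:3pl-ind-cpa}, so nothing is lost there.
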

\begin{proof}
    We need to show that the operator
\begin{align}
    \rho = \mathcal{U} D \mathcal{U}^* \sigma \mathcal{U} D^* \mathcal{U}^*,
\end{align}
describes a physically meaningful object which can be efficiently prepared.
First note that this can be rewritten as
\begin{align}
    \rho = \mathcal{E} \circ \mathcal{F} \circ \mathcal{E}^{-1} ( \sigma ),
\end{align}
where $\mathcal{E}(\cdot) = \mathcal{U} \cdot \mathcal{U}^*$ and $\mathcal{F}(\cdot) = D \cdot D^\ast$ are bounded, positive maps. This implies that $\rho$ is a Hermitian, positive operator with uniformly bounded trace:
\begin{align}
    \Tr \left( \rho \right) \leq \|\mathcal{U}\|_\infty^4 \|D\|_\infty^2 \Tr(\sigma) \leq C^2.
\end{align}
Instead of directly preparing $\rho$, we argue that the sub-normalised state $\rho/C^2$ can be prepared efficiently. For any operator $M$, we can then write
\begin{align}
    \Tr \left( M \rho \right) = C^2 \Tr \left( M \rho/C^2 \right).
\end{align}
Since the renormalisation factor $C^2$ is uniformly bounded, any approximation of the trace on the right-hand side degrades at most by a constant multiplicative factor when turned into an approximation of the left-hand expression.

The unitary quantum channels $\mathcal{E}$ and $\mathcal{E}^{-1}$ have efficient implementations, thus it suffices to argue that the renormalised channel $\tilde{\mathcal{F}}(\cdot) = \mathcal{F}(\cdot)/C^2 = \frac{D}{C} \cdot \frac{D^\ast}{C}$ can also be efficiently implemented.

Denoting $\tilde{D} = D/C$ and its renormalised eigenvalues $\tilde{\gamma_i} = \gamma_i/C$, we first note that there exists an efficient classical circuit evaluating the map $i \mapsto \tilde{\gamma_i}$.
As $\tilde{D}$ is a contraction, we can construct its unitary 1-dilation $W$ explicitly as
\begin{align}
    W =
    \begin{bmatrix}
        \tilde{D} & \sqrt{I - \tilde{D}^\ast \tilde{D}} \\
        \sqrt{I - \tilde{D}^\ast \tilde{D}} & \tilde{D}^\ast
    \end{bmatrix},
\end{align}
following Egerváry's dilation theorem \cite{Egervary54}. A straightforward calculation verifies that $W$ is indeed a unitary satisfying $\tilde{D} = (\bra{0} \otimes \mathds{1}) W (\ket{0} \otimes \mathds{1})$. Consequently, it suffices now to argue that the unitary channel described by $W$ can be efficiently implemented, since the preparation of the single-qubit ancilla state in the computational basis, along with its computational basis measurement and post-selection of the 0-outcome, are all efficient operations. Note, that the post-selection does not require renormalisation in this case and remains therefore efficient even though the trace of the resulting sub-normalised state may be arbitrarily small.
Now note that the quantum channels given by the $2\times 2$-unitaries $W_i$ where
\begin{align}
    W_i =
    \begin{bmatrix}
        \tilde{\gamma_i} & \sqrt{1 - \tilde{\gamma_i}^\ast \tilde{\gamma_i}} \\
        \sqrt{1 - \tilde{\gamma_i}^\ast \tilde{\gamma_i}} & \tilde{\gamma_i}^\ast
    \end{bmatrix}
\end{align}
are efficiently implementable for all $i$, and so are their controlled versions
\begin{align}
    C_i W_i : |j\rangle |\varphi\rangle \mapsto
    \left\{
    \begin{matrix}
        |i\rangle W_i |\varphi\rangle, &\text{ if } j=i, \\
        |j\rangle |\varphi\rangle, &\text{ if } j\neq i,
    \end{matrix}
    \right.
\end{align}
where the control is to be understood as applying $W_i$ only if the first register is in the state $i$, and $|\varphi\rangle$ is any pure single-qubit state.
Since, as argued previously, the map $i \mapsto \tilde{\gamma_i}$ has an efficient classical implementation, there further exists an efficient quantum circuit implementing the map
\begin{align}
    W = \left( \prod_i C_i W_i \right) : |j\rangle |\varphi\rangle \mapsto |j\rangle W_j |\varphi\rangle
\end{align}
for any state $|j\rangle$ in the computational basis and any single-qubit state $|\varphi\rangle$, which concludes the first part of the proof.
Similar questions and similar techniques can also be found in \cite{Schlimgen_2022}.

If $M$ is not an efficient implementable operator, but a bounded operator $\| M \|_\infty \leq O(1)$ with a QPT block encoding, we can use the construction above to efficiently implement $\rho/C^2$, and then we can apply Lemma~\ref{lemma:energy_estimator_only_states} to efficiently estimate the trace.
\end{proof}

With the tools developed so far, we can now formulate strong constraints that any tripartite compiled quantum QPT strategies have to satisfy, not to contradict IND-CPA.
This will be an important tool to show the asymptotic bound in Section~\ref{sec:asymptotic-bound}.

\begin{theorem}\label{th:3pl-ind-cpa}
    Consider a tripartite quantum compiled strategy, as in Section~\ref{subsec:eff-quantum-str}. For all operators $\mathcal{L^\lambda, R^\lambda} \in \mathsf{B}(\mathcal{H}^\lambda)$ with $\sup_\lambda \|\mathcal{L}^\lambda\|_\infty=O(1)$ and $\sup_\lambda \|\mathcal{R}^\lambda\|_\infty=O(1)$ that have a QPT block-encoding, and for all non-commuting polynomials $P(\{C_{c|z}^\lambda \})$, the following statement is true
    \begin{align*}
    \left|\sum_b \Tr\left[P(\{C_{c|z}^\lambda\}) \Tilde{B}_{b|y}^{\lambda} \left(\mathcal{R}^\lambda \rho_{a|x}^\lambda\mathcal{L}^{\lambda,*} \right) \right] - \sum_{b'} \Tr\left[P(\{C_{c|z}^\lambda\}) \Tilde{B}_{b'|y'}^{\lambda} \left(\mathcal{R}^\lambda \rho_{a|x}^\lambda\mathcal{L}^{\lambda,*} \right) \right] \right| \leq \textsf{negl}(\lambda).
\end{align*}
\end{theorem}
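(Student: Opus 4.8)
The plan is to reduce the statement to an application of IND-CPA security (Definition~\ref{def:IND-CPA}) with respect to the input $y$, following the template of the bipartite proof of Theorem~\ref{th:poly-bipartite}. First I would observe that since $\tilde{B}_y^\lambda := \sum_b \tilde{B}_{b|y}^\lambda \in \mathsf{CPTP}(\mathcal{H}^\lambda)$ is the full quantum channel implemented by the prover on input $\enc(y)$, linearity of the maps collapses the summation over $b$, so that the left-hand side equals
\begin{equation*}
    \left| \Tr\!\left[ P(\{C_{c|z}^\lambda\})\, \tilde{B}_{y}^{\lambda}\big(\mathcal{R}^\lambda \rho_{a|x}^\lambda \mathcal{L}^{\lambda,*}\big) \right] - \Tr\!\left[ P(\{C_{c|z}^\lambda\})\, \tilde{B}_{y'}^{\lambda}\big(\mathcal{R}^\lambda \rho_{a|x}^\lambda \mathcal{L}^{\lambda,*}\big) \right] \right|,
\end{equation*}
so that the only dependence on the varying input is through the channel $\tilde{B}_y^\lambda$, whose implementation receives $\enc(y)$ as its sole $y$-dependent input.

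Next, passing the channel into the Heisenberg picture and using cyclicity of the trace, I would rewrite each term as $\Tr[\mathcal{O}_y^\lambda\, \rho_{a|x}^\lambda]$, where $\mathcal{O}_y^\lambda := \mathcal{L}^{\lambda,*}\, \tilde{B}_y^{\lambda,*}\big(P(\{C_{c|z}^\lambda\})\big)\, \mathcal{R}^\lambda$ is a (generally non-Hermitian) observable and $\rho_{a|x}^\lambda$ is a fixed, efficiently preparable sub-normalised state, identical on both sides. The key claim is that $\mathcal{O}_y^\lambda$ admits a QPT block-encoding of uniformly bounded operator norm whose $y$-dependence enters only through $\enc(y)$. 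Indeed, $P(\{C_{c|z}^\lambda\})$ is QPT-block-encodable by Lemma~\ref{lem:poly-QPT-be} (the $C_{c|z}^\lambda$ being efficient PVMs with trivial block-encodings); the adjoint channel $\tilde{B}_y^{\lambda,*}$ is assembled from the QPT dilation unitary of $\tilde{B}_y^\lambda$ (which takes $\enc(y)$ as classical input), so that $\tilde{B}_y^{\lambda,*}(P(\{C\}))$ is again QPT-block-encodable by the product rule of Lemma~\ref{lem:poly-QPT-be}, with norm bounded by $\|P(\{C\})\|_\infty$ since $\tilde{B}_y^{\lambda,*}$ is unital and completely positive, hence contractive; multiplying by the hypothesised uniformly bounded, QPT-block-encodable operators $\mathcal{L}^{\lambda,*}$ and $\mathcal{R}^\lambda$ and invoking the product rule once more yields the claim.

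With $\mathcal{O}_y^\lambda$ in hand, I would invoke the generalised estimator of Lemma~\ref{lemma:eff-be-op-bip} (valid since $\rho_{a|x}^\lambda$ has uniformly bounded trace norm) to obtain, for any target inverse polynomial, an efficient measurement $\{M_\beta^{\lambda,y}\}$ built from the block-encoding of $\mathcal{O}_y^\lambda$ — hence depending on $y$ only through $\enc(y)$ — with $\big|\Tr[\mathcal{O}_y^\lambda \rho_{a|x}^\lambda] - \sum_\beta \beta \Tr[M_\beta^{\lambda,y}\rho_{a|x}^\lambda]\big| \le 1/\mathsf{poly}(\lambda)$. The quantity $\sum_\beta \beta \Tr[M_\beta^{\lambda,y}\rho_{a|x}^\lambda]$ is the expected output of a QPT algorithm that prepares $\rho_{a|x}^\lambda$ (obtaining $\enc(x)$ from its encryption oracle), assembles the measurement from its ciphertext input, measures, and reports $\beta$; its dependence on the varying input is exclusively through $\enc(y)$. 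IND-CPA security therefore forces $\big|\sum_\beta \beta \Tr[M_\beta^{\lambda,y}\rho_{a|x}^\lambda] - \sum_\beta \beta \Tr[M_\beta^{\lambda,y'}\rho_{a|x}^\lambda]\big| \le \negl(\lambda)$. Chaining this with the two estimation errors through the triangle inequality bounds the target by $2/\mathsf{poly}(\lambda) + \negl(\lambda)$; since the precision may be taken to beat any fixed inverse polynomial, the bound is in fact negligible, exactly as in the conclusion of Theorem~\ref{th:poly-bipartite}.

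I expect the main obstacle to be the block-encoding bookkeeping of the second paragraph: establishing rigorously that the composite observable $\mathcal{O}_y^\lambda$ — which threads the non-unitary sandwich operators $\mathcal{R}^\lambda, \mathcal{L}^{\lambda,*}$ around the Heisenberg adjoint of a QPT channel — is genuinely QPT-block-encodable with \emph{uniformly} bounded norm and with all $y$-dependence funnelled through the ciphertext, in particular constructing the adjoint channel's block-encoding from the prover's dilation unitary and controlling the normalisation overheads in the spirit of Lemma~\ref{lem:qpt-impl-diagonals}. The operationally equivalent alternative, which I would mention as a fallback, is to prepare the sub-normalised state $\tilde{B}_y^\lambda(\mathcal{R}^\lambda\rho_{a|x}^\lambda\mathcal{L}^{\lambda,*})$ directly — decomposing the sandwich via a polarisation identity $\mathcal{R}^\lambda \rho \mathcal{L}^{\lambda,*} = \tfrac{1}{4}\sum_{k=0}^{3} i^k (\mathcal{R}^\lambda + i^k\mathcal{L}^\lambda)\rho(\mathcal{R}^\lambda + i^k\mathcal{L}^\lambda)^*$ into completely positive pieces prepared by block-encoding and post-selection — and then to estimate the fixed observable $P(\{C_{c|z}^\lambda\})$ against it, applying IND-CPA to the $y$-dependent prepared states.
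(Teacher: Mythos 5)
Your proposal is sound and reaches the same conclusion, but by a genuinely more direct route than the paper's. After the common first step (collapsing $\sum_b \Tilde{B}_{b|y}^\lambda = \Tilde{B}_y^\lambda$ and passing to the Heisenberg picture, which is exactly how the paper opens its proof), the paper does \emph{not} fold everything into one observable. Instead it applies Wittstock's decomposition $\mathcal{L}^* B_y \mathcal{R} = \tfrac{1}{4}\sum_j i^j \mathcal{V}_j^* B_y \mathcal{V}_j$ with $\mathcal{V}_j \in \{\mathcal{L}\pm\mathcal{R}, \mathcal{L}\pm i\mathcal{R}\}$ (essentially the polarisation identity you relegate to your fallback), then uses two applications of Lemma~\ref{lemma:eff-be-op-bip} to replace $\mathcal{V}_j$ and $\mathcal{V}_j^*$ by efficient-measurement approximations $\mathcal{U}D\mathcal{U}^*$ with $D$ a diagonal contraction, and finally invokes a dedicated result (Lemma~\ref{lem:qpt-impl-diagonals}, via Egerv\'ary's unitary $1$-dilation of $D$) to show that the sandwiched sub-normalised state $\mathcal{U}D\mathcal{U}^*\sigma\,\mathcal{U}D^*\mathcal{U}^*$ is efficiently preparable, so that the $y$-dependent observable $B_y = \Tilde{B}_y^{\lambda,*}(P(\{C_{c|z}^\lambda\}))$ can be estimated against it. Your route instead pushes $\mathcal{L}^{\lambda,*}$ and $\mathcal{R}^\lambda$ into a single composite observable $\mathcal{O}_y^\lambda$ and applies Lemma~\ref{lemma:eff-be-op-bip} once against the fixed state $\rho_{a|x}^\lambda$; the entire burden then rests on the block-encoding bookkeeping you flag, namely that conjugating the block-encoding of $P(\{C_{c|z}^\lambda\})$ by the Stinespring dilation unitary of $\Tilde{B}_y^\lambda$ (absorbing the dilation ancilla into the block-encoding ancillas) yields a QPT block-encoding of $\Tilde{B}_y^{\lambda,*}(P(\{C_{c|z}^\lambda\}))$ with unchanged scale factor and norm bounded by $\|P(\{C_{c|z}^\lambda\})\|_\infty$ (unital CP maps being contractive). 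This is legitimate, and notably the paper itself already asserts and uses exactly this block-encodability of $B_y$ in step 3 of its proof, so you are not assuming anything beyond what the paper relies on. What each approach buys: yours is shorter and dispenses with Lemma~\ref{lem:qpt-impl-diagonals} entirely; the paper's keeps the IND-CPA distinguisher in the concrete "prepare a sub-normalised state, then estimate a fixed observable" form and produces the diagonal-implementation lemma as a by-product of independent interest. Both ultimately reduce to the same fact: a QPT procedure whose only $y$-dependence is through $\enc(y)$ estimates the target quantity to any inverse-polynomial precision, so IND-CPA forces the difference to be negligible.
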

\begin{proof}
Consider the adjoint of the quantum instrument, we will prove the following equivalent statement:
\begin{align*}
    \left| \Tr\left[ \mathcal{L}^{\lambda,*} \Tilde{B}_{y}^{\lambda,*}\left(P(\{C_{c|z}^\lambda \})\right) \mathcal{R}^\lambda \rho_{a|x}^\lambda \right] - \Tr\left[ \mathcal{L}^{\lambda,*} \Tilde{B}_{y'}^{\lambda,*}\left(P(\{C_{c|z}^\lambda \})\right) \mathcal{R}^\lambda \rho_{a|x}^\lambda \right] \right| \leq \textsf{negl}(\lambda).
\end{align*}
To simplify the notation in the proof, we drop the label $\lambda$, and we use the shortcut notation
 $\rho^\lambda_{a|x} = \rho$ and
$\Tilde{B}_y^{\lambda,*}(P\{C^\lambda_{c|z}\}) = B_y$.
The proof will follow three steps:
\begin{enumerate}
    \item decompose bounded maps into completely positive maps;
    \item sequentially apply Lemma~\ref{lemma:eff-be-op-bip};
    \item use Lemma~\ref{lem:qpt-impl-diagonals} to efficiently estimate the results.
\end{enumerate}

1. Every bounded map $\mathcal{L}^* B \mathcal{R}$ can be decomposed in completely positive maps $\mathcal{V}^* B \mathcal{V}$. Consider the following operators:
\begin{align*}
    & \mathcal{V}_1 = \mathcal{L} -i \mathcal{R}, && \mathcal{V}_3 = \mathcal{L} + i \mathcal{R},\\
    & \mathcal{V}_2 = \mathcal{L} - \mathcal{R}, &&\mathcal{V}_4 = \mathcal{L} + \mathcal{R}.
\end{align*}
A simple computation leads to the following decomposition 
\begin{equation*}
    \mathcal{L}^* B \mathcal{R} = \frac{1}{4} \sum_{j=1}^4 i^j \mathcal{V}_j^* B \mathcal{V}_j
\end{equation*}
which is known in the literature as Wittstock’s decomposition theorem. Using the triangular inequality we can verify that
    \begin{equation*}
        \sum_{j=1}^4 \left| 
        i^j \left[ \Tr\left(\sigma \mathcal{V}_j^* B_y \mathcal{V}_j \right) -
        \Tr\left(\sigma \mathcal{V}_j^* B_{y'} \mathcal{V}_j \right) \right]
        \right|
        \leq 
          \sum_{j=1}^4\left| 
        \Tr\left(\sigma \mathcal{V}_j^* B_y \mathcal{V}_j \right) -
        \Tr\left(\sigma \mathcal{V}_j^* B_{y'} \mathcal{V}_j \right) 
        \right|.
    \end{equation*}
Therefore, to prove theorem \ref{th:3pl-ind-cpa} it is sufficient to prove that 
    \begin{equation*}
    \left| 
        \Tr\left(\sigma \mathcal{V}_j^* B_y \mathcal{V}_j \right) -
        \Tr\left(\sigma \mathcal{V}_j^* B_{y'} \mathcal{V}_j \right) 
        \right|
        \leq \eta(\lambda) \qquad \forall j.
    \end{equation*}

2. First apply Lemma \ref{lemma:eff-be-op-bip} considering $\mathcal{M}= \mathcal{V}$ and $\rho = \sigma \mathcal{V}^* B$; the theorem applies because $\| \mathcal{V}\|_\infty = O(1)$ and $\|\rho\|_1 = \|\sigma \mathcal{V}^* B\|_1 \leq \|B\|_\infty \|\mathcal{V}^*\|_\infty \|\sigma\|_1 = O(1)$ thanks to Holder's inequality :
\begin{align*}
    \left| \Tr\left( \sigma \mathcal{V}^* B \mathcal{V}\right) - \Tr\left( \sigma \mathcal{V}^* B \left(\sum_i\gamma_i N_{\gamma_i}\right) \right)
    \right| \leq \frac{1}{\mathsf{poly}(\lambda)}.
\end{align*}

Then, apply the theorem again, considering $\mathcal{M}= \mathcal{V}^*$ and $\rho = B \left(\sum_i\gamma_i N_{\gamma_i}\right) \sigma$.
Recall that if $\mathcal{M}$ has a QPT block encoding $U$, then $\mathcal{M}^*$ also has a QPT block encoding $U^*$.
Equivalently, if $\{N_{\gamma_i}\}_{\gamma_i}$ is the QPT-measurable POVM associated to $\mathcal{V}$ via Lemma \ref{lemma:eff-be-op-bip}, the QPT-measurable POVM associated to $\mathcal{V}^*$ will be $\{N_{\gamma_i}^*\}_{\gamma_i^*}$. We obtain the following:
\begin{align*}
    \left| \Tr\left(B \left(\sum_i\gamma_i N_{\gamma_i}\right) \sigma \mathcal{V}^*  \right) - 
    \Tr\left(B \left(\sum_i\gamma_i N_{\gamma_i}\right) \sigma  \left(\sum_j\gamma_j^* N_{\gamma_j}^*\right)  \right)
    \right| \leq \frac{1}{\mathsf{poly}(\lambda)}.
\end{align*}
The two statements above taken together imply 
\begin{align*}
    \left| \Tr\left( \sigma \mathcal{V}^* B_y \mathcal{V}\right) - \Tr\left(\sigma  \left(\sum_j\gamma_j^* N_{\gamma_j}^*\right) B_y\left( \sum_i\gamma_i N_{\gamma_i}\right)  \right)
    \right| \leq \frac{2}{\mathsf{poly}(\lambda)}.
\end{align*}
Denote the second trace of the equation above as $\Delta_y$. Then we can show that
\begin{align*}
     | \Tr\left(\sigma \mathcal{V}^* B_y \mathcal{V} \right) &-
         \Tr\left(\sigma \mathcal{V}^* B_{y'} \mathcal{V} \right) 
        | \\
        &= \left| \Tr\left(\sigma \mathcal{V}^* B_y \mathcal{V} \right) - \Delta_y + \Delta_y -
        \Tr\left(\sigma \mathcal{V}^* B_{y'} \mathcal{V} \right)  + \Delta_{y'} - \Delta_{y'}
        \right| \\
        &\leq \frac{4}{\mathsf{poly}(\lambda)} + \left|
        \Delta_y - \Delta_{y'}
        \right|.
\end{align*}

3. It remains to demonstrate that $\Delta_y$ can be efficiently estimated.
Recall that without loss of generality, we can model any efficient measurement $\{N_{\gamma_i}\}_{\gamma_i}$ as a QPT unitary $\mathcal{U}$ followed by a measurement in the computational basis, $N_{\gamma_i} = \mathcal{U} \ket{i}\bra{i} \mathcal{U}^*$. Then we can write
\begin{equation*}
    \sum_i \gamma_i N_{\gamma_i} = \mathcal{U} \left(\sum_i \gamma_i \ket{i}\bra{i} \right) \mathcal{U}^* = \mathcal{U} D \mathcal{U}^*
\end{equation*}
and consequently
\begin{equation*}
    \Delta_y = \Tr\left[
    \mathcal{U} D \mathcal{U}^* \sigma \mathcal{U} D^* \mathcal{U}^* B_y
    \right].
\end{equation*}
Recall that $B_y = \Tilde{B}_y^{\lambda,*}(P\{C^\lambda_{c|z}\})$ is not a physical operation, but has a QPT implementable block-encoding. We can apply the result of Lemma~\ref{lem:qpt-impl-diagonals}, to establish that $\Delta_y$ can be efficiently estimated up to inverse polynomial precision; this is true for every polynomial. Hence, by the IND-CPA security of the scheme the term $| \Delta_y - \Delta_{y'}|$ is negligible.
This concludes the proof of the theorem.
\end{proof}

\begin{remark}
    Note that Theorem~\ref{th:3pl-ind-cpa} is formulated for the original (unpurified) description of the prover's strategy in the compiled game, before the purification described in Section~\ref{subsec:eff-quantum-str} and Appendix~\ref{app:homomorphism}.
    The statement of the theorem holds equivalently for the purified strategy, \emph{i.e.}, pure states $\rho_{a|x}^\lambda$, purified instruments $\tilde{B}_{b|y}^\lambda$, and purified measurements (PVMs) $C_{c|z}^\lambda$, with one small modification:
    the operators $\mathcal{L}^\lambda$ and $\mathcal{R}^\lambda$ are restricted to live in the subalgebra of operators generated by the operators $\tilde{B}_{b|y}^{\lambda,\ast}(C_{c|z}^\lambda)$.
    This restriction is sufficient to prevent the operators $\mathcal{L}^\lambda, \mathcal{R}^\lambda$ from acting non-trivially on any wires that the unpurified prover would not have had access to.
    The claim then directly reduces to the statement of Theorem~\ref{th:3pl-ind-cpa} by tracing out the additional wires added during the purification as early as possible, as they are never non-trivially acted on anymore later on in the experiment.
\end{remark}

\subsubsection{More than three players}

Consider the quantum compiled correlations for $\mathsf{k}$ players described in Section~\ref{subsec:eff-quantum-str}. What follows is a generalisation of Theorem~\ref{th:3pl-ind-cpa} for such correlations.
We use $\mathop{\bigcirc}_{i=1}^n f_i = f_n \circ \dots \circ f_1$ as a compact notation for the composition of maps.
\begin{theorem}\label{th:kpl-ind-cpa}
    Consider a $\mathsf{k}$-partite quantum compiled strategy, as in Section~\ref{subsec:eff-quantum-str}.
    Consider operators $\mathcal{L^\lambda, R^\lambda} \in \mathsf{B}(\mathcal{H}^\lambda)$ with $\sup_\lambda \|\mathcal{L}^\lambda\|_\infty=O(1)$ and $\sup_\lambda \|\mathcal{R}^\lambda\|_\infty=O(1)$ that have a QPT block-encoding, and non-commuting polynomials $P(\{ \cdot \})$.
    The following statement is true for all $i \in [2, \mathsf{k}-1]$ :
\begin{align*}
    \Bigg|
    &\sum_{a_i} \Tr\left[P\left(\left\{ \left(\mathop{\bigcirc}_{j=i+1}^{\mathsf{k}-1}I^{(j),\lambda}_{a_j|x_j}\right)^* ( M_{a_\mathsf{k}|x_\mathsf{k}}^{(\mathsf{k}),\lambda}) \right\}\right) \Tilde{I}_{a_i|x_i}^{(i),\lambda} \left(\mathcal{R}^\lambda \left( \mathop{\bigcirc}_{m=2}^{i-1} I^{(m),\lambda}_{a_m|x_m} \right)(\rho_{a_1|x_1}^\lambda) \mathcal{L}^{\lambda,*} \right) \right] \\
    &- \sum_{a_i'} \Tr\left[P\left(\left\{ \left(\mathop{\bigcirc}_{j=i+1}^{\mathsf{k}-1}I^{(j),\lambda}_{a_j|x_j}\right)^* ( M_{a_\mathsf{k}|x_\mathsf{k}}^{(\mathsf{k}),\lambda}) \right\}\right) \Tilde{I}_{a_i'|x_i'}^{(i),\lambda} \left(\mathcal{R}^\lambda \left( \mathop{\bigcirc}_{m=2}^{i-1} I^{(m),\lambda}_{a_m|x_m} \right)(\rho_{a_1|x_1}^\lambda) \mathcal{L}^{\lambda,*} \right) \right]  
    \Bigg| \\
    &\leq \textsf{negl}(\lambda).
\end{align*}
\end{theorem}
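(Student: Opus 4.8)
The plan is to reduce the $\mathsf{k}$-partite statement to the already-established tripartite Theorem~\ref{th:3pl-ind-cpa} by absorbing the rounds \emph{before} round $i$ into an effective state, and the rounds \emph{after} round $i$ into an effective observable. The key realisation is that the tripartite proof never uses that the observable is literally a polynomial in last-round PVM elements; it only uses that it is a bounded operator of norm $O(1)$ carrying a QPT block-encoding. Once the past and future rounds are collapsed into objects of the right type, the $\mathsf{k}$-partite no-signalling claim for round $i$ becomes a verbatim instance of the tripartite claim.

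First I would fix the round index $i \in [2, \mathsf{k}-1]$ and regroup the interaction into three blocks relative to round $i$. For the past, I define the effective sub-normalised state $\rho' := \left( \mathop{\bigcirc}_{m=2}^{i-1} I^{(m),\lambda}_{a_m|x_m} \right)(\rho_{a_1|x_1}^\lambda)$. Since it is the image of a QPT-preparable state under a finite composition of QPT-implementable instrument branches, it is itself QPT-preparable with trace bounded by $1$ uniformly in $\lambda$, exactly as discussed in Section~\ref{subsec:eff-quantum-str}. For the future, I would show that the effective observable $\mathcal{O}^\lambda := P(\{(\mathop{\bigcirc}_{j=i+1}^{\mathsf{k}-1}I^{(j),\lambda}_{a_j|x_j})^*(M^{(\mathsf{k}),\lambda}_{a_\mathsf{k}|x_\mathsf{k}})\})$ is QPT block-encodable with $O(1)$ norm. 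The final measurement operators $M^{(\mathsf{k}),\lambda}_{a_\mathsf{k}|x_\mathsf{k}}$ carry a trivial QPT block-encoding, each adjoint instrument has the explicit circuit form $\Tilde{I}_a^*(M) = U(\Pi_a \otimes V_a M V_a^*)U^*$ of Eq.~\ref{eq:adj-insrument} built from efficient unitaries and a computational-basis projection, and is a contractive $*$-homomorphism by Lemma~\ref{lem:homomorphism}. Conjugation by an efficient unitary, tensoring with a projection, and multiplication all preserve QPT block-encodability by Lemma~\ref{lem:poly-QPT-be}, so a straightforward induction on the number of future rounds keeps each pulled-back operator block-encodable and uniformly bounded; a final application of Lemma~\ref{lem:poly-QPT-be} transfers this to the polynomial $\mathcal{O}^\lambda$.

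With these reductions in place, the $\mathsf{k}$-partite claim is exactly the statement of Theorem~\ref{th:3pl-ind-cpa} under the identifications $\rho_{a|x}^\lambda \mapsto \rho'$, $\Tilde{B}_{b|y}^\lambda \mapsto \Tilde{I}^{(i),\lambda}_{a_i|x_i}$, the block-encodable observable $P(\{C_{c|z}^\lambda\}) \mapsto \mathcal{O}^\lambda$, the outcome sum over $b$ replaced by the sum over $a_i$, and inputs $y,y' \mapsto x_i,x_i'$; the operators $\mathcal{L}^\lambda,\mathcal{R}^\lambda$ play the identical role. I would spell out that the three-step argument of Theorem~\ref{th:3pl-ind-cpa}, namely Wittstock decomposition of $\mathcal{L}^{\lambda,*}\,\cdot\,\mathcal{R}^\lambda$ into CP maps, two sequential applications of Lemma~\ref{lemma:eff-be-op-bip} to turn $\mathcal{V}$ and $\mathcal{V}^*$ into efficient measurements, and efficient estimation of $\Delta_y$ via Lemma~\ref{lem:qpt-impl-diagonals} followed by IND-CPA, applies unchanged because it only ever invokes the block-encodability of $\Tilde{B}_y^{\lambda,*}(P(\{C\}))$, a role now filled by $\Tilde{I}^{(i),\lambda,*}_{y}(\mathcal{O}^\lambda)$. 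Crucially, the no-signalling is taken with respect to the round-$i$ input $x_i$, which is precisely the encrypted input the QPT prover cannot distinguish, so IND-CPA security delivers the final negligible bound.

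I expect the main obstacle to lie in the block-encodability bookkeeping for the future observable: one must verify that pulling a QPT-measurable observable back through a whole chain of adjoint instruments, each carrying an ancilla projection and outcome-conditioned unitaries, simultaneously keeps the operator norm uniformly bounded (using contractivity of $*$-homomorphisms) and the block-encoding QPT-implementable, and that both properties survive the final polynomial. This is conceptually routine given the explicit description in Eq.~\ref{eq:adj-insrument} and the closure properties of Lemma~\ref{lem:poly-QPT-be}, but it is where all the care is concentrated; everything else is a clean relabelling that collapses the multipartite problem onto the tripartite one, which is why, as the paper anticipates, the extension from three to more players follows directly.
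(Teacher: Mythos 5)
Your proposal is correct and follows essentially the same route as the paper: the paper's proof of Theorem~\ref{th:kpl-ind-cpa} likewise reduces to the tripartite case by setting the effective state to $\bigl(\mathop{\bigcirc}_{m=2}^{i-1} I^{(m),\lambda}_{a_m|x_m}\bigr)(\rho_{a_1|x_1}^\lambda)$ and the effective measurements to $\bigl(\mathop{\bigcirc}_{j=i+1}^{\mathsf{k}-1}I^{(j),\lambda}_{a_j|x_j}\bigr)^*(M^{(\mathsf{k}),\lambda}_{a_\mathsf{k}|x_\mathsf{k}})$ and then invoking the proof of Theorem~\ref{th:3pl-ind-cpa}. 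Your additional bookkeeping — verifying QPT-preparability of the collapsed state and QPT block-encodability with uniformly bounded norm of the pulled-back observable via Eq.~\ref{eq:adj-insrument} and Lemma~\ref{lem:poly-QPT-be} — is exactly the detail the paper leaves implicit, and is a welcome sharpening rather than a deviation.
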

\begin{proof}
We can directly apply the proof of Theorem \ref{th:3pl-ind-cpa}, where for all $i\in[2,\mathsf{k}-1]$ we consider the sub-normalised states 
\begin{equation*}
    \rho_{a|x}^\lambda = \rho_{a_1, \dots, a_{i-1}|x_1, \dots, x_{i-1}}^\lambda = \left( \mathop{\bigcirc}_{m=2}^{i-1} I^{(m),\lambda}_{a_m|x_m} \right)(\rho_{a_1|x_1}^\lambda) 
\end{equation*}
and measurements
\begin{equation*}    C_{c|z}^\lambda = M_{a_{i+1},\dots, a_\mathsf{k}|x_{i+1},\dots, x_\mathsf{k} }^\lambda =\left(\mathop{\bigcirc}_{j=i+1}^{\mathsf{k}-1}I^{(j),\lambda}_{a_j|x_j}\right)^* ( M_{a_\mathsf{k}|x_\mathsf{k}}^{(\mathsf{k}),\lambda}).
\end{equation*}
\end{proof}
 
\newpage
\section{Efficient algebraic strategies for compiled games}
\label{sec:algebraic-strategies}

In \cite{KMPSW24bound} the authors show that for all bipartite games, the constraints arising from security of the QFHE scheme - in the limit of the security parameter going to infinity - impose strong conditions on the (asymptotic) states generated by the first round of the compiled single prover.
In particular, when summed over the output label $a$, the asymptotic states coincide for different inputs $x$.
In this sense, the states are \emph{no-signalling}.
We aim to find a similar statement for all $\mathsf{k}$-players games.

The first step is to properly define this limit, however for different values of $\lambda$, the strategies are defined on different Hilbert spaces $\mathcal{H}^\lambda$.
The solution adopted by \cite{KMPSW24bound} is to consider a single unital C$^*$-algebra $\mathscr{A}$, that doesn't depend on the security parameter. The measurements are positive elements of this algebra $\mathfrak{b}_{b|y} \in \mathscr{A}_+$, such that they sum to the identity of the algebra $\sum_b \mathfrak{b}_{b|y} = \mathfrak{1}$; the sub-normalised states $\rho^\lambda_{a|x}$ are sub-normalised algebraic states, i.e. positive functionals $\phi^\lambda_{a|x}:\mathscr{A}\to\mathbb{C}$ with bounded trace $\|\phi^\lambda_{a|x} \|\leq 1$ such that $\|\sum_a \phi^\lambda_{a|x} \| = 1$. Algebraic correlations are formulated as
\begin{equation*}
    p_\lambda(a,b|x,y) = \phi^\lambda_{a|x} (\mathfrak{b}_{b|y}).
\end{equation*}

For every fixed $\lambda$ the strategy is characterised by a different algebraic state, and taking the limit $\lambda \to \infty$ corresponds to taking the limit of a sequence of normalised functionals on a C$^*$-algebra $\mathscr{A}$.
The space of these states is compact, hence the limiting object exists. 
In this section we will make this intuition more concrete, and extend it to the case of more than $2$ players.

\subsection{Preliminaries on universal C*-algebras and algebraic strategies}\label{subsec:algebras}

Readers unfamiliar with the C*-algebraic framework might wonder why this level of abstraction is even necessary. The advantage is quite subtle, and lies on the fact that powerful structures emerge more naturally.
A key example are universal C$^*$-algebras, which are C$^*$-algebras that are described solely by the means of generators and relations. They serve the purpose to capture abstract properties shared by a class of C$^*$-algebras, and will allow us to translate the analysis of the limit on many different $\mathcal{H}^\lambda$ - that are still C$^*$-algebras - to a single, concretely defined space.

A universal C$^*$-algebra does not necessarily exist for any given set of generators and relations. To show its existence, two necessary and sufficient conditions must be implied by the choice of relations: (i) there exists a concrete realisation of the generators as bounded operators on Hilbert space that satisfy the given relations, and (ii) the relations must effectively impose a uniform bound on the norms of all generators.
For a detailed discussion of the construction and existence of universal C$^*$-algebras, see~\cite{blackadar2006operator}.

\begin{definition}[Universal C$^*$-algebra]\label{def:universal_c_star_algebra}
    Let $G = \{ x_i \; | \; i \in I \}$ be a set of generators, $P(G)$ be the free $^*$-algebra of non-commutative polynomials in elements of $G$ and their adjoints, and $R$ be a set of relations, often of the form
    \begin{align*}
        \| p(x_{i_1}, \dots, x_{i_n}, x^*_{i_1}, \dots, x^*_{i_n}) \| \leq \eta,
    \end{align*}
    where $p \in P(G)$ and $\eta \geq 0$.
    A representation of $(G|R)$ into a C$^*$-algebra $\mathscr{A}$ is a $^*$-representation $\pi : P(G) \to \mathscr{A}$ such that $\pi(G) \subseteq \mathscr{A}$ satisfies the relations in $R$.
    Now define
    \begin{align*}
        \|\mathfrak{a}\| = \sup \{ \|\pi(\mathfrak{a})\| \; | \; \pi \text{ is a representation of } (G|R) \},
    \end{align*}
    for all $\mathfrak{a} \in P(G)$. If $\|\mathfrak{a}\| < \infty$ for all $\mathfrak{a} \in P(G)$, we call
    \begin{align*}
        C^*(G|R) = \overline{P(G)/\{\|\mathfrak{a}\|=0\}}^{\|\cdot\|}
    \end{align*}
    the universal C$^*$-algebra with generators $G$ and relations $R$.
    In the following, we will always consider unital universal C$^*$-algebras which can be realised by adding an additional generator $\mathds{1}$ with the relations $\mathds{1} = \mathds{1}^* = \mathds{1}^2$ and $\mathds{1} x_i = x_i \mathds{1} = x_i$ for all $i \in I$.
\end{definition}

The previous Definition~\ref{def:universal_c_star_algebra} of universal C$^*$-algebras quite straightforwardly implies the following Lemma~\ref{lemma:universality_c_star_algebra}, which explains why C$^*$-algebras constructed in this way are called \emph{universal}.

\begin{lemma}[Universality property]\label{lemma:universality_c_star_algebra}
    Let $G = \{x_i \;|\; i \in I \}$ be a set of generators and $R$ be a set of relations such that $C^*(G|R)$ exists. Let $\mathscr{A}$ be a C$^*$-algebra containing elements $G' = \{x_i' \;|\; i \in I \}$ that satisfy the relations $R$. Then there exists a unique $^*$-homomorphism $\theta$ such that
    \begin{align*}
        \theta : C^*(G|R) \to \mathscr{A}, \quad x_i \mapsto x_i', \quad \forall i \in I.
    \end{align*}
\end{lemma}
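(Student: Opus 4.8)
The plan is to exploit the very construction of $C^*(G|R)$ given in Definition~\ref{def:universal_c_star_algebra} as a completion of a quotient of the free $^*$-algebra $P(G)$. First I would observe that the elements $G' = \{x_i'\} \subseteq \mathscr{A}$ induce, by the \emph{universal property of the free $^*$-algebra}, a unique $^*$-homomorphism $\pi_0 : P(G) \to \mathscr{A}$ determined by $x_i \mapsto x_i'$ (and $x_i^* \mapsto x_i'^*$). Since by hypothesis $G'$ satisfies the relations $R$, the image $\pi_0(G')$ satisfies $R$, so $\pi_0$ is precisely one of the representations of $(G|R)$ appearing in the supremum that defines the universal seminorm. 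Consequently, for every $\mathfrak{a} \in P(G)$ we have $\| \pi_0(\mathfrak{a}) \|_\mathscr{A} \leq \sup_\pi \| \pi(\mathfrak{a}) \| = \| \mathfrak{a} \|$, i.e. $\pi_0$ is contractive for the universal seminorm $\| \cdot \|$.

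Next I would pass to the quotient. Contractivity shows that $\pi_0$ annihilates the null ideal $N = \{\mathfrak{a} \in P(G) : \| \mathfrak{a} \| = 0\}$, because $\| \pi_0(\mathfrak{a}) \| \leq \| \mathfrak{a} \| = 0$ forces $\pi_0(\mathfrak{a}) = 0$. Hence $\pi_0$ factors through a well-defined $^*$-homomorphism $\bar{\pi}_0 : P(G)/N \to \mathscr{A}$ that remains contractive with respect to the induced norm. As $\mathscr{A}$ is complete and $P(G)/N$ is dense in $C^*(G|R) = \overline{P(G)/N}^{\|\cdot\|}$, the map $\bar{\pi}_0$ admits a unique continuous extension $\theta : C^*(G|R) \to \mathscr{A}$. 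Continuity guarantees that $\theta$ stays linear, multiplicative and $^*$-preserving — all these identities pass to limits — so $\theta$ is a genuine $^*$-homomorphism, and by construction $\theta(x_i) = x_i'$ for all $i \in I$. (In the unital setting the relations on the adjoined generator $\mathds{1}$ force $\theta(\mathds{1}) = \mathds{1}_\mathscr{A}$.)

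For uniqueness, I would argue that any $^*$-homomorphism $\theta'$ with $\theta'(x_i) = x_i'$ must agree with $\theta$ on the $^*$-subalgebra generated by the images of the generators, namely the image of $P(G)/N$, which is dense in $C^*(G|R)$; since $^*$-homomorphisms between C$^*$-algebras are automatically contractive and hence continuous, the two maps coincide on all of $C^*(G|R)$ by density. The only genuinely delicate point is the very first step, verifying that $\pi_0$ qualifies as an admissible representation in the defining supremum, which hinges entirely on the assumption that $G'$ satisfies \emph{exactly} the relations $R$; once this is secured, the remainder is the routine ``factor through the null ideal, then extend by density and completeness'' argument underlying most universal constructions.
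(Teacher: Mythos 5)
Your proof is correct and is precisely the standard argument that the paper leaves implicit (it only remarks that the lemma follows ``quite straightforwardly'' from Definition~\ref{def:universal_c_star_algebra} and supplies no proof): the induced map on the free $^*$-algebra is an admissible representation in the defining supremum, hence contractive, hence it kills the null ideal and extends by density, with uniqueness again by density. The only cosmetic slip is the phrase ``the image $\pi_0(G')$ satisfies $R$'', which should read $\pi_0(G) = G'$ satisfies $R$.
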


This leaves open the question whether universal C$^*$-algebras exist in the specific cases relevant to our setting.
We start with relations that capture the essential properties of PVMs, a situation that was discussed before, e.g. in \cite{Paddock_2023}.

\begin{lemma}[Universal C$^*$-algebra of PVMs, $\mathscr{A}_{B}^{PVM}$]\label{lem:universal_povm_algebra}
    Let $I_B,O_B$ be finite sets. Let $G = \{ \mathfrak{e}_{b|y} \; | \; y\in I_B, b\in O_B \}$ be a set of generators, and consider the relations $R$ given as
    \begin{enumerate}
        \item $\mathfrak{e}_{b|y} = \mathfrak{e}_{b|y}^*$ \hfill $ \forall y\in I_B, b\in O_B$,
        \item $0 \leq \mathfrak{e}_{b|y} \leq \mathfrak{1}$ \hfill $\forall y\in I_B, b\in O_B$,
        \item $\sum_{b\in O_B} \mathfrak{e}_{b|y} = \mathfrak{1}$ \hfill $\forall y\in I_B$,
        \item $\mathfrak{e}_{b|y}\mathfrak{e}_{b'|y} = \delta_{b,b'}\mathfrak{e}_{b|y}$ \hfill $\forall y\in I_B, b,b'\in O_B$.
    \end{enumerate}
    Then, $C^*(G,R)$ exists and we call it the universal C$^*$-algebra $\mathscr{A}^\text{PVM}_{B}$ of PVMs with inputs $I_B$ and outcomes $O_B$.
    To simplify the notation, we will often refer to it as  $\mathscr{A}_{B}$.
\end{lemma}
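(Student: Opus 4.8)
The plan is to appeal to the general existence criterion for universal C*-algebras recalled just before Definition~\ref{def:universal_c_star_algebra}: it suffices to verify (i) that the relations $R$ admit at least one concrete representation on a Hilbert space, and (ii) that they impose a uniform bound on the norms of the generators, which then propagates to a finite value of $\|\mathfrak{a}\|$ for every $\mathfrak{a} \in P(G)$.

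First I would dispatch the uniform bound (ii). In any representation $\pi$ of $(G\,|\,R)$, relation~2 gives $0 \leq \pi(\mathfrak{e}_{b|y}) \leq \mathds{1}$, whence $\|\pi(\mathfrak{e}_{b|y})\| \leq 1$ for every generator; equivalently, relations~1 and~4 force each $\pi(\mathfrak{e}_{b|y})$ to be a self-adjoint idempotent, i.e.\ an orthogonal projection, whose operator norm is at most $1$. For an arbitrary non-commutative polynomial $p \in P(G)$, writing $p = \sum_m c_m w_m$ with $w_m$ monomials in the generators and their adjoints, submultiplicativity of the C*-norm together with the triangle inequality bounds $\|\pi(p)\| \leq \sum_m |c_m|$, uniformly in $\pi$. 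Hence $\|\mathfrak{a}\| = \sup_\pi \|\pi(\mathfrak{a})\| < \infty$ for all $\mathfrak{a} \in P(G)$, as required.

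Next I would exhibit an explicit representation to settle (i). For each input $y \in I_B$, put $\mathcal{H}_y = \mathbb{C}^{|O_B|}$ with orthonormal basis $\{\ket{b}\}_{b \in O_B}$ and rank-one projections $\ket{b}\bra{b}$. On the tensor product $\mathcal{H} = \bigotimes_{y \in I_B} \mathcal{H}_y$, define $\pi(\mathfrak{e}_{b|y})$ to act as $\ket{b}\bra{b}$ on the $y$-th factor and as the identity on all other factors. A direct check shows that these operators are self-adjoint projections (relations~1 and~4) satisfying $0 \leq \pi(\mathfrak{e}_{b|y}) \leq \mathds{1}$ (relation~2), and that $\sum_{b \in O_B} \pi(\mathfrak{e}_{b|y}) = \mathds{1}$ for each fixed $y$ (relation~3), since the rank-one projections on the $y$-th factor sum to the identity there. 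Because $R$ contains no relations coupling distinct inputs $y$, there is no obstruction to realising all the PVMs simultaneously on independent tensor factors, and each generator maps to a nonzero operator.

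With both conditions verified, the general construction of Definition~\ref{def:universal_c_star_algebra} yields the universal C*-algebra $C^*(G\,|\,R)$, which we denote $\mathscr{A}_B^{\mathrm{PVM}}$ (abbreviated $\mathscr{A}_B$). I do not anticipate a genuine obstacle here: the content is entirely a verification. The only point requiring a little care is ensuring that the norm bound in (ii) extends from the generators to all of $P(G)$ — this is what guarantees that $\|\cdot\|$ is finite-valued and hence defines a genuine C*-seminorm whose separated completion is a C*-algebra — and that the representation in (i) realises relation~3 for every input simultaneously, which the tensor-product construction handles cleanly precisely because no cross-input relations are imposed.
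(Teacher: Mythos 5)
Your proof is correct and takes essentially the same approach the paper uses for the analogous sequential-PVM algebra (the paper states this particular lemma without proof, citing prior work): verify a uniform norm bound from relation~2 and exhibit a concrete representation. Your explicit tensor-product realisation by rank-one projections is just a concrete instance of the paper's ``take any PVMs on a Hilbert space'' argument, so there is nothing further to add.
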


Note that in \cite{KMPSW24bound} the measurements were modelled through the universal C$^*$-algebra of POVMs $\mathscr{A}^{POVM}$, of which $\mathscr{A}^{PVM}$ is a quotient algebra.
However, they also comment that projective measurements are enough, when considering the efficiency constraint (Def. 4.4 in \cite{KMPSW24bound}).

The universal property ensures that every PVM $B_{b|y}^\lambda \in \mathsf{B}(\mathcal{H}^\lambda)$ can be written as a representation of a generator of the universal C$^*$-algebra
\begin{equation}\label{eq:universal-property-POVM}
    \theta_B^\lambda(\mathfrak{e}_{b|y}) = B_{b|y}^\lambda.
\end{equation}
Let us now rephrase the bipartite compiled strategies in the framework of C$^*$-algebras.
\begin{definition}[C$^*$-algebraic compiled strategy for $\mathsf{k}=2$, $\mathcal{A}_\lambda^2$]
Consider the compiled game $\mathcal{G}_\lambda$. In the algebraic framework, a general efficient strategy is characterised by
    \begin{enumerate}
        \item a universal C$^*$ algebra of PVMs $\mathscr{A}_B$;
        \item PVMs, that are positive elements of the algebra $\mathfrak{b}_{b|y} \in \mathscr{A}_B^+$, that sum up to identity $\sum_b \mathfrak{b}_{b|y} = \mathfrak{1}$;
        \item sub-normalised efficient states, which are positive functional on the algebra $\phi^\lambda_{a|x}:\mathscr{A}_B\to\mathbb{C}$, such that $\phi^\lambda_{x} = \sum_a \phi^\lambda_{a|x}$ is a normalised state.
    \end{enumerate}
    These elements produce a bipartite C$^*$-algebraic compiled strategy :
    \begin{equation*}
        p_\lambda(a,b|x,y)= \phi^\lambda_{a|x}(\mathfrak{b}_{b|y}).
    \end{equation*}
\end{definition}

Using the universal property of the C$^*$-algebra of PVMs, we can define the states 
\begin{equation}\label{eq:def-alg-state}
    \phi^\lambda_{a|x}(\cdot ) = \left[ \sigma^\lambda_{a|x} \theta^\lambda(\cdot) \right],
\end{equation}
such that the algebraic correlations match the quantum compiled strategies defined in Section \ref{subsec:eff-quantum-str}:
\begin{equation}\label{eq:2-alg-corr}
    \phi^\lambda_{a|x}(\mathfrak{b}_{b|y})
    = \Tr\left[ \sigma^\lambda_{a|x} \theta^\lambda(\mathfrak{b}_{b|y}) \right]
    = \Tr\left[ \sigma^\lambda_{a|x} B^\lambda_{b|y} \right].
\end{equation}
This equality is represented with a commuting diagram in Fig. \ref{fig:univAlgebras-2players}.

\begin{figure}[htbp]
    \centering
\begin{tikzpicture}
    \node (ABC) at (2.5,2) {\( \mathscr{A}_B \)};
    \node (C) at (5.5,0) {\( \mathbb{C} \)};
    \node (BH2) at (2.5,0) {\( \mathsf{B}(\mathcal{H}^\lambda) \)};

    \draw[->] (ABC) -- (C) node[midway, above,  yshift=5pt] {\( \phi_{a|x}^\lambda \)};
    \draw[->] (ABC) -- (BH2) node[midway, left] {\( \theta^\lambda \)};
    \draw[->] (BH2) -- (C) node[midway,below] {\( \mathsf{Tr}(\sigma_{a|x}^\lambda \cdot) \)};

\end{tikzpicture}
     \caption{Commutative diagram that describes Eq. \ref{eq:2-alg-corr}. Above there is the universal C$^*$-algebra of PVMs with labels $(I_B,O_B)$, below the real experiment living in a concrete Hilbert space $\mathcal{H}^\lambda$. To obtain the correlation we can either directly apply the algebraic state $\phi_{a|x}^\lambda$, or use the universal property of the universal algebra, and then multiply with the physical state $\sigma_{a|x}^\lambda$ and take the trace.}
    \label{fig:univAlgebras-2players}
\end{figure}

\subsection{Universal C*-algebras for multipartite strategies}%
\label{subsec:universal-algebra-multipartite}

To extend this formalism to more than $2$ players, we construct a new universal C$^*$-algebra, that captures the behaviour of sequential projective measurements.
More precisely, it describes a measurement whose input and output labels have a natural bipartition, that we call $B$ and $C$ for obvious reasons.
We call it sequential $B \to C$ because we impose one-way no-signalling from $C$ to $B$, while in the other direction signalling is allowed.
Up to our knowledge, this construction was never studied in the literature, and it might be of independent interest.

\begin{lemma}[Universal C$^*$-algebra of sequential PVMs]\label{lem:universal_sequential_povm_algebra}
    Let $I_B, O_B, I_C, O_C$ be finite sets. Let $G = \{ \mathfrak{f}_{b,c|y,z} \; | \; y\in I_B, z \in I_C, b\in O_B, c\in O_C  \}$ be a set of generators, and consider the relations $R$ given as
    \begin{enumerate}
        \item $\mathfrak{f}_{b,c|y,z} = \mathfrak{f}_{b,c|y,z}^*$ \hfill $\forall y\in I_B, z \in I_C, b\in O_B, c\in O_C$,
        \item $0 \leq \mathfrak{f}_{b,c|y,z} \leq 1$ \hfill $\forall y\in I_B, z \in I_C, b\in O_B, c\in O_C$,
        \item $\mathfrak{f}_{b,c|y,z}  \mathfrak{f}_{b',c'|y,z}= \delta_{b,b'} \delta_{c,c'} \mathfrak{f}_{b,c|y,z}$ \hfill $\forall y\in I_B, z \in I_C, b,b'\in O_B, c,c'\in O_C$,
        \item $\sum_{b\in O_B, c\in O_C} \mathfrak{f}_{b,c|y,z} = \mathds{1}$ \hfill $\forall y\in I_B, z \in I_C$,
        \item $\sum_{c \in O_C} \mathfrak{f}_{b,c|y,z} = \sum_{c \in O_C} \mathfrak{f}_{b,c|y,z'}$ \hfill $\forall y\in I_B, b\in O_B, z,z' \in I_C$.
    \end{enumerate}
    Then, $C^*(G,R)$ exists and we call it the universal C$^*$-algebra of sequential PVMs $\mathscr{A}^\text{PVM}_{B\to C}$.
    To simplify the notation, we will often refer to this as $\mathscr{A}_{B \to C}$.
\end{lemma}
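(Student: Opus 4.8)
The plan is to verify the two conditions from Definition~\ref{def:universal_c_star_algebra} that are necessary and sufficient for the existence of a universal C$^*$-algebra: that the relations $R$ impose a uniform bound on the norms of all generators, and that there exists at least one concrete Hilbert-space representation of $(G|R)$. Once both are in place, $C^*(G,R)$ is well defined as the quotient-completion, and its universal property follows automatically from Lemma~\ref{lemma:universality_c_star_algebra}.

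For the uniform bound, I would observe that relations 1 and 2 directly force $\|\mathfrak{f}_{b,c|y,z}\| \leq 1$ in every representation, since a self-adjoint operator $\mathfrak{f}$ with $0 \leq \mathfrak{f} \leq \mathds{1}$ has operator norm at most $1$. Consequently, for any noncommutative polynomial $\mathfrak{a} \in P(G)$, the triangle inequality together with submultiplicativity of the norm bounds $\|\pi(\mathfrak{a})\|$ by the quantity obtained by replacing every generator (and its adjoint) with $1$ and taking absolute values of the coefficients. This bound is independent of the representation $\pi$, so $\|\mathfrak{a}\| = \sup_\pi \|\pi(\mathfrak{a})\| < \infty$ for all $\mathfrak{a} \in P(G)$, as required.

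For the existence of a representation, I would exhibit the product (non-signalling) construction. Choose finite-dimensional Hilbert spaces $\mathcal{H}_B, \mathcal{H}_C$ together with families of PVMs $\{P_{b|y}\}_{b \in O_B}$ on $\mathcal{H}_B$ (for each $y \in I_B$) and $\{Q_{c|z}\}_{c \in O_C}$ on $\mathcal{H}_C$ (for each $z \in I_C$); these exist for any finite label sets, e.g. by taking computational-basis projections on $\mathbb{C}^{|O_B|}$ and $\mathbb{C}^{|O_C|}$. Define $\pi(\mathfrak{f}_{b,c|y,z}) = P_{b|y} \otimes Q_{c|z}$ on $\mathcal{H}_B \otimes \mathcal{H}_C$. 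Relations 1--4 are then immediate from the tensor-product structure: self-adjointness and positivity pass through tensor products, the product factorises as $(P_{b|y}P_{b'|y}) \otimes (Q_{c|z}Q_{c'|z}) = \delta_{b,b'}\delta_{c,c'}\, P_{b|y}\otimes Q_{c|z}$, and summing over all outcomes yields $(\sum_b P_{b|y}) \otimes (\sum_c Q_{c|z}) = \mathds{1}\otimes\mathds{1}$. The only genuinely new relation is the one-way no-signalling condition 5, the defining feature of $\mathscr{A}_{B\to C}$; in the product representation it holds because summing over $c$ factors the identity out of the $C$-slot,
\begin{align*}
    \sum_{c \in O_C} \pi(\mathfrak{f}_{b,c|y,z}) = P_{b|y} \otimes \Big(\sum_{c \in O_C} Q_{c|z}\Big) = P_{b|y} \otimes \mathds{1},
\end{align*}
which is manifestly independent of $z$. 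This furnishes a valid representation and completes the proof of existence.

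I expect the main point to be conceptual rather than a technical obstacle: the verification of relation 5, which is what distinguishes this sequential algebra from an ordinary two-party PVM algebra and encodes the temporal one-way no-signalling structure. The product realisation satisfies it automatically, but it is worth emphasising that genuinely sequential realisations — where the second measurement is allowed to depend on the first outcome, so that $\mathfrak{f}_{b,c|y,z}$ is not of product form — also satisfy all five relations. This reflects the fact that $\mathscr{A}_{B\to C}$ is strictly richer than a tensor-product algebra, and it is precisely this richness that will later make it the right object to capture sequential strategies in the compiled game.
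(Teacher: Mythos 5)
Your proof is correct and follows the same overall strategy as the paper: verify that the relations force a uniform norm bound on the generators (hence on all of $P(G)$), and exhibit at least one concrete representation. The one genuine difference is the choice of witness representation. You use the tensor product $\pi(\mathfrak{f}_{b,c|y,z}) = P_{b|y} \otimes Q_{c|z}$ of two independent PVMs, which is elementary, self-contained, and verifies all five relations by direct computation. The paper instead takes $\mathfrak{f}_{b,c|y,z} \mapsto B_{b|y}^*(M_{c|z})$, where $B_{b|y}$ is an efficient quantum instrument whose adjoint is a *-homomorphism (Lemma~\ref{lem:homomorphism}) and $\{M_{c|z}\}$ is a PVM. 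The paper's choice buys alignment with the intended use: the representations $\theta_{BC}^\lambda$ that actually arise later in the argument are exactly of this sequential form, so the existence proof doubles as a demonstration that the physically relevant operators satisfy the relations. Your tensor-product realisation is a special case of that (take the instrument to be a non-disturbing projective measurement on a separate factor), and your closing remark correctly identifies that $\mathscr{A}_{B\to C}$ admits genuinely non-product representations — which is the point of the construction. Either representation suffices for the lemma as stated.
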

\begin{proof}
    To see that there exists a representation of $(G,R)$, we construct a concrete realisation of these generators and relations as bounded operators on Hilbert space. Consider any PVMs $\{M_{c|z}\} \subseteq \mathsf{B}(\mathcal{H})$ on some Hilbert space $\mathcal{H}$ with inputs $I_C$ and outputs $O_C$, and efficient quantum instruments $B_{b|y} : \mathsf{B}(\mathcal{K}) \to \mathsf{B}(\mathcal{H})$ with inputs $I_B$ and outputs $O_B$. Recall that $B_{b|y}^*$ is a *-homomorphism; a simple calculation then shows that the operators $\{B_{b|y}^*(M_{c|z})\} \in \mathsf{B}(\mathcal{K})$ give rise to a representation of $(G,R)$. 

    Clearly, the relations further impose a uniform bound on the norms of all generators, as they are all positive and bounded by the identity. This implies the finiteness of $\| \cdot \|$ on $P(G)$ and therefore the existence of $C^*(G|R)$.
\end{proof}

Given a set of PVMs $B^{\lambda,*}_{b|y} (C_{c|z}^\lambda)= M_{bc|yz}^\lambda \in \mathsf{B}(\mathcal{H}^\lambda)$ that respects the relations of Lemma~\ref{lem:universal_sequential_povm_algebra}, the universality property ensures us that we can write
\begin{equation}\label{eq:universal-property-seqPOVM-gen}
    \theta_{BC}^\lambda (\mathfrak{f}_{bc|yz}) = M_{bc|yz}^\lambda = B^{\lambda,*}_{b|y} (C_{c|z}^\lambda).
\end{equation}

Now consider the subset of elements of the universal C$^*$-algebra $\mathscr{A}_{B\to C}$ from Lemma~\ref{lem:universal_sequential_povm_algebra} obtained by fixing the labels $b$ and $y$; the elements of this set clearly form a subalgebra and satisfy the relations of the universal C$^*$-algebra $\mathscr{A}_C$ from Lemma~\ref{lem:universal_povm_algebra}.
The only non-trivial condition to check is that the sum over $c$ gives the identity.
To see this, note that with the notation $\mathfrak{f}_{b|y} = \sum_{c\in O_C} \mathfrak{f}_{b,c|y,z}$ it holds that
\begin{align*}
    \mathfrak{f}_{b|y} \mathfrak{f}_{b,c|y,z} = \sum_{c'\in O_C} \mathfrak{f}_{b,c'|y,z} \mathfrak{f}_{b,c|y,z} = \mathfrak{f}_{b,c|y,z},
\end{align*}
for all $c$ and $z$, and therefore the projector $\mathfrak{f}_{b|y}$ in $\mathscr{A}_{B\to C}$ acts as the identity in the subalgebra.
Hence, its universality property implies the existence of a *-homomorphism $T_{b|y}$ such that
\begin{equation*}
    T_{b|y}(\mathfrak{e}_{c|z}) = \mathfrak{f}_{bc|yz}.
\end{equation*}
This map is completely positive; furthermore it is sub-unital and $\sum_b T_{b|y} = T_y$ is unital.

Note that this map, together with $B_{b|y}^{\lambda,*}$, intertwines the representations of the two universal algebras on the generators:
\begin{align}\label{eq:comm-univ-rep}
    \theta_{BC}^\lambda(\mathfrak{f}_{bc|yz})
    = \theta_{BC}^\lambda \circ T_{b|y}(\mathfrak{e}_{c|z})
    =B_{b|y}^{\lambda,*}\circ \theta_C^\lambda(\mathfrak{e}_{c|z}),
\end{align}
and given that the maps are *-homomorphisms, the commutation extends to all of the elements of the algebras. Fig.~\ref{fig:univAlgebras-3players} provides a pictorial representation of these commutation relations.

\begin{figure}[htbp]
    \centering
\begin{tikzpicture}
    \node (AC) at (0,2) {\( \mathscr{A}_C \)};
    \node (ABC) at (2.5,2) {\( \mathscr{A}_{B \to C} \)};
    \node (C) at (5.5,0) {\( \mathbb{C} \)};
    \node (BH1) at (0,0) {\( \mathsf{B}(\mathcal{H}^\lambda) \)};
    \node (BH2) at (2.5,0) {\( \mathsf{B}(\mathcal{H}^\lambda) \)};

    \draw[->] (AC) -- (ABC) node[midway, above] {\( T_{b|y} \)};
    \draw[->] (ABC) -- (C) node[midway, above,  yshift=5pt] {\( \phi_{a|x}^\lambda \)};
    \draw[->] (AC) -- (BH1) node[midway, left] {\( \theta^\lambda_{C} \)};
    \draw[->] (ABC) -- (BH2) node[midway, left] {\( \theta^\lambda_{BC} \)};
    \draw[->] (BH1) -- (BH2) node[midway,below] {\( B^{\lambda,*}_{b|y} \)};
    \draw[->] (BH2) -- (C) node[midway,below] {\( \mathsf{Tr}(\sigma_{a|x}^\lambda \cdot) \)};

\end{tikzpicture}
     \caption{Commutative diagram that characterises the algebraic and the quantum tripartite compiled efficient strategies. The upper part of the diagram is purely algebraic, the part below is a concrete realisation in a Hilbert space $\mathcal{H}^\lambda$.
    All directed paths starting from $\mathscr{A}_C$ ending up in $\mathbb{C}$ are equivalent.}
    \label{fig:univAlgebras-3players}
\end{figure}

As a physical interpretation of above abstract constructions, the universal C*-algebra $\mathscr{A}_C$ can be thought of as ``choosing Charlie's labels'', and the maps $T_{b|y}$ can be though of as ``choosing Bob's labels'', moving the entire experiment into the state $\phi^\lambda_{a|x}$ which describes Alice's system.
This is reflected by the $\lambda$-independence of the Charlie's measurements and Bob's maps, and the $\lambda$-dependence of Alice's state.
As the universal objects in $\mathscr{A}_C$ and the universal maps $T_{b|y}$ know nothing about the concrete experiment beyond the labels, but forward all information about the labels $a,b,y,z$ to Alice, one can think of Alice's state as a simulation of all possible branches of the experiment (labelled by $a,b,y,z$), and the elements of $\mathscr{A}_{B\to C}$ of picking the correct branch that corresponds to Bob's and Charlie's choices.

Note that these constructions can be analogously done for POVMs instead of PVMs and for sequential POVMs instead of sequential PVMs. In this case, the maps $T_{b|y}$ between the universal C*-algebras loose their *-homomorphic structure and becomes only completely positive maps instead.
More details on these alternative constructions can be found in Appendix~\ref{app:alternative-proof}.

The correlations arising from tripartite quantum efficient strategies described in Section \ref{subsec:eff-quantum-str} corresponds to the concatenation of the two arrows at the bottom of Fig. \ref{fig:univAlgebras-3players}, which are respectively the adjoint of the instrument $B_{b|y}^{\lambda,*}$ and the trace with the state $\Tr(\sigma_{a|x}^\lambda \cdot)$.
The corresponding algebraic strategy is the concatenation of the two arrows above, i.e. the map between the universal algebras $T_{b|y}$ and the state $\phi_{a|x}^\lambda$.

\begin{definition}[C$^*$-algebraic compiled strategy for $\mathsf{k}=3$, $\mathcal{A}_\lambda^3$] Consider the compiled game $\mathcal{G}_\lambda$. A general algebraic efficient strategy is characterised by 
    \begin{enumerate}
        \item a universal $\text{C}^*$ algebra of PVMs $\mathscr{A}_C$, and a universal $\text{C}^*$ algebra of sequential PVMs $\mathscr{A}_{B\to C}$;
        \item PVMs, that are positive elements of the algebra $\mathfrak{c}_{c|z} \in \mathscr{A}_C^+$, that sums up to identity $\sum_c \mathfrak{c}_{c|z} = \mathfrak{1}$;
        \item a *-homomorphism $T_{b|y}: \mathscr{A}_C \to \mathscr{A}_{B\to C}$, that maps generator to generator;
        \item sub-normalised efficient states, which are positive functional on the algebra $\phi^\lambda_{a|x}:\mathscr{A}_{B\to C}\to\mathbb{C}$, such that $\phi^\lambda_{x} = \sum_a \phi^\lambda_{a|x}$ is a normalised state.
    \end{enumerate}
    All tripartite algebraic strategies have the following form
    \begin{equation*}
        p_\lambda(a,b,c|x,y,z) = \phi^\lambda_{a|x}\circ T_{b|y}(\mathfrak{c}_{c|z}).
    \end{equation*}
\end{definition}

Using the same definition of the algebraic states given in Eq. \ref{eq:def-alg-state}, we can identify again the quantum and algebraic correlations
\begin{equation}
    \phi^\lambda_{a|x}\circ T_{b|y}(\mathfrak{c}_{c|z})
    = \Tr\left[ \sigma^\lambda_{a|x} \theta^\lambda_{BC}(T_{b|y}(\mathfrak{c}_{c|z})) \right]
    = \Tr\left[ \sigma^\lambda_{a|x} B^{\lambda,*}_{b|y} (C_{c|z}^\lambda)\right]
\end{equation}
where in the last line we used the commutation relations of Eq. \ref{eq:comm-univ-rep} and Fig. \ref{fig:univAlgebras-3players}.

It is not difficult to extend these ideas to more than three players, by considering more universal C$^*$-algebras of sequential PVMs. To favour the readability we will comment this extension at the end of the Section.

\subsection{Asymptotic behaviour of efficient algebraic strategies}\label{subsec:asy-eff-strat}

We presented the necessary framework to characterise the set of efficient compiled algebraic strategies in the limit of perfect security $\lambda = \infty$.

The Banach-Alaoglu Theorem ensures that the dual space of a C$^*$-algebra $\mathscr{A}$ is compact, hence the limit exists. More precisely every sequence of possibly sub-normalised states on the algebra will weak$^*$-converge to a limit state, equally sub-normalised.
For the bipartite case, this means that there exist a subsequence $\lambda_k$ such that 
\begin{align*}    \lim_{k\rightarrow\infty}\phi_{a|x}^{\lambda_k}(\mathfrak{a})
    = \phi_{a|x}(\mathfrak{a}) \qquad \forall \mathfrak{a}\in \mathscr{A}_{B}.
\end{align*}
In the case where we have three parties we equivalently have 
\begin{align*}
    \lim_{k\rightarrow\infty} \phi_{a|x}^{\lambda_k}(\mathfrak{a})
    = \phi_{a|x}(\mathfrak{a}) \qquad &\forall \mathfrak{a}\in \mathscr{A}_{B\to C},
\end{align*}
but we can also consider the composition of $\phi_{a|x}^{\lambda} \circ T_{b|y} : \mathscr{A}_C\to \mathbb{C}$ as a state, such that
\begin{align*}
    \lim_{k'\rightarrow\infty} \phi_{a|x}^{\lambda_{k'}} \circ T_{b|y} (\mathfrak{b}) 
    = \phi_{a|x} \circ T_{b|y} (\mathfrak{b}) \qquad &\forall \mathfrak{b}\in \mathscr{A}_C.
\end{align*}

In Theorem \ref{th:3pl-ind-cpa}, we identified some constraints on the tripartite quantum efficient compiled strategies.
These constraints can be expressed in terms of algebraic strategies, with the advantage that the limit of asymptotic security is well defined for algebraic states.
The following theorem characterises precisely the set of asymptotic algebraic compiled strategies.

\begin{theorem}\label{th:asymptotic-constraints-3}
Consider an algebraic strategy for a tripartite compiled game.
In the asymptotic limit $\lambda \to \infty$, the sequence $\phi_{a|x}^\lambda$ weak-* converges to the state $\phi_{a|x}$, such that
\begin{enumerate}
    \item for the limit states $\phi_{a|x}$
    \begin{equation}\label{eq:op-NS-states}
        \sum_a \phi_{a|x}(\mathfrak{a}) = \sum_a\phi_{a|x'}(\mathfrak{a}) \qquad \forall \mathfrak{a} \in \mathscr{A}_{B\to C},
    \end{equation}
    \item for the composite limit strategies $\phi_{a|x}(\mathfrak{l^*} T_{b|y}(\cdot)\mathfrak{r})$
        \begin{equation}\label{eq:op-NS-comp-strat}
        \sum_b \phi_{a|x} \left( \mathfrak{l}^* T_{b|y}(\mathfrak{b}) \mathfrak{r} \right) = 
        \sum_b \phi_{a|x} \left( \mathfrak{l}^* T_{b|y'}(\mathfrak{b}) \mathfrak{r} \right) \qquad \forall \mathfrak{b} \in \mathscr{A}_{C}, \forall \mathfrak{l,r} \in \mathscr{A}_{B \to C}.
    \end{equation}
\end{enumerate}
\end{theorem}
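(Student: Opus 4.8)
The plan is to establish both no-signalling identities first at finite security parameter $\lambda$, where they hold up to a $\negl(\lambda)$ error stemming from IND-CPA security, and only then to pass to the limit. Existence of a limit state $\phi_{a|x}$ along a subsequence $\lambda_k$ is provided by the Banach-Alaoglu theorem, since the set of sub-normalised states on $\mathscr{A}_{B\to C}$ is weak-* compact. The key structural fact that makes the limiting argument clean is that the universal algebras $\mathscr{A}_C, \mathscr{A}_{B\to C}$ and the connecting $*$-homomorphisms $T_{b|y}$ are all \emph{independent} of $\lambda$; only the functionals $\phi_{a|x}^\lambda$ carry the $\lambda$-dependence. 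Thus for any fixed algebra element $\mathfrak{a}$, the number $\phi_{a|x}^{\lambda_k}(\mathfrak{a})$ converges by the very definition of weak-* convergence, and it suffices to bound the finite-$\lambda$ quantities. To move between the algebraic and the concrete pictures I would use the definition of the algebraic states in Eq.~\ref{eq:def-alg-state} together with the intertwining relation Eq.~\ref{eq:comm-univ-rep}. For any polynomial $\mathfrak{a} \in \mathscr{A}_{B\to C}$, the image $\theta_{BC}^\lambda(\mathfrak{a})$ is a polynomial in the QPT-implementable operators $B_{b|y}^{\lambda,*}(C_{c|z}^\lambda)$, hence admits a QPT block-encoding by Lemma~\ref{lem:poly-QPT-be}, with norm bounded by $\|\mathfrak{a}\|$ uniformly in $\lambda$ since $\theta_{BC}^\lambda$ is a contractive $*$-homomorphism.

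For condition 1, I would write $\sum_a \phi_{a|x}^\lambda(\mathfrak{a}) = \Tr[\sigma_x^\lambda \,\theta_{BC}^\lambda(\mathfrak{a})]$ with $\sigma_x^\lambda = \sum_a \sigma_{a|x}^\lambda$ the normalised output state of the first round. As $\theta_{BC}^\lambda(\mathfrak{a})$ is QPT-block-encodable, its expectation on states is efficiently estimable by Lemma~\ref{lemma:eff-be-op-bip}, and the indistinguishability of $\sigma_x^\lambda$ from $\sigma_{x'}^\lambda$ under the IND-CPA security of the encryption of $x$ yields, exactly as in Theorem~\ref{th:poly-bipartite} (now applied to the collapsed downstream measurement operators $B_{b|y}^{\lambda,*}(C_{c|z}^\lambda)$ rather than the bare $B_{b|y}^\lambda$), the bound $|\sum_a \phi_{a|x}^\lambda(\mathfrak{a}) - \sum_a \phi_{a|x'}^\lambda(\mathfrak{a})| \leq \negl(\lambda)$. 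Letting $\lambda_k \to \infty$ gives Eq.~\ref{eq:op-NS-states} for polynomial $\mathfrak{a}$, and boundedness of the limit functionals together with density of polynomials in $\mathscr{A}_{B\to C}$ extends it to all elements.

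For condition 2, I would set $\mathcal{L}^\lambda = \theta_{BC}^\lambda(\mathfrak{l})$, $\mathcal{R}^\lambda = \theta_{BC}^\lambda(\mathfrak{r})$ and $P(\{C_{c|z}^\lambda\}) = \theta_C^\lambda(\mathfrak{b})$, all QPT-block-encodable with uniformly bounded norm by the above. Using Eq.~\ref{eq:comm-univ-rep}, the $*$-homomorphism property of $\theta_{BC}^\lambda$, and cyclicity of the trace, one obtains
\[
\sum_b \phi_{a|x}^\lambda\bigl(\mathfrak{l}^* T_{b|y}(\mathfrak{b}) \mathfrak{r}\bigr) = \sum_b \Tr\bigl[\mathcal{L}^{\lambda,*} B_{b|y}^{\lambda,*}\bigl(P(\{C_{c|z}^\lambda\})\bigr) \mathcal{R}^\lambda \sigma_{a|x}^\lambda\bigr],
\]
where $\sigma_{a|x}^\lambda$ is identified with the sub-normalised state $\rho_{a|x}^\lambda$ of Theorem~\ref{th:3pl-ind-cpa}. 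This is precisely the quantity controlled by that theorem, which bounds the difference of the $y$- and $y'$-versions by $\negl(\lambda)$. Passing to the limit along $\lambda_k$ gives Eq.~\ref{eq:op-NS-comp-strat} for polynomial $\mathfrak{l}, \mathfrak{b}, \mathfrak{r}$, and continuity plus density once more extend it to arbitrary algebra elements.

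The main obstacle is not any individual estimate but the careful verification that the abstract elements $\theta_{BC}^\lambda(\mathfrak{l}), \theta_{BC}^\lambda(\mathfrak{r}), \theta_C^\lambda(\mathfrak{b})$ satisfy the block-encoding and uniform-norm hypotheses of Theorem~\ref{th:3pl-ind-cpa}, and in particular the subalgebra restriction noted in the remark following it: one must ensure $\mathcal{L}^\lambda, \mathcal{R}^\lambda$ lie in the algebra generated by the operators $B_{b|y}^{\lambda,*}(C_{c|z}^\lambda)$, which holds automatically as they are images under $\theta_{BC}^\lambda$. Once this dictionary between the algebraic and operator pictures is set up correctly, the analytic heavy lifting has already been carried out in Theorem~\ref{th:3pl-ind-cpa}, and the remaining work is the delicate but essentially routine passage to the limit via weak-* convergence combined with a density-and-continuity argument.
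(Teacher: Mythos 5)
Your proposal is correct and follows essentially the same route as the paper's proof: both parts are reduced to the finite-$\lambda$ IND-CPA statements (Theorem~\ref{th:poly-bipartite} applied to polynomials in the composite operators $B_{b|y}^{\lambda,*}(C_{c|z}^\lambda)$ for part~1, and Theorem~\ref{th:3pl-ind-cpa} with $\mathcal{L}^\lambda, \mathcal{R}^\lambda$ taken as images under $\theta_{BC}^\lambda$ for part~2) via the intertwining relation of Eq.~\ref{eq:comm-univ-rep}, followed by the weak-* limit along a Banach--Alaoglu subsequence and a density-of-polynomials argument. Your explicit remarks on the uniform norm bound from contractivity of $\theta_{BC}^\lambda$ and on the subalgebra restriction needed for Theorem~\ref{th:3pl-ind-cpa} are points the paper leaves implicit, but they do not change the argument.
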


\begin{proof}
1. Consider $\mathfrak{a}$ to be any fixed non-commutative polynomial $P(\{\mathfrak{f}_{bc|yz}\})$ in the generators of $\mathscr{A}_{B \to C}$. Write the limit state as
    \begin{align*}
    \phi_{a|x}(P(\{ \mathfrak{f}_{bc|yz}\})) 
    &= \lim_{k\rightarrow\infty}\phi_{a|x}^{\lambda_k}(P(\{ \mathfrak{f}_{bc|yz}\}))\\
    &= \lim_{k\rightarrow\infty}\Tr\left[\sigma_{a|x}^{\lambda_k}\theta_{BC}^{\lambda_k}(P(\{ \mathfrak{f}_{bc|yz}\}))\right]\\
    &= \lim_{k\rightarrow\infty}\Tr\left[\sigma_{a|x}^{\lambda_k} P(\{B_{b|y}^{\lambda_k,*}(C_{c|z}^{\lambda_k}) \})\right].
\end{align*}
where in the last line we used the commutation properties of Eq. \ref{eq:comm-univ-rep}.
Notice that $P(\{B_{b|y}^{\lambda_k,*}(C_{c|z}^{\lambda_k}) \})$ does not have a physical meaning, but has a QPT block-encoding.
Consider now the difference of two normalised states $\phi_{x} = \sum_a\phi_{a|x}$ evaluated on the same polynomial
\begin{align*}
    & \left| \phi_{x}(P(\{ \mathfrak{f}_{bc|yz}\})) - \phi_{x'}(P(\{ \mathfrak{f}_{bc|yz}\})) \right| \\
    &= \lim_{k \to \infty} \left| \sum_a \Tr\left[\sigma_{a|x}^{\lambda_k} P(\{B_{b|y}^{\lambda_k,*}(C_{c|z}^{\lambda_k}) \})\right] - \sum_a \Tr\left[\sigma_{a|x'}^{\lambda_k} P(\{B_{b|y}^{\lambda_k,*}(C_{c|z}^{\lambda_k}) \}) \right] \right| \\
    & \leq \lim_{k \to \infty} \eta(\lambda_k) =0,
\end{align*}
where in the last line we used the strong IND-CPA constraint derived in Theorem \ref{th:poly-bipartite}.
The polynomials of the generators are a dense subset of the full algebra, hence the asymptotic states for different input labels $x$ are the same on the full algebra
\begin{align*}
    \phi_x(\mathfrak{a})-\phi_{x'}(\mathfrak{a}) = 0 \qquad \forall \mathfrak{a} \in  \mathscr{A}_{B\to C}.
\end{align*}
\\
2. Consider $\mathfrak{b}$ to be any fixed non-commutative polynomial in the generators of $\mathscr{A}_{C}$ and $\mathfrak{l}$ and $\mathfrak{r}$ to be any fixed 
non-commutative polynomial in the generators of $\mathscr{A}_{B \to C}$.
\begin{align*}
    \phi_{a|x} &\left( P'(\{ \mathfrak{f}^*\}) T_{b|y}(P(\{\mathfrak{e}_{c|z} \})) P''(\{ \mathfrak{f}\}) \right) \\
    &= \lim_{k\to \infty} \phi_{a|x}^{\lambda_k}\left( P'(\{ \mathfrak{f}^*\}) T_{b|y}(P(\{\mathfrak{e}_{c|z} \})) P''(\{ \mathfrak{f}\}) \right)\\
    &=\lim_{k\to\infty}\Tr\left[\sigma_{a|x}^{\lambda_k}
    \theta_{BC}^{\lambda_k,*}(P'(\{ \mathfrak{f}\}))
    \theta_{BC}^{\lambda_k}(T_{b|y}(P(\{ \mathfrak{e}_{c|z}\})))
    \theta_{BC}^{\lambda_k}(P''(\{ \mathfrak{f}\}))
    \right]\\
    &= \lim_{k\to\infty}\Tr\left[\sigma_{a|x}^{\lambda_k}
    \theta_{BC}^{\lambda_k,*}(P'(\{ \mathfrak{f}\}))
    B_{b|y}^{\lambda_k,*}(P(\{C_{c|z}^{\lambda_k}\}))
   \theta_{BC}^{\lambda_k}(P''(\{ \mathfrak{f}\}))
    \right]
\end{align*}
where once again we use the commutation properties of Eq. \ref{eq:comm-univ-rep}.
Notice that $\mathcal{L}^{\lambda_k,*}=  \theta_{BC}^{\lambda_k,*}(P'(\{ \mathfrak{f}\}))$ and $\mathcal{R}^{\lambda_k}=  \theta_{BC}^{\lambda_k}(P''(\{ \mathfrak{f}\}))$ have a QPT block-encoding, and we can apply Theorem \ref{th:3pl-ind-cpa} to show that for all $y$ and $y'$:
\begin{align*}
     \Big| \phi_{a|x} &\left( P'(\{ \mathfrak{f}^*\}) T_{y}(P(\{\mathfrak{e}_{c|z})) P''(\{ \mathfrak{f}\}) \right)  
    -  \phi_{a|x} \left( P'(\{ \mathfrak{f}^*\}) T_{y'}(P(\{\mathfrak{e}_{c|z})) P''(\{ \mathfrak{f}\}) \right) \Big|\\
    &= \lim_{k\to \infty} \left|
    \Tr\left[\sigma_{a|x}^{\lambda_k} \mathcal{L}^{\lambda_k,*}  B_{y}^{\lambda_k,*}(P(\{C_{c|z}^{\lambda_k}\})) \mathcal{R}^{\lambda_k}\right]
    - \Tr\left[\sigma_{a|x}^{\lambda_k} \mathcal{L}^{\lambda_k,*}  B_{y'}^{\lambda_k,*}(P(\{C_{c|z}^{\lambda_k}\})) \mathcal{R}^{\lambda_k}\right]
    \right|\\
    & \leq \lim_{k \to \infty} \eta(\lambda_k) =0.
\end{align*}
The polynomials of the generators are dense in the full-algebra, which means that this proves the statement on the full algebras $\mathscr{A}_{C}$ and $\mathscr{A}_{B\to C}$.
\end{proof}

\subsubsection{More than three players}
In this subsection the $\mathsf{n}$ players are labelled with numbers, instead of $A, B$ and $C$. We collect the inputs and the outputs in vectors like $\Vec{v}= (v_1, \dots v_\mathsf{n})$; $\Vec{v}_{[v_i]}$ and $\Vec{v}_{[v_i']}$ are equal vectors, up to the $i$-th element that is respectively $v_i$ and $v_i'$.

We begin by generalising the notion of universal C$^*$-algebra of $\mathsf{n}$ sequential PVMs. The only non-trivial condition to generalise is the last point of Lemma~\ref{lem:universal_sequential_povm_algebra}, which encodes the sequentiality constraint $B \to C$ by imposing that the generators are non-signalling the input of $C$, when we sum over the outputs of $C$.
To generalise this to sequential constraints like $1 \to 2 \dots \to \mathsf{n}$, we impose that for all $j\in[2,\mathsf{n}]$ the generators are non-signalling the input of the last $j$ players, whenever we sum over their corresponding outputs.

\begin{lemma}[Universal C$^*$-algebra of sequential PVMs, for $\mathsf{n}>3$]
Consider $\mathsf{n} \in \mathbb{N}$, and let $I_i, O_i$ be finite sets for all $i \in [\mathsf{n}]$. Let $G = \{ \mathfrak{f}_{a_1\dots a_\mathsf{n}|x_1 \dots x_\mathsf{n}} \; | \; x_i\in I_i, a_i \in O_i, \forall i \in [\mathsf{n}]\}$ be a set of generators; we will use the shorthand notation  $\Vec{a}= (a_1, \dots a_\mathsf{n})$ and  $\Vec{x}= (x_1, \dots x_\mathsf{n})$.
Consider the relations $R$ given as
    \begin{enumerate}
        \item $\mathfrak{f}_{\Vec{a}|\Vec{x}} = \mathfrak{f}_{\Vec{a}|\Vec{x}}^*$ \hfill $\forall \Vec{a}, \Vec{x}$,
        \item $0 \leq \mathfrak{f}_{\Vec{a}|\Vec{x}} \leq 1$ \hfill $\forall \Vec{a}, \Vec{x}$,
        \item $\mathfrak{f}_{\Vec{a}|\Vec{x}}  \mathfrak{f}_{\Vec{a'}|\Vec{x}}= \delta_{a_i,a_i'} \dots \delta_{a_\mathsf{n},a_\mathsf{n}'} \mathfrak{f}_{\Vec{a}|\Vec{x}}$ \hfill $\forall \Vec{a},\Vec{a'}, \Vec{x}$,
        \item $\sum_{i\in[\mathsf{n}]} \sum_{a_i\in O_i} \mathfrak{f}_{\Vec{a}|\Vec{x}} = \mathds{1}$ \hfill $\forall \Vec{x}$,
        \item $\sum_{j \in [i,\mathsf{n}]} \sum_{a_j \in O_j} \mathfrak{f}_{\Vec{a}| \Vec{x}_{[x_i \dots x_n]}} = \sum_{j \in [i,\mathsf{n}]} \sum_{a'_j \in O_j} \mathfrak{f}_{\Vec{a'}| \Vec{x}_{[x'_i \dots x'_n]}}$ \hfill $\forall \Vec{x}, \Vec{a}, \Vec{a'}, \forall i \in [2, \mathsf{n}]$.
    \end{enumerate}
    Then, $C^*(G,R)$ exists and we call it the sequential universal PVM C$^*$-algebra $\mathscr{A}^\text{PVM}_{1\to 2 \dots \to \mathsf{n}}$.
    In the following, we will also use the shorthand notation $\mathscr{A}_{1, \mathsf{n}}$.
\end{lemma}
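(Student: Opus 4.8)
The plan is to follow the same two-step template used in the proof of Lemma~\ref{lem:universal_sequential_povm_algebra}: by the existence criterion recalled after Definition~\ref{def:universal_c_star_algebra}, the universal C$^*$-algebra $C^*(G|R)$ exists as soon as (i) the relations $R$ admit a concrete realisation by bounded operators on some Hilbert space, and (ii) the relations force a uniform bound on the norms of all generators. Condition (ii) is immediate, since relation~2 already imposes $0 \leq \mathfrak{f}_{\Vec{a}|\Vec{x}} \leq \mathfrak{1}$, whence $\|\mathfrak{f}_{\Vec{a}|\Vec{x}}\| \leq 1$ for every generator, so that $\|\cdot\|$ is finite on all of $P(G)$. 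Hence the entire content of the proof is to exhibit a representation of $(G|R)$, which I would obtain by concatenating $\mathsf{n}-1$ adjoint quantum instruments with a final projective measurement, exactly as $B_{b|y}^{*}(M_{c|z})$ realised the tripartite case.

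Concretely, fix a Hilbert space $\mathcal{H}$, efficient quantum instruments $\Tilde{I}^{(i)}_{a_i|x_i}$ on $\mathsf{B}(\mathcal{H})$ for $i \in [\mathsf{n}-1]$ whose adjoints are $*$-homomorphisms (Lemma~\ref{lem:homomorphism}), and a PVM $\{M_{a_\mathsf{n}|x_\mathsf{n}}\} \subseteq \mathsf{B}(\mathcal{H})$ for the last player, and set
\begin{equation*}
  F_{\Vec{a}|\Vec{x}} = \Tilde{I}^{(1),*}_{a_1|x_1} \circ \dots \circ \Tilde{I}^{(\mathsf{n}-1),*}_{a_{\mathsf{n}-1}|x_{\mathsf{n}-1}} \left( M_{a_\mathsf{n}|x_\mathsf{n}} \right).
\end{equation*}
Relations~1 and~2 then follow because each $\Tilde{I}^{(i),*}$ is a positive, unital, contractive $*$-homomorphism and $0 \leq M_{a_\mathsf{n}|x_\mathsf{n}} = M_{a_\mathsf{n}|x_\mathsf{n}}^{*} \leq \mathds{1}$. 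For relation~3 I would use the sharpened orthogonality that the explicit form of Eq.~\ref{eq:adj-insrument} grants to adjoint instruments, namely $\Tilde{I}^{(i),*}_{a_i|x_i}(M)\,\Tilde{I}^{(i),*}_{a_i'|x_i}(M') = \delta_{a_i,a_i'}\,\Tilde{I}^{(i),*}_{a_i|x_i}(MM')$, which comes from $\Pi_{a_i}\Pi_{a_i'} = \delta_{a_i,a_i'}\Pi_{a_i}$ in that representation; applying this level by level, together with the projectivity $M_{a_\mathsf{n}|x_\mathsf{n}}M_{a'_\mathsf{n}|x_\mathsf{n}} = \delta_{a_\mathsf{n},a'_\mathsf{n}}M_{a_\mathsf{n}|x_\mathsf{n}}$, yields $F_{\Vec{a}|\Vec{x}}F_{\Vec{a'}|\Vec{x}} = \delta_{\Vec{a},\Vec{a'}}F_{\Vec{a}|\Vec{x}}$. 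Relation~4, $\sum_{\Vec{a}} F_{\Vec{a}|\Vec{x}} = \mathds{1}$, then follows by collapsing the innermost PVM to $\mathds{1}$ and using that each $\sum_{a_i}\Tilde{I}^{(i),*}_{a_i|x_i}$ is the adjoint of a trace-preserving channel, hence unital.

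The step I expect to require the most care is relation~5, the hierarchy of sequential no-signalling identities. Here I would argue from the inside out: for any $i \in [2,\mathsf{n}]$, summing $F_{\Vec{a}|\Vec{x}}$ over the outputs $a_i,\dots,a_\mathsf{n}$ first collapses $\sum_{a_\mathsf{n}}M_{a_\mathsf{n}|x_\mathsf{n}} = \mathds{1}$ independently of $x_\mathsf{n}$, and then each successive sum $\sum_{a_j}\Tilde{I}^{(j),*}_{a_j|x_j}$, being adjoint to a trace-preserving channel, is unital and sends $\mathds{1}$ to $\mathds{1}$ irrespective of $x_j$. What survives is $\Tilde{I}^{(1),*}_{a_1|x_1}\circ\dots\circ\Tilde{I}^{(i-1),*}_{a_{i-1}|x_{i-1}}(\mathds{1})$, which depends only on $a_1,\dots,a_{i-1}$ and $x_1,\dots,x_{i-1}$ and is therefore manifestly independent of $x_i,\dots,x_\mathsf{n}$; this is precisely relation~5. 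With all five relations verified, $(G|R)$ is realisable, and combined with the norm bound from~(ii) this establishes the existence of $\mathscr{A}^{\text{PVM}}_{1\to 2 \to \dots \to \mathsf{n}}$. The whole argument is the natural inductive unfolding of the tripartite construction of Lemma~\ref{lem:universal_sequential_povm_algebra}, the only genuinely new bookkeeping being the nested summation certifying relation~5; I would, if desired, phrase it as a formal induction on $\mathsf{n}$ with the tripartite lemma as the base case.
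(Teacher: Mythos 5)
Your proposal is correct and follows essentially the same route as the paper: the paper also realises the generators concretely as $\Tilde{I}_{a_1|x_1}^{(1),*}\circ\dots\circ\Tilde{I}_{a_{\mathsf{n}-1}|x_{\mathsf{n}-1}}^{(\mathsf{n}-1),*}(M_{a_\mathsf{n}|x_\mathsf{n}})$ using the $*$-homomorphism property of the adjoint instruments, and gets the norm bound from relation~2, deferring the rest to the tripartite case. You merely spell out the ``simple calculation'' (orthogonality via level-by-level projectivity, unitality for relations~4 and~5) that the paper leaves implicit, which is a faithful and correct elaboration rather than a different argument.
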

\begin{proof}
    To see that there exists a representation of $(G,R)$, we construct a concrete realisation of these generators and relations as bounded operators on Hilbert space. 

    Consider any PVM $\{M_{a_\mathsf{n}|x_\mathsf{n}}\} \subseteq \mathsf{B}(\mathcal{H}_\mathsf{n})$ on some Hilbert space $\mathcal{H}_\mathsf{n}$ with inputs $I_\mathsf{n}$ and outputs $O_\mathsf{n}$, and efficient quantum instruments $\Tilde{I}_{a_i|x_i}^{(i)} : \mathsf{B}(\mathcal{H}_i) \to \mathsf{B}(\mathcal{H}_{i+1})$ with inputs $I_i$ and outputs $O_i$, $i \in [1, \mathsf{n}-1]$. Recall that $\Tilde{I}_{a_i|x_i}^{(i),*}$ is a *-homomorphism; a simple calculation then shows that the operators $\left\{\Tilde{I}_{a_1|x_1}^{(1),*} \circ \dots \circ \Tilde{I}_{a_{\mathsf{n}-1}|x_{\mathsf{n}-1}}^{({\mathsf{n}-1}),*} (M_{a_\mathsf{n}|x_\mathsf{n}})\right\} \in \mathsf{B}(\mathcal{H}_1)$ give rise to a representation of $(G,R)$. 

    Then the proof follows as in Lemma~\ref{lem:universal_sequential_povm_algebra}.
\end{proof}

Consider the universal C$^*$-algebra of sequential PVMs $\mathscr{A}_{2,\mathsf{n}}$ and $\mathscr{A}_{1,\mathsf{n}}$.
Consider the sub-algebra of $\mathscr{A}_{1,\mathsf{n}}$ generated by the elements that have fixed input $\bar{x}_1$ and output $\Bar{a}_1$ for the player $1$; this satisfies all of the conditions of $\mathscr{A}_{2,\mathsf{n}}$.
Then, the universality property implies the existence of a $*$-homomorphism $T^{(1)}_{\Bar{a}_1|\Bar{x}_1} : \mathscr{A}_{2,\mathsf{n}} \to \mathscr{A}_{1, \mathsf{n}}$.

With these techniques, we can properly define the algebraic strategies for more than three players.

\begin{definition}[C$^*$-algebraic compiled strategy for $\mathsf{k}$ players, $\mathcal{A}_\lambda^\mathsf{k}$] Consider the compiled game $\mathcal{G}_\lambda$ for $\mathsf{k}$ players. A general algebraic efficient strategy is characterised by 
    \begin{enumerate}
        \item a universal $\text{C}^*$ algebra of PVMs $\mathscr{A}_\mathsf{k}$, and $\mathsf{k}-2$ universal $\text{C}^*$ algebras of sequential PVMs $\mathscr{A}_{j,\mathsf{k}}$, with $j\in[2, \mathsf{k}-1]$
        \item PVMs, that are positive elements of the algebra $\mathfrak{a}_{a_\mathsf{k}|x_\mathsf{k}} \in \mathscr{A}_\mathsf{k}^+$, that sums up to identity $\sum_{a_\mathsf{k}} \mathfrak{a}_{a_\mathsf{k}|x_\mathsf{k}} = \mathfrak{1}$;
        \item $\mathsf{k}-2$ *-homomorphism, labelled by $j \in [2, \mathsf{k}-1]$, that we call $T_{a_j|x_j}^{(j)}: \mathscr{A}_{j+1,\mathsf{k}} \to \mathscr{A}_{j,\mathsf{k}}$, and map generator to generator;
        \item sub-normalised efficient states, which are positive functional on the algebra $\phi^\lambda_{a_1|x_1}:\mathscr{A}_{2,\mathsf{k}}\to\mathbb{C}$, such that $\phi^\lambda_{x} = \sum_{a_1} \phi^\lambda_{a_1|x_1}$ is a normalised state.
    \end{enumerate}
    All $\mathsf{k}$-partite algebraic strategies have the following form
    \begin{equation*}
        p_\lambda(\Vec{a}|\Vec{x}) = \phi^\lambda_{a_1|x_1}
        \circ T^{(2)}_{a_2|x_2} 
        \circ \dots 
        \circ T^{(\mathsf{k}-1)}_{a_{\mathsf{k}-1}|x_{\mathsf{k}-1}}(\mathfrak{a}_{a_\mathsf{k}|x_\mathsf{k}}).
    \end{equation*}
\end{definition}

To help with the intuition, in Fig. \ref{fig:univAlgebras-4players} there is a representation of the case of $4$ players.
\begin{figure}[htbp!]
    \centering
\begin{tikzpicture}
    \node (D) at (-2.5,2) {$ \mathscr{A}_4 $};
    \node (CtoD) at (0,2) {$ \mathscr{A}_{3,4} $};
    \node (BH0) at (-2.5,0) {$ \mathsf{B}(\mathcal{H}^\lambda) $};
    \node (BH1) at (0,0) {$ \mathsf{B}(\mathcal{H}^\lambda) $};

    \node (BtoCtoD) at (2.5,2) {$ \mathscr{A}_{2,4} $};
    \node (C) at (5,0) {$ \mathbb{C} $};
    \node (BH2) at (2.5,0) {$ \mathsf{B}(\mathcal{H}^\lambda) $};

    \draw[->] (D) -- (CtoD) node[midway, above] {$ T^{(3)}_{a_3|x_3} $};
    \draw[->] (CtoD) -- (BtoCtoD) node[midway, above] {$ T^{(2)}_{a_2|x_2} $};
    \draw[->] (D) -- (BH0) node[midway, left] {$ \theta^\lambda_{4} $};
    \draw[->] (CtoD) -- (BH1) node[midway, left] {$ \theta^\lambda_{3,4} $};
    \draw[->] (BH0) -- (BH1) node[midway, below] {$ I^{(3),\lambda,*}_{a_3|x_3} $};

    \draw[->] (BtoCtoD) -- (C) node[midway, above, xshift=3pt] {$ \phi_{a_1|x_1}^\lambda $};
    \draw[->] (BtoCtoD) -- (BH2) node[midway, left] {$ \theta^\lambda_{2,4} $};
    \draw[->] (BH1) -- (BH2) node[midway, below] {$ I^{(2),\lambda,*}_{a_2|x_2} $};
    \draw[->] (BH2) -- (C) node[midway, below] {$ \mathsf{Tr}(\sigma_{a_1|x_1}^\lambda \cdot) $};

\end{tikzpicture}
     \caption{A commuting diagram representing the structure of the universal C*-algebras of sequential PVMs in the case of 4 players.}
    \label{fig:univAlgebras-4players}
\end{figure}

\begin{figure}[htbp!]
    \centering
\begin{tikzpicture}
    \node (Ak) at (0,2) {\( \mathscr{A}_\mathsf{k} \)};
    \node (Akm1) at (3.5,2) {\( \mathscr{A}_{\mathsf{k}-1,\mathsf{k}} \)};
    \node (A2) at (7,2) {$\dots$};
    \node (A1) at (10.5,2) {\( \mathscr{A}_{2,\mathsf{k}}\)};
    \node (BHk) at (0,0) {\( \mathsf{B}(\mathcal{H}_\mathsf{k}^\lambda) \)};
    \node (BHkm1) at (3.5,0) {\( \mathsf{B}(\mathcal{H}_{\mathsf{k}-1}^\lambda) \)};
    \node (BH2) at (7,0) {$\dots$};
    \node (BH1) at (10.5,0) {\( \mathsf{B}(\mathcal{H}_{2}^\lambda) \)};
    \node (C) at (13.5,0) {\( \mathbb{C} \)};

    \draw[->] (Ak) -- (Akm1) node[midway, above] {\( T^{(\mathsf{k})}_{a_\mathsf{k}|x_\mathsf{k}} \)};
    \draw[->] (Akm1) -- (A2) node[midway, above] {\( T^{(\mathsf{k}-1)}_{a_{\mathsf{k}-1}|x_{\mathsf{k}-1}} \)};
    \draw[->] (A2) -- (A1) node[midway, above] {\( T^{(2)}_{a_{2}|x_{2}} \)};
    \draw[->] (A1) -- (C) node[midway, above,  yshift=5pt] {\( \phi_{a_1|x_1}^\lambda \)};
    \draw[->] (Ak) -- (BHk) node[midway, left] {\( \theta^\lambda_{\mathsf{k}} \)};
    \draw[->] (Akm1) -- (BHkm1) node[midway, left] {\( \theta^\lambda_{\mathsf{k}-1,\mathsf{k}} \)};
    \draw[->] (A1) -- (BH1) node[midway, left] {\( \theta^\lambda_{2,\mathsf{k}} \)};
    \draw[->] (BHk) -- (BHkm1) node[midway,below] {\( I^{(\mathsf{k})}_{a_\mathsf{k}|x_\mathsf{k}} \)};
    \draw[->] (BHkm1) -- (BH2) node[midway,below] {\( B^{(\mathsf{k}-1)}_{a_{\mathsf{k}-1}|x_{\mathsf{k}-1}} \)};
    \draw[->] (BH2) -- (BH1) node[midway,below] {\( B^{(2)}_{a_{2}|x_{2}} \)};
    \draw[->] (BH1) -- (C) node[midway,below] {\( \mathsf{Tr}(\sigma_{a_1|x_1}^\lambda \cdot) \)};

\end{tikzpicture}
     \caption{A commuting diagram representing the structure of the universal C*-algebras of sequential PVMs in the case of $\mathsf{k}$ players.}
    \label{fig:univAlgebras-k-players}
\end{figure}

We now have the necessary tools to generally characterise the set of asymptotic algebraic strategies for $\mathsf{k}$-partite games.

\begin{theorem}\label{th:asymptotic-constraints-k}
Consider an algebraic strategy for a $\mathsf{k}$-partite compiled game.
In the asymptotic limit $\lambda \to \infty$, the sequence of (sub-normalised) states $\phi_{a_1|x_1}^\lambda$ weak-* converges to the (sub-normalised) state $\phi_{a_1|x_1}$, such that
\begin{enumerate}
    \item for the limit states $\phi_{a_1|x_1}$
    \begin{equation}\label{eq:op-NS-states-k}
        \sum_{a_1} \phi_{a_1|x_1}(\mathfrak{a}) = \sum_a\phi_{a|x'}(\mathfrak{a}) \qquad \forall \mathfrak{a} \in \mathscr{A}_{2,\mathsf{k}},
    \end{equation}
    \item for every $j\in[2,\mathsf{k}-1]$,
    the composite limit strategies satisfy the following property
    for all $\mathfrak{l},\mathfrak{r} \in \mathscr{A}_{j,\mathsf{k}}$ and for all $\mathfrak{b} \in \mathscr{A}_{j+1,\mathsf{k}}$
    \begin{equation}\label{eq:op-NS-comp-strat-k}
        \sum_{a_j} \phi_{a_1|x_1} \circ \dots \circ T^{(j-1)}_{a_{j-1}|x_{j-1}}
        \left(\mathfrak{l^*} T^{(j)}_{a_j|x_j}(\mathfrak{b})\mathfrak{r}\right) = 
       \sum_{a_j} \phi_{a_1|x_1} \circ \dots \circ T^{(j-1)}_{a_{j-1}|x_{j-1}}
        \left(\mathfrak{l^*} T^{(j)}_{a_j|x'_j}(\mathfrak{b})\mathfrak{r}\right).
    \end{equation}
\end{enumerate}
\end{theorem}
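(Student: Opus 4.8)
The plan is to follow the two-part structure of the tripartite proof of Theorem~\ref{th:asymptotic-constraints-3}, now carried out player-by-player: the first claim is a no-signalling statement about the state prepared by player $1$, proven via the bipartite block-encoding constraint (Theorem~\ref{th:poly-bipartite}), while the second claim is a family of no-signalling statements, one for each intermediate player $j\in[2,\mathsf{k}-1]$, each proven via the multipartite instrument constraint (Theorem~\ref{th:kpl-ind-cpa}). The existence of the weak-* convergent subsequence $\phi_{a_1|x_1}^{\lambda_k} \to \phi_{a_1|x_1}$ is guaranteed by the Banach-Alaoglu argument recalled at the start of this subsection, so throughout I would work along such a subsequence and evaluate limits on the dense set of non-commutative polynomials in the generators, extending to the full algebras by density and weak-* continuity at the end.

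For the first claim, I would fix a polynomial $\mathfrak{a} = P(\{\mathfrak{f}_{\Vec{a}|\Vec{x}}\})$ in the generators of $\mathscr{A}_{2,\mathsf{k}}$ and use the universal property together with the intertwining relations (the $\mathsf{k}$-player generalisation of Eq.~\ref{eq:comm-univ-rep}, depicted in Fig.~\ref{fig:univAlgebras-k-players}) to rewrite
\begin{equation*}
    \phi_{a_1|x_1}(P) = \lim_{k\to\infty} \Tr\!\left[\sigma_{a_1|x_1}^{\lambda_k}\, P\!\left(\left\{ I^{(2),\lambda_k,*}_{a_2|x_2}\circ\dots\circ I^{(\mathsf{k}-1),\lambda_k,*}_{a_{\mathsf{k}-1}|x_{\mathsf{k}-1}}(M^{(\mathsf{k}),\lambda_k}_{a_\mathsf{k}|x_\mathsf{k}}) \right\}\right)\right].
\end{equation*}
Since the adjoint instruments are $*$-homomorphisms and the terminal operators are PVM elements, every such polynomial admits a QPT block-encoding (by Lemma~\ref{lem:poly-QPT-be}) and the state $\sigma_{x_1}^{\lambda_k} = \sum_{a_1}\sigma_{a_1|x_1}^{\lambda_k}$ is efficiently preparable; the effective ``measure everything after player $1$'' picture is exactly the bipartite situation of Theorem~\ref{th:poly-bipartite}. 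Summing over $a_1$ and taking the difference between inputs $x_1$ and $x_1'$ therefore yields a quantity bounded by $\eta(\lambda_k) \to 0$, giving Eq.~\ref{eq:op-NS-states-k} on polynomials and hence, by density, on all of $\mathscr{A}_{2,\mathsf{k}}$.

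For the second claim I would fix $j\in[2,\mathsf{k}-1]$, take $\mathfrak{l},\mathfrak{r}$ to be polynomials in the generators of $\mathscr{A}_{j,\mathsf{k}}$ and $\mathfrak{b}$ a polynomial in those of $\mathscr{A}_{j+1,\mathsf{k}}$. Repeatedly applying the intertwining relations through the chain $T^{(2)},\dots,T^{(j-1)}$ converts the purely algebraic composite functional into a concrete trace in which $\mathcal{L}^{\lambda_k} = \theta^{\lambda_k}_{j,\mathsf{k}}(\mathfrak{l})$ and $\mathcal{R}^{\lambda_k} = \theta^{\lambda_k}_{j,\mathsf{k}}(\mathfrak{r})$ appear as uniformly bounded, QPT-block-encodable operators, the instrument $\Tilde{I}^{(j),\lambda_k}_{a_j|x_j}$ sits in the transformation slot, the sub-normalised state carried into player $j$ is $\big(\mathop{\bigcirc}_{m=2}^{j-1} I^{(m),\lambda_k}_{a_m|x_m}\big)(\rho_{a_1|x_1}^{\lambda_k})$, and the block-encodable observable is $P\big(\{(\mathop{\bigcirc}_{m=j+1}^{\mathsf{k}-1} I^{(m),\lambda_k}_{a_m|x_m})^*(M^{(\mathsf{k}),\lambda_k}_{a_\mathsf{k}|x_\mathsf{k}})\}\big)$. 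This is precisely the left-hand side of Theorem~\ref{th:kpl-ind-cpa} specialised to $i=j$; summing over $a_j$ and subtracting the $x_j$ and $x_j'$ versions gives a bound of $\eta(\lambda_k)\to 0$, establishing Eq.~\ref{eq:op-NS-comp-strat-k} on polynomials and, by density, on the full algebras.

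I expect the main obstacle to be the bookkeeping of the intertwining step: correctly pushing the algebraic composition $\phi_{a_1|x_1}\circ T^{(2)}\circ\dots\circ T^{(j-1)}$ through the hierarchy of universal C$^*$-algebras into the concrete trace over the strategy's Hilbert space, while tracking which factors land in the ``left'' ($\mathcal{L}$), ``right'' ($\mathcal{R}$), state, and observable slots of Theorem~\ref{th:kpl-ind-cpa}, and verifying that $\theta^{\lambda_k}_{j,\mathsf{k}}(\mathfrak{l}),\theta^{\lambda_k}_{j,\mathsf{k}}(\mathfrak{r})$ and the composed observables are genuinely uniformly norm-bounded with QPT block-encodings (which follows from Lemma~\ref{lem:poly-QPT-be} and the homomorphism property of the adjoint instruments). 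Conceptually nothing new happens beyond the tripartite case; the content is an iterated, index-careful application of the same commutation diagram, so once the $j=2$ instance is matched to Theorem~\ref{th:kpl-ind-cpa} the remaining indices follow identically.
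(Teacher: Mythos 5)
Your proposal is correct and follows essentially the same route as the paper: the paper's own proof of Theorem~\ref{th:asymptotic-constraints-k} simply states that one combines the proof structure of Theorem~\ref{th:asymptotic-constraints-3} with the constraints of Theorem~\ref{th:kpl-ind-cpa}, which is exactly the player-by-player elaboration you give (Banach-Alaoglu subsequence, rewriting on polynomials via the intertwining relations of Fig.~\ref{fig:univAlgebras-k-players}, invoking Theorem~\ref{th:poly-bipartite} for the state and Theorem~\ref{th:kpl-ind-cpa} at index $i=j$ for each intermediate instrument, then extending by density). No gaps.
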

\begin{proof}
    It is sufficient to combine the proof structure of Theorem \ref{th:asymptotic-constraints-3}, and the constraints coming from Theorem 
    \ref{th:kpl-ind-cpa}.
\end{proof}
 
\newpage
\section{Asymptotic quantum soundness of compiled multipartite non-local games}\label{sec:asymptotic-bound}
In this section we go back to the original motivation of this work: the quantum soundness of the KLVY compiler introduced in Sec. \ref{sec:compiled-non-local-games}, with a focus on multipartite non-local games.

In Section \ref{sec:quantum-compiler} we characterised the quantum strategies of a computationally bounded quantum (QPT) prover $\mathcal{Q}_\text{comp}^\mathsf{k}$,
by identifying in Theorem \ref{th:kpl-ind-cpa} a set of constraints that can be directly deduced from the security of the QFHE scheme. 
In Section \ref{sec:algebraic-strategies} we introduced the C$^*$-algebraic framework, that allows to properly define the asymptotic regime of perfect security.
In this regime, the set of constraints coming from the security of the cryptography simplifies, as reported in Theorem \ref{th:asymptotic-constraints-k}.

In this section, in particular in Section \ref{subsec:connection-compiled}, we show that these constraints are sufficient to apply the generalisation of the chained Radon-Nikodym Theorem that we developed in Section \ref{sec:chain-rule-rn}. The bound on the asymptotic quantum value of compiled games is a direct consequence.

For a more pedagogical exposition, as an instructive and intermediate step, we start presenting a new class of games, that we call sequential games with operational-non-signalling. 
These provide a very intuitive and natural use-case for the chain rule for the Radon-Nikodym theorem.
We show that the set of correlations they produce is equivalent to the set of correlations produced by $\mathsf{k}$-players commuting-operator strategies.
We believe that the definition of operational-non-signalling and its connection to commuting operator strategies might be of independent interested, hence we structured \ref{subsec:preliminaries-sequential-algebraic} and \ref{subsec:operationally-no-sig-commuting-operator} to be as independent as possible from the rest of the paper.

\subsection{Preliminaries on sequential algebraic correlations}%
\label{subsec:preliminaries-sequential-algebraic}

In this subsection, we focus on correlations generated by a fixed directed sequence of communicating players, referred to as sequential correlations.
These are a strictly more general class than non-local correlations, where communication between players is not allowed.
However, not all communication is allowed in the sequential setting;  from the fixed order of the players there is a clear notion of future and past, and players in the future cannot communicate to the past.

It is an interesting question to understand the necessary and sufficient constraints that strategies in a sequential setting have to satisfy to also admit a non-local explanation.
In the case of two players, the constraints are well understood in the finite dimensional case \cite{Wright_2023}, and have recently been extended to infinite dimensions in \cite{KMPSW24bound}.
However, much less is known when considering three players. For instance, an attempt to generalise the bipartite result was presented in \cite{postquantum-steering}, but the constraints they proposed resulted to be too weak to guarantee a correspondence with tripartite non-local correlations.
 
We restrict our attention to the case of sequential quantum correlations for $\mathsf{k}$ players.
We propose a new constraint, that we call operational-non-signalling, and we show that sequential strategies that satisfy this have a $\mathsf{k}$-partite non-local model.
Specifically, we work in the C*-algebraic framework, that correctly characterises finite and infinite dimensional Hilbert spaces.

A sequential game is characterised by a fixed directed sequence of $\mathsf{k}$ players, which are allowed to communicate with their next neighbour. More precisely, every player receives an independent classical input $\{x_i\}_{[\mathsf{k}]}$, and we allow one-way communication between players $i$ and $i+1$, for $i \in [1,\mathsf{k}-1]$.
At the end, every player has to produce a classical output $\{a_i\}_{[\mathsf{k}]}$.

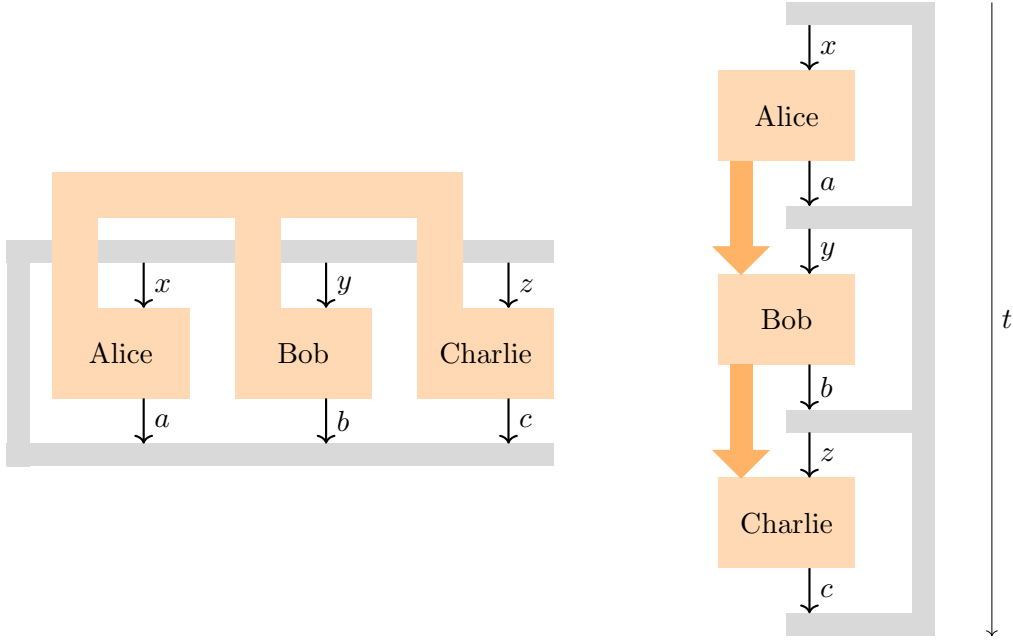
\begin{figure}[ht!]
  \centering
  \begin{minipage}{0.65\textwidth}
    \centering
\begin{tikzpicture}[scale=0.3]
    \fill[gray!30] (0,0) -- (24,0) -- (24,1) -- (0,1) -- cycle;
    \fill[gray!30] (0,0) -- (1,0) -- (1,10) -- (0,10) -- cycle;
    \fill[gray!30] (0,9) -- (24,9) -- (24,10) -- (0,10) -- cycle;

    \fill[orange!30] (4,3) rectangle (8,7);
    \fill[orange!30] (12,3) rectangle (16,7);
    \fill[orange!30] (20,3) rectangle (24,7);

    \draw[->, thick] (6,9) -- (6,7) node[right,midway] {$x$};
    \draw[->, thick] (14,9) -- (14,7)node[right,midway] {$y$};
    \draw[->, thick] (22,9) -- (22,7) node[right,midway] {$z$};
    
    \draw[->, thick] (6,3) -- (6,1)node[right,midway] {$a$};
    \draw[->, thick] (14,3) -- (14,1)node[right,midway] {$b$};
    \draw[->, thick] (22,3) -- (22,1)node[right,midway] {$c$};

    \fill[orange!30] (2,11) -- (20,11) -- (20,13) -- (2,13) -- cycle;
    \fill[orange!30] (2,3) -- (4,3) -- (4,11) -- (2,11) -- cycle;
    \fill[orange!30] (10,3) -- (12,3) -- (12,11) -- (10,11) -- cycle;
    \fill[orange!30] (18,3) -- (20,3) -- (20,11) -- (18,11) -- cycle;

    \node at (5,5){Alice};
    \node at (13,5) {Bob};
    \node at (21,5) {Charlie};

\end{tikzpicture}   \end{minipage}
  \hfill
  \begin{minipage}{0.3\textwidth}
    \centering

\begin{tikzpicture}[scale=0.3]
    \fill[gray!30] (8.5,0) -- (9.5,0) --(9.5,28) -- (8.5,28) -- cycle;
    \fill[gray!30] (3,0) -- (8.5,0) --(8.5,1) -- (3,1) -- cycle;
    \fill[gray!30] (3,9) -- (8.5,9) --(8.5,10) -- (3,10) -- cycle;
    \fill[gray!30] (3,18) -- (8.5,18) --(8.5,19) -- (3,19) -- cycle;
    \fill[gray!30] (3,27) -- (8.5,27) --(8.5,28) -- (3,28) -- cycle;

    \fill[orange!30] (0,3) rectangle (6,7);
    \fill[orange!30] (0,12) rectangle (6,16);
    \fill[orange!30] (0,21) rectangle (6,25);

    \fill[orange!60] (0.5,12) rectangle (1.5,8);
    \node[isosceles triangle,
	isosceles triangle apex angle=90,
	fill=orange!60,
        rotate=-90,
	minimum size =5] (T90)at (1,7.8){};

        \fill[orange!60] (0.5,21) rectangle (1.5,17);
    \node[isosceles triangle,
	isosceles triangle apex angle=90,
	fill=orange!60,
        rotate=-90,
	minimum size =5] (T90)at (1,16.8){};

    \node at (3,5){Charlie};
    \node at (3,14) {Bob};
    \node at (3,23) {Alice};

    \draw[->, thick] (4,27) -- (4,25) node[right,midway] {$x$};
    \draw[->, thick] (4,18) -- (4,16)node[right,midway] {$y$};
    \draw[->, thick] (4,9) -- (4,7) node[right,midway] {$z$};
    
    \draw[->, thick] (4,21) -- (4,19)node[right,midway] {$a$};
    \draw[->, thick] (4,12) -- (4,10)node[right,midway] {$b$};
    \draw[->, thick] (4,3) -- (4,1)node[right,midway] {$c$};

    \draw[->, very thin] (12,28) -- (12,0)node[right,midway] {$t$};

	isosceles triangle apex angle=90,
	fill=orange!30,
        rotate=-90,
	minimum size =5] (T90)at (6,0){};
    
\end{tikzpicture}
   \end{minipage}
    \caption{On the left, the standard tripartite non-local structure. On the right, a tripartite sequential game, where the parties are connected with a one-way unbounded quantum channel. Time is flowing downwards.}
    \label{fig:operaional-ns}
\end{figure}

In the simplest case of $\mathsf{k}=2$, the first player receives a classical input $x$ and produce a classical output $a$ and a quantum state $\phi_{a|x}$, which she will send to the second player. These states are sub-normalised, such that the sum over the output label is a correctly normalised state; these set of states are called assemblages.
The second player receives this quantum state, and can perform a measurement that depends on his classical input $y$ to generate a classical output $b$.
\begin{definition}[C$^*$-algebraic sequential correlations with $\mathsf{k}=2$, $\mathcal{A}_{seq}^2$ ]
A bipartite algebraic sequential strategy is specified by 
\begin{enumerate}
    \item a C$^*$ algebra $\mathscr{A}$;
    \item measurements $\mathfrak{b}_{b|y} \in \mathscr{A}_+$, such that $\sum_b \mathfrak{b}_{b|y} = \mathfrak{1}$ ;
    \item sub-normalised states $\phi_{a|x} : \mathscr{A} \to \mathbb{C}$, such that $\sum_a \phi_{a|x}(\mathfrak{a}) =\phi_x(\mathfrak{a})$ is a normlised state.
\end{enumerate}
All bipartite C$^*$-algebraic sequential correlations can be written in the following way : 
    \begin{equation*}
        p(a b|x y) = \phi_{a|x} (\mathfrak{b}_{b|y}).
    \end{equation*}
\end{definition}

Quantum instruments are a powerful and compact tool to characterise correlations generated by more than $2$ players. More precisely, the behaviour of the agents in between the first and the last one can be modelled as a set of quantum instruments, i.e. collections of completely-positive (CP) trace-non-increasing maps $\{I_a\}_a$, whose sum is a completely-positive trace-preserving (CPTP) map.
These properties are important to enforce that the valid assemblages of quantum states are mapped to valid quantum assemblages.
In the algebraic framework it is more natural to consider the adjoint of these maps; notice that the adjoint of a trace-preserving map is a unital map.
Hence, instead of the standard quantum instruments we are going to consider set of sub-unital CP maps $\{T_b\}_b$, whose sum is a unital CP map $T$; we will refer to this as transformations.

Let us explicitly write the case of $3$ players, which is represented in Fig. \ref{fig:operaional-ns}.
\begin{definition}[C$^*$-algebraic sequential correlations with $\mathsf{k}=3$, $\mathcal{A}_{seq}^3$]
A tripartite algebraic sequential strategy is specified by 
\begin{enumerate}
    \item two C$^*$-algebras $\mathscr{A}$ and $\mathscr{B}$;
    \item measurements $\{ \mathfrak{c}_{c|z} \}_{c,z} \in \mathscr{B}_+$;
    \item CP sub-unital maps $T_{b|y} : \mathscr{B} \to \mathscr{A}$, such that $\sum_b T_{b|y} (\mathfrak{b}) = T_y(\mathfrak{b})$ are unital;
    \item sub-normalised states $\phi_{a|x} : \mathscr{A} \to \mathbb{C}$.
\end{enumerate}
All tripartite C$^*$-algebraic sequential correlations can be written in the following way : 
    \begin{equation*}
        p(a b c|x y z) = \phi_{a|x} (T_{b|y}(\mathfrak{c}_{c|z})).
    \end{equation*}
\end{definition}
To generalise to $\mathsf{k}>3$ players, it is sufficient to consider $\mathsf{k}-1$ C$^*$-algebras and $\mathsf{k}-2$ maps $T$ between the algebras.

Every non-local strategy can be realised with a sequential strategy; instead of performing remote-state preparation using an entangled resource, the players can directly communicate quantum states.
The opposite is clearly not true, because a general sequential strategy can be signalling.
For instance, a player is allowed to communicate its input to the following one, and the output of the second can depend non-trivially on both inputs.

\subsection{Sequential games with operational-non-signalling}%
\label{subsec:operationally-no-sig-commuting-operator}
Which constraints must be imposed on sequential strategies to ensure that the resulting correlations admit a non-local counterpart?
To address this question, we introduce the notion of non-signalling operations, tailored to assemblages and transformations in the sequential scenario.
Readers familiar with contextuality will find many similarities with preparation and transformation equivalences for quantum models \cite{Spekkens_2005}.

\begin{definition}[Operational-non-signalling assemblage $\phi_{a|x}$]\label{def:op-ns-states}
A set of sub-normalised states $\{\phi_{a|x} \}_{a,x} \in \mathsf{P}(\mathscr{A})$ is said to be operational-non-signalling if summing over the output label $a$, the dependency on the input label $x$ is lost; equivalently, if there exists a state $\phi \in \mathsf{S}(\mathscr{A})$ such that 
    \begin{equation*}
        \sum_a \phi_{a|x} = \phi \qquad \forall x.
    \end{equation*}
\end{definition}

In the steering community, assemblages that satisfy this operational-non-signalling property are simply called non-signalling \cite{postquantum-steering}.

\begin{definition}[Operational-non-signalling transformation $T_{b|y}$]\label{def:op-ns-transformations}
A set of sub-unital maps $\{T_{b|y}\}_{b,y} \in \mathsf{CP}(\mathscr{A}, \mathscr{B})$ is operational-non-signalling if summing over the output label $b$, the dependency on the input label $y$ is lost; equivalently, if there exists a unital CP map $T \in \mathsf{UCP}(\mathscr{A}, \mathscr{B})$ such that
        \begin{equation*}
        \sum_b T_{b|y} = T \qquad \forall y.
    \end{equation*}
\end{definition}

Consider the sequential structure on the right of Fig. \ref{fig:operaional-ns}. We say that it is producing operationally-non-signalling correlations if the quantum states that Alice outputs satisfy Def. \ref{def:op-ns-states} and the quantum operations that Bob performs satisfy Def. \ref{def:op-ns-transformations}.

In the following we show that operational-non-signalling states and transformations are the minimal and sufficient condition to identify the subset of sequential correlations that have a non-local counterpart.
We first present known results for the bipartite case, in finite and infinite dimensions. Then, we present our original result for three players, which can be extended to any $\mathsf{k}>3$.

\subsubsection{Two players: S-G-HJW Theorem and Radon-Nikodym derivatives}
For $2$ players in finite dimensions, operational-no-signalling assemblages imply non-local strategies.
This important result can be formulated in many ways, and it has been interestingly rediscovered many times in slightly different forms; it is usually referred to as the S-G-HJW theorem.
Let us state it in its standard formulation commonly used in the steering literature.

\begin{theorem}[S-G-HJW Theorem \cite{schrodingerhjwtheorem}]\label{th:SHJW}
    A non-signalling assemblage $\{\sigma_{a|x}\}_{a,x}$ on a finite-dimensional Hilbert space $\mathcal{H}$ always has a bipartite quantum realisation, i.e. there exist an auxiliary Hilbert space $\mathcal{H}_A$, a POVM on this space $M_{a|x} \in \mathsf{B}(\mathcal{H}_A)$ and a state $\rho \in \mathsf{S}(\mathcal{H}_A \otimes \mathcal{H})$ such that 
    \begin{equation*}
        \sigma_{a|x} = tr_{\mathcal{H}_A} \left[(M_{a|x} \otimes \mathds{1}_\mathcal{H}) \rho\right].
    \end{equation*}
\end{theorem}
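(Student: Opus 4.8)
The plan is to use the non-signalling hypothesis to manufacture a single shared bipartite state, and then to recover the whole assemblage by \emph{steering} this state from the purifying system. First I would record the only consequence of the hypothesis that matters: operational non-signalling means the reduced operator $\rho := \sum_a \sigma_{a|x}$ is independent of $x$, and by assumption it is a density operator on $\mathcal{H}$. This single fixed $\rho$ is exactly what will let one bipartite state serve all settings $x$ at once, which is the entire content of the theorem.

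Next I would purify. Take $\mathcal{H}_A$ to be a copy of $\mathcal{H}$, fix the spectral decomposition $\rho = \sum_i p_i \ket{i}\bra{i}$ with $p_i > 0$ on the support, and set $\ket{\psi} = \sum_i \sqrt{p_i}\,\ket{i}_A\ket{i}_H$, so that $\rho_{AH} := \ket{\psi}\bra{\psi}$ obeys $\Tr_{\mathcal{H}_A}[\rho_{AH}] = \rho$. A one-line computation then gives, for every $X \in \mathsf{B}(\mathcal{H}_A)$,
\begin{equation*}
\Tr_{\mathcal{H}_A}\big[(X \otimes \mathds{1}_\mathcal{H})\,\rho_{AH}\big] = \rho^{1/2}\,X^T\,\rho^{1/2},
\end{equation*}
where $T$ is transpose in the eigenbasis of $\rho$. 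This identity tells me precisely which operator to pick, so I would define the candidate POVM by inverting it on the support of $\rho$,
\begin{equation*}
M_{a|x} := \big(\rho^{-1/2}\,\sigma_{a|x}\,\rho^{-1/2}\big)^T,
\end{equation*}
using the generalised inverse $\rho^{-1/2}$. The verification is then three routine checks: positivity of $M_{a|x}$ (since $\sigma_{a|x} \geq 0$, and both conjugation and transpose preserve positivity), normalisation $\sum_a M_{a|x} = (\rho^{-1/2}\rho\,\rho^{-1/2})^T = \Pi_{\mathrm{supp}(\rho)}$, and the steering identity $\sigma_{a|x} = \Tr_{\mathcal{H}_A}[(M_{a|x}\otimes\mathds{1})\rho_{AH}]$, obtained by substituting $M_{a|x}$ back into the displayed relation. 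To promote $\{M_{a|x}\}_a$ to a genuine POVM summing to $\mathds{1}_{\mathcal{H}_A}$, I would absorb the deficiency $\mathds{1} - \Pi_{\mathrm{supp}(\rho)}$ into one arbitrary outcome; since $\ket{\psi}$ has no overlap with the complement of the support, this changes none of the steered operators.

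Finally I would remark that this is just the finite-dimensional shadow of the algebraic machinery already available: setting $\phi(\cdot) = \Tr[\rho\,\cdot\,]$ as a state on $\mathscr{A} = \mathsf{B}(\mathcal{H})$, the vector $\ket{\psi}$ is the GNS cyclic vector $\ket{\Omega_\phi}$, the commutant $\pi_\phi(\mathscr{A})'$ is the algebra of the auxiliary system, and Lemma~\ref{lem:RN-sums} applied to the decomposition $\phi = \sum_a \phi_{a|x}$ delivers the operators $M_{a|x}$ directly as Radon--Nikodym derivatives. The main obstacle is not any single computation but the support bookkeeping: when $\rho$ is not full rank one must work with the generalised inverse and keep track of the fact that the POVM need only resolve the identity on the part of $\mathcal{H}_A$ that the purification actually sees. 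Conceptually, the one indispensable input is the non-signalling condition itself, since it is what guarantees a single $\rho$, and hence a single $\rho_{AH}$, valid uniformly across all inputs $x$.
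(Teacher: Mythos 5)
The paper does not actually prove Theorem~\ref{th:SHJW}; it cites it from the literature and only remarks afterwards that, in the algebraic framework, it follows from the Radon--Nikodym theorem for positive functionals (Theorem~\ref{th:rn} and Lemma~\ref{lem:RN-sums}). Your proposal is a correct, self-contained proof by the classical direct route: purify the common reduced state $\rho$, compute the steering identity $\Tr_{\mathcal{H}_A}[(X\otimes\mathds{1})\rho_{AH}] = \rho^{1/2}X^{T}\rho^{1/2}$, and invert it on the support to read off $M_{a|x}=(\rho^{-1/2}\sigma_{a|x}\rho^{-1/2})^{T}$. This is the standard finite-dimensional argument, and your closing remark correctly identifies it as the concrete shadow of the paper's preferred abstraction: the purification is the GNS vector for $\phi=\Tr[\rho\,\cdot\,]$, the auxiliary algebra is the commutant, and the $M_{a|x}$ are the Radon--Nikodym derivatives of Lemma~\ref{lem:RN-sums}. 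What the elementary route buys is an explicit formula and no operator-algebraic overhead; what the algebraic route buys is dimension-independence and the ability to iterate (which is what the paper actually needs for the multipartite chain rule).

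One small point you should make explicit rather than wave at as ``support bookkeeping'': substituting $M_{a|x}$ back into the steering identity yields $\Pi_{\mathrm{supp}(\rho)}\,\sigma_{a|x}\,\Pi_{\mathrm{supp}(\rho)}$, and recovering $\sigma_{a|x}$ itself requires the observation that $0\le\sigma_{a|x}\le\rho$ forces $\mathrm{supp}(\sigma_{a|x})\subseteq\mathrm{supp}(\rho)$. This is one line (a positive operator dominated by $\rho$ vanishes on $\ker\rho$), but without it the verification step as written returns a compressed operator rather than $\sigma_{a|x}$. With that line added, the proof is complete.
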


What in the steering scenario is a remote state preparation, in the sequential scenario simply becomes an assemblage of states communicated through a quantum channel.
\begin{corollary}[From \cite{Wright_2023} and \cite{KMPSW24bound}]
    A $2$-players quantum sequential strategy in finite dimension for which the states prepared $\{\sigma_{a|x}\}_{a,x}$ satisfy the non-signalling assemblage condition $\sum_a \sigma_{a|x}= \sigma \in \mathsf{S}(\mathcal{H})$, always have a $2$-player non-local quantum realisation.
\end{corollary}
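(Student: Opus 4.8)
The plan is to deduce this corollary essentially immediately from the S-G-HJW theorem (Theorem~\ref{th:SHJW}), which does all the heavy lifting in finite dimensions. First I would fix notation: a bipartite sequential strategy has Alice preparing the sub-normalised states $\{\sigma_{a|x}\}$ on $\mathcal{H}$ and communicating them to Bob, who measures with a POVM $\{M_{b|y}\} \subseteq \mathsf{B}(\mathcal{H})$, so that the generated correlation is
\begin{equation*}
    p(ab|xy) = \Tr\left[ M_{b|y}\, \sigma_{a|x} \right].
\end{equation*}
The hypothesis of the corollary is exactly the non-signalling assemblage condition $\sum_a \sigma_{a|x} = \sigma \in \mathsf{S}(\mathcal{H})$, which is precisely the premise needed to invoke Theorem~\ref{th:SHJW}.

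Second, I would apply S-G-HJW to the assemblage $\{\sigma_{a|x}\}$. This produces an auxiliary finite-dimensional Hilbert space $\mathcal{H}_A$, a POVM $\{N_{a|x}\} \subseteq \mathsf{B}(\mathcal{H}_A)$ (renamed here to avoid a clash with Bob's measurement), and a state $\rho \in \mathsf{S}(\mathcal{H}_A \otimes \mathcal{H})$ such that
\begin{equation*}
    \sigma_{a|x} = \Tr_{\mathcal{H}_A}\left[ (N_{a|x} \otimes \mathds{1}_\mathcal{H})\, \rho \right].
\end{equation*}
Substituting this into the expression for $p(ab|xy)$ and using the elementary partial-trace identity $\Tr[M_{b|y}\,\Tr_{\mathcal{H}_A}[(N_{a|x}\otimes\mathds{1})\rho]] = \Tr[(N_{a|x}\otimes M_{b|y})\rho]$, the correlation becomes
\begin{equation*}
    p(ab|xy) = \Tr\left[ (N_{a|x} \otimes M_{b|y})\, \rho \right],
\end{equation*}
which is manifestly a bipartite quantum correlation: Alice measures $\{N_{a|x}\}$ on $\mathcal{H}_A$, Bob measures $\{M_{b|y}\}$ on $\mathcal{H}$, and they share $\rho$.

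Third, to match the precise definition of the quantum set (Definition~\ref{def:nl-2}), which demands a shared \emph{pure} state together with product measurements, I would purify $\rho = \Tr_{\mathcal{H}_E}[\ket{\psi}\bra{\psi}]$ on an enlarged space $\mathcal{H}_A \otimes \mathcal{H} \otimes \mathcal{H}_E$ and absorb the purifying factor $\mathcal{H}_E$ into Alice's side, extending each $N_{a|x}$ to $N_{a|x} \otimes \mathds{1}_{\mathcal{H}_E}$. This yields
\begin{equation*}
    p(ab|xy) = \bra{\psi}\, (N_{a|x} \otimes \mathds{1}_{\mathcal{H}_E}) \otimes M_{b|y}\, \ket{\psi},
\end{equation*}
a bona fide $2$-player non-local quantum realisation, completing the argument.

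I do not expect a genuine obstacle here, since the finite-dimensional structure is exactly what makes S-G-HJW available; the only points requiring care are the bookkeeping in the partial-trace identity and the purification step that upgrades the mixed-state realisation to the pure-state form required by Definition~\ref{def:nl-2}. The real difficulty is deferred to the infinite-dimensional generalisation, where S-G-HJW fails and must be replaced by the Radon-Nikodym theorem for positive functionals (Theorem~\ref{th:rn}); that is the substantive case treated in the sequel, where the Radon-Nikodym derivatives play the role of Alice's POVM elements inside the GNS representation of $\sigma$.
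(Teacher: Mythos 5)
Your proposal is correct and follows essentially the same route as the paper's own proof: invoke the S-G-HJW theorem on the non-signalling assemblage and then pull the partial trace out to obtain the product-measurement form $\Tr[(N_{a|x}\otimes M_{b|y})\rho]$. Your additional purification step to upgrade the mixed state $\rho$ to a pure shared state is a small point of extra care that the paper's proof passes over silently, but it does not change the substance of the argument.
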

\begin{proof}
    Start with a general quantum sequential correlation. Using Theorem \ref{th:SHJW} rewrite the non-signalling assemblage as the partial trace of a bigger state living in $\mathcal{H}_A \otimes \mathcal{H}$, where $\mathcal{H}_A$ is the auxiliary Hilbert space. Then simply pull out the partial trace
    \begin{align*}
        tr_{\mathcal{H}}(N_{b|y} \sigma_{a|x}) 
        &= tr_{\mathcal{H}}(N_{b|y} tr_{\mathcal{H}_A} \left[(M_{a|x} \otimes \mathds{1}_\mathcal{H}) \rho\right]) \\
        &= tr_{\mathcal{H}_A \mathcal{H}} \left[(M_{a|x} \otimes N_{b|y}) \rho \right]
    \end{align*}
    to obtain a bipartite quantum non-local strategy as in Def. \ref{def:nl-2}
\end{proof}

In the algebraic framework, the S-G–HJW Theorem can be seen as a consequence of the Radon-Nikodym Theorem (theorem~\ref{th:rn}). This theorem comes in many different flavours; it is usually cited as an important result in measure theory, theorem~\ref{th:rn} states its version for positive linear functionals.

The similarity with the S-G–HJW Theorem is evident if we consider finite decompositions of positive functionals $\phi= \sum_i \phi_i$: every addend is dominated by the sum, hence they can all be represented with the GNS construction (Theorem \ref{th:GNS}) of the sum, at the cost of considering the Radon-Nikodym derivatives $D_i$. 
The set of Radon-Nikodym derivatives $\{D_i\}_i$ can be interpreted as Alice's measurement.
This is formalised by Lemma~\ref{lem:RN-sums}.
Notice that in the algebraic framework, commuting-operator strategies naturally arise, instead of the tensor product structure.
\begin{corollary}[From \cite{KMPSW24bound}] %
    The set of correlations produced by 2-players sequential strategies with operation-non-signalling is equal to the set of correlations produced by commuting operator strategies.
\end{corollary}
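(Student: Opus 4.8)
The plan is to prove the claimed set equality by establishing both inclusions, where the inclusion of commuting-operator correlations into sequential operational-non-signalling correlations is routine, while the converse carries all the content and is driven by the Radon-Nikodym decomposition of Lemma~\ref{lem:RN-sums}.

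For the easy inclusion, I would start from a commuting-operator strategy as in Definition~\ref{def:co-2}, given by a Hilbert space $\mathcal{H}$, commuting POVMs $\{M^{(1)}_{a|x}\}$ and $\{M^{(2)}_{b|y}\}$, and a state $\ket{\psi}$. Take $\mathscr{A}$ to be the unital C$^*$-algebra generated by Bob's measurement operators, set $\mathfrak{b}_{b|y} = M^{(2)}_{b|y}$, and define the assemblage by $\phi_{a|x}(\mathfrak{a}) = \bra{\psi} M^{(1)}_{a|x}\, \mathfrak{a} \ket{\psi}$. Positivity of each $\phi_{a|x}$ follows from the commutation of $M^{(1)}_{a|x}$ with every $\mathfrak{a}\in\mathscr{A}$, since then $\phi_{a|x}(\mathfrak{a}^*\mathfrak{a}) = \bra{\mathfrak{a}\psi} M^{(1)}_{a|x} \ket{\mathfrak{a}\psi} \geq 0$. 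Summing over $a$ gives $\sum_a \phi_{a|x}(\mathfrak{a}) = \bra{\psi}\mathfrak{a}\ket{\psi}$, which is independent of $x$, so the operational-non-signalling condition of Definition~\ref{def:op-ns-states} holds; and by construction $\phi_{a|x}(\mathfrak{b}_{b|y}) = \bra{\psi} M^{(1)}_{a|x} M^{(2)}_{b|y}\ket{\psi}$ reproduces the original correlation.

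For the converse, I would begin with an arbitrary sequential operational-non-signalling strategy. By Definition~\ref{def:op-ns-states} there is a single normalised state $\phi = \sum_a \phi_{a|x} \in \mathsf{S}(\mathscr{A})$, the \emph{same} for every input $x$. I would form its minimal GNS triple $(\mathcal{H}_\phi, \pi_\phi, \Omega_\phi)$ via Theorem~\ref{th:GNS}. For each fixed $x$, the identity $\phi = \sum_a \phi_{a|x}$ is a finite decomposition of $\phi$ into positive functionals, so Lemma~\ref{lem:RN-sums} supplies positive operators $D_{a|x} \in \pi_\phi(\mathscr{A})' \subseteq \mathsf{B}(\mathcal{H}_\phi)$ with $\sum_a D_{a|x} = \mathds{1}_{\mathcal{H}_\phi}$ and $\phi_{a|x}(\mathfrak{a}) = \bra{\Omega_\phi} D_{a|x}\, \pi_\phi(\mathfrak{a}) \ket{\Omega_\phi}$. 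I would then read off a commuting-operator model on $\mathcal{H}=\mathcal{H}_\phi$ with $\ket{\psi}=\ket{\Omega_\phi}$ by taking Alice's POVM to be $M^{(1)}_{a|x} = D_{a|x}$ and Bob's POVM to be $M^{(2)}_{b|y} = \pi_\phi(\mathfrak{b}_{b|y})$. The latter are positive (images of positive elements under a $*$-homomorphism) and sum to $\mathds{1}_{\mathcal{H}_\phi}$ by unitality of $\pi_\phi$, and since $D_{a|x}\in\pi_\phi(\mathscr{A})'$ they commute with every $\pi_\phi(\mathfrak{b}_{b|y})$. The correlation is recovered as $p(ab|xy) = \phi_{a|x}(\mathfrak{b}_{b|y}) = \bra{\Omega_\phi} M^{(1)}_{a|x} M^{(2)}_{b|y}\ket{\Omega_\phi}$, exactly matching Definition~\ref{def:co-2}.

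The conceptual heart of the argument, and the step I expect to be the main obstacle to appreciate, is recognising that operational-non-signalling is precisely the hypothesis that lets all Radon-Nikodym derivatives $\{D_{a|x}\}$ be realised on \emph{one} common GNS representation with a \emph{single} cyclic vector: without a dominant functional $\phi$ that is independent of $x$, each input would a priori dominate a different functional and produce operators on a different Hilbert space, obstructing the identification of one Alice measurement family sitting in the commutant of Bob's algebra. The remaining points, namely that $\{D_{a|x}\}$ genuinely forms a POVM (positivity and completeness on all of $\mathcal{H}_\phi$, not merely on the cyclic subspace, as guaranteed by Lemma~\ref{lem:RN-sums}) and that $\{\pi_\phi(\mathfrak{b}_{b|y})\}$ is a POVM, are routine verifications.
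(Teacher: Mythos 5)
Your proposal is correct and follows essentially the same route as the paper: for the nontrivial inclusion it uses the $x$-independent dominant functional $\phi=\sum_a\phi_{a|x}$, its GNS representation, and Lemma~\ref{lem:RN-sums} to obtain the Radon--Nikodym derivatives $D_{a|x}$ in $\pi_\phi(\mathscr{A})'$ as Alice's POVM, exactly as in the paper. You additionally spell out the easy converse inclusion (commuting-operator correlations admit a sequential operationally-non-signalling realisation), which the paper's proof of this corollary leaves implicit, so nothing is missing.
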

\begin{proof}
    Consider $\phi = \sum_a \phi_{a|x}$ for all $x$.
    For every $\phi_{a|x}$, apply Radon-Nikodym (Theorem \ref{th:rn}) for $\phi_{a|x} \leq \phi$:
    \begin{align*}
        \phi_{a|x} (\mathfrak{b}_{b|y}) 
        &= \bra{\Omega_\phi} A_{a|x} \pi_\phi(\mathfrak{b}_{b|y})\ket{\Omega_\phi}.
    \end{align*}
    The commutation is directly given by Radon-Nikodym.
    Use Lemma \ref{lem:RN-sums} to show that the new operators $A_{a|x}$ are positive and sum to the identity, i.e. are valid POVMs. The representation $\pi_\phi$ is a $*$-homomorphism, hence $\pi_\phi(\mathfrak{b}_{b|y})$ is also a valid POVM.
    This proves that we obtain a bipartite commuting operator strategy as in Def. \ref{def:co-2}.
\end{proof}

\subsubsection{Three players: an application of the chain rule for Radon-Nikodym derivatives}

Consider now a tripartite experiment with operationally-non-signalling states $\phi_{a|x}$ and transformations $T_{b|y}$.
Intuitively, composing together operationally-non-signalling states and transformations, we aim to retrieve a tripartite commuting operator strategy.

A naive first approach is to sequentially apply Theorem \ref{th:rn} for the states and Theorem \ref{thm:rn-for-cp-maps} for the transformations 
\begin{equation*}
    \phi_{a|x}(T_{b|y}(\mathfrak{c}_{c|z})) = \bra{\Omega_\phi} A_{a|x} V_T^* B_{b|y} \pi_T(\mathfrak{c}_{c|z})V_T\ket{\Omega_\phi}.
\end{equation*}
The Radon-Nikodym derivatives $A_{a|x} \in \mathsf{B}(\mathcal{H}_\phi)$ and $B_{b|y} \in \mathsf{B}(\mathcal{H}_T)$ can be interpreted as POVMs, together with $\pi_T(\mathfrak{c}_{c|z})\in \mathsf{B}(\mathcal{H}_T)$.
We also have the following commutation relations between them 
\begin{align*}
    [B_{b|y}, \pi_T(\mathfrak{c}_{c|z})] =0, \qquad [A_{a|x}, V_T^* B_{b|y} \pi_T(\mathfrak{c}_{c|z})V_T]=0.
\end{align*}
However, to prove that they all pairwise commute is not trivial. What makes this hard is that the POVMs live in different Hilbert spaces, connected by the isometry $V_T$.

This is the same problem that arised, and that we solved, in the proof of the chained rule for Radon-Nikodym (Theorem \ref{th:chain-2-RN}).
In other words, a very natural application of the chained rule for Radon-Nikodym is to show that sequential tripartite strategies with operationally-non-signalling states and transformations are equivalent to tripartite commuting operator strategies.
\begin{corollary}\label{cor:3seqplayers}
    The set of correlations produced by 3-player sequential operationally-non-signalling strategies is equal to the set of correlations produced by 3-player commuting operator strategies.
\end{corollary}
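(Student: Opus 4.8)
The plan is to prove both inclusions. The containment of commuting-operator correlations inside sequential operationally-no-signalling correlations is the easy direction and is handled by a square-root (Naimark-style) construction, while the reverse containment is the substantive direction and is exactly where the chain rule for Radon--Nikodym derivatives (Theorem~\ref{th:chain-2-RN}) does the work.

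For the direction from commuting-operator to sequential, I would start from a commuting-operator strategy on a Hilbert space $\mathcal{H}$ with state $\ket{\psi}$ and pairwise commuting POVMs $\{M^{(1)}_{a|x}\}$, $\{M^{(2)}_{b|y}\}$, $\{M^{(3)}_{c|z}\}$ as in Definition~\ref{def:co-k}. I set $\mathscr{B}$ to be the $C^\ast$-algebra generated by Charlie's operators $\{M^{(3)}_{c|z}\}$, put $\mathfrak{c}_{c|z}=M^{(3)}_{c|z}$, and take $\mathscr{A}=\mathsf{B}(\mathcal{H})$. Bob's transformation is defined by $T_{b|y}(Y)=(M^{(2)}_{b|y})^{1/2}\,Y\,(M^{(2)}_{b|y})^{1/2}$, which is completely positive and sub-unital, and Alice's assemblage by $\phi_{a|x}(X)=\bra{\psi}(M^{(1)}_{a|x})^{1/2}\,X\,(M^{(1)}_{a|x})^{1/2}\ket{\psi}$, a positive sub-normalised functional. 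The key point for the operationally-no-signalling conditions (Definitions~\ref{def:op-ns-states} and~\ref{def:op-ns-transformations}) is commutativity: since $\{M^{(2)}_{b|y}\}$ commutes with every generator of $\mathscr{B}$, it commutes with all of $\mathscr{B}$, so $\sum_b T_{b|y}(Y)=\sum_b Y\,M^{(2)}_{b|y}=Y$ is the inclusion $\mathscr{B}\hookrightarrow\mathscr{A}$, independent of $y$; similarly $\sum_a\phi_{a|x}(X)=\bra{\psi}X\ket{\psi}$ is independent of $x$. Finally, expanding $\phi_{a|x}(T_{b|y}(\mathfrak{c}_{c|z}))$ and using the pairwise commutation to collect the square roots back into $M^{(1)}_{a|x}M^{(2)}_{b|y}M^{(3)}_{c|z}$ recovers the original correlation $p(abc|xyz)$.

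For the converse, I would take a sequential operationally-no-signalling strategy with algebras $\mathscr{A}$, $\mathscr{B}$, assemblage $\{\phi_{a|x}\}$ and transformations $\{T_{b|y}\}$ producing $p(abc|xyz)=\phi_{a|x}(T_{b|y}(\mathfrak{c}_{c|z}))$. Operational no-signalling furnishes a single dominant state $\phi=\sum_a\phi_{a|x}$ and a single dominant unital CP map $T=\sum_b T_{b|y}$, both independent of the input, with $\phi_{a|x}\le\phi$ and $T_{b|y}\le T$. I then apply the chain rule (Theorem~\ref{th:chain-2-RN}, in the functional case of Remark~\ref{remark:rn-2-chain-functional}) with $\mathscr{A}_2=\mathscr{B}$, $\mathscr{A}_1=\mathscr{A}$, $\mathscr{A}_0=\mathbb{C}$, dominant maps $R^{(1)}=\phi$, $R^{(2)}=T$ and dominated maps $S^{(1)}_{a|x}=\phi_{a|x}$, $S^{(2)}_{b|y}=T_{b|y}$. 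This yields a Hilbert space $\mathcal{K}$, a unit vector $\ket{\Omega}$, a representation $\pi$, and commuting self-adjoint operators $F_{a|x},F_{b|y}\in\pi(\mathscr{B})'$ with
\[
  \phi_{a|x}(T_{b|y}(\mathfrak{c}_{c|z}))=\bra{\Omega}F_{a|x}F_{b|y}\pi(\mathfrak{c}_{c|z})\ket{\Omega},
\]
and with $F_{a|x}$, $F_{b|y}$, $\pi(\mathfrak{c}_{c|z})$ pairwise commuting. Invoking Lemma~\ref{lemma:chain-2-rn-sums} together with Corollary~\ref{cor:extension_povm} for the decompositions $\phi=\sum_a\phi_{a|x}$ and $T=\sum_b T_{b|y}$ shows $\sum_a F_{a|x}=\sum_b F_{b|y}=\mathds{1}_{\mathcal{K}}$, so $\{F_{a|x}\}$ and $\{F_{b|y}\}$ are POVMs; and since $\pi$ is a unital $\ast$-homomorphism, $\{\pi(\mathfrak{c}_{c|z})\}_c$ is a POVM as well. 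Identifying Alice's, Bob's and Charlie's measurements with these three mutually commuting POVMs and the shared state with $\ket{\Omega}$ gives precisely a tripartite commuting-operator strategy in the sense of Definition~\ref{def:co-k}.

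The main obstacle is the pairwise commutation of the three POVM families in the converse direction: a naive iteration of the Radon--Nikodym theorem (first Theorem~\ref{th:rn} for $\phi$, then Theorem~\ref{thm:rn-for-cp-maps} for $T$) places the derivatives on different Hilbert spaces linked by the Stinespring isometry, and the commutator $[A_{a|x},\,V_T^\ast B_{b|y}\pi_T(\mathfrak{c}_{c|z})V_T]=0$ does not by itself yield commutation of the lifted operators on the whole dilation space. This is exactly the difficulty resolved by the Lifting Lemma~\ref{lem:extended-derivative} and the chain rule (Theorem~\ref{th:chain-2-RN}), so once those are in hand the corollary follows essentially by bookkeeping. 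The only routine verifications left are the positivity and normalisation of the constructed objects and the collection of square roots in the forward direction, both immediate from commutativity.
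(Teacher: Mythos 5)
Your converse direction (sequential operationally-no-signalling $\Rightarrow$ commuting-operator) is exactly the paper's proof: apply Lemma~\ref{lemma:chain-2-rn-sums} in the functional case of Remark~\ref{remark:rn-2-chain-functional} to the dominant pair $\phi=\sum_a\phi_{a|x}$, $T=\sum_b T_{b|y}$, read off the pairwise commuting Radon--Nikodym derivatives $F_{a|x},F_{b|y}$ and the representation $\pi(\mathfrak{c}_{c|z})$, and verify the POVM structure via the sum-to-identity of the derivatives and the unitality of $\pi$. You also correctly identify that the whole difficulty lives in the Lifting Lemma and the chain rule, not in this corollary.

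The one genuine problem is in your forward direction, which the paper's proof actually omits (it only establishes the inclusion of sequential ONS correlations into commuting-operator ones, leaving the converse implicit). You set $\mathscr{A}=\mathsf{B}(\mathcal{H})$ and define $\phi_{a|x}(X)=\bra{\psi}(M^{(1)}_{a|x})^{1/2}X(M^{(1)}_{a|x})^{1/2}\ket{\psi}$, then claim $\sum_a\phi_{a|x}(X)=\bra{\psi}X\ket{\psi}$ "similarly" by commutativity. But Definition~\ref{def:op-ns-states} requires $\sum_a\phi_{a|x}=\phi$ as an identity of functionals on \emph{all} of $\mathscr{A}$, and for a generic $X\in\mathsf{B}(\mathcal{H})$ that does not commute with $M^{(1)}_{a|x}$ the sum $\sum_a(M^{(1)}_{a|x})^{1/2}X(M^{(1)}_{a|x})^{1/2}$ genuinely depends on $x$ (take $X$ a rank-one projector not commuting with Alice's POVM). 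The commutativity argument you use for $\sum_b T_{b|y}=\mathrm{id}$ works because the inputs there range only over the algebra generated by Charlie's operators, with which Bob's operators commute; the analogous argument for the assemblage needs the inputs to range only over operators commuting with Alice's POVMs. The fix is simply to take $\mathscr{A}$ to be the C$^*$-algebra generated by $\{M^{(2)}_{b|y}\}$ and $\{M^{(3)}_{c|z}\}$ (which lies in the commutant of Alice's algebra) rather than all of $\mathsf{B}(\mathcal{H})$; with that choice your square-root construction and the recombination into $\bra{\psi}M^{(1)}_{a|x}M^{(2)}_{b|y}M^{(3)}_{c|z}\ket{\psi}$ go through verbatim.
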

\begin{proof}
    Consider general tripartite sequential correlations
    \begin{align*}
        p(abc|xyz) = \phi_{a|x} \circ T_{b|y} (\mathfrak{c}_{c|z})
    \end{align*}
    where $\{ \phi_{a|x}\}_{a,x}$ is an operationally-non-signalling assemblage of states, with $\sum_a \phi_{a|x} = \phi$,  and $\{T_{b|y}\}_{b,y}$ are operationally-non-signalling transformations, with $\sum_b T_{b|y} = T$.
    We then apply the version of the chain rule for sequential Radon-Nikodym derivatives formulated in Lemma~\ref{lemma:chain-2-rn-sums}, yielding 
    \begin{align*}
        p(abc|xyz)= \bra{\Omega} F_{a|x} F_{b|y} \pi(\mathfrak{c}_{c|z}) \ket{\Omega},
    \end{align*}
    taking into account that $\phi_{a|x}, \phi$ are functionals and thus Remark~\ref{remark:rn-2-chain-functional} applies. Note, that $\ket{\Omega} \in \mathcal{K}$ and $\|\ket{\Omega}\| = 1$.
    Collecting all the commutation relations we get
    \begin{align*}
        [F_{a|x}, F_{b|y}] = [F_{a|x}, \pi(\mathfrak{c}_{c|z})] = [F_{b|y}, \pi(\mathfrak{c}_{c|z})] = 0, \qquad \forall a,x,b,y,c,z.
    \end{align*}
    In order for this to be a commuting operator strategy, we also need to check that the operators form valid sets of measurements.
    Consider $\pi(\mathfrak{c}_{c|z}) \in \mathsf{B}(\mathcal{K}) $; we know that $\{ \mathfrak{c}_{c|z}\}_{c,z}$ is a valid set of measurements, and $\pi(\cdot)$ is a $*$-homomorphism, therefore $\{\pi(\mathfrak{c}_{c|z}) \}_{c,z}$ is a set of measurements on $\mathsf{B}(\mathcal{K})$.
    Lemma~\ref{lemma:chain-2-rn-sums} further implies that $\{F_{a|x}\}_{a,x}$ and $\{F_{b|y}\}_{b,y}$ are describing valid POVMs on $\mathsf{B}(\mathcal{K})$.
    
    This shows that the POVMs $\{F_{a|x}\}_{a,x}$, $\{F_{b|y}\}_{b,y}$, and $\{\pi(\mathfrak{c}_{c|z}) \}_{c,z}$, together with the resource state $\ket{\Omega}$ defines a commuting operator strategy for three parties.
\end{proof}

\subsubsection{More than three players}
Let us start with a more precise characterisation of algebraic sequential correlations in the case of more than three provers.
\begin{definition}[C$^*$-algebraic sequential correlations with $\mathsf{k}$, $\mathcal{A}_{seq}^\mathsf{k}$]\label{def:k-alg-seq}
A $\mathsf{k}$-partite algebraic sequential strategy is specified by 
\begin{enumerate}
    \item $\mathsf{k}-1$ C$^*$-algebras $\{\mathscr{A}_i\}_{i=1}^{\mk -1}$;
    \item measurements $\mathfrak{a}_{\mathsf{k}-1}|x_{\mathsf{k}-1} \in \mathscr{A}_{\mk-1}^+$;
    \item $\mk-2$ CP sub-unital maps $T_{a_j|x_j} : \mathscr{A}_j \to \mathscr{A}_{j-1}$, such that $\sum_{a_j} T_{a_j|x_j} (\mathfrak{g}) = T_{x_j}(\mathfrak{g})$ is unital;
    \item sub-normalised states $\phi_{a_1|x_1} : \mathscr{A}_1 \to \mathbb{C}$, such that $\sum_{a_1} \phi_{a_1|x_1} = \phi_{x_1}$ is a normalised state.
\end{enumerate}
All $\mk$-partite C$^*$-algebraic sequential correlations can be written in the following way : 
    \begin{equation*}\label{eq:k-seq-corr}
        p(a_1\dots a_\mathsf{k}| x_1\dots x_\mathsf{k}) = \phi_{a_1|x_1} \circ T_{a_2|x_2} \circ \dots \circ T_{a_{\mathsf{k}-1}|x_{\mathsf{k}-1}}(\mathfrak{a}_{a_\mathsf{k}|x_\mathsf{k}}).
    \end{equation*}
\end{definition}

The chain rule can be naturally extended to any finite number of players $\mathsf{k} \in \mathbb{N}$; consequently, the previous corollary also naturally generalises to $\mathsf{k}$-partite correlations.

\begin{corollary}\label{cor:k-seq-players}
    For any $\mathsf{k}\in\mathbb{N}$, the correlations generated by sequential games with $\mathsf{k}$ players where all communication observes operational-no-signalling are exactly those correlations generated by $\mathsf{k}$ players with commuting operator strategies.
\end{corollary}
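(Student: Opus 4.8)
The statement is the $\mathsf{k}$-player generalisation of Corollary~\ref{cor:3seqplayers}, so the plan is to follow exactly the same two-directional argument but now drive the ``sequential $\Rightarrow$ commuting-operator'' direction by the general chain rule (Theorem~\ref{th:chained-rn-k}) rather than its two-map special case. First I would dispatch the easy inclusion: given any $\mathsf{k}$-partite commuting operator strategy $(\mathcal{H}, \{M^{(n)}_{a_n|x_n}\}, \ket{\psi})$ as in Definition~\ref{def:co-k}, I realise it sequentially by letting the first player prepare the assemblage $\phi_{a_1|x_1}(\cdot) = \bra{\psi} M^{(1)}_{a_1|x_1}(\cdot) \ket{\psi}$ on the algebra generated by the remaining (mutually commuting) operators, and letting each intermediate player $j$ act by the $*$-homomorphic transformation $T_{a_j|x_j}(\mathfrak{g}) = M^{(j)}_{a_j|x_j}\,\mathfrak{g}$, with the last player measuring $\mathfrak{a}_{a_\mathsf{k}|x_\mathsf{k}} = M^{(\mathsf{k})}_{a_\mathsf{k}|x_\mathsf{k}}$. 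Commutativity of the $M^{(n)}$ across players guarantees that each $T_{a_j|x_j}$ is completely positive, and the POVM normalisations $\sum_{a_j} M^{(j)}_{a_j|x_j} = \mathds{1}$ translate directly into the operational-no-signalling conditions of Definitions~\ref{def:op-ns-states} and~\ref{def:op-ns-transformations}, so this strategy reproduces $p(\vec a|\vec x) = \bra{\psi}\prod_n M^{(n)}_{a_n|x_n}\ket{\psi}$.

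For the substantive inclusion I would start from an arbitrary $\mathsf{k}$-partite sequential operationally-no-signalling strategy in the form of Definition~\ref{def:k-alg-seq}, namely
\begin{equation*}
    p(a_1\dots a_\mathsf{k}|x_1\dots x_\mathsf{k}) = \phi_{a_1|x_1} \circ T_{a_2|x_2} \circ \dots \circ T_{a_{\mathsf{k}-1}|x_{\mathsf{k}-1}}(\mathfrak{a}_{a_\mathsf{k}|x_\mathsf{k}}),
\end{equation*}
with $\sum_{a_1}\phi_{a_1|x_1} = \phi$ and $\sum_{a_j} T_{a_j|x_j} = T^{(j)}$ unital, independent of the inputs. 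The operational-no-signalling conditions are precisely what let me fit the decompositions into the hypotheses of Theorem~\ref{th:chained-rn-k}: each dominated map is bounded by the input-independent dominant map, $\phi_{a_1|x_1}\leq\phi$ and $T_{a_j|x_j}\leq T^{(j)}$. Invoking the $\mathsf{k}$-player chain rule (and its decomposition refinement, the evident induction of Lemma~\ref{lemma:chain-2-rn-sums}, together with Remark~\ref{remark:rn-2-chain-functional} to absorb the fact that $\phi$ is a functional, i.e.\ $\mathscr{A}_0 = \mathbb{C}$) produces a single Hilbert space $\mathcal{K}$, a unit vector $\ket{\Omega}\in\mathcal{K}$, a $*$-representation $\pi$, and pairwise-commuting POVM operators $F^{(1)}_{a_1|x_1}, \dots, F^{(\mathsf{k}-1)}_{a_{\mathsf{k}-1}|x_{\mathsf{k}-1}}$, all commuting with $\pi(\mathscr{A}_{\mathsf{k}-1})$, such that
\begin{equation*}
    p(\vec a|\vec x) = \bra{\Omega} F^{(1)}_{a_1|x_1} \cdots F^{(\mathsf{k}-1)}_{a_{\mathsf{k}-1}|x_{\mathsf{k}-1}} \pi(\mathfrak{a}_{a_\mathsf{k}|x_\mathsf{k}}) \ket{\Omega}.
\end{equation*}
Setting $M^{(n)}_{a_n|x_n} = F^{(n)}_{a_n|x_n}$ for $n\leq\mathsf{k}-1$ and $M^{(\mathsf{k})}_{a_\mathsf{k}|x_\mathsf{k}} = \pi(\mathfrak{a}_{a_\mathsf{k}|x_\mathsf{k}})$ then exhibits a genuine $\mathsf{k}$-partite commuting operator strategy: the $F$'s are valid POVMs by the sum-to-identity clause of the decomposition lemma, $\pi$ being a $*$-homomorphism sends the PVM $\{\mathfrak{a}_{a_\mathsf{k}|x_\mathsf{k}}\}$ to a POVM, and all the required cross-player commutators vanish by the chain rule.

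The clean way to organise this is by induction on $\mathsf{k}$, using Corollary~\ref{cor:3seqplayers} as the base case and the inductive structure already present in the proof of Theorem~\ref{th:chained-rn-k}: one peels off the last transformation $T_{a_{\mathsf{k}-1}|x_{\mathsf{k}-1}}$, applies the $(\mathsf{k}-1)$-player statement to the remaining composition (whose image is first dilated under $\pi_{\mathsf{k}-1}$), and then lifts all previously-obtained derivatives through the new Stinespring dilation via the Lifting Lemma~\ref{lem:extended-derivative}, exactly as in the induction step of Theorem~\ref{th:chained-rn-k}. The main obstacle, as flagged in the discussion preceding Theorem~\ref{th:chain-2-RN}, is not setting up the representation but verifying that \emph{all} the lifted derivatives pairwise commute \emph{on the whole of} $\mathcal{K}$ and not merely on the cyclic subspace $\overline{V\mathcal{H}}$; this is precisely where the minimality of the Stinespring dilations and the sandwiching-by-$\pi(\mathfrak{b})$ permutation argument from Theorem~\ref{th:chain-2-RN} are essential, and its $\mathsf{k}$-fold iterate is what Theorem~\ref{th:chained-rn-k} already establishes. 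Since that heavy lifting is done, the proof of the corollary reduces to checking that the operational-no-signalling hypotheses match the domination hypotheses of the chain rule and that the resulting operators assemble into the commuting-operator template of Definition~\ref{def:co-k}, both of which are routine.
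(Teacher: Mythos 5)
Your proof is correct and, for the substantive inclusion, follows exactly the paper's route: operational no-signalling supplies the domination hypotheses $\phi_{a_1|x_1}\leq\phi$ and $T_{a_j|x_j}\leq T^{(j)}$, Theorem~\ref{th:chained-rn-k} produces the pairwise-commuting Radon--Nikodym derivatives, and the decomposition considerations of Lemma~\ref{lemma:chain-2-rn-sums} upgrade them to POVMs. The paper leaves the converse inclusion implicit while you spell it out; your construction there is sound, though the maps $\mathfrak{g}\mapsto M^{(j)}_{a_j|x_j}\mathfrak{g}$ are merely completely positive (equal to $\mathfrak{g}\mapsto (M^{(j)}_{a_j|x_j})^{1/2}\,\mathfrak{g}\,(M^{(j)}_{a_j|x_j})^{1/2}$ by the cross-player commutativity), not $*$-homomorphisms as you label them.
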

\begin{proof}
   Consider a general $\mathsf{k}$-player sequential correlation as in Definition \ref{def:k-alg-seq}, where $\{ \phi_{a_1|x_1}\}_{a_1,x_1}$ is an operationally-non-signalling assemblage, with $\sum_{a_1} \phi_{a_1|x_1} = \phi$,  and $\{T_{a_j|x_j}\}_{a_j,x_j}, j \in [2,\mathsf{k}-1]$, are operationally-non-signalling transformations, with $\sum_{a_j} T_{a_j|x_j} = T^{(j)}$.
    We then apply the chain rule for sequential Radon-Nikodym derivatives (Theorem \ref{th:chained-rn-k}), yielding 
    \begin{align*}
        p(a_1\dots a_\mathsf{k}| x_1\dots x_\mathsf{k})= \bra{\Omega} F_{a_1|x_1} \dots F_{a_{\mathsf{k}-1}|x_{\mathsf{k}-1}} \pi(\mathfrak{a}_{a_\mathsf{k}|x_\mathsf{k}}) \ket{\Omega},
    \end{align*}
    where all the operators inside the bra-ket sandwich pairwise commute.
   The same considerations presented in Lemma~\ref{lemma:chain-2-rn-sums} can be used to show that these operators are POVMs, hence this is a valid $\mathsf{k}$-partite commuting operator strategy.
\end{proof}

\subsection{Connection with compiled multipartite non-local games}\label{subsec:connection-compiled}

In this section we show that the constraints on the asymptotic algebraic strategies of the compiled game (Theorem \ref{th:asymptotic-constraints-k}) are the necessary and sufficient conditions to apply the generalisation of the chain rule for Radon-Nikodym (Theorem \ref{th:gen-chained-rn-k}); this implies that the asymptotic quantum value of every compiled game is upper bounded by the quantum commuting operator value~(Def.~\ref{def:co-k}). Let us first state the result for tripartite games,
which can be seen as a generalisation of Theorem 6.1 of \cite{KMPSW24bound} for tripartite games.
\begin{theorem}\label{th:as-soundness-3}
    Let $\mathcal{G}$ be any three-player non-local game and let $\mathcal{Q}_\text{comp}^3$ be the set of all efficient quantum strategies for the compiled game $\mathcal{G}_\text{comp}= \{ \mathcal{G}_\lambda \}_\lambda$. Then it holds that
    \begin{equation*}
        \omega_q(\cG_{comp}) = \sup_{\{p_\lambda\}_\lambda \in \mathcal{Q}_\text{comp}^3} \limsup_{\lambda \to \infty} \omega(\mathcal{G}_\lambda, p_\lambda) \leq \sup_{p \in \mathcal{C}_{qc}} \omega(\mathcal{G},p) = \omega_{qc}(\mathcal{G})
    \end{equation*}
\end{theorem}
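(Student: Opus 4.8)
The plan is to fix an arbitrary efficient quantum strategy $\{p_\lambda\}_\lambda \in \mathcal{Q}_\text{comp}^3$, pass to the algebraic picture of Section~\ref{sec:algebraic-strategies}, extract a weak-$*$ limit, and certify that this limit is a commuting-operator correlation by feeding the asymptotic constraints of Theorem~\ref{th:asymptotic-constraints-3} into the generalised chain rule of Theorem~\ref{th:gen-chainRN-2}. First I would select a subsequence $(\lambda_k)_k$ along which $\omega(\mathcal{G}_{\lambda_k}, p_{\lambda_k})$ converges to $\limsup_\lambda \omega(\mathcal{G}_\lambda, p_\lambda)$, rewrite each (purified) strategy as the algebraic correlation $p_{\lambda_k}(a,b,c|x,y,z) = \phi^{\lambda_k}_{a|x}\circ T_{b|y}(\mathfrak{c}_{c|z})$ where only the state $\phi^{\lambda_k}_{a|x}$ depends on $\lambda_k$, and use Banach--Alaoglu (there are finitely many labels $(a,x)$) to pass to a further subsequence along which $\phi^{\lambda_k}_{a|x} \to \phi_{a|x}$ weak-$*$ for every $(a,x)$. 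The limiting correlation is then $p(a,b,c|x,y,z) = \phi_{a|x}\circ T_{b|y}(\mathfrak{c}_{c|z})$, and because $T_{b|y}(\mathfrak{c}_{c|z})$ is a fixed element of $\mathscr{A}_{B\to C}$ the weak-$*$ convergence gives $p_{\lambda_k}(\vec{a}|\vec{x}) \to p(\vec{a}|\vec{x})$; as the score is a finite linear functional of the correlation, $\omega(\mathcal{G}_{\lambda_k}, p_{\lambda_k}) \to \omega(\mathcal{G}, p)$.

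The heart of the argument is to recognise the two hypotheses of Theorem~\ref{th:gen-chainRN-2} inside the constraints of Theorem~\ref{th:asymptotic-constraints-3}. I set $R^{(1)} = \phi := \sum_a \phi_{a|x}$, which by item~1 is an $x$-independent state on $\mathscr{A}_{B\to C}$; its GNS triple $(\mathcal{H}_\phi, \pi_\phi, \Omega_\phi)$ serves as the minimal dilation of $R^{(1)}$, and positivity together with $\sum_a\phi_{a|x}=\phi$ gives $\phi_{a|x}\leq \phi$, the first hypothesis $S^{(1)}_{(a,x)}\leq R^{(1)}$. For the second layer I take $S^{(2)}_{(b,y)} = T_{b|y}$ and argue that $\pi_\phi\circ T_y$ is independent of $y$: summing item~2 over $a$ and using item~1 yields $\phi(\mathfrak{l}^* T_y(\mathfrak{b})\mathfrak{r}) = \phi(\mathfrak{l}^* T_{y'}(\mathfrak{b})\mathfrak{r})$ for all $\mathfrak{l},\mathfrak{r}\in\mathscr{A}_{B\to C}$ and $\mathfrak{b}\in\mathscr{A}_C$, whence by cyclicity of $\Omega_\phi$ the vectors $\pi_\phi(\mathfrak{r})\Omega_\phi$ are dense and the two bounded operators $\pi_\phi(T_y(\mathfrak{b}))$ and $\pi_\phi(T_{y'}(\mathfrak{b}))$ agree. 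Thus $R^{(2)} := \pi_\phi\circ T_y$ is a fixed unital CP map $\mathscr{A}_C\to\mathsf{B}(\mathcal{H}_\phi)$ with $\sum_b \pi_\phi\circ T_{b|y} = R^{(2)}$, which delivers the second hypothesis $\pi_\phi\circ T_{b|y}\leq R^{(2)}$ of Equation~\eqref{eq:gen-chainRN-2}.

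With both hypotheses secured I would apply Theorem~\ref{th:gen-chainRN-2} in the functional case of Remark~\ref{remark:rn-2-chain-functional}, so that the isometry becomes a unit vector $\ket{\Omega}$, obtaining $\phi_{a|x}\circ T_{b|y}(\mathfrak{c}_{c|z}) = \bra{\Omega} F_{a|x} F_{b|y}\pi(\mathfrak{c}_{c|z})\ket{\Omega}$ with all three operator families pairwise commuting. The POVM normalisations $\sum_a F_{a|x} = \sum_b F_{b|y} = \sum_c \pi(\mathfrak{c}_{c|z}) = \mathds{1}$ follow exactly as in Lemma~\ref{lemma:chain-2-rn-sums}, from $\sum_a\phi_{a|x}=\phi$, $\sum_b\pi_\phi\circ T_{b|y}=R^{(2)}$, and $\sum_c\mathfrak{c}_{c|z}=\mathfrak{1}$ pushed through the $*$-homomorphism $\pi$. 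Hence $p$ is a bona fide three-player commuting-operator correlation, so $\omega(\mathcal{G},p)\leq \omega_{qc}(\mathcal{G})$; together with the continuity observation this gives $\limsup_\lambda \omega(\mathcal{G}_\lambda,p_\lambda) = \omega(\mathcal{G},p)\leq \omega_{qc}(\mathcal{G})$, and taking the supremum over strategies yields the theorem. The identical scheme, with Theorem~\ref{th:asymptotic-constraints-k} and Theorem~\ref{th:gen-chained-rn-k} replacing their tripartite counterparts, proves the general $\mathsf{k}$-player statement.

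The step I expect to be the main obstacle is the passage from operational no-signalling to the boundedness hypotheses of the chain rule, and in particular the realisation that the constraint on the transformations holds only after composing with $\pi_\phi$ — that is, \emph{in the GNS representation of $\phi$} — and not at the level of the abstract algebra $\mathscr{A}_{B\to C}$, where the maps $T_{b|y}$ may well signal. This is exactly why the plain chain rule (Theorem~\ref{th:chain-2-RN}) does not apply and the generalised version (Theorem~\ref{th:gen-chainRN-2}), whose hypothesis~\eqref{eq:gen-chainRN-2} is phrased at the representation level, is indispensable; getting the cyclicity and density argument right so that $\pi_\phi\circ T_y$ is provably $y$-independent is the delicate technical point on which the whole reduction hinges.
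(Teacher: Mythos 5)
Your proposal is correct and follows essentially the same route as the paper's proof: extract a weak-$*$ limit via Banach--Alaoglu, feed the two asymptotic constraints of Theorem~\ref{th:asymptotic-constraints-3} into the generalised chain rule (Theorem~\ref{th:gen-chainRN-2} with Remark~\ref{remark:rn-2-chain-functional}), and verify the POVM structure as in Lemma~\ref{lemma:chain-2-rn-sums}. Your variant of first summing over $a$ so that the cyclicity argument establishing the $y$-independence of $\pi_\phi\circ T_y$ runs in the GNS representation of the dominant state $\phi$ (rather than of $\phi_{a|x}$, as the paper writes) is in fact the cleaner way to match hypothesis~\eqref{eq:gen-chainRN-2}, which is phrased relative to the minimal dilation of $R^{(1)}=\phi$.
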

\begin{proof}
    Let $\{p_\lambda\}_\lambda \in \mathcal{Q}_\text{comp}^3$ be any efficient quantum strategy in the compiled game $\mathcal{G}_\text{comp}= \{ \mathcal{G}_\lambda \}_\lambda$, and let $(\lambda_k)_k$ be an arbitrary subsequence.
    Then, by the Banach-Alaoglu Theorem, and with the same argumentation as in Section~\ref{subsec:asy-eff-strat}, by passing to a further subsequence $(\lambda_k)_k$ and slightly abusing notation, it holds that
    \begin{equation*}
        \lim_{\lambda_k \to \infty} p_\lambda(a,b,c|x,y,z) = 
        \phi_{a|x}\circ T_{b|y}(\mathfrak{c}_{c|z}),
    \end{equation*}
    for all $a,b,c,x,y,z$.
    According to Theorem~\ref{th:asymptotic-constraints-3}, this satisfies the following properties:
    \begin{align}
    & 
        \sum_a \phi_{a|x}(\mathfrak{a}) = \sum_a\phi_{a|x'}(\mathfrak{a}) & \forall \mathfrak{a} \in \mathscr{A}_{B\to C}, \label{eq:prop1}\\
    &
        \sum_b \phi_{a|x} \left( \mathfrak{l}^* T_{b|y}(\mathfrak{b}) \mathfrak{r} \right) = 
        \sum_b \phi_{a|x} \left( \mathfrak{l}^* T_{b|y'}(\mathfrak{b}) \mathfrak{r} \right)  & \forall \mathfrak{b} \in \mathscr{A}_{C}, \forall \mathfrak{l,r} \in \mathscr{A}_{B \to C}.\label{eq:prop2}
    \end{align}
    Equation \ref{eq:prop1} is the definition of operationally-non-signalling states.
    Unfortunately, Eq. \ref{eq:prop2} does not correspond to operational-non-signalling transformations, and thus Theorem \ref{th:chain-2-RN} cannot be directly applied. Nevertheless, the constraint gives enough structure to apply the generalisation formulated in Theorem \ref{th:gen-chainRN-2}.

    Consider $(\mathcal{H}, \pi, \ket{\Omega} )$ the cyclic GNS representation of the state $\phi_{a|x}$. Equation \ref{eq:prop2} can be reformulated as
    \begin{align*}
        0 =\phi_{a|x} \left( \mathfrak{l}^* (T_{y} - T_{y'})(\mathfrak{b}) \mathfrak{r} \right) 
        &= \bra{\Omega} \pi^*(\mathfrak{l}) \pi\circ (T_{y} - T_{y'})(\mathfrak{b}) \pi(\mathfrak{r}) \ket{\Omega}\\
        &= \bra{h'} \pi\circ (T_{y} - T_{y'})(\mathfrak{b}) \ket{h} \qquad \forall \ket{h},\ket{h'} \in \mathcal{H}
    \end{align*}
    where in the last line we used the cyclicity property of the representation, i.e. $\overline{\pi(\mathscr{A})} = \mathcal{H}$.
    This implies that the operators $\pi \circ T_y(\mathfrak{b}) \in \mathsf{B}(\mathcal{H})$ do not depend on the label $y$ :
    \begin{equation*}
    \pi\circ T_{y}(\mathfrak{b}) = \pi\circ T_{y'}(\mathfrak{b}) \qquad \forall y,y' \in \mathcal{I}_2.
    \end{equation*}

    We apply the general version of the chain rule for Radon-Nikodym derivatives, as stated in Theorem \ref{thm:rn-for-cp-maps}, to the following completely positive maps:
    \begin{align*}
        & S_{x_1}^{(1)} = \sum_a \phi_{a|x} = \phi_x &&R^{(1)} = \phi_x = \phi_{x'} = \phi, \\
        &  S_{x_2}^{(2)} = \sum_b T_{b|y} = T_y   &&R^{(2)} =\pi_\phi \circ T_y = \pi_\phi \circ T_{y'}.
    \end{align*}
    This proves that all asymptotic algebraic compiled correlations can be written as
    \begin{align*}
        \phi_{a|x} \circ T_{b|y} (\mathfrak{c}_{c|z})
        &= \bra{\Omega}
        F_{a|x} F_{b|y} \pi(\mathfrak{c}_{c|z})
        \ket{\Omega}
    \end{align*}
    where all the operators are pairwise commuting.
    Using the same arguments of Cor. \ref{cor:3seqplayers}, it is not difficult to show that they are POVMs; hence all asymptotic correlations can be written as a $3$-partite commuting operator strategy.
\end{proof}

As in the previous sections, the generalization to more than $3$ players follows straightforwardly. We state it here as a theorem.

\begin{theorem}[Asymptotic quantum soundness of the \cite{KLVY22Quantum} compiler]%
\label{th:asymptotic-quantum-soundness-k-players}
    Let $\mathcal{G}$ be any $\mathsf{k}$-players non-local game and let $\mathcal{Q}_\text{comp}^\mathsf{k}$ be the set of all efficient quantum strategies for the compiled game $\mathcal{G}_\text{comp}= \{ \mathcal{G}_\lambda \}_\lambda$. Then it holds that
    \begin{equation*}
        \omega_q(\cG_{comp}) = \sup_{\{p_\lambda\}_\lambda \in \mathcal{Q}_\text{comp}^\mathsf{k}} \limsup_{\lambda \to \infty} \omega(\mathcal{G}^\lambda, p_\lambda) \leq \sup_{p \in \mathcal{C}_{qc}} \omega(\mathcal{G},p) = \omega_{qc}(\mathcal{G})
    \end{equation*}
\end{theorem}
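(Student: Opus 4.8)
The plan is to mirror the three-player argument of Theorem~\ref{th:as-soundness-3}, replacing each ingredient by its $\mathsf{k}$-player counterpart, and to reduce the statement to showing that every weak-$*$ accumulation point of the compiled correlations is a $\mathsf{k}$-partite commuting-operator correlation. Fix an efficient strategy $\{p_\lambda\}_\lambda \in \mathcal{Q}_\text{comp}^\mathsf{k}$ and a subsequence along which $\omega(\mathcal{G}_\lambda,p_\lambda)$ attains its $\limsup$. As in Section~\ref{subsec:asy-eff-strat}, the Banach-Alaoglu Theorem lets me pass to a further subsequence $(\lambda_k)_k$ along which the sub-normalised algebraic states converge weak-$*$, so that $p_{\lambda_k}(a_1\dots a_\mathsf{k}|x_1\dots x_\mathsf{k})$ converges to an algebraic sequential correlation $\phi_{a_1|x_1}\circ T^{(2)}_{a_2|x_2}\circ\dots\circ T^{(\mathsf{k}-1)}_{a_{\mathsf{k}-1}|x_{\mathsf{k}-1}}(\mathfrak{a}_{a_\mathsf{k}|x_\mathsf{k}})$ of the form produced in Section~\ref{subsec:universal-algebra-multipartite}. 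Since the game value is a finite linear functional of the correlation, its limit along $(\lambda_k)_k$ equals $\limsup_\lambda \omega(\mathcal{G}_\lambda,p_\lambda)$; it therefore suffices to bound the value of this limit correlation.

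Next I would extract the idealised no-signalling constraints. By Theorem~\ref{th:asymptotic-constraints-k}, the limit strategy satisfies the operational no-signalling condition \eqref{eq:op-NS-states-k} for the first-round states, which gives a single state $\phi = \sum_{a_1}\phi_{a_1|x_1}$ independent of $x_1$, and the composite conditions \eqref{eq:op-NS-comp-strat-k} for each intermediate layer $j\in[2,\mathsf{k}-1]$. These are precisely the hypotheses needed to feed the \emph{generalised} chain rule. Concretely, I would set $R^{(1)}=\phi$ and build the minimal Stinespring dilations $(\mathcal{H}_{j-1},\pi_{j-1},V_{j-1})$ level by level. At each level I would use \eqref{eq:op-NS-comp-strat-k} together with the cyclicity of the GNS/Stinespring data: letting $\mathfrak{l},\mathfrak{r}$ range over all of $\mathscr{A}_{j,\mathsf{k}}$ upgrades the equality of expectations to the operator identity $\pi_{j-1}\circ T^{(j)}_{x_j} = \pi_{j-1}\circ T^{(j)}_{x_j'}$, so that $\pi_{j-1}\circ T^{(j)}_{x_j}$ is independent of $x_j$; this common value is taken as the dominant map $R^{(j)}$. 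The sub-unital decomposition $T^{(j)}_{a_j|x_j}\le T^{(j)}_{x_j}$, pushed through the positive map $\pi_{j-1}$, then yields exactly the domination $\pi_{j-1}\circ T^{(j)}_{a_j|x_j}\le R^{(j)}$ required by the theorem.

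With the domination hypotheses in place I would invoke the generalised chain rule for $\mathsf{k}$ players, Theorem~\ref{th:gen-chained-rn-k}, identifying the generic index of that theorem with the combined input--output label $(a_j,x_j)$ and taking $S^{(1)}_{(a_1,x_1)}=\phi_{a_1|x_1}$ and $S^{(j)}_{(a_j,x_j)}=T^{(j)}_{a_j|x_j}$. This produces a single Hilbert space $\mathcal{K}$, a unit vector $\ket{\Omega}$, a $*$-representation $\pi$, and pairwise commuting self-adjoint operators $F_{a_1|x_1},\dots,F_{a_{\mathsf{k}-1}|x_{\mathsf{k}-1}}$, all commuting with $\pi(\mathfrak{a}_{a_\mathsf{k}|x_\mathsf{k}})$, such that the limit correlation equals $\bra{\Omega}F_{a_1|x_1}\cdots F_{a_{\mathsf{k}-1}|x_{\mathsf{k}-1}}\pi(\mathfrak{a}_{a_\mathsf{k}|x_\mathsf{k}})\ket{\Omega}$. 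The POVM normalisation of each family follows from the ``sums'' refinement employed in Lemma~\ref{lemma:chain-2-rn-sums} and Corollary~\ref{cor:extension_povm}, exactly as in Corollary~\ref{cor:k-seq-players}. Hence the limit correlation is a $\mathsf{k}$-partite commuting-operator correlation in the sense of Definition~\ref{def:co-k}, its value is at most $\omega_{qc}(\mathcal{G})$, and taking the supremum over all efficient strategies yields the claim.

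The main obstacle is the inductive translation of the weak composite constraints \eqref{eq:op-NS-comp-strat-k} into the clean domination hypotheses of Theorem~\ref{th:gen-chained-rn-k}: one must verify at every level $j$ that the operators $\pi_{j-1}\circ T^{(j)}_{x_j}$ genuinely lose their dependence on $x_j$, which requires that the representations $\pi_{j-1}$ appearing in the constraint coincide with \emph{the same} minimal dilations assumed by the chain rule, and that the sandwiching elements $\mathfrak{l},\mathfrak{r}$ supply, via cyclicity, a dense set of vectors on which to test the operator equality. Secondary care is needed to check that the limiting object is of the prescribed sequential algebraic form across all $\mathsf{k}-2$ transformation layers simultaneously, and that the score indeed passes to the limit along the chosen subsequence; both follow by iterating the tripartite reasoning of Theorem~\ref{th:as-soundness-3}, but the bookkeeping of the nested dilations is where the argument is most delicate.
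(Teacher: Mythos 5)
Your proposal is correct and follows essentially the same route as the paper: the paper's own proof of Theorem~\ref{th:asymptotic-quantum-soundness-k-players} simply states that one combines the structure of the tripartite argument (Theorem~\ref{th:as-soundness-3}) with the $\mathsf{k}$-partite constraints of Theorem~\ref{th:asymptotic-constraints-k} and the generalised chain rule of Theorem~\ref{th:gen-chained-rn-k}, which is precisely what you spell out, including the key step of upgrading the sandwiched equalities \eqref{eq:op-NS-comp-strat-k} to operator identities $\pi_{j-1}\circ T^{(j)}_{x_j}=\pi_{j-1}\circ T^{(j)}_{x_j'}$ via cyclicity so as to define the dominant maps $R^{(j)}$. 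Your bookkeeping of the indices (taking $S^{(j)}_{(a_j,x_j)}=T^{(j)}_{a_j|x_j}$ with the POVM normalisation recovered from the decomposition lemmas) matches the intended argument, so no gap to report.
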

\begin{proof}
The theorem before is the proof of the special case for $\mathsf{k}=3$.
To prove it for all $\mathsf{k}\in \mathbb{N}^+$, it is sufficient to combine the proof structure presented above together with the following $\mathsf{k}$-partite extensions: the set of constraints on the asymptotic algebraic compiled quantum strategies presented in Theorem \ref{th:asymptotic-constraints-k}, and the generalised chained Radon-Nikodym Theorem formulated in Theorem \ref{th:gen-chained-rn-k}.    
\end{proof} 
\newpage
\appendix
\section{Purifying provers in compiled games}\label{app:homomorphism}

Consider the maps labelled by the encrypted labels $\Tilde{A}_{\enca|\encx}^{\lambda}$. These are the efficient quantum instruments that the prover implements during the protocol; we can use Lemma~\ref{lem:homomorphism} to show that the adjoint maps $\Tilde{A}_{\enca|\encx}^{\lambda,*}$ are homomorphisms.

We are interested in the coarse-grained map with plaintext variables
\begin{equation*}
    \Tilde{A}_{a|x}^{\lambda} = \mathop{\mathds{E}}_{\encx:\dec(\encx) = x} \sum_{\enca:\dec(\enca)=a} \Tilde{A}_{\enca|\encx}^{\lambda}.
\end{equation*}
This does not directly correspond to the physical operations implemented by the prover in the compiled protocol, but turns out to be useful in describing the correlations generated by the provers on the plaintext variables.
We want to show that this map $\Tilde{A}_{a|x}^{\lambda}$ can be efficiently dilated to a purified map whose adjoint is consequently a homomorphism.

To motivate this, we give an operational meaning of the map $\Tilde{A}_{a|x}^{\lambda}$.
We claim that all of the operations necessary to implement this map can be purified, especially we need to purify the encryption and the decryption operations.
Then, we will use Lemma~\ref{lem:homomorphism} to show that the adjoint of the map is a homomorphism.

To be more concrete, the encryption and the decryption gates are not pure because they make internal use of classical randomness.
To purify these, we can think of the de-randomised version of these gates.
For example, consider the encryption; instead of sampling a random coin every time to select a specific ciphertext, encryption is performed by a deterministic classical circuit which accepts the random coins as an additional input.
This, in turn, gives rise to a quantum circuit realizing the encryption procedure as a unitary operation, as represented in Fig. \ref{fig:oracles}.
Thus defined encryption oracle $O_\enc$ will receive as inputs the plaintext $x$, the random coins, and an ancilla which serves at the same time as the register which holds the outcome in the following way
\begin{align}
    O_\enc : \ket{x} \ket{r} \ket{y} \mapsto \ket{x} \ket{r} \ket{y \oplus \enc^\text{sk}(x,r)}.
\end{align}
On input of a superposition of random coins $r$ and a single plaintext message $x$, this oracle yields a superposition of encryptions of this plaintext.
Analogously, the purified decryption oracle $O_\dec$ accepts superpositions of ciphertexts and coherently decrypts them as
\begin{align}
    O_\dec : \ket{c} \ket{y} \mapsto \ket{c} \ket{y \oplus \dec^\text{sk}(c)}.
\end{align}
Furthermore, let us consider the instrument with encrypted labels $\Tilde{A}^\lambda_{\enca_i|\encx_i}$. We already showed in the main text that this can always be modelled as a unitary $U_{\encx_i}$, followed by a partial measurement $\Pi_{\enca_i}$ on some of the output registers, and a controlled unitary on the remaining registers.
By the deferred measurement principle, one can then postpone any such measurement to a later point and apply any operations -- that were previously classically controlled on the measurement outcome -- coherently controlled instead, without changing the effect of the operations.

\begin{figure}[ht]
    \centering
    \begin{minipage}{0.45\textwidth}
        \centering
\begin{quantikz}[wire types={c,q,q}]
\lstick{$x$}  &\gate[3]{O_\enc}   & \rstick{$x$} \\
\lstick{$\ket{r}$}    &                   & \rstick{$\ket{r}$}\\
\lstick{$\ket{y}$}    &                   & \rstick{$\ket{y \oplus \enc^\mathsf{sk}(x,r)}$}
\end{quantikz}
     \end{minipage}
    \hfill
    \begin{minipage}{0.45\textwidth}
        \centering
\begin{quantikz}[wire types={q,q}]
\lstick{$\ket{\mathtt{c}}$}& \gate[2]{O_\dec} & \rstick{$\ket{\mathtt{c}}$} \\
\lstick{$\ket{y}$} & &\rstick{$\ket{y \oplus \dec^\mathsf{sk}(\mathtt{c})}$}
\end{quantikz}     \end{minipage} 
    \caption{The purified encryption and decryption oracles.}
    \label{fig:oracles}
\end{figure}
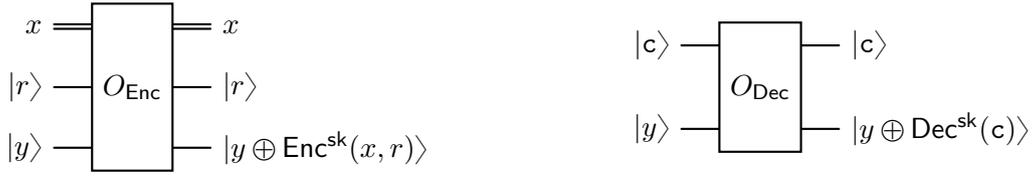

\begin{figure}[ht]
    \centering
\begin{quantikz}[wire types={c,n,n}]
\lstick{$x$} &\gate{\enc}    & \gategroup[3,steps=7,style={draw=none,rounded
corners,fill=blue!10, inner
xsep=2pt},background,label style={label
position=below,anchor=north,yshift=-0.2cm}]{{
Prover}} &                     &                &\ctrl{1}          &
&                  &                        & \gate{\dec} & \rstick{$x$}
\\
             &               & &\lstick{$\rho$}      &\setwiretype{q} &\gate[2]{U_\encx} &
&                  &\gate{V_{\enca,\encx}}  &             &
\\
             &               & &\lstick{$\ket{0}$}   &\setwiretype{q} &                  &\meter{\enca}
&\setwiretype{c}   & \ctrl{-1}              & \gate{\dec} & \rstick{$a$}
\end{quantikz}     \caption{The circuit representation of a prover in the compiled non-local game. Since the prover's measurement collapses the quantum register holding its encrypted output (rather than its plaintext output), we will call this the \emph{fine-grained} representation of the prover's operations.}
    \label{fig:fine-grained-prover}
\end{figure}
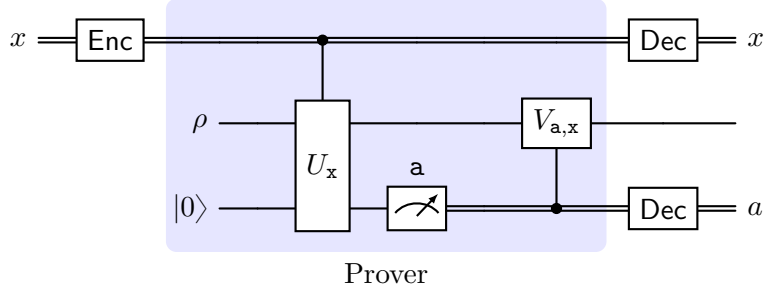

With these tools, we can give an operational meaning to the coarse-grained map $A_{a|x}^\lambda$, represented with the circuit represented in Fig.~\ref{fig:coarse-graining-circuit}.
Coarse-graining the prover's operations consists of prepending the encryption oracle and appending the decryption oracle which captures the effect of the prover's actions on plaintext input $x$, yielding plaintext output $a$.
Instead of performing the prover's measurement before the classical decryption, deferring this measurement allows us to measure the plaintext response $a$ only, without disturbing the internal quantum state before the, now coherent, decryption routine, which at the end of the circuit holds a superposition of ciphertexts corresponding to the plaintext measurement outcome $a$.
Note that this step, together with the fact that all intermediate quantum registers are kept until the end of the circuit without being measured or traced out makes the circuit in Figure~\ref{fig:coarse-graining-circuit} the description of a purified quantum instrument, consisting solely of the composition of a unitary operation and a projective measurement of part of the output wires.
Lemma~\ref{lem:homomorphism} implies that the adjoint of the described map is indeed a *-homomorphism.

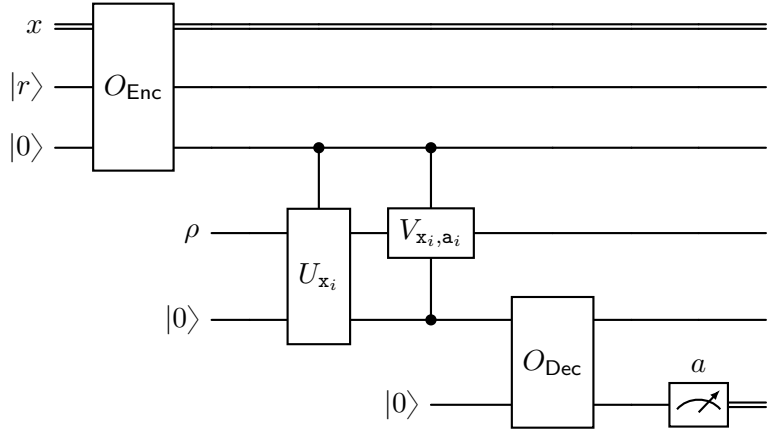
\begin{figure}[ht]
    \centering
\begin{quantikz}[wire types={c,q,q,n,n,n}]
\lstick{$x$}      &\gate[3]{O_\enc}   &                   &                   &                       &                           &                   & &&
\\
\lstick{$\ket{r}$}      &                   &                   &                   &                       &                           &                   & &&
\\
\lstick{$\ket{0}$}    &                   &                   &               &\ctrl{1}               &\ctrl{1}                           &                   & &&
\\
                &                   &\lstick{$\rho$}    & \setwiretype{q}   &\gate[2]{U_{\encx_i}}  &\gate{V_{\encx_i,\enca_i}} &                   & &&
\\
                &                   &\lstick{$\ket{0}$} & \setwiretype{q}   &                       &\ctrl{-1}                  &\gate[2]{O_\dec}   & &&
\\
                &                   &                   &                   &                       &\lstick{$\ket{0}$}         &\setwiretype{q}   &\setwiretype{q} & \meter{a} &\setwiretype{c} 
\end{quantikz}     \caption{The purified circuit representation of the coarse-grained map $A_{a|x}^\lambda$.}
    \label{fig:coarse-graining-circuit}
\end{figure}

We remark that, by construction, all operations in the circuit in Figure~\ref{fig:coarse-graining-circuit} are efficient.
Furthermore, the correlations between $x$ and $a$ generated by the coarse-grained circuit in Figure~\ref{fig:coarse-graining-circuit} are identical to those generated by the original, fine-grained circuit in Figure~\ref{fig:fine-grained-prover}.

The Hilbert space in which the output state of the purified and coarse-grained circuit from Figure~\ref{fig:coarse-graining-circuit} is living has been considerably enlarged when compared to the original Hilbert space pre-coarse-graining.
This is a necessary step in the purification, \textit{i.e.}, dilation of the prover's map.
All following quantum instruments and measurements, those of the subsequent provers, are straightforwardly lifted to work on quantum states of the enlarged space, by simply ignoring and wires that the provers did not have access to previously.

We already argued that this purification does not change any of the correlations that are generated by the involved operations.
However, as we work more generally with the operator algebras generated by the considered quantum instruments and measurements in this paper, it remains to show that this mathematical purification trick does not impact any of the statements that we make involving the quantum information accessible in these operator algebras.
To this end, it is sufficient to observe that, even though these operator algebras formally live in a larger Hilbert space post-purification, they do not act non-trivially outside of the subalgebra that describes the quantum information accessible pre-purification.
This is due to the fact that quantum instruments and measurements are dilated to larger Hilbert space trivially, by ignoring any additional registers available to them.
Formally, this corresponds to forming the tensor product of the operators describing the instruments or measurements with the identity operator on any additional registers.
This extension acts *-homomorphically: given any family of finite-dimensional operators $\{A_i\}$ and a non-commutative *-polynomial $P$, it holds that
\begin{align}
    P(\{A_i \otimes \mathds{1}\}) = P(\{A_i\}) \otimes \mathds{1}.
\end{align}
This reduces any algebraic statement about the operator algebra generated by the family $\{A_i \otimes \mathds{1}\}$ in the larger, purified Hilbert space to the algebra generated by $\{A_i\}$ in the smaller, original Hilbert space.

\newpage

\section{An alternative proof using lifted CP maps and POVMs}%
\label{app:alternative-proof}

This section means to provide an alternative proof of the main result of this paper, the asymptotic quantum soundness of the KLVY-compiler for all multipartite games, where the upper bound is given by the commuting-operator quantum value of the original nonlocal game.

The alternative proof differs from the proof in the main manuscript by avoiding the need to purify the prover's operations.
Instead of dealing with PVMs, it uses POVMs, and instead of $*$-homomorphisms describing purified quantum channels, it uses CP-maps which are the natural maps taking POVMs to POVMs.
Because of this, we need to rely on a different construction of universal C*-algebras, namely those that capture POVMs and sequential POVMs. This modification works entirely analogously to the constructions of universal C*-algebras of PVMs and of sequential PVMs, presented previously, and will therefore not be discussed in extensive detail.
The price to pay for avoiding the purification of the prover is a more involved construction of the CP-maps between the universal C*-algebras. As explained in the following, this requires the notion of nuclearity of C*-algebras, as well as a lifting theorem for CP-maps.

Consider two parties Bob and Charlie. In the context of PVMs, the natural map between the universal C*-algebra of Charlie's PVMs and the universal C*-algebra of Bob's and Charlie's joint sequential PVMs, that takes generators to generators, is a *-homomorphism which is easily constructed through the universal property of the C*-algebras.
For POVM algebras, we cannot expect that this map exists and retains homomorphic structure, as complete positivity and unitality is sufficient to preserve the structure of POVMs.
The main technical obstacle is therefore the construction of a CP-map between the universal C*-algebras, with the property that the analogous diagram to Figure~\ref{fig:univAlgebras-3players} commutes.

For our later arguments we shall need one structural property of C*-algebras, nuclearity, but only its consequences, not the full technical machinery.  Roughly speaking, a C*-algebra $\mathscr{A}$ is nuclear if, whenever we form the algebraic tensor product $\mathscr{A}\odot \mathscr{B}$ with any other C*-algebra $\mathscr{B}$, there is only one C*-norm on $\mathscr{A}\odot \mathscr{B}$. Intuitively, nuclearity guarantees that $\mathscr{A}$ behaves as though it were “finite‑dimensional at every scale”, preventing pathological tensor‑norm phenomena and making dilation arguments more transparent. We will exploit this property later without delving into its proof theory.  What matters here is the following guiding lemma.

\begin{lemma}\label{lem:nuclear}
    Every finite‑dimensional C*-algebra is nuclear.
\end{lemma}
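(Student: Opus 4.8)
The plan is to reduce the statement to the structure theory of finite‑dimensional C*-algebras together with two elementary permanence properties of the tensor norms. Recall that nuclearity of $\mathscr{A}$ means that for every C*-algebra $\mathscr{B}$ the algebraic tensor product $\mathscr{A} \odot \mathscr{B}$ admits a unique C*-norm; equivalently, that the minimal (spatial) and maximal C*-norms coincide, $\mathscr{A} \otimes_{\min} \mathscr{B} = \mathscr{A} \otimes_{\max} \mathscr{B}$. First I would invoke the classification of finite‑dimensional C*-algebras: every finite‑dimensional C*-algebra $\mathscr{A}$ is $*$-isomorphic to a finite direct sum of full matrix algebras, $\mathscr{A} \cong \bigoplus_{i=1}^{n} M_{d_i}(\mathbb{C})$. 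This reduces the problem to showing (i) that each full matrix algebra $M_d(\mathbb{C})$ is nuclear, and (ii) that a finite direct sum of nuclear C*-algebras is again nuclear.

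For step (i), the key observation is that the algebraic tensor product $M_d(\mathbb{C}) \odot \mathscr{B}$ is canonically $*$-isomorphic to the $*$-algebra $M_d(\mathscr{B})$ of $d \times d$ matrices with entries in $\mathscr{B}$, which carries a canonical and unique C*-norm. One then checks that both the minimal and the maximal C*-norm agree with this canonical norm. For the minimal norm this follows by embedding $\mathscr{B}$ faithfully into $\mathsf{B}(\mathcal{H})$ and using the identification $M_d(\mathbb{C}) \otimes \mathsf{B}(\mathcal{H}) = \mathsf{B}(\mathbb{C}^d \otimes \mathcal{H})$, which realises $M_d(\mathbb{C}) \odot \mathscr{B}$ concretely as $M_d(\mathscr{B})$. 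For the maximal norm one uses the system of matrix units $\{e_{ij}\}$ in $M_d(\mathbb{C})$: given any $*$-representation of $M_d(\mathbb{C}) \odot \mathscr{B}$, the images of the $e_{ij}\otimes \mathfrak{1}$ furnish a system of matrix units that rigidly decomposes the representation, forcing it to factor through the canonical C*-structure on $M_d(\mathscr{B})$. Hence all C*-norms on $M_d(\mathbb{C}) \odot \mathscr{B}$ coincide.

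For step (ii), both tensor norms distribute over finite direct sums, that is $(\mathscr{A}_1 \oplus \mathscr{A}_2) \otimes_{\bullet} \mathscr{B} \cong (\mathscr{A}_1 \otimes_{\bullet} \mathscr{B}) \oplus (\mathscr{A}_2 \otimes_{\bullet} \mathscr{B})$ for $\bullet \in \{\min, \max\}$, so that coincidence of the two norms on each summand immediately implies their coincidence on the direct sum. Combining (i) and (ii) across the blocks $M_{d_i}(\mathbb{C})$ of the decomposition yields nuclearity of $\mathscr{A}$.

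The main obstacle is the verification in step (i) that the maximal norm is forced by the matrix units to agree with the canonical matrix norm; this rigidity argument is precisely where the finite‑dimensionality of the matrix block is genuinely used, and it is what prevents pathological larger C*-norms from appearing. Everything else is routine bookkeeping with tensor norms and direct sums.
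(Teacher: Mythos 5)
Your argument is correct, and it is the standard textbook proof of this fact. For comparison: the paper does not prove this lemma at all --- it is stated as recalled background (the surrounding text explicitly says the nuclearity machinery will be used ``without delving into its proof theory''), so there is no in-paper argument to measure yours against. Your route --- the Artin--Wedderburn-type decomposition $\mathscr{A} \cong \bigoplus_i M_{d_i}(\mathbb{C})$, the identification $M_d(\mathbb{C}) \odot \mathscr{B} \cong M_d(\mathscr{B})$ with its unique C*-norm established via the matrix-unit rigidity argument for the maximal norm and the spatial identification $M_d(\mathbb{C}) \otimes \mathsf{B}(\mathcal{H}) = \mathsf{B}(\mathbb{C}^d \otimes \mathcal{H})$ for the minimal norm, and the distributivity of both tensor norms over finite direct sums --- is exactly the classical proof (as in Takesaki or Brown--Ozawa). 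The one place deserving a word of care is the matrix-unit step when $\mathscr{B}$ is non-unital: one should either pass to the unitization of $\mathscr{B}$ or phrase the rigidity argument via an approximate unit, so that the elements $e_{ij} \otimes \mathfrak{1}$ you invoke actually make sense; this is routine and does not affect the validity of the proof. Since the paper only ever applies the lemma to the concrete finite-dimensional algebras $\theta_{BC}^\lambda(\mathscr{A}_{B\to C}) \subseteq \mathsf{B}(\mathcal{H}^\lambda)$, your self-contained proof supplies a detail the paper deliberately outsources to the literature.
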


Hence all algebras that arise in finite‑level quantum information (e.g.\ $\mathcal{B}(\mathbb{C}^d)$ and their finite direct sums) are nuclear.

To conclude this background subsection we recall the Choi–Effros lifting theorem.

\begin{theorem}[Choi–Effros lifting Theorem, Corollary 3.11 in~\cite{ChoiEffros}] \label{thm:choieffros}
      Let $\mathscr{A}$ be a separable $C^*$-algebra, let $\mathscr{B}$ be a unital $C^*$-algebra, and let $J\subseteq \mathscr{B}$ be a closed two‑sided ideal with quotient map $\pi:\mathscr{B}\twoheadrightarrow \mathscr{B}/J$.  Suppose $\Phi:\mathscr{A}\to \mathscr{B}/J$ is a completely positive map.  If any one of the three $C^*$-algebras $\mathscr{A}$, $\mathscr{B}$, or $\mathscr{B}/J$ is nuclear, then there exists a completely positive map $\widetilde{\Phi}:\mathscr{A}\to \mathscr{B}$ (called a lifting of $\Phi$) such that $\pi\circ\widetilde{\Phi}=\Phi$.
\end{theorem}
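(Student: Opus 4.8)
The plan is to reduce the lifting problem to the trivial case of finite-dimensional domains, where completely positive maps into a quotient always lift, and then to reassemble a global lift out of these local pieces by means of a quasicentral approximate unit of the ideal $J$. The role of nuclearity is precisely to enable the first reduction: it provides a completely positive approximation of $\Phi$ that factors through matrix algebras, exactly the kind of finite-dimensional structure that Lemma~\ref{lem:nuclear} guarantees is well-behaved.

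First I would invoke the completely positive approximation property granted by nuclearity. Assuming for concreteness that $\mathscr{B}/J$ (or $\mathscr{A}$) is nuclear, there exist completely positive contractions $\phi_n : \mathscr{A} \to M_{k_n}$ and $\psi_n : M_{k_n} \to \mathscr{B}/J$ with $\|\psi_n(\phi_n(a)) - \Phi(a)\| \to 0$ for every $a \in \mathscr{A}$. Because $M_{k_n}$ is finite-dimensional, each $\psi_n$ lifts: via the Choi correspondence a completely positive map $M_{k_n} \to \mathscr{B}/J$ is the same datum as a positive element of $M_{k_n} \otimes (\mathscr{B}/J) \cong (M_{k_n} \otimes \mathscr{B})/(M_{k_n} \otimes J)$, and a positive element of a quotient $C^*$-algebra always lifts to a positive element of the algebra; reading the lifted element back through Choi yields a completely positive $\widetilde{\psi}_n : M_{k_n} \to \mathscr{B}$ with $\pi \circ \widetilde{\psi}_n = \psi_n$. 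Setting $\Psi_n = \widetilde{\psi}_n \circ \phi_n$ produces completely positive maps $\mathscr{A} \to \mathscr{B}$ satisfying $\pi \circ \Psi_n \to \Phi$ pointwise.

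Next I would glue the approximate lifts $\Psi_n$ into a single honest lift. Fix a dense sequence $(a_j)_j$ in the separable algebra $\mathscr{A}$ and choose an increasing approximate unit $(e_m)_m$ of $J$ that is quasicentral in $\mathscr{B}$, i.e.\ $\|e_m b - b e_m\| \to 0$ for all $b \in \mathscr{B}$. Telescoping and passing to a subsequence, the elements $d_m = (e_m - e_{m-1})^{1/2}$ form a partition-of-unity-like family, and I would define
\begin{equation*}
    \widetilde{\Phi}(a) = \sum_m d_m \, \Psi_{n_m}(a) \, d_m .
\end{equation*}
Each summand is completely positive, so $\widetilde{\Phi}$ is completely positive provided the series converges; convergence and boundedness are secured by controlling the tails on the dense set $(a_j)$ together with the fact that $\sum_m d_m^2 = 1$ modulo a harmless limit. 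The decisive point is that, modulo $J$, quasicentrality lets each $d_m$ be moved past images of $\mathscr{B}$, so that $\pi(d_m \Psi_{n_m}(a) d_m)$ collapses to $d_m^2$ times a quantity asymptotically equal to $\Phi(a)$, whence $\pi \circ \widetilde{\Phi} = \Phi$ exactly.

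The main obstacle I anticipate is exactly this last gluing step. The individual ingredients---the nuclear factorization, the finite-dimensional lifting, and the existence of quasicentral approximate units---are standard, but arranging the indices $n_m$ and the approximate unit jointly so that (i) the telescoped series converges in norm at every point of a dense set, (ii) the sandwiched sum remains completely positive and contractive, and (iii) the defect $\pi \circ \widetilde{\Phi} - \Phi$ is not merely small but identically zero, requires a delicate simultaneous diagonal argument. Separability of $\mathscr{A}$ is what makes this diagonalization possible, and quasicentrality is what converts the pointwise approximations $\pi \circ \Psi_n \to \Phi$ into an \emph{exact} identity after summation. A secondary point is the case analysis: if instead $\mathscr{B}$ is the nuclear algebra, I would first reduce to one of the other two cases, since nuclearity of $\mathscr{B}$ passes to the quotient $\mathscr{B}/J$.
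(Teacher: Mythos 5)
The paper does not prove this statement: it is imported verbatim as Corollary 3.11 of the cited Choi--Effros paper, so there is no internal proof to compare against. Judged on its own, your overall strategy is the standard one (nuclearity gives a completely positive approximation through matrix algebras; maps out of $M_{k}$ lift because their Choi matrices are positive elements of $M_{k}(\mathscr{B})/M_{k}(J)$ and positive elements lift from quotients; then one must show that point-norm limits of liftable maps are liftable). The first two steps are fine.

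The gap is in the gluing step, and it is fatal as written. Every element $e_m$ of an approximate unit of $J$ lies in $J$, hence so does every $d_m = (e_m - e_{m-1})^{1/2}$, hence so does every summand $d_m\,\Psi_{n_m}(a)\,d_m$. If the series converged in norm its sum would lie in the closed ideal $J$, giving $\pi\circ\widetilde{\Phi} = 0$ rather than $\Phi$; and in fact there is no reason for the series to converge in norm at all, since the terms are bounded by $\|a\|$ but not summable (sums of the form $\sum_m d_m T d_m$ converge only strictly or in the strong operator topology, which is unavailable in a general $C^*$-algebra $\mathscr{B}$). The phrase ``$\sum_m d_m^2 = 1$ modulo a harmless limit'' conceals exactly this: $\sum_m d_m^2 = \lim_m e_m$, which is not $1$ in $\mathscr{B}$. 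The correct argument (Arveson's lemma) does not form a single infinite sum. Instead, after arranging $\|\phi_{n+1}(a_k)-\phi_n(a_k)\|\le 2^{-n}$ for $k\le n$ on a dense sequence, one inductively constructs honest contractive completely positive lifts $\psi_n$ of $\phi_n$ satisfying $\|\psi_{n+1}(a_k)-\psi_n(a_k)\|\le 2^{-n+1}$, by replacing an arbitrary lift $\sigma$ of $\phi_{n+1}$ with $(1-e)^{1/2}\sigma(\cdot)(1-e)^{1/2} + e^{1/2}\psi_n(\cdot)e^{1/2}$ for $e$ far enough along a quasicentral approximate unit; here one uses $\|\pi(b)\| = \lim_{e}\|(1-e)^{1/2}b(1-e)^{1/2}\|$ to shrink the first term's deviation and quasicentrality to make $(1-e)^{1/2}c(1-e)^{1/2}+e^{1/2}ce^{1/2}\approx c$. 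The resulting sequence $(\psi_n)$ is point-norm Cauchy and its limit is an exact lift. Your remaining reduction (if $\mathscr{B}$ is nuclear, pass to $\mathscr{B}/J$, which is again nuclear) is acceptable, though note that nuclearity passing to quotients is itself a nontrivial theorem that must be cited.
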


In words, nuclearity of either the domain, codomain, or ambient algebra guarantees that every completely positive map into the quotient can be realised already at the level of the larger algebra.

\begin{definition}[Algebraic three-round strategy, $\mathcal{A}_\lambda^3$]\label{def:3rAseq}
An algebraic three-round strategy consists of the following objects:
    \begin{itemize}
        \item a universal $C^*$-algebra $\mathscr{A}_C$, generated by elements $\{\mathfrak{c}_{c|z}\}_{c \in \mathcal{O}_C}$, where each $\mathfrak{c}_{c|z} >0$ and for each question $z \in \mathcal{I}_C$ it holds $\sum_c \mathfrak{c}_{c|z} = \mathfrak{1}$. This algebra abstractly encodes Charlie’s measurement operations.
        \item a universal $C^*$-algebra $\mathscr{A}_{B\to C}$, generated by elements $\{\mathfrak{f}_{b,c|y,z}\}_{b,c \in \mathcal{O}_B,\mathcal{O}_C}$, where each $\mathfrak{f}_{b,c|y,z} >0$, for each question $z \in \mathcal{I}_C$ it holds $\sum_c \mathfrak{f}_{b,c|y,z} = \mathfrak{f}_{b|y}$ and for each question $y \in \mathcal{I}_B$ it holds $\sum_b \mathfrak{f}_{b|y} = \mathfrak{1}$. This algebra abstractly encodes jointly Bob's instruments and Charlie's POVMS.
        \item a  sequence of CP maps, algebraically encoding Bob’s instruments, $\{T^\la_{b|y}: \mathscr{A}_C \to \mathscr{A}_{B\to C}\}_{b \in \mathcal{O}_B}$, such that $\sum_bT_{b|y}^\la = T_y^\la$ are unital for every $y \in \mathcal{I}_B$;
        \item For each input $x\in \mathcal{I}_A$ and outcome $a\in\mathcal{O}_A$, a positive linear functional $\phi^\lambda_{a|x}$ on $\mathscr{A}_{B\to C}$, representing Alice’s prepared state conditioned on input $x$ and outcome $a$. These functionals sum to a normalized state $\phi^\lambda_{x} = \sum_a \phi^\lambda_{a|x}$.
    \end{itemize}
    The correlation given by this strategy is $$p^\la(a,b,c|x,y,z)= \phi_{a|x}^\la(T_{b|y}^\la(\mathfrak{c}_{c|z}))$$
The strategy is $\la$-no-signalling if
    \begin{enumerate}
        \item states $\phi^\la_x$ satisfy
    \begin{equation}\label{eq:algnsl}
        \left|\phi^\la_{x}(\mathfrak{b}) - \phi^\la_{x'}(\mathfrak{b}) \right| \leq \mathsf{negl}(\la),
    \end{equation}
    for every $x,x'\in \mathcal{I}_A$, and for every $\mathfrak{b} \in \mathscr{A}_{B\to C}$.
    \item CPTP maps $T_y^\la$ satisfy
    \begin{equation}\label{eq:algnslIns}
        \left|\phi_{a|x}^\la(\mathfrak{l}^*T_y^\la(\mathfrak{b})\mathfrak{r}) - \phi_{a|x}^\la(\mathfrak{l}^*T_{y'}^\la(\mathfrak{b})\mathfrak{r}) \right| \leq \mathsf{negl}(\la), 
    \end{equation}
    for every $a \in \mathcal{O}_A$, $x\in\mathcal{I}_A$, and for every $\mathfrak{b} \in \mathscr{A}_C$, $\mathfrak{l},\mathfrak{r} \in \mathscr{A}_{B\to C}$.
    \end{enumerate}
\end{definition}

The universal $C^*$-algebras $\mathscr{A}_B$ and $\mathscr{A}_{B\to C}$ in this definition are exactly those defined in Lemmas~\ref{lem:universal_povm_algebra} and~\ref{lem:universal_sequential_povm_algebra}. Condition~\eqref{eq:algnsl} enforces that after the first round, the reduced state on Bob (and Charlie) is essentially independent of Alice’s input $x$. Condition~\eqref{eq:algnslIns} ensures that no information about Bob’s input $y$ leaks forward: even if Bob applied any extra (bounded) map before his measurement, Charlie’s outcomes remain (up to negligible function od $\la$) independent of $y$.

Now we proceed with connecting three-round algebraic sequential strategies represent and three-round quantum sequential strategies. Our main task is to establish the correspondence between Bob’s instruments  ${B_{b|y}^\la,*}$ and and the CP maps $T_{b|y}^\la$. But let us start from the tools already defined in the two-round case. Namely, as in two-round case, the third-round POVMs $C_{c|z}^\la$ satisfy the relations defining the universal POVM algebra, and hence by the universal property there exists a representation $\theta_C^\la: \mathscr{A}_C\to \mathsf{B}(\mathcal{H}^\lambda)$ mapping generators $\mathfrak{e}_{c|z}$ to POVM elements $C_{c|z}^\la$. Consider now  POVM set $B^{\lambda,*}_{b|y} (C_{c|z}^\lambda)= M_{bc|yz}^\lambda \in \mathsf{B}(\mathcal{H}^\lambda)$. It respects the defining relations of universal $C^*$-algebra of sequential POVMs from Lemma~\ref{lem:universal_sequential_povm_algebra} as they are all positive, when summed over $c$ do not depend on $z$, and sum to identity. Hence universality property (Lemma~\ref{lemma:universality_c_star_algebra}) ensures the existence of representation $\theta_{BC}^\la:\mathscr{A}_{B\to C}\to  \mathsf{B}(\mathcal{H}^\lambda)$ such that
\begin{equation}\label{eq:universal-property-seqPOVM}
    \theta_{BC}^\lambda (\mathfrak{f}_{bc|yz}) = M_{bc|yz}^\lambda = B^{\lambda,*}_{b|y} (C_{c|z}^\lambda).
\end{equation}

We now analyze explicitly how algebraic completely positive (CP) maps arise naturally in this scenario. Consider the CP map defined by $\tau_{b|y}^\la = B_{b|y}^\lambda\circ \theta_C^\lambda: \mathscr{A}_{C}\to \mathsf{B}(\mathcal{H}^\lambda)$, which sends elements from Charlie’s universal algebra $\mathscr{A}_C$ to bounded operators on the Hilbert space $\mathcal{H}^\la$. By construction, the range of this CP map is precisely contained within the image of the representation $\theta_{BC}^\la(\mathscr{A}_{B\to C}) \in \mathsf{B}(\mathcal{H}^\la)$, as ensured by the universal property (Lemma~\ref{lemma:universality_c_star_algebra}). Therefore, we can view $\tau_{b|y}^\la$ as a CP map:
$$\tau_{b|y}^\la:\mathscr{A}_{C} \to \theta_{BC}^\lambda(\mathscr{A}_{B\to C}) \cong \mathscr{A}_{B\to C}/\ker \theta_{BC}^\lambda,$$
where $\mathscr{A}_{B\to C}/\ker \theta_{BC}^\lambda$ is the quotient algebra corresponding to the two-sided closed ideal $\ker \theta_{BC}^\lambda$.
Now, since $\theta_{BC}^\lambda(\mathscr{A}_{B\to C})$ is finite-dimensional (spanned by matrices $B_{b|y}^{\la,*
}(C_{c|z}^\la)$), it is nuclear by Lemma~\ref{lem:nuclear}. Thus, by the Choi-Effros lifting Theorem (Theorem~\ref{thm:choieffros}), the CP map $\tau_{b|y}^\la$  can always be lifted to a CP map $T_{b|y}^\la:\mathscr{A}_C\to \mathscr{A}_{B\to C}$ satisfying $\theta_{BC}\circ T_{b|y}^
\la = \tau_{b|y}^\la$, and hence
$$\theta^\la_{BC}(T^\la_{b|y}(\mathfrak{e}_{c|z})) = B_{b|y}^{\la,*}(C_{c|z}^\la).$$

Finally, preparations $\rho_{a|x}^\la$ induce the sequence of positive linear functionals $\phi_{a|x}^\la:\mathscr{A}_{C\to B}:\to \mathbb{C}$ $$\phi^\la_{a|x}(\mathfrak{a}) = \Tr[\sigma_{a|x}^\la\theta_{BC}^\la(\mathfrak{a})],$$
and in particular
\begin{align*}\phi^\la_{a|x}(T_{b|y}^\la(\mathfrak{c}_{c|z})) &= \Tr[\sigma_{a|x}^\la\theta_{BC}^\la(T_{b|y}^\la(\mathfrak{c}_{c|z}))]\\
&= \Tr[\sigma_{a|x}^\la B_{b|y}^{\la,*}(C_{c|z}^\la)], \end{align*}
showing that every three-round quantum sequential strategy has its algebraic equivalent.

This shows that every three-round quantum sequential $\la$-no-signalling strategy leads to correlations that can always be achieved by algebraic sequential $\la$-no-signalling strategies.

\newpage
\section*{Acknowledgments}
The authors would like to thank Flavio Baccari, Eleni Diamanti, Alex Grilo, Damian Markham, Vern Paulsen, Marco Túlio Quintino, Marc-Olivier Renou, Lucas Tendick, Quoc-Huy Vu, and Xiangling Xu for useful discussions.
The authors also thank the anonymous referees for their comments, which helped improve the presentation of this work.
MB acknowledges funding from QuantEdu France, a state aid managed by the French National Research Agency for France 2030 with the reference ANR-22-CMAS-0001.
DL acknowledges support from the Quantum Advantage Pathfinder (QAP) research program within the UK’s National Quantum Computing Center (NQCC).
I\v{S} acknowledges funding from the PEPR integrated project EPiQ ANR-22-PETQ-0007 part of Plan France 2030.
\bibliographystyle{alpha}
\bibliography{references}

\end{document}